\definecolor{cadmiumgreen}{RGB}{33,114,13}
\definecolor{mypink2}{RGB}{219, 48, 122}
\numberwithin{equation}{section}
\newtheorem{theorem}{Theorem}
\newtheorem{assumption}{Assumption}
\newtheorem{lemma}{Lemma}
\newcommand{\E}{\mathbb{E}}
\newcommand{\PR}{\mathbb{P}}
\DeclareMathOperator*{\plim}{plim}
\DeclareMathOperator*{\argmax}{argmax}
\def\@xfootnote[#1]{%
	\protected@xdef\@thefnmark{#1}%
	\@footnotemark\@footnotetext}
\newcommand{\A}{\mathbf{A}}
\newcommand{\X}{\mathbf{X}}
\newcommand{\W}{\mathbf{W}}
\newcommand{\w}{\mathbf{w}}
\newcommand{\z}{\mathbf{z}}
\newcommand{\x}{\mathbf{x}}
\newcommand{\y}{\mathbf{y}}
\newcommand{\an}{\mathbf{a}_N}
\newcommand{\bn}{\mathbf{b}_N}
\newcommand{\Y}{\mathbf{Y}}
\newcommand{\Z}{\mathbf{Z}}
\newcommand{\G}{\mathbf{G}}
\newcommand{\D}{\mathbf{D}}
\newcommand{\I}{\mathbf{I}}
\newcommand{\PP}{\mathbf{P}}
\newcommand{\Q}{\mathbf{Q}}
\newcommand{\q}{\mathbf{q}}
\newcommand{\R}{\mathbf{R}}
\newcommand{\rr}{\mathbf{r}}
\newcommand{\M}{\mathbf{M}}
\newcommand{\Ss}{\mathbf{S}}
\newcommand{\B}{\mathbf{B}}
\newcommand{\HH}{\mathbf{H}}
\newcommand{\h}{\mathbf{h}}
\newcommand{\bupsilon}{\bm{\upsilon}}
\numberwithin{theorem}{section} 
\title[Estimation of Peer Effects in Endogenous Social Networks]
{\bf Estimation of Peer Effects in Endogenous Social Networks: Control Function Approach}
\thanks{We thank Bryan Graham and three referees for their helpful and valuable comments and suggestions. We are particularly grateful to one of the referees for suggesting the idea that is presented in Section \ref{sec: estimation with x,a as control} of the paper. We also appreciate the comments and discussions of the participants at the 2015 USC Dornsife INET Conference on Networks, the 2016 North American Summer Meeting of the Econometric Society, the 2016 California Econometrics Conference, the 2017 Asian Meeting of Econometric Society, the 2017 IAAE conference, the 2018 UCLA-USC Mini Conference, and the econometrics seminars at University of British Columbia and Ohio State University. The first draft of the paper was written while Johnsson was a graduate fellow of USC Dornsife INET and Moon was the associate director of USC Dornsife INET. Moon acknowledges that this work was supported by the Ministry of Education of the Republic of Korea and the National Research Foundation of Korea (NRF-2017S1A5A2A01023679). 
	\\ 
	Ida Johnsson:  ida.johnsson@clutter.com. Hyungsik Roger Moon: Corresponding author. moonr@usc.edu. }
\renewcommand{\email}[2][]{%
	\ifx\emails\@empty\relax\else{\g@addto@macro\emails{,\space}}\fi%
	\@ifnotempty{#1}{\g@addto@macro\emails{\textrm{(#1)}\space}}%
	\g@addto@macro\emails{#2}%
}
\author{Ida Johnsson}
\address{Clutter Inc}
\author{Hyungsik Roger Moon}
\address{Department of Economics, University of Southern California,\\
	and School of Economics, Yonsei University}
\date{\today}
\begin{document}

\begin{abstract}
We propose methods of estimating the linear-in-means model of peer effects in which the peer group, defined by a social network, is endogenous in the outcome equation for peer effects. Endogeneity is due to unobservable individual characteristics that influence both link formation in the network and the outcome of interest. We propose two estimators of the peer effect equation that control for the endogeneity of the social connections using a control function approach. 
 We leave the functional form of the control function unspecified and treat it as unknown. To estimate the model, we use a sieve semiparametric approach, and we establish asymptotics of the semiparametric estimator. 
 \\
  {\sc Keywords: peer effects, endogenous network, sieve estimation, control function} 
  \\
  {\sc JEL Classification: C14, C21}
\end{abstract}

\maketitle

%%%%%%%%%%%%%%%%%%%%%%%%%%%%%%%%%%%%%
\section{Introduction}
The ways in which interconnected individuals influence each other are usually referred to as peer effects. One of the first  to formally model peer effects is \cite{Manski1993a}. He proposes the linear-in-means model, in which an individual's action depends on the average action of other individuals and  possibly also on their average characteristics. \cite{Manski1993a} assumes that all individuals within a given group are connected. Later literature allows for more complex patterns of connections, in which an individual might be directly  influenced by a subset of the group. Examples are \cite{Bramoulle2009}, \cite{Lee2010}, \cite{Lee2007} among others. Models of peer effects have been applied in various areas, such as education, health and development. Examples of applications are found in recent review papers such as \cite{Blume2010}, \cite{Manski2000}, \cite{Epple2011}, \cite{brock2001interactions} and \cite{Graham2011}.

Many models considered in earlier literature assume that connections between individuals are independent of unobserved individual characteristics that influence  outcomes. However, assuming exogeneity of the network or peer group is restrictive in many applications. For example, 
consider the following widely studied empirical application of peer effects:  peer influence on scholarly achievement. The assumption that friendships are exogenous in the outcome equation for scholarly achievement means that there are no unobserved variables that influence both friendship formation and individual grades. However, even if a study controls for observable individual characteristics such as gender, age, race and parents' education, it is likely to omit factors that influence both students' choice of friends and their GPA; for example parental  expectations, psychological disorders, or non-reported substance use. For more examples of endogenous peer groups see \cite{brock2001interactions}, \cite{Weinberg2007},   \cite{Shalizi2012} and \cite{hsieh2016social}, among others. 

In this paper we propose a method for estimating a linear-in-means model of peer effects, where the peer group is defined by a network that is endogenous in the outcome equation. Our model allows for correlation between the unobserved individual heterogeneity that impacts network formation and the unobserved characteristics of the outcome. For this, we use a dyadic network formation model that allows the unobserved individual attributes of two different agents to influence link formation, and in which links are pairwise independent conditional on the observed and unobserved individual attributes.  The network formation we consider in the paper is dense \label{dense1} and nonparametric. 

The main contributions of the paper are  methodological. First, given the endogenous peer group formation, we show that we can identify the peer effects by controlling the unobserved individual heterogeneity of the network formation equation. Second, we propose an empirically tractable implementation of the control function, whose functional form is not parametrically specified. For this, we propose two approaches, one based on an estimator of the unobserved individual heterogeneity and the other one based on the average node degrees of the network.\footnote{We acknowledge that this approach is developed based on an idea provided by one of the referees. We thank  the referee.}  Our estimation method is semiparametric because we do not restrict the functional form of the control function. Finally, we derive the limiting distributions of the estimators within a large single network. The main challenge of the asymptotics is handling the strong dependence of observables caused by the dense network. 
Other peer effects papers that have considered endogenously formed peer groups and have controlled the endogeneity via  various control functions include \cite{GoldsmithP2013}, \cite{hsieh2016social}, \cite{Qu2015}, \cite{Arduini2015a} and \cite{Auerbach2016}. We provide more detail on these papers in Section \ref{sec:related.literature}.

The remainder of the paper is organized as follows. In Section \ref{section: main.idea} we present a high level description of our approach and provide intuition as to its empirical applications.
In Section \ref{section: model of peer effects with endogenous network} we formally present our model. In Section \ref{section: identification} we show {\color{black} how to identify  peer effects using control functions.} Estimation is discussed in Section \ref{section: estimation}, and in
Section \ref{subsection: estimation, limiting distributin of estimator} we discuss the limiting distribution of the estimator and propose standard errors. 
In Section \ref{section: monte carlo} we present results of Monte Carlo simulations. There we compare the finite sample performance of our two semiparametric estimators against an estimator that assumes unobserved characteristics enter in a linear way, as well as an instrumental variables (IV) estimator that does not control for network endogeneity. {\color{black} We investigate both high degree and low degree networks.}
Section \ref{section: conclusions} concludes.

A word on notation: in what follows we denote scalars by lowercase letters, vectors by lowercase bold letters, and matrices by uppercase bold  letters.
\section{Main Idea}\label{section: main.idea}
%%%%%%%%%%%%%%%%%%%%%%%%%%%%%%%%%%%%%%%%%%%%%%%%%%%%%%%%%

In this section we introduce a simple model in order to illustrate the main points of our approach. A more general model and detailed discussion of the model will follow later.   

\subsection{Simple Model}
A simple peer effect model for the purpose of illustration of the main idea is  
\begin{equation}\label{eq: outcome simplified}
y_i = \beta^0 \left( \frac{\sum_{j \neq i}d_{ij}x_j}{\sum_{j\neq i}d_{ij}}  \right) + v_i, \quad i = 1,...,N,
\end{equation}
where $x_i$ is a measure of observable characteristics of individual $i$ and $d_{ij}$ is an indicator of individual $i$'s peer, so $d_{ij}=1$ if $i$ and $j$ are directly linked and $0$ otherwise. In (\ref{eq: outcome simplified}), the regressor of interest is the average of the characteristics of those individuals who are linked with $i$, $\frac{\sum_{j \neq i}d_{ij}x_j}{\sum_{j\neq i}d_{ij}}$. For simplicity, we assume that $x_i$ is exogenous with respect to all the unobserved components of the model; this will be relaxed later.

For the link formation, we consider the following dyadic network formation model, 
\begin{equation}\label{eq: network simplified}
d_{ij} = \mathbb{I}(g(a_i,a_j)\geq u_{ij}) \mathbb{I}(i \neq j),
\end{equation}
where $a_i$ and $a_j$ are unobserved individual specific characteristics, $u_{ij}$ is a link specific component, and $g(\cdot,\cdot)$ is some function.  It should be noted that this model of network formation does not allow for network effects in link formation, as a link between $i$ and $j$ only depends on the characteristics of $i$ and $j$.  

The unobserved individual characteristic $a_i$ can be interpreted as social capital that increases the likelihood of forming a link. Depending on the context this could be factors like trustworthiness, socioeconomic status, or outspokenness. 

For example, \cite{fafchamps2011} measure the risk sharing links between households in Tanzania and they construct links between households based on the question whom individuals could ``personally rely on for help.'' \cite{fafchamps2007risk} examine the formation of risk-sharing networks using data from the rural Philippines. \cite{Banerjee2013} examine how participation in micro-finance diffuses through a social network which they measure using lending and trust.  In these settings, we can think of $a_i$ as a measure of individual trustworthiness and integrity in financial matters. 
\cite{Ductor2011} analyze whether knowledge of a researcher's co-authorship network is helpful in predicting his or her productivity. In this setting $a_i$ can be interpreted as some unobserved productivity trait that induces the researcher to have more coauthors, and also to be more productive at writing papers.  
\subsection{Control Function and Its Implementation}
The key feature of the peer effect model (\ref{eq: outcome simplified}) and (\ref{eq: network simplified}) is that individual $i$'s unobserved characteristic $a_i$, which impacts link formation, is correlated with $v_i$, $i$'s unobserved characteristic that affects the outcome $y_i$. For example,  $a_i$ could be an unobserved component that affects a researcher's publication rate $y_i$, and also his or her co-authorship relationships, $d_{ij}$. Alternatively, we can think of a situation where 	there are two
types of agents: popular and unpopular. The popular agents are 
more likely to be friends with other  agents, and popular agents have better
outcomes even in the absence of a peer effect.  Then the peer formation $d_{ij}$ becomes correlated with the unobserved component $v_i$ of the outcome, and, as a consequence, the regressor of the peer effect, $\frac{\sum_{j \neq i}d_{ij}x_j}{\sum_{j\neq i}d_{ij}}$, becomes endogenous. 

In this paper we use a control function method to handle the endogenous peer group problem. Let $\D_N$ be the $N \times N$ adjacency matrix that describes the network links $d_{ij}$. Suppose that the unobserved characteristics $(a_i,v_i)$ and $u_{ij}$ are randomly drawn over $i$ and $(i,j)$, respectively. Also assume that $u_{ij}$ is independent of $(a_i,v_i)$. Then, for any $i \neq j$, the link $d_{ij} = \mathbb{I}(g(a_i,a_j)\geq u_{ij})$ and $v_i$ are dependent only through $a_i$. Therefore, controlling for $a_i$, the network $\D_N$ and $v_i$ become mean independent, that is,
\begin{align*}
\E(v_i \,|\, \D_N,a_i) = \E(v_i \,|\, a_i) =: h(a_i).
\end{align*} 

Suppose that we observe $a_i$. Consider the outcome equation which controls for $a_i$ nonparametrically,  
\[
	y_i = \beta^0 \left( \frac{\sum_{j \neq i}d_{ij}x_j}{\sum_{j\neq i}d_{ij}}  \right) + h(a_i) + \varepsilon_i,
\]
where $\varepsilon_i := v_i - h(a_i)$. Once we control the endogeneity of the network with $a_i$, then the regressor of the peer effect becomes exogenous, and we can estimate the peer effect coefficient $\beta^0$ using the conventional partially linear regression estimation method (e.g. \cite{Robinson1988}).

However, in most empirical applications, $a_i$ is not observed. Then the question becomes how to implement the control function. In this paper, as the main methodological contribution, we propose the following two procedures. Both procedures are implemented with a single snapshot of an observed network.\label{remark-snapshot}
\begin{enumerate}
	\item[(i)] First, suppose that $a_i$ can be consistently estimated. An example can be found in \cite{Graham2017} with the specification $g(a_i,a_j) = a_i + a_j$. Then, we estimate $\beta^0$ by running the partially linear regression of $y_i$ on $ \frac{\sum_{j \neq i}d_{ij}x_j}{\sum_{j\neq i}d_{ij}}$ and  $h(\widehat{a}_i)$  as in \cite{Robinson1988}. 
	\item[(ii)] The second method is to use an observed control function that asymptotically carries the same information  as $a_i$. 
	For this, first notice by the WLLN, 
	\begin{align*}
	{\rm deg}_i 
	&:= \frac{1}{N} \sum_{j \neq i} d_{ij} = \frac{1}{N} \sum_{j \neq i} \mathbb{I}(g(a_i,a_j) \geq u_{ij}) \rightarrow_p \mathbb{P}(d_{ij} = 1 \,|\, a_i). 
	\end{align*}
	Suppose that the  network formation probability conditional on $a_i$, $\mathbb{P}(d_{ij} = 1 \,|\, a_i)$, is a monotonic function of $a_i$. A sufficient condition for this is that $g(\cdot, a_j)$ is monotonic in the same direction for all $a_j$, for example\label{remark-g-conditions}
	\begin{equation}
		g(a_i,a_j) = a_i + a_j - \tau |a_i-a_j| \label{ex.simple.one-to-one}
	\end{equation}
	with $ 0 \leq \tau < 1$.
	In this case, the limit of the average node degree, $ \lim_{N \rightarrow \infty} \frac{1}{N} \sum_{j \neq i} d_{ij}$, carries the same information as the control function $a_i$, which justifies ${\rm deg}_i$ as a proxy of the control function $a_i$, that is, $\E(v_i \,|\, a_i) \simeq \E(v_i \,|\, {\rm deg_i}) =: h_{*}({\rm deg}_i)$. The peer effect coefficient $\beta^0$ can be estimated by using ${\rm deg}_i$ as a control function. More specifically, we estimate $\beta^0$ by running the partially linear regression of $y_i$ on $\frac{\sum_{j \neq i}d_{ij}x_j}{\sum_{j\neq i}d_{ij}}$ and $h_{*}({\rm deg}_i)$.\\
	Intuitively, unobserved characteristics $a_i$ drive heterogeneous degree sequences. We can therefore control for degree when estimating peer effects, ignoring the specific choice of a structural model explaining heterogeneous degrees.
\end{enumerate} 
	The use of degree as a control function requires much fewer restrictions on the specification of the network. Intuitively, the unobserved node (or individual) fixed effects $a_i$ control for heterogeneous degree sequences. Therefore, from an economic point of view, what needs to be controlled is the agent's degree, which validates the control function approach that uses $\rm{deg_i}$. This approach does not require a specification of the specific structural model explaining heterogeneous degree sequences. \label{remark-control-structural-interpretation}
	Consistent estimation of $a_i$ usually  requires a specific functional form. For example, \cite{Graham2017} assumed an additive model and \cite{ChenFernandez-ValWeidner2018} require an interactive form.
However, there is a disadvantage in the degree approach. The degree approach cannot identify the coefficient of the observed exogenous regressor if the same regressor also impacts  the network formation. \label{remark-comparison-approaches-intro}

In Section \ref{section: model of peer effects with endogenous network}, we generalize the simple model (\ref{eq: outcome simplified}) by allowing for an additional peer effect, $\frac{\sum_{j \neq i}d_{ij}y_j}{\sum_{j\neq i}d_{ij}}$, known as the endogenous peer effect, which measures the effects of the outcomes of the peer group on an individual outcome. In this case we have to deal with two kinds of endogeneity in the peer effect regressors: one from the endogenous regressors $y_{j}$ and the other one from the endogenous peers $d_{ij}$. 
In Section \ref{section: model of peer effects with endogenous network}, we also generalize the dyadic network formation model by introducing a dyadic component based on observed individual characteristics. We provide application examples of the general model and discuss its features there.
The identification of the peer effects in the general model will be discussed in Section \ref{section: identification}. In Section \ref{section: estimation} we shows how to implement the two aforementioned estimation methods in the general framework.
In the appendix we provide the regularity conditions that are required for the asymptotic results of the paper. All the technical proofs and comprehensive Monte Carlo simulation results are found in the Online Supplement material which is available in \cite{JohnssonMoon2019}.

%%%%%%%%%%%%%%%%%%%%%%%%%%%%%%%%%%%%%%%%%%%%%%%%%%%%%%
\subsection{Related Literature} \label{sec:related.literature}
%%%%%%%%%%%%%%%%%%%%%%%%%%%%%%%%%%%%%%%%%%%%%%%%%%%%%%

Closely related papers that adopt a control function approach include \cite{GoldsmithP2013}, \cite{hsieh2016social}, \cite{Qu2015}, \cite{Arduini2015a} and \cite{Auerbach2016}. Our paper adopts a frequentist approach based on a nonparametric specification of the network formation, while \cite{GoldsmithP2013} and \cite{hsieh2016social} use the Bayesian method based on a full parametric specification of the network formation and the outcome equation. Like our paper, \cite{Qu2015} assume the network (spatial weights in their model) to be endogenous through  unobserved individual heterogeneity. However, our paper is different from \cite{Qu2015} in many aspects. They consider sparse network formation models while we  consider a dense network. They restrict the functional form of the control function to be linear, while we impose no restriction on the functional form. The two papers propose different implementations of the control function. Also, in \cite{GoldsmithP2013}, unobserved components account for homophily in link formation, whereas in our setup they mainly drive degree heterogeneity but are allowed to account for homophily as well, as in the example (\ref{ex.simple.one-to-one}).

Our paper is different from \cite{Arduini2015a} regarding the main source of the endogeneity of the network and the form of the control function. \cite{Arduini2015a} assume that the endogeneity of the network is allowed through dependence between the outcome equation error and the idiosyncratic network formation error, like the conventional sample selection model. This model can be interpreted as meeting opportunities being correlated with unobserved ability of the agent that affects the outcome.
\cite{Arduini2015a} consider control functions (both parametric and semiparametric) to deal with the selection bias problem and propose a semiparametric estimator that uses a power series to approximate selectivity bias terms. Regarding asymptotics, in both \cite{Qu2015} and \cite{Arduini2015a}, the asymptotics are derived using near-epoch dependence and are based on the assumption that the number of connections does not increase at the same rate as the square of the network size. 

\label{reference-Auerbach}Among the aforementioned related papers, probably the one most closely related to ours is  \cite{Auerbach2016}. As a result, we would like to discuss the differences between the two papers in more detail.
The  outcome model of \cite{Auerbach2016} is a partially linear regression model where the nonparametric component is an unknown function of the unobserved network heterogeneity,
\begin{align*}
y_i &= \beta^0 x_i + h(a_i) + \varepsilon_i, \\
d_{ij} &= \mathbb{I}(g(a_i,a_j)\geq u_{ij}) \mathbb{I}(i \neq j).
\end{align*}
In the simple peer effect example, the exogenous peer effect corresponds to the regressor $x_i$ above. The network formation is the same as (\ref{eq: network simplified}). 

To compare the identification ideas, let's assume that $a_i \sim U[-1/2,1/2]$ and $u_{ij} \sim U[0,1]$. In this case, $d_i := (d_{i1},...,d_{in})'$ and the distribution of $d_i$ of node $i$, whose characteristic is $a_i$, is fully characterized by the link formation probability profile $g(a_i, \bullet)$. 

The key condition of \cite{Auerbach2016} is that $h(a_i)$ and the the link formation distribution profile $g_i(\bullet) := g(a_i,\bullet)$ be one-to-one a.s., that is, $g(a,\bullet) \neq g(a^*,\bullet)$ a.s. if and only if $h(a) \neq h(a^*)$. Then, for any distance measure between the two profiles $g_i$ and $g_j$, $d(g_i,g_j)$, it follows that $d(g_i,g_j) = 0$ if and only if  $h(a_i) = h(a_j)$. 

Based on this, \cite{Auerbach2016} finds that one can control the network endogeneity by pair-wise differencing\footnote{This resembles \cite{Powell1987}, \cite{Heckmanetal1998}, and \cite{AbadieImbens2006}.}  of the observations of the two individuals, $i$ and $j$, whose network formation distributions are the same, $d(g_i,g_j) = 0$, and proposes a semiparametric estimator based on matching pairs of agents
with similar columns of the squared adjacency matrix.

Notice that the identification condition of \cite{Auerbach2016}  is satisfied if $g(a_i,\bullet)$ and $a_i$ have a one-to-one relation. However, our second identification is based on the condition that $a_i$ and the marginal network probability, $\int g(a_i,\tau) d \tau$, have a one-to-one relation. We admit that this condition is more restrictive than the identification condition of \cite{Auerbach2016}, because our restriction is a special case of his restriction. However, as mentioned in the introduction, our identification under the stronger condition allows for the omitted variable in the peer effects equation to be nonparametrically directly
estimated, which results in the peer effect estimator having the parametric convergence rate ($\sqrt{N}$). This feature is not necessarily guaranteed in the framework of \cite{Auerbach2016}.\footnote{We thank one of the referees for suggesting the comparisons.}

%%%%%%%%%%%%%%%%%%%%%%%%%%%%%%%%%%%%%%%%%%%%%%%%%%%%%%%
\section{General Model of Peer Effects with an Endogenous Network}\label{section: model of peer effects with endogenous network}
%%%%%%%%%%%%%%%%%%%%%%%%%%%%%%%%%%%%%%%%%%%%%%%%%%%%%%%

In this section, we introduce a general linear-in-means peer effect model that extends the simple illustrative outcome model with a peer effect in (\ref{eq: outcome simplified}) and the simple dyadic network formation model in (\ref{eq: network simplified}).

\subsection{General Linear-In-Means Peer Effects Model}

As in Section \ref{section: main.idea}, $d_{ij}$ are the observed binary variables that measure undirected links among individuals $i\in \{1,2,\ldots,N\}$.  
We assume that individual outcomes are given by the linear-in-means model of peer effects
\begin{equation}\label{model:outcome}
y_i = \left( \sum_{j=1 \atop j\neq i}^Ng_{ij}y_j \right) \beta_1^0 + \x'_{1i}\beta_2^0 + \left(\sum_{j=1\atop j\neq i}^Ng_{ij}\x_{1j}\right)^\prime\beta_3^0+\upsilon_i,
\end{equation}
where $\x_{1i}$ are observed individual characteristics that affect the outcome $y_i$, $v_i$ are unobserved individual characteristics, and 
\[
g_{ij} = 
\left\{ 
\begin{array}{cc}
0 & \quad {\rm  if } \quad i=j  \\ 
\frac{d_{ij}} {\sum_{j \neq i} d_{ij}} & {\rm otherwise} 
\end{array} 
\right.
\] 
is the weight of the peer effects. 
Using the terminology of \cite{Manski1993a}, $\beta_1^0$ captures the endogenous social effect, and $\beta_3^0$ measures the exogenous social effect.
We let $\beta^0 := (\beta^0_1, \beta_2^{0'}, \beta_3^{0'})'$ and denote $\beta = (\beta_1, \beta_2^{'}, \beta_3^{'})'$.

We let $\D_N$  be the $ (N \times N)$ adjacency matrix of the network whose $(i,j)^{th}$ element is $d_{ij}$. We let $d_{ii}=0$ for all $i$, following convention. Let $\G_N$ be the matrix whose $(i,j)^{th}$ element is $g_{ij}$. Recall that $\G_N$ is obtained by row-normalizing $\D_N$. 
Denote $\X_{1N}=(\x_{11}',\ldots,\x_{1N}')'$, $\y_N=(y_1,\ldots,y_N)'$ and $\bupsilon_N=(\upsilon_1,\ldots,\upsilon_N)'$.\label{remark-equilibirum}
{\color{black} Using this notation, we can express the linear-in-means peer effects model (\ref{model:outcome}) as
\begin{equation}
	\y_N = \G_N \y_N \beta_1^0  + \X_{1N} \beta_2^0 + \G_N \X_{1N} \beta_3^0 + \bupsilon_N. \label{model.outcome.matrix}
\end{equation}
Throughout the paper, we assume that $| \beta_1^0 | < 1$. 
It is known that when $\G_N$ is row normalized (i.e., $\sum_{j \neq i}g_{ij} = 1$) and $ | \beta_1^0 | < 1$, the (equilibrium) solution of the peer effect model uniquely exists (e.g., see \cite{Bramoulle2009}) as 
\begin{align}
\y_N 
&= (\I_N-\beta_1^0 \G_N)^{-1}(\X_{1N}\beta_2^0 + \G_N \X_{1N}\beta_3^0 + \bupsilon_N) \nonumber
\\
&= \sum_{k=0}^{\infty} \left( \beta_1^0 \G_N \right)^k(\X_{1N} \beta_2^0 + \G_N\X_{1N} \beta_3^0 + \bupsilon_N). \label{model.outcome.reduced.form}
\end{align}
}
%\CommentReferee{A reference to (or discussion on) the uniqueness result that follows
%	from $|\beta^0_1<1|$ would seem appropriate.}
%\CommentIda{Does the below equation go here?}
%\CommentReferee{Yes. The above equation.}
In the standard linear-in-means model of peer effects, the main focus has been identification and estimation of peer effects, assuming that the peer group (or the network) is exogenous, that is, $\E[\upsilon_i|\X_{1N},\G_N]=0$.  For example, see \cite{Manski1993a} and \cite{Bramoulle2009}, \cite{Lee2007}, and \cite{blume2015linear}.  
To identify and estimate the linear-in-means model of peer effects when the peer group is exogenous, it is necessary to take into account the fact that the regressor  $\sum_{i=1}^N g_{ij}y_{j}$ is correlated with the error term $\upsilon_i$. For example, if $\upsilon_i\sim\ i.i.d. (0,\sigma^2)$,
it is true that
	\begin{equation}
	\begin{split}
	\E[(\G_N\y_N)'\bupsilon_N]&=[(\G_N(\I_N-\beta_1^0 \G_N)^{-1}(\X_{1N}\beta_2^0+\G_N\X_{1N}\beta_3^0+\bupsilon_N))'\bupsilon_N]\\
	&=\E[(\G_N(\I_N-\beta_1^0 \G_N)^{-1}\bupsilon_N)'\bupsilon_N]=\sigma_0 tr(\G_N(\I_N-\beta_1^0 \G_N)^{-1})\neq 0.
	\end{split}
	\end{equation}
	To solve this endogeneity problem different estimators have been proposed in the literature, see for example \citet{Kelejian1998}, \cite{Lee2003} and \cite{Lee2007a}.  One of the widely used estimation methods is the Instrumental Variables (IV) approach. 
%	\CommentReferee{explain why neighbors of neighbors are a valid instrument and we make a reference to the literature}
\label{remark-IV}
	 {\color{black} In view of the expression of (\ref{model.outcome.reduced.form}), when $\beta_2^0 \neq 0$, we can use  $\G^2_N\X_{1N}$ as the IV of the endogenous regressor $\G_N\y_N$ because $\G^2_N\X_{1N}$ is uncorrelated with $\bupsilon_N$ while it is correlated with the endogenous regressor $\G_N \y_N$  (see for example  \cite{Kelejian1998}, \cite{Lee2003}, and \cite{Bramoulle2009})\footnote{ {\color{black} If $\beta_2^0 = 0$, $\y_N$ does not depend on $\X_{1N}$ and $\G_N^2\X_{1N}$ is not a relevant instrument for $\G_N \y_N$.  }}. }Then, the natural estimator is the Two-Stage Least Squares (2SLS) estimator,
	\begin{equation}
	\widehat{\beta}_N^{2SLS}=(\W_N'\Z_N(\Z_N'\Z_N)^
	{-1}\Z_N\W_N)^{-1}\W_N'
	\Z_N(\Z_N'\Z_N)^{-1}\Z_N'\y_N,
	\end{equation} 
	where $\W_N=[\G_N\y_N,\ \X_{1N},\ \G_N\X_{1N}]$ and $\Z_N=[\X_{1N},\ \G_N\X_{1N},\ \G^2_N\X_{1N}]$ is the matrix of instruments. For the IVs $\Z_N$ to be strong, we assume that $\beta_2^0 \neq 0$. 

	When the network matrix is endogenous, $\E[\G_N\bupsilon_N]\neq 0$, and the procedure used by  \cite{Kelejian1998}, \cite{Lee2003}, \cite{Bramoulle2009} and others is no longer valid since the IV matrix $\Z_N=[\X_{1N},\ \G_N\X_{1N},\ \G^2_N\X_{1N}]$ is correlated with the error term $\bupsilon_N$.  
Specifically, the validity of the 2SLS estimator depends on the orthogonality condition $\E[\bupsilon_N|\Z_N] = 0$, which is implied if $\E[\bupsilon_N|\X_{1N},\G_N] = 0$. However, it does not hold if the (row normalized) network $\G_N$  is correlated with $\bupsilon_N$, which is true if unobserved individual characteristics of $\G_N$ directly influence both link formation and individual outcomes.

In this paper, we consider the case where it may be that $\E[\bupsilon_N|\X_{1N},\G_N]\neq 0$, so that  unobserved characteristics that influence link formation  can also have a direct effect on individual outcomes. This is an important consideration in many common applications, like the impact of school friendships on scholarly achievement or substance use.	Imagine kids from homes where parents help with homework who only form friendships with kids from similar homes. If this unobserved characteristic of parental behavior is not taken into account, and if this is what really determines grades, this effect might falsely be classified as a peer effect.\label{remark-scholarly} {\color{black} A more elaborate discussion of our framework and its empirical applications can be found in Section \ref{section: main.idea}.}
\subsection{Model of Network Formation} \label{sec: model of network formation}
%\CommentRoger{Rewrite the section}
Let $\x_{2i}$ be a vector of observable characteristics of individual $i$, and let $\x_i=\x_{1i}\cup \x_{2i}$. 
Define $\X_{2N}$ analogously to $\X_{1N}$ and let $\X_N=\X_{1N}\cup\X_{2N}$.
We introduce $a_i$, a scalar unobserved characteristic of individual $i$, which is treated as an individual fixed effect, and hence,  might be correlated with $\x_i$.  We denote the vector of individual unobserved characteristics by $\an=(a_1,a_2,\ldots,a_N)'$. Individuals are connected by an undirected network $\D_N$, with the $(i,j)^{th}$ element $d_{ij} = 1$ if $i$ and $j$ are directly connected and $0$ otherwise. We assume the network to be undirected\footnote{{\color{black}Our analysis can be extended to the directed network case, but we do not pursue it in this paper.}},  $d_{ij} = d_{ji}$, and assume $d_{ii}=0$ for all $i$, following the convention. In this case, there are $n=\binom{N}{2}$ dyads. Let $\mathbf{t}_{ij}$ denote an $l_T\times 1$ vector of dyad-specific characteristics of dyad $ij$, and we assume that $\mathbf{t}_{ij}=t(\x_{2i},\x_{2j})$. 
Agents form links according to 
\begin{equation}
d_{ij}=\mathbb{I}( g( t(\x_{2i},\x_{2j}), a_i, a_j ) - u_{ij} \geq 0), \label{model.network.formation}
\end{equation}
where  $\mathbb{I}( \bullet)$ is an indicator function. In this setup, link surplus is transferable across directly linked agents and consists of three components: $\mathbf{t}_{ij} := t(\x_{2i},\x_{2j}) $ is a systematic component that varies with observed dyad attributes and accounts for homophily, $a_i$ and $a_j$ account for unobserved dyad attributes (degree heterogeneity), and $u_{ij}$ is an idiosyncratic shock that is i.i.d. across dyads and independent of $\mathbf{t}_{ij}$ and $a_i$ for all $i,j$. 
Since links are undirected, the surplus of link $d_{ij}$ must be the same for individual $i$ and $j$. Hence, we assume that the function $t_{ij}$ is symmetric in $i$ and $j$, and the function $g$ is symmetric in $a_i$ and $a_j$. 

In the literature, various parametric versions of the network formation in (\ref{model.network.formation}) are used, ({\color{black} see for example \cite{jackson2005survey}, \cite{Graham2017})}). An important example of a parametric specification is the one in \cite{Graham2017},
\begin{equation}
d_{ij}=\mathbb{I}(t(\x_{2i},\x_{2j})'\lambda+a_i+a_j - u_{ij} > 0). \label{model.network.formation.parametric}
\end{equation} 
For the purpose of the paper, particularly in constructing the estimators that we  introduce in Section \ref{section: estimation}, we do not need a parametric specification.

Regarding the network formation (\ref{model.network.formation}), we impose restrictions (Assumption \ref{assumption:limit.dist} (iii) - (vi) in the Appendix) that imply the following two features. The first feature is that the link formation probability of individual $i$ with characteristics $(\x_{2i},a_i)$ is one-to-one with respect to the unobserved characteristic $a_i$, that is, for all $x_{2i}$,
\begin{equation}
a_i \neq a_i^* \text{ if and only if } \PR \left( d_{ij} = 1 \,|\, \x_{2i},a_i \right) \neq \PR \left( d_{ij} = 1 \,|\, \x_{2i},a_i^* \right). \label{eq.monotone.link.formation}
\end{equation}
Obviously, this condition is satisfied in the parametric model (\ref{model.network.formation.parametric}). This monotonic condition justifies the use of the average node degree in implementing the control function as introduced in Section \ref{section: main.idea} and will be discussed in Section \ref{sec: estimation with x,a as control}. The second feature is that the network formed by (\ref{model.network.formation}) is dense \label{remark-dense} in the sense that the expected number of connections is proportional to the square of the network size. This is satisfied if the error $u_{ij}$ is drawn randomly from a distribution with  full support, while $g( \mathbf{t}_{ij}, a_i, a_j )$  is bounded (see Assumption \ref{assumption:limit.dist} (iii),(iv), and (v) in the Appendix). In this case, the probability of any two individuals forming a link is bounded away from zero and strictly less than one. The dense network model is appropriate for scenarios where any two individuals can plausibly form a link.
Notice that the dense network assumption and the sharing restriction on the net surplus function $g$ are necessary for implementing the control function in Section \ref{section: estimation} and establishing the asymptotic theory of the control function based estimators in Section \ref{subsection: estimation, limiting distributin of estimator}. 
If $a_i$ is observed, we can identify and estimate peer effects without these assumptions (see Section \ref{section: identification}).

Regarding the network formation model (\ref{model.network.formation}), it is important to note that the network formation model (\ref{model.network.formation}) rules out interdependent link preferences, and it assumes that links are formed independently conditional on observed individual characteristics and unobserved fixed effects. \label{no network externalities}\label{remark-conditional-independence} As discussed in \cite{Graham2017}, this assumption is appropriate for settings where link formation is driven predominantly by bilateral concerns, such as certain types of friendship networks, trade networks and some models of conflict between nation-states. The model in (\ref{model.network.formation}) is not a good choice when important strategic aspects influence link formation, like when the identity of the nodes to which $j$ is linked influences $i$'s return from forming a link with $j$. A discussion of networks with interdependent links can be found in \cite{Graham2017} and \cite{dePaula2016}. Also, when network externalities are present, the additional complication of multiple equilibria has to be considered, see for example \cite{Sheng2012} for more details.

%%%%%%%%%%%%%%%%%%%%%%%%%%%%%%%%%%%%%%%%%%%%%%%%%%%%%%%%%%%%
\section{Identification of peer effects using a control function approach}\label{section: identification}
%%%%%%%%%%%%%%%%%%%%%%%%%%%%%%%%%%%%%%%%%%%%%%%%%%%%%%%%%%%%

In this section we provide an identification argument for the peer effect equation based on a control function when the network is endogenous.

\subsection{Control Function of Network Endogeneity}
In this subsection we discuss how to control the endogeneity of the peer group defined by the network formed in equation (\ref{model.network.formation}). 
First we introduce a basic assumption that we will maintain throughout the paper.
\begin{assumption}[]\label{as:basic}
		(i) $(\x_i,a_i,\upsilon_i)$ are i.i.d. for all $i$, $i=1,\ldots,N$, 
		(ii) $\{u_{ij}\}_{i,j=1,\ldots,N}$ are independent of $(\X_{N},\mathbf{a}_N, \bupsilon_N )$ and i.i.d. across $(i,j)$ with cdf $\Phi(\cdot)$, and 
		(iii) $\mathbb{E}(v_i|\x_i,a_i) = \mathbb{E}(v_i|a_i).$  	
\end{assumption}
Assumption \ref{as:basic}(i) implies that the observables $\x_i$ and the unobservable characteristics $(a_i,\upsilon_i)$ are randomly drawn. This is a standard assumption in the peer effects literature. Assumption \ref{as:basic}(ii) assumes that the link formation error $u_{ij}$ is orthogonal to all other observables and unobservables in the model. This means that the dyad-specific unobservable shock $u_{ij}$ from the link formation process does not influence  outcomes $(y_1,\ldots,y_N)'$.  However, we allow for endogeneity of the social interaction group through dependence between the two unobserved components $a_i$ and $\upsilon_i$. This means that the unobserved error $\upsilon_i$ in the outcome equation can be correlated with  unobserved individual characteristics $a_i$ that are determinants of link formation. We also allow the observed characteristics $\x_i$ of the outcome equation and the network formation to  be correlated with the unobserved components $(\upsilon_i,a_i)$, so that the regressor $\x_{1i}$ can be endogenous in the outcome equation, and the network formation observables $\x_{2i}$ can be arbitrarily correlated with the unobserved individual characteristic $a_i$. In Assumption \ref{as:basic}(iii), we assume that the dependence between $\x_i$ and $\upsilon_i$ exists only through $a_i$. That is, $a_i$ is the fixed effect of individual $i$ and controls the endogeneity of $\x_i$ with respect to $\upsilon_i$.

Notice that the network $\D_N$ defined in (\ref{model.network.formation}) and the (row normalized) network $\G_N$ are measurable functions of  $ (\x_{2i},\x_{2,-i},a_i,\mathbf{a}_{-i},\{u_{ij}\}_{i,j=1,\ldots,N}),$ where $\x_{2,-i}=(\x_{2,1},\ldots,\x_{2,i-1},\x_{2,i+1},\ldots,\x_{2,N})$ and $\mathbf{a}_{-i}$ is defined analogously. 
Under Assumption \ref{as:basic} we have
\begin{eqnarray}
\E[ \upsilon_i|\X_{N},\G_N,a_i ] &=& \E[\upsilon_i|\x_{-i},\G_N(\x_{2,-i},\mathbf{a}_{-i},\{u_{ij}\}_{i,j=1,\ldots,N},\x_{2i},a_i),\x_{i},a_i] \nonumber \\
&=&\E[\upsilon_i|\x_{i},a_i] \nonumber 
= \E[\upsilon_i| a_i],
\label{eq.control.function}
\end{eqnarray}
where the second equality holds because $(\x_{-i},\mathbf{a}_{-i},\{u_{ij}\}_{i,j=1,\ldots,N})$ and $ (\x_{i},a_i,\upsilon_i)$ are independent under Assumptions \ref{as:basic} (i) and (ii). This shows $v_i$ and $(\x_{-i}, \G_N(\x_{2,-i},\mathbf{a}_{-i},\{u_{ij}\}_{i,j=1,\ldots,N},\x_{2i},a_i))$ are mean-independent conditioning on $(\x_{i},a_i)$. The last line follows by the fixed effect assumption, Assumption \ref{as:basic} (iii).

Result (\ref{eq.control.function}) shows that conditional on the unobserved heterogeneity $a_i$ in the network formation (and any subcomponents of $\x_i$), the unobserved characteristic $\upsilon_i$ that affects the outcome $y_i$ becomes uncorrelated with the (row normalized) network $\G_N$ (and the observables $\X_N$). This implies that the network endogeneity can be controlled by $a_i$ (or together with any subcomponents of $\x_i$). 
We summarize the discussion above in the following lemma:  

\begin{lemma}[Control Function of Peer Group Endogeneity]
\label{lemma:control.function} 
Suppose that Assumption \ref{as:basic} holds. Then, $\E[\upsilon_i|\X_{N},\G_N,a_i]=\E[\upsilon_i| \x_{i}, a_i].$
\end{lemma}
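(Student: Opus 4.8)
The plan is to formalize the chain of equalities displayed in (\ref{eq.control.function}) by isolating the block of ``rest-of-sample'' randomness on which $\D_N$ depends and then invoking a standard conditional mean-independence fact. First I would collect all the variables that the network depends on, other than individual $i$'s own attributes, into a single vector $W_i := (\x_{-i}, \mathbf{a}_{-i}, \{u_{ij}\}_{i,j=1,\ldots,N})$. Because $\D_N$ is a measurable function of $(\x_{2i}, \x_{2,-i}, a_i, \mathbf{a}_{-i}, \{u_{ij}\}_{i,j=1,\ldots,N})$ and $\X_N$ is a measurable function of $(\x_i, \x_{-i})$, the conditioning vector $(\X_N, \D_N, a_i)$ is measurable with respect to $(\x_i, a_i, W_i)$; that is, $\sigma(\X_N, \D_N, a_i) \subseteq \sigma(\x_i, a_i, W_i)$. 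The crucial structural point, which I would state explicitly, is that the network enters this conditioning set only through its own index-$i$ attributes $(\x_{2i}, a_i)$, already contained in $(\x_i, a_i)$, and through $W_i$.

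Next I would establish the conditional mean-independence $\E[\upsilon_i \mid \x_i, a_i, W_i] = \E[\upsilon_i \mid \x_i, a_i]$. By Assumption \ref{as:basic}(i) the triples $(\x_j, a_j, \upsilon_j)$ are i.i.d.\ across $j$, so $(\x_{-i}, \mathbf{a}_{-i})$ is independent of $(\x_i, a_i, \upsilon_i)$; by Assumption \ref{as:basic}(ii) the shocks $\{u_{ij}\}$ are independent of $(\X_N, \an, \bupsilon_N)$. Combining these shows that $W_i$ is independent of $(\x_i, a_i, \upsilon_i)$, from which the claimed conditional mean-independence follows immediately.

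I would then finish with two applications of the law of iterated expectations. Using the nesting $\sigma(\X_N, \D_N, a_i) \subseteq \sigma(\x_i, a_i, W_i)$,
\[
\E[\upsilon_i \mid \X_N, \D_N, a_i] = \E\big[\, \E[\upsilon_i \mid \x_i, a_i, W_i] \,\big|\, \X_N, \D_N, a_i \big] = \E\big[\, \E[\upsilon_i \mid \x_i, a_i] \,\big|\, \X_N, \D_N, a_i \big].
\]
By Assumption \ref{as:basic}(iii) we have $\E[\upsilon_i \mid \x_i, a_i] = \E[\upsilon_i \mid a_i]$, which is $\sigma(a_i)$-measurable; since $a_i$ belongs to the outer conditioning set, the outer expectation leaves it unchanged, delivering $\E[\upsilon_i \mid \X_N, \D_N, a_i] = \E[\upsilon_i \mid a_i]$.

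The routine parts are the two iterated-expectation steps; the step I expect to require the most care is the measurability and independence bookkeeping in the first paragraph, specifically verifying that $\D_N$ injects no information about $\upsilon_i$ beyond what $(\x_i, a_i)$ and the $\upsilon_i$-independent block $W_i$ already carry. This is precisely the content of the second equality in (\ref{eq.control.function}), and it hinges on the fact that links are formed independently conditional on observed and unobserved attributes, so that $\D_N$ is a deterministic function of $(\x_{2i}, a_i, W_i)$ with no feedback from $\upsilon_i$.
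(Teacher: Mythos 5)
Your proposal is correct and follows essentially the same route as the paper's own argument in (\ref{eq.control.function}): rewrite the conditioning set using the fact that $\D_N$ is a measurable function of $(\x_{2i},a_i)$ and the rest-of-sample block, invoke the independence of that block from $(\x_i,a_i,\upsilon_i)$ (Assumptions \ref{as:basic}(i)--(ii)) to drop it, and then apply the fixed-effect condition \ref{as:basic}(iii). Your explicit $\sigma$-algebra nesting and twofold use of iterated expectations merely make rigorous the steps the paper states informally; no substantive difference.
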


\subsection{Identification of Peer Effects with $a_i$ as Control Function}

In this section we show how to identify the peer effects in the outcome question when the endogenous network is formed by (\ref{model.network.formation}). We provide two identification methods depending on whether we control the network (peer group) endogeneity with $a_i$ or $a_i$ together with $\x_{2i}$, in the case when $\x_{2i}$ and $\x_{1i}$ do not overlap.

First notice that regardless of the possible endogeneity of the (row normalized) network $\G_N$, we need to control for the endogeneity of the term $\sum_{j \neq i} g_{ij}y_j$ that represents the so-called endogenous peer effects. When the peer group $\G_N$ is exogenous and uncorrelated with $\upsilon_N$, $\G^2_N \X_{1N}$ is often used as an IV for the endogenous peer effects term $\G_N \y_N$ (See, for example, \cite{Kelejian1998}, \cite{Lee2003}, \cite{Bramoulle2009}.).

Let $\Z_N=[\X_{1N}, \G_N \X_{1N}, \G^2_N\X_{1N} ]$ be the usual IV matrix used in 2SLS estimation of the peer effects equation. 
Note that $\Z_N$ is not a valid IV matrix anymore in our framework because the peer group defined by the network $\G_N$ is correlated with $\upsilon_N$ due to potential correlation between the unobserved $\upsilon_i$ and $a_i$.
Let $\W_N=[\G_N \y_N, \X_{1N}, \G_N \X_{1N}]$. Further, denote the transpose of the $i$th row of $\Z_N$ and $\W_N$ by $\z_i$ and $\w_i$, respectively. 
 
Suppose that Assumption \ref{as:basic} holds and so $a_i$ controls the network endogeneity. Then, 
\begin{eqnarray}
	\E\left[\: \left( \z_i -\E[\z_i|a_i]\right)(\upsilon_i - \E(\upsilon_i|a_i))\: |\:a_i \right]
	&=& \E[\z_i\upsilon_i \: | \: a_i] - \E[\z_i \: | \: a_i]\E[\upsilon_i \: | \: a_i] \nonumber \\
	&=&\E\left[\E[\z_i\upsilon_i \: | \: a_i,\X_{1N},\G_N] \: | \: a_i\right] - \E[\z_i|a_i]\E[\upsilon_i \: | \: a_i] \nonumber  \\
	&=&\E\left[\z_i\E[\upsilon_i \: | \: a_i,\X_{1N},\G_N] \: | \: a_i\right] - \E[\z_i \:| \: a_i]\E[\upsilon_i \:| \: a_i] \nonumber  \\
	&\stackrel{(1)}{=}& \E\left[\z_i\E[\upsilon_i \:| \: a_i] \:|\: a_i\right] - \E[\z_i \:|\: a_i]\E[\upsilon_i \:|\: a_i] \nonumber  \\
	&=& 0,\label{eq.orthogonality}
\end{eqnarray}
where  equality $(1)$ holds by Lemma \ref{lemma:control.function}(a). 
This shows that the instrumental variables $\z_{i}$ or  $\z_{i} -\E[\z_{i}|a_i]$ become orthogonal to $ \upsilon_i-\E[\upsilon_i|a_i],$ the residual of $\upsilon_i$ after projecting out $a_i$.  

Furthermore, if $\E\left[  \left( \z_i -\E[\z_i|a_i] \right)
\left( \w_i - \E[\w_i|a_i]\right)^{\prime}\right]$ 	has full rank, then we can identify the peer effect coefficients $\beta^0$ as
\begin{eqnarray*}
0 &=& \E\left[ \left( \z_i -\E[\z_i|a_i]\right)
\left(y_i-\w'_i\beta-\E[y_i - \w'_i \beta |a_i] \right) \right] \\
&=& \E[\left( \z_i -\E[\z_i|a_i]\right)(\w_i-\E[\w_i|a_i])'](\beta-\beta^0)+\E[\left( \z_i -\E[\z_i|a_i]\right)(\upsilon_i-\E[\upsilon_i|a_i])]\\
&\stackrel{(1)}{=}&\E[\left( \z_i -\E[\z_i|a_i]\right)(\w_i-\E[\w_i|a_i])'](\beta-\beta^0) \\
& \stackrel{(2)} {\Leftrightarrow} & \beta=\beta^0,
\end{eqnarray*}
where  equality $(1)$ follows by the orthogonality result in (\ref{eq.orthogonality}) and  equality $(2)$ follows from the full rank condition. 
\begin{assumption}[Rank condition]\label{assumption: rank}
	$\E\left[  \left( \z_i -\E[\z_i|a_i] \right)
	\left( \w_i - \E[\w_i|a_i]\right)^{\prime}\right]$ 	has full rank.
\end{assumption}
For the full rank condition in Assumption \ref{assumption: rank}, it is necessary that the IVs $\z_i$ and the regressors $\w_i$ have additional variation after projecting out the control function $a_i$. As shown in the Supplementary Appendix \ref{appendix: distribution of est}, when $N$ is large, both $\z_i$ and $\w_i$ become close to functions that depend only on $(\x_i,a_i)$. In this case, for the full rank condition to be satisfied, it is necessary that there be additional random components in $\x_{i}$ that are different from $a_i$, so that the limits of $\z_i$ and $\w_i$ are not linearly dependent. \label{rank - discussion}
As a summary, we have the following first identification theorem.
\begin{theorem}[Identification]Under Assumptions  \ref{as:basic} and \ref{assumption: rank}, the parameter $\beta^0$ is identified by the moment condition $\E[\left( \z_i -\E (\z_i|a_i) \right)(y_i - \E (y_i|a_i) - (\w_i-\E(\w_i|a_i))'\beta^0)]=0$:
\[
\E[\left( \z_i -\E (\z_i|a_i) \right)(y_i - \E (y_i|a_i) - (\w_i-\E(\w_i|a_i))'\beta)]=0\ \iff\ \beta=\beta^0.
\] \label{theorem:identification}
\end{theorem}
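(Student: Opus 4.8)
The plan is to establish the equivalence in Theorem \ref{theorem:identification} by showing that the displayed moment condition is exactly the population orthogonality condition that the identification argument preceding the theorem has already assembled, and then to invoke the rank condition to pin down $\beta^0$ uniquely. The statement is a two-sided implication, so I would prove the two directions separately, though they share the same algebraic backbone.

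For the forward direction ($\beta = \beta^0 \Rightarrow$ the moment equals zero), I would substitute the outcome model $y_i - \w_i'\beta^0 = \upsilon_i$ (from (\ref{model:outcome}), rewritten as $y_i = \w_i'\beta^0 + \upsilon_i$) into the moment expression. The key observation is that $\E[y_i \mid a_i] - \E[\w_i \mid a_i]'\beta^0 = \E[\upsilon_i \mid a_i]$, so that
\[
y_i - \E[y_i \mid a_i] - (\w_i - \E[\w_i \mid a_i])'\beta^0 = \upsilon_i - \E[\upsilon_i \mid a_i].
\]
The moment then reduces to $\E[(\z_i - \E[\z_i \mid a_i])(\upsilon_i - \E[\upsilon_i \mid a_i])]$, which vanishes by iterated expectations using the conditional orthogonality result (\ref{eq.orthogonality}): conditioning on $a_i$ inside gives zero, so the unconditional expectation is zero as well. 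This direction is essentially a restatement of the display chain already carried out before the theorem, so I do not expect difficulty here beyond bookkeeping.

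For the reverse direction (the moment equals zero $\Rightarrow \beta = \beta^0$), I would decompose the moment for a generic $\beta$. Writing $y_i - \w_i'\beta = \upsilon_i + \w_i'(\beta^0 - \beta)$ and applying the residualization, the moment splits into a term linear in $(\beta - \beta^0)$ plus the orthogonality term that is already known to vanish:
\[
\E[(\z_i - \E[\z_i \mid a_i])(\w_i - \E[\w_i \mid a_i])'](\beta^0 - \beta) + 0 = 0.
\]
Since Assumption \ref{assumption: rank} guarantees that the coefficient matrix $\E[(\z_i - \E[\z_i \mid a_i])(\w_i - \E[\w_i \mid a_i])']$ has full rank, the only solution is $\beta = \beta^0$.

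The main obstacle, such as it is, lies not in the algebra but in making the residualization identity airtight: one must be careful that $\E[\w_i'\beta \mid a_i] = \E[\w_i \mid a_i]'\beta$ (linearity of conditional expectation, which is immediate) and, more substantively, that the conditional orthogonality (\ref{eq.orthogonality}) genuinely transfers to the unconditional statement required here. This transfer relies on Lemma \ref{lemma:control.function}, which supplies $\E[\upsilon_i \mid a_i, \X_{1N}, \G_N] = \E[\upsilon_i \mid a_i]$ and is precisely what allows $a_i$ to serve as the control function. I would therefore foreground the tower-property step and cite Lemma \ref{lemma:control.function} explicitly, since everything else is then a direct consequence of the full-rank assumption. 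I expect the entire argument to be short, as the genuine content has been front-loaded into the control-function lemma and the preceding identification display.
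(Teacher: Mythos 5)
Your proposal is correct and follows essentially the same route as the paper: the paper's own argument is exactly your reverse-direction decomposition $\E[(\z_i-\E[\z_i|a_i])(\w_i-\E[\w_i|a_i])'](\beta-\beta^0)$ plus the orthogonality term killed by (\ref{eq.orthogonality}) (which rests on Lemma \ref{lemma:control.function} and the tower property), followed by the full-rank Assumption \ref{assumption: rank}. Your explicit split into two directions and the substitution $y_i-\w_i'\beta^0=\upsilon_i$ are just a slightly more spelled-out bookkeeping of the paper's single display chain, with no substantive difference.
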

Theorem \ref{theorem:identification} shows that we can identify the parameter $\beta^0$ by controlling the unobserved network heterogeneity $a_i$ in the outcome equation and taking the residuals $y_i - \E (y_i|a_i) - (\w_i-\E(\w_i|a_i))'\beta$ and using the instrumental variables $\z_i - \mathbb{E}[\z_i|a_i]$. 
\subsection{Identification of Peer Effects using $(\x_{2i},a_i)$ as Control Function} \label{section: alternative identification}
In view of the derivation of the control function in (\ref{eq.control.function}) under Assumption \ref{as:basic}, it is possible to use any regressors in $\x_i$ in addition to the unobserved heterogeneity $a_i$. 
In this section, we discuss identification of the peer effects using $(\x_{2i},a_i)$ as control function. The reason to consider this particular control function is that we can {\color{black} implement it in the absence of a consistent estimator of $a_i$}, which will be discussed in detail in Section \ref{section: estimation}.

First, suppose that there is no overlap between the regressors in the outcome equation $\x_{1i}$ and the regressors in the network formation equation $\x_{2i}$ and assume the conditions in Assumption \ref{as:basic}.\footnote{Later in this section, we will discuss a more general case where $\x_{1i}$ and $\x_{2i}$  intersect.}
\begin{assumption}[]\label{as:basic.alternative}
	Assume that the conditions (i),(ii), and (iii) of Assumption \ref{as:basic} hold. Also, assume that (iv) the explanatory variables in $\x_{1i}$ and $\x_{2i}$ do not overlap (i.e., $\x_{1i} \, \cap \, \x_{2i} = \emptyset$).	
\end{assumption}

Then, under Assumption \ref{as:basic} and by (\ref{eq.control.function}), it follows that 
\begin{equation}
\E[ \upsilon_i|\X_{N},\G_N,a_i ] 
= \E[\upsilon_i|a_i] 
= \E[\upsilon_i|\x_{2i},a_i], \label{eq.control.function.alternative}
\end{equation}  
where the last line holds by Assumption \ref{as:basic}(iii). Then, similar to (\ref{eq.orthogonality}), we can show that 
\begin{eqnarray}
&& \E\left[\: \left( \z_i -\E[\z_i|\x_{2i},a_i]\right)(\upsilon_i - \E(\upsilon_i|\x_{2i},a_i))\: |\: \x_{2i}, a_i \right] 
= 0. \label{eq.orthogonality.alternative}
\end{eqnarray}
Furthermore, suppose that the following full rank assumption is satisfied:
\begin{assumption}[Rank condition]\label{assumption:rank.alternative}
	$\E\left[  \left( \z_i -\E[\z_i|\x_{2i},a_i] \right)
	\left( \w_i - \E[\w_i|\x_{2i},a_i]\right)^{\prime}\right]$ 	has full rank.
\end{assumption}
Notice that if $\x_{1i}$ and $\x_{2i}$ are overlapped, then the full rank condition in Assumption \ref{assumption:rank.alternative} does not hold. 

Using similar arguments that lead to Theorem \ref{theorem:identification}, we can identify the peer effect coefficients $\beta^0$ as
\begin{eqnarray}
0 &=& \E\left[ \left( \z_i -\E[\z_i|\x_{2i},a_i]\right)
\left(y_i-\w'_i\beta-\E[y_i - \w'_i \beta |\x_{2i},a_i] \right) \right] 
\Leftrightarrow  \beta=\beta^0, \label{eq.identification.alternative}
\end{eqnarray}
This is summarized in the following theorem.

\begin{theorem}[Alternative Identification]Under Assumptions \ref{as:basic}, \ref{as:basic.alternative}, and \ref{assumption:rank.alternative}, the parameter $\beta^0$ is identified by the moment condition
%	\\ $\E[\left( \z_i -\E( \z_i|\x_{2i},a_i )\right)(( y_i - \E(y_i|\x_{2i},a_i) -  ( \w'_i -\E (\w_i |\x_{2i},a_i))'\beta^0 ]=0$:
	\[
	\E[\left( \z_i -\E( \z_i|\x_{2i},a_i )\right)(( y_i - \E(y_i|\x_{2i},a_i) -  ( \w'_i -\E (\w_i |\x_{2i},a_i))'\beta]=0\ \iff\ \beta=\beta^0.
	\] \label{theorem:identification.alternative}
\end{theorem}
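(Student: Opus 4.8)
The plan is to replicate the identification argument behind Theorem \ref{theorem:identification}, but with the enlarged conditioning set $(\x_{2i},a_i)$ playing the role previously played by $a_i$ alone. The whole argument hinges on one observation: since $\x_{2i}$ is a subvector of $\x_i$, Assumption \ref{as:basic}(iii) forces the control function to be unchanged when we append $\x_{2i}$ to the conditioning, i.e. $\E[\upsilon_i|\x_{2i},a_i]=\E[\upsilon_i|a_i]$. So I would first record this identity and then re-run the two-step orthogonality-plus-rank argument almost verbatim.

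First I would establish the extended control-function identity $\E[\upsilon_i|\X_N,\D_N,a_i]=\E[\upsilon_i|\x_{2i},a_i]$, which is exactly (\ref{eq.control.function.alternative}): Lemma \ref{lemma:control.function} gives that the left side equals $\E[\upsilon_i|a_i]$, and Assumption \ref{as:basic}(iii) rewrites that as $\E[\upsilon_i|\x_{2i},a_i]$. Second, I would demean the instrument by its conditional mean and verify the orthogonality condition (\ref{eq.orthogonality.alternative}): expanding the product, conditioning inside on $(a_i,\X_{1N},\G_N)$, and pulling $\z_i$ (a measurable function of $(\X_N,\D_N)$) outside the inner expectation, the extended control-function identity collapses $\E[\upsilon_i|a_i,\X_{1N},\G_N]$ to $\E[\upsilon_i|\x_{2i},a_i]$, so the two terms cancel.

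Third, I would substitute the structural outcome $y_i=\w_i'\beta^0+\upsilon_i$ into the moment function and demean. The residual $y_i-\E[y_i|\x_{2i},a_i]-(\w_i-\E[\w_i|\x_{2i},a_i])'\beta$ splits into $(\w_i-\E[\w_i|\x_{2i},a_i])'(\beta^0-\beta)$ plus $(\upsilon_i-\E[\upsilon_i|\x_{2i},a_i])$. Multiplying by $\z_i-\E[\z_i|\x_{2i},a_i]$ and taking expectations, the second piece vanishes by the orthogonality just established, leaving the matrix $\E[(\z_i-\E[\z_i|\x_{2i},a_i])(\w_i-\E[\w_i|\x_{2i},a_i])']$ times $(\beta^0-\beta)$. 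This is precisely the chain (\ref{eq.identification.alternative}). Finally, Assumption \ref{assumption:rank.alternative} (full rank) makes this product zero if and only if $\beta=\beta^0$, which is the claimed equivalence.

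The part requiring the most care is not the algebra but justifying that $(\x_{2i},a_i)$ remains a legitimate control: the orthogonality step quietly uses that adding $\x_{2i}$ does not reintroduce dependence between $\upsilon_i$ and the network, which is exactly the content of the fixed-effect restriction (iii). I would also flag where the non-overlap condition \ref{as:basic.alternative}(iv) enters: if $\x_{1i}$ and $\x_{2i}$ shared components, those components of $\w_i=(\G_N\y_N,\X_{1N},\G_N\X_{1N})$ would be (asymptotically) perfectly explained by $(\x_{2i},a_i)$, killing the residual variation and violating the rank condition; non-overlap is what guarantees the demeaned regressors retain enough variation for Assumption \ref{assumption:rank.alternative} to be plausible.
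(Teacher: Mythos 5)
Your proposal is correct and follows essentially the same route as the paper: the extended control-function identity $\E[\upsilon_i|\X_N,\D_N,a_i]=\E[\upsilon_i|a_i]=\E[\upsilon_i|\x_{2i},a_i]$ from Lemma \ref{lemma:control.function} and Assumption \ref{as:basic}(iii), then the demeaned orthogonality condition (\ref{eq.orthogonality.alternative}) via iterated expectations, and finally the decomposition of the moment into $\E[(\z_i-\E[\z_i|\x_{2i},a_i])(\w_i-\E[\w_i|\x_{2i},a_i])'](\beta-\beta^0)$ plus a vanishing term, closed by the full-rank Assumption \ref{assumption:rank.alternative}. Your closing remark on why the non-overlap condition of Assumption \ref{as:basic.alternative} is needed for the rank condition matches the paper's own discussion following that assumption.
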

So far, we have considered the case where the regressors $\x_{i1}$ and $\x_{2i}$ do not intersect. 
A more general case is when the regressors $\x_{1i}$ consist of two components, where one component is different from the observed control function $\x_{2i}$ and the other is part of $\x_{2i}$. That is, $\x_{1i} = (\x_{11i},\x_{12i})$, where $\x_{11i}$ does not share any elements with $\x_{2i}$ and $\x_{11i}$ is nonempty, and $\x_{12i} \subset \x_{2i}$.  Let $\beta^0_2 = (\beta^0_{21},\beta^0_{22}), \beta^0_3 = (\beta^0_{31},\beta^0_{32})$ conformable to the dimensions of $(\x_{11i},\x_{12i})$. Similarly let $\beta_2 = (\beta_{21},\beta_{22}), \beta_3 = (\beta_{31},\beta_{32}).$

In this case, with a properly modified rank condition of $\z_{(2),i}$ and $\w_{(2),i}$ which excludes the variables associated with $\x_{12,i}$ and $\sum_{j=1,\neq i}^N g_{ij} \x_{12,j}$, we can identify the coefficients $\beta^0_{(2)} := (\beta^0_1,\beta^0_{21},\beta^0_{31})$ using the same argument that leads to the identification in (\ref{eq.identification.alternative}).
However, we cannot identify the coefficients that correspond to the variable $\x_{12,i}$ and $\sum_{j=1,\neq i}^N g_{ij} \x_{12,j}$. The reason is that  controlling the network endogeneity with the control variable $(\x_{2i},a_i)$ wipes out the information in $(\x_{12,i},\sum_{j=1,\neq i}^N g_{ij} \x_{12,j})$:
\begin{align*}
\x_{12,i} - \E[ \x_{12,i} | \x_{2i},a_i] &= 0 \\
\sum_{j=1,\neq i}^N g_{ij} \x_{12,j} -  \E\left[\sum_{j=1,\neq i}^N g_{ij} \x_{12,j} \left| \x_{2i},a_i \right. \right] &\rightarrow_p 0,
\end{align*}
where the second convergence holds because $\sum_{j=1,\neq i}^N g_{ij} \x_{12,j}$ converges to a function that depends only on $(\x_{2i}, a_i)$ (see Section \ref{appendix: distribution of est} in the Supplementary Appendix.).  

Throughout the rest of the paper, when we consider $(\x_{2i},a_i)$ as control function, we will without loss of generality apply the restriction in Assumption \ref{as:basic.alternative} that $\x_{1i}$ and $\x_{2i}$ do not overlap.

%%%%%%%%%%%%%%%%%%%%%%%%%%%%%%%%%%%%%%%%%%%%%%%%%%
\section{Estimation}\label{section: estimation}
%%%%%%%%%%%%%%%%%%%%%%%%%%%%%%%%%%%%%%%%%%%%%%%%%%

In this section we present two estimation methods. In subsections \ref{subsection: estimation, network formation} and \ref{sec: estimation with x,a as control} we discuss estimation using $a_i$ and $(\x_{2i},a_i)$  as control functions, respectively.

\subsection{With $a_i$ as Control Function}\label{subsection: estimation, network formation}

The identification scheme of Theorem \ref{theorem:identification} identifies the parameter of interest $\beta^0$ with the two step procedure: (i) control $a_i$ in the outcome equation and yield $y_i - \E(y_i|a_i) = (\w_i - \E(\w_i|a_i))'\beta^0 + \upsilon_i - \E(\upsilon_i)$, and then (ii) use $\z_i - \mathbb{E}(\z_i|a_i)$ as IVs for $\w_i - \E(\w_i|a_i)$. If we observe $a_i$ and know the conditional mean functions  $\h(a_i)=(h^y(a_i),\h^{w}(a_i),\h^z(a_i))
:=(\mathbb{E}[y_i|a_i], \mathbb{E}[\w_i|a_i],\mathbb{E}[\z_i|a_i])$, then $\beta^0$ can be estimated using 2SLS as 
\begin{align}\label{def.2sls.inf.ai}
&\widehat{\beta}_{2SLS}^{\text{inf}} \nonumber \\
&= \left[  \sum_{i=1}^N (\w_i - \h^{w}(a_i))(\z_i - \h^{z}(a_i))' \left(\sum_{i=1}^N (\z_i - \h^{z}(a_i)) (\z_i - \h^{z}(a_i))'\right)^{-1}
 \sum_{i=1}^N (\z_i - \h^{z}(a_i)) (\w_i - \h^{w}(a_i))' \right]^{-1}  \nonumber \\
&\times 
\left[  \sum_{i=1}^N (\w_i - \h^{w}(a_i))(\z_i - \h^{z}(a_i))' \left(\sum_{i=1}^N (\z_i - \h^{z}(a_i)) (\z_i - \h^{z}(a_i))'\right)^{-1} 
 \sum_{i=1}^N (\z_i - \h^{z}(a_i)) (y_i - h^{y}(a_i))' \right].
\end{align}  
However, since the individual heterogeneity $a_i$ is not observed and the conditional mean functions $\h(a_i) = (\mathbb{E}(y_i|a_i), \mathbb{E}(\w_i|a_i),\mathbb{E}(\z_i|a_i))$ are not known either, the estimator $\widehat{\beta}_{2SLS}^{\text{inf}}$ is not feasible. 

A natural implementation of the infeasible estimator $\widehat{\beta}_{2SLS}^{\text{inf}}$ is to replace the conditional mean function $\h(a_i)$ with its estimate. Suppose that $\widehat{a}_i$ is an estimator of $a_i$ and $\widehat{\h}(\widehat{a}_i)$ is a nonparametric estimator of $\h(a_i)$. Then we can implement the infeasible estimator $\widehat{\beta}_{2SLS}^{\text{inf}}$ with
\begin{align}
&\widehat{\beta}_{2SLS} \\
&:= 
\left[  \sum_{i=1}^N (\w_i - \widehat{\h}^w(\widehat{a}_i))(\z_i - \widehat{\h}^z(\widehat{a}_i))' 
\left(\sum_{i=1}^N (\z_i - \widehat{\h}^z(\widehat{a}_i)) (\z_i - \widehat{\h}^z(\widehat{a}_i))'\right)^{-1} 
 \sum_{i=1}^N (\z_i - \widehat{\h}^z(\widehat{a}_i)) (\w_i - \widehat{\h}^w(\widehat{a}_i))' \right]^{-1}  \nonumber \\
&\times 
\left[  \sum_{i=1}^N (\w_i - \widehat{\h}^w(\widehat{a}_i))(\z_i - \widehat{\h}^z(\widehat{a}_i))' 
\left(\sum_{i=1}^N (\z_i - \widehat{\h}^z(\widehat{a}_i)) (\z_i - \widehat{\h}^z(\widehat{a}_i))'\right)^{-1} 
\sum_{i=1}^N (\z_i - \widehat{\h}^z(\widehat{a}_i)) (y_i - \widehat{\h}^y(\widehat{a}_i))' \right].
\label{def.betahat}
\end{align}
See Section \ref{appendix:beta_hat} in the Appendix for more details on the estimator $\widehat{\beta}_{2SLS}$.

{\noindent \bf Estimation of $\h(\cdot)$:} We can estimate $\h(\cdot)$ using various standard nonparametric methods. In this paper we consider a (linear) sieve estimation method.\footnote{In principle we can use other nonparametric estimation methods such as kernel smoothing or local polynomial methods.\label{kernel}}
Suppose that $h^l(a)$ is the $l^{th}$ element in $\h(a)$ for $l=1,...,L$, where $L$ is the dimension of $(y_i,\w_i',\z_i')'$. 
The sieve estimation method assumes that each function $h^l(a)$, $l=1,...,L$ is well approximated by a linear combination of base functions $(q_1(a),...,q_{K_N}(a))$: 
\begin{equation}
h^l(a) \cong \sum_{k=1}^{K_N}q_k(a)\alpha_k^l, \label{eq.approximation}	
\end{equation}
as the truncation parameter $K_N \rightarrow \infty$. 
A linear sieve (or series) estimator of a function, for example $\widehat{h}^y(\widehat{a}_i)$, is the OLS projection of $y_i$ on the sieve basis $\q^K(\cdot) = (q_1(\cdot),...,q_K(\cdot))'$ with $\widehat{a}_i$ plugged in,
\[
	\widehat{h}^y(\widehat{a}_i) := \q^K(\widehat{a}_i)' \left( \sum_{i=1}^N \q^K(\widehat{a}_i)\q^K(\widehat{a}_i)' \right)^{-1} 
	 \sum_{i=1}^N \q^K(\widehat{a}_i)y_i.
\]

For the regularity conditions of the sieve basis $\q^K(a_i)$, we impose standard conditions such as those proposed by \cite{Newey1997} and \citet{Li2008a}. These assumptions ensure that $\sum_{i=1}^N \q^K(a_i) \q^K(a_i)'$ is asymptotically non-singular and control the rate of approximation of the sieve estimator. These assumptions are formally stated in Assumptions \ref{assumption:sieve basis} and \ref{assumption:sieve basis.alternative} of the Appendix. 

Additionally, we require  that the sieve basis satisfy a Lipschitz condition, which allows us to control for the error introduced by the estimation of $a_i$ with $\widehat{a}_i$ in the estimation of $\widehat{\beta}_{2SLS}$\footnote{This issue is similar to the two step series estimation problem in \cite{Newey2009}. \label{newey2009reference}Other papers that investigated the problem of nonparametric or semiparametric analysis with generated regressors include \cite{Ahn1993}, \cite{Mammenetal2012}, \cite{HahnRidder2013}, and \cite{Escancianoetal2014}, for example.} (see Assumptions \ref{assumption: Lipschitz condition} and \ref{assumtion:sieve with x2, Lipschitz}).  
As an example, define the polynomial sieve as follows.
Let $Pol(K_N)$ denote the space of polynomials on $[-1,1]$ of degree $K_N$, 
\[
Pol(K_N)=\left\{
\nu_0 + \sum_{k=1}^{K_N}\nu_k a^k, \ a \in [-1,1], \nu_k \in \mathbb{R}
\right\}.
\]
For any $k$ we have 
\[
\big|   a_1^k  - a_2^k  \big| 
= k | \tilde{a}^k | | a_1- a_2 | \leq M k | a_1- a_2 |
,
\]
where $\tilde{a} \in [-1,1]$ and $M$ is a finite constant.

In sieve estimations an important issue is choosing the truncation parameter $K_N$. Well-known procedures for selecting $K_N$ are Mallows' $C_P$, generalized cross-validation and leave-one-out cross-validation. For more on these methods see Chapter 15.2 in \cite{Li2008a}, \cite{li1987asymptotic}, \cite{wahba1985comparison}, \cite{andrews1991asymptotic} and \cite{hansen2014nonparametric}. However, these methods are mainly applicable when the observations are cross-sectionally independent, which is not true in our case, especially when the network is dense\label{dense3}, as we assume. Developing a data-driven choice of $K_N$ is beyond the scope of this paper and we leave it for future work.
\bigskip

{\noindent \bf Estimation of $a_i$:} A desired estimator of $a_i$ should satisfy the following high level condition.  
\begin{assumption}[Estimation of $a_i$]\label{assumption: estimation of a_i}
	We assume that we can estimate $a_i$ with $\widehat{a}_i$ such that $\max_i |\widehat{a}_i-a_i|=O_p\left( \zeta_{a}(N)^{-1} \right)$,
	where $\zeta_{a}(N) \rightarrow \infty$ as $N \rightarrow \infty$, satisfying Assumption \ref{assumption: Lipschitz condition} in the Appendix.  
\end{assumption}
{\color{black} Here $\zeta_a(N)$ is the order of  magnitude that measures the Lipschitz smoothness of the sieve basis. The assumption puts restrictions on the uniform bound of the convergence rate of $\widehat{a}_i$, and we need a more accurate estimator of $a_i$ when the average curvature of the sieve basis is larger.} 

For the purpose of our paper, any estimation method that yields an estimator $\widehat{a}_i$ satisfying the restriction in Assumption \ref{assumption: estimation of a_i} can be adopted. 
For example,  assuming the parametric specification as in (\ref{model.network.formation.parametric}), 
\begin{equation}
d_{ij}=\mathbb{I}(t(\x_{2i},\x_{2j})'\lambda+a_i+a_j \geq u_{ij}) \label{model.network.formation.parametric.2}
\end{equation}
with regularity conditions of Assumption \ref{assumption:Graham} in the Appendix, {\color{black} including the error $u_{ij}$ following a logistic distribution,} 
\cite{Graham2017} showed that the joint maximum likelihood estimator that solves
\begin{align*}
& (\widehat{a}_1,...,\widehat{a}_N) 
\\
&:= \argmax_{\lambda,(a_1,...,a_N) } 
\left(\sum_{i=1}^N\sum_{j<i}d_{ij} \exp\left(t(\x_{2i},\x_{2j})'\lambda+a_i+a_j
\right)-\ln\left[1+\exp(t(\x_{2i},\x_{2j})'\lambda+a_i+a_j)\right]\right) 
\end{align*}
satisfies
\begin{equation}
\sup_{1\leq i\leq N} |\widehat{a}_i-a_i|
\leq O\left(\sqrt{\frac{\ln N}{N}}\right) \label{eq.a_ihat.unform.convergence}
\end{equation}
with probability $1-O(N^{-2})$. 
In this case we have $\zeta_a(N) = \sqrt{\frac{N}{\ln N}}$.
Notice that the requirement that the network formation in (\ref{model.network.formation.parametric.2}) be dense is necessary for $\widehat{a}_i$ to satisfy the desired uniform convergence rate in (\ref{eq.a_ihat.unform.convergence}). \label{remark dense-limitation}
Examples of other estimation methods include \cite{Fernandez-val}, \cite{jochmans2016modified}, \cite{dzemski2017empirical}, and \cite{jochmans2018semiparametric}.\label{dzemski, jochman}
\subsection{With $(\x_{2i},a_i)$ as Control Function}\label{sec: estimation with x,a as control} 
%\footnote{This section is developed based on an idea that one of the anonymous referees suggested. We thank the referee for suggesting the idea.}
As we assume in Section \ref{section: alternative identification}, we consider the case where $\x_{1i}$ and $\x_{2i}$ do not overlap. When $a_i$ is observed and the conditional expectations $\h_{*}(\x_{2i},a_i) = (h_{*}^y(\x_{2i},a_i),\h_{*}^w(\x_{2i},a_i),\h_{*}^z(\x_{2i},a_i)):= (\E(y_i|\x_{2i},a_i),\E(\w_i|\x_{2i},a_i),\E(\z_i|\x_{2i},a_i))$ are known, we can estimate $\beta^0$ by the 2SLS similar to $\widehat{\beta}^{\inf}_{2SLS}$ in (\ref{def.2sls.inf.ai}),
\begin{align}\label{def.2sls.inf.x2iai}
&\bar{\beta}_{2SLS}^{\text{inf}} \nonumber \\
&= \left[  \sum_{i=1}^N (\w_i - \h_{*}^{w}(\x_{2i},a_i))(\z_i - \h_*^{z}(\x_{2i},a_i))' \left(\sum_{i=1}^N (\z_i - \h_*^{z}(\x_{2i},a_i)) (\z_i - \h_*^{z}(\x_{2i},a_i))'\right)^{-1} \right. \nonumber \\
& \left. \qquad \times \sum_{i=1}^N (\z_i - \h_*^{z}(\x_{2i},a_i)) (\w_i - \h_*^{w}(\x_{2i},a_i))' \right]^{-1}  \nonumber \\
&\times 
\left[  \sum_{i=1}^N (\w_i - \h_*^{w}(\x_{2i},a_i))(\z_i - \h_*^{z}(\x_{2i},a_i))' \left(\sum_{i=1}^N (\z_i - \h_*^{z}(\x_{2i},a_i)) (\z_i - \h_*^{z}(\x_{2i},a_i))'\right)^{-1} \right. \nonumber \\
& \left. \qquad  \times \sum_{i=1}^N (\z_i - \h_*^{z}(\x_{2i},a_i)) (y_i - h_*^{y}(\x_{2i},a_i))' \right]^{-1}.
\end{align}

%If $a_i$ is known, we can implement the infeasible estimator $\bar{\beta}_{2SLS}^{\text{inf}}$ by replacing the conditional expections in $\h_{*}(\x_{2i},a_i)$ by its nonparametric estimator.

When $a_i$ is unknown and $\x_{2i}$ is also used in the control function, under the monotonicity condition of the link formation as in (\ref{eq.monotone.link.formation}), we can implement the infeasible estimator using the average node degree without estimating $a_i$. 
To be more specific, first we denote 
\begin{align*}
&\PR( d_{ij} = 1 | \x_{2i},a_i) =: \text{deg}(\x_{2i},a_i) =: \text{deg}_i. 
\end{align*}
Under the monotonicity condition in (\ref{eq.monotone.link.formation}), $(\x_{2i},a_i)$ and $(\x_{2i}, \text{deg}_i)$ are one-to-one. This implies that for any $b_i \in \{ y_i,\w_{i},\z_{i} \}$, 
\[
h_*^b(\x_{2i},a_i) = \E(b_i| \x_{2i},a_i) = \E(b_i| \x_{2i},{\rm deg}_i) =: h_{**}^b(\x_{2i},{\rm deg}_i).
\]

Notice that the natural estimator of ${\rm deg}_i$ is the node degree of $i$, the number of connections with node (individual) $i$ in the network scaled by the network size: 
\[
\widehat{\text{deg}}_i := \frac{1}{N-1} \sum_{j=1, \neq i}^N d_{ij}.
\]
Recall that the link $d_{ij}$ is formed by 
\[
d_{ij}=\mathbb{I}( g( t(\x_{2i},\x_{2j}), a_i, a_j ) - u_{ij} \geq 0).
\]
Also recall that the unobserved link-specific error terms $u_{ij}$ are assumed to be independent of all the other variables and randomly drawn. Let $\Phi(\cdot)$ be the cdf of $u_{ij}$. Also let $\pi(\x_2,a)$ be the joint density function of $(\x_{2i},a_i)$.  
Then, for each $(\x_{2i},a_i)$, by the WLLN conditioning on $(\x_{2i},a_i)$, we have  
\begin{align}
\widehat{\text{deg}}_i &:= \frac{1}{N-1} \sum_{j=1, \neq i}^N \mathbb{I}(g( t(\x_{2i},\x_{2j}), a_i, a_j ) - u_{ij}\geq 0) \nonumber \\
&\rightarrow_p \int \Phi\left( g( t(\x_{2i},\x_{2}), a_i, a ) \right) \pi(\x_2,a) d\x_2 da \nonumber \\
&= \PR( d_{ij} = 1 | \x_{2i},a_i) \nonumber \\
&=: \text{deg}_i > 0 \label{eq.deg.limit}
\end{align}
as the network size $N$ grows to infinity. Here the limit of the average network $\rm{deg}_i > 0$ follows since we assume the network is dense.

This shows that $\widehat{\text{deg}}_i$ can be used as an estimator of $\text{deg}_i$.
In fact, we can show that under the regularity conditions in Assumption \ref{assumption:limit.dist} in the Appendix, $\sup_{i} \E [( \sqrt{N} (\widehat{\text{deg}}_i - \text{deg}_i ))^{2B} ] < \infty$ for any finite integer $B \geq 2$, from which we can deduce that 
\begin{equation}
\max_{1 \leq i \leq N} |\widehat{\text{deg}}_i -  \text{deg}_i | = O_p\left(\zeta_{deg}(N)^{-1}\right), \label{eq.estimation.deg}
\end{equation}
where 
\begin{equation*} 
\zeta_{deg}(N):= o(1) N^{\frac{B-1}{2B}}. 
\end{equation*} 
This corresponds to the regularity condition in Assumption \ref{assumption: estimation of a_i}.

Suppose that $\rr^{K}(\x_{2i},\text{deg}_i) =(r_1(\x_{2i},\text{deg}_i),\ldots,r_{K}(\x_{2i},\text{deg}_i))'$ is a sieve basis of the unknown function $\h_*(\x_{2i},a_i)$. 
For each $b_i \in \{ y_i,\w_{i},\z_{i} \}$, a sieve estimator of $h_{**}^b(\x_{2i},\text{deg}_i) = \E(b_i| \x_{2i},a_i)$ is the OLS projection of $b_i$ on $\rr^{K}(\x_{2i},\widehat{\text{deg}}_i)$. For example,
\begin{align*}
 \widehat{h}_*^y(\x_{2i},a_i) 
 &= \widehat{h}_{**}^y(\x_{2i},\text{deg}_i) \\
 &= \rr^K(\x_{2i},\widehat{\text{deg}}_i)' 
 \left( \sum_{i=1}^N \rr^K(\x_{2i},\widehat{\text{deg}}_i)\rr^K(\x_{2i},\widehat{\text{deg}}_i)' \right)^{-1}  
 \sum_{i=1}^N \rr^K(\x_{2i},\widehat{\text{deg}}_i)y_i. 
\end{align*}
Then, we have
\begin{align}\label{def.betabar}
&\bar{\beta}_{2SLS} \nonumber \\
&= \left[  \sum_{i=1}^N (\w_i - \widehat{\h}_{*}^{w}(\x_{2i},a_i))(\z_i - \widehat{\h}_*^{z}(\x_{2i},a_i))' \left(\sum_{i=1}^N (\z_i - \widehat{\h}_*^{z}(\x_{2i},a_i)) (\z_i - \widehat{\h}_*^{z}(\x_{2i},a_i))'\right)^{-1} \right. \nonumber \\
& \left. \qquad \times \sum_{i=1}^N (\z_i - \widehat{\h}_*^{z}(\x_{2i},a_i)) (\w_i - \widehat{\h}_*^{w}(\x_{2i},a_i))' \right]^{-1}  \nonumber \\
&\times 
\left[  \sum_{i=1}^N (\w_i - \widehat{\h}_*^{w}(\x_{2i},a_i))(\z_i - \widehat{\h}_*^{z}(\x_{2i},a_i))' \left(\sum_{i=1}^N (\z_i - \widehat{\h}_*^{z}(\x_{2i},a_i)) (\z_i - \widehat{\h}_*^{z}(\x_{2i},a_i))'\right)^{-1} \right. \nonumber \\
& \left. \qquad  \times \sum_{i=1}^N (\z_i - \widehat{\h}_*^{z}(\x_{2i},a_i)) (y_i - \widehat{h}_*^{y}(\x_{2i},a_i))' \right]^{-1}.
\end{align}
For more details see Section \ref{appendix:beta_bar} in the Appendix.

The two different estimators $\widehat{\beta}_{2SLS}$ and $\bar{\beta}_{2SLS}$ are implemented using different control functions, and these two approaches have their own pros and cons. For $\widehat{\beta}_{2SLS}$, a good estimator of $a_i$ is required, which imposes restrictions on the network formation model (\ref{model.network.formation}) in the form of (\ref{model.network.formation.parametric}). Compared to this, the estimator $\bar{\beta}_{2SLS}$ that uses $(\x_{2i}, {\rm deg}_i)$ as control functions does not require a restriction like (\ref{model.network.formation.parametric}). It requires only the monotonicity of the net surplus function as in (\ref{eq.monotone.link.formation}) of Section \ref{sec: model of network formation}. However, $\bar{\beta}_{2SLS}$ has disadvantages: because it uses $x_{2i}$ as a part of the control function, as discussed in Section \ref{section: alternative identification}, this approach cannot identify and estimate the coefficients of the regressor $\x_{2i}$ if $\x_{2i}$ is a relevant regressor of the outcome.\label{remark-comparison-of-approaches}
 Later in Section \ref{section: monte carlo}, where we present the Monte Carlo simulations, we compare the finite sample properties of $\widehat{\beta}_{2SLS}$ and $\bar{\beta}_{2SLS}$ in both dense and sparse network setups. 

%There we observe that the finite sample properties of $\bar{\beta}_{2SLS}$ are better than $\widehat{\beta}_{2SLS}$ particularly in the sparse setup. The reason for this is that the implementation of the control function with the node degree is less sensitive with respect to the denseness of the network formation than with $\hat{a}_i$.  

%\CommentReferee{There is no comparison between the two suggested approaches. Even some (unproven) conjectures about when either of the two approaches is expected to perform better would be helpful.}
%\CommentRoger{We need to also talk about the MC results in the rejoinder when we answers to this comment.}

%%%%%%%%%%%%%%%%%%%%%%%%%%%%%%%%%%%%%%%%%%%%%%%%%%%%%%%%%%%%%%%%%%%%%%%%%%%%%%%%%%%%%%%%%%
\section{Limit Distribution and Standard Error}\label{subsection: estimation, limiting distributin of estimator} 
%%%%%%%%%%%%%%%%%%%%%%%%%%%%%%%%%%%%%%%%%%%%%%%%%%%%%%%%%%%%%%%%%%%%%%%%%%%%%%%%%%%%%%%%%%

In this section we present the asymptotic distributions of the two 2SLS estimators $\widehat{\beta}_{2SLS}$ and $\bar{\beta}_{2SLS}$, and show how to estimate standard errors. We also discuss key technical issues in deriving the limits. All  details of the technical derivations and proofs can be found in the Appendix. 

\subsection{Limiting Distribution and Standard Error of $\widehat{\beta}_{2SLS}$}

Recall the definitions $h^{y}(a_i):= \E[y_i|a_i], \quad h^{\upsilon}(a_i):= \E[\upsilon_i|a_i], \quad \h^{\w}(a_i) := \mathbb{E} (\w_i|a_i), \quad \h^{\z}(a_i) := \mathbb{E} (\z_i|a_i).$ 
Define
$\eta^{y}_i: = y_i - h^{y}(a_i), \quad \eta^{\upsilon}_i: = \upsilon_i - h^{\upsilon}(a_i), \quad \eta^{\w}_i = \w_i - \h^{\w}(a_i), \quad \eta_i^{\z} = \z_i - \h^{\z}(a_i).$
Let $\bm{\eta}_N^{\upsilon} = (\eta^{\upsilon}_1,...,\eta^{\upsilon}_N)'$ and $\HH^{\upsilon}_N(\an) = (h^{\upsilon}(a_1),...,h^{\upsilon}(a_N))'$.   
Let $\widehat{h}^{\upsilon}(a_i)$, $\widehat{\h}^{\w}(a_i)$, and $\widehat{\h}^{\z}(a_i)$ denote the sieve estimators of $h^{\upsilon}(a_i)$, $h^{\w}(a_i)$ and $h^{\z}(a_i)$, respectively. 

In the Appendix, we derive the asymptotic distribution of $\widehat{\beta}_{2SLS}$ in three steps.  
First, we show that the sampling error caused by the use of $\hat{a}_i$ instead of $a_i$ is asymptotically negligible (see  Lemma \ref{lemma: error from A_hat} of the Supplementary Appendix \ref{appenxid: error from A_hat-A}.).  
Next, we control the error introduced by the non-parametric estimation of $h^{l}(a_i)$, where $l \in \{\upsilon,\w,\z\}$. In Lemma \ref{lemma: series approximation error} of the Supplementary Appendix \ref{appendix: series approximation error} we show  that under the regularity conditions,  the estimation error in  $\widehat{h}^l(a_i)$ vanishes at a suitable rate. Combining these two, we deduce
\[
\sqrt{N} (\widehat{\beta}_{2SLS} - \widehat{\beta}^{\inf}_{2SLS}) = o_p(1).
\]
The last step is to derive the limiting distribution of the infeasible estimator $\sqrt{N} ( \widehat{\beta}^{\inf}_{2SLS} - \beta^0)$.
In the Supplementary Appendix \ref{appendix: distribution of est} we show the following:  
\begin{align}
\frac{1}{N}\sum_{i=1}^N(\w_i-\h^{\w}(a_i))
(\z_i-\h^\z(a_i))' &\xrightarrow{p} \mathbf{S}^{\w\z} \label{eq.WLLN.wz}  \\
\frac{1}{N}\sum_{i=1}^N(\z_i-\h^{\z}(a_i))
(\z_i-\h^{\z}(a_i))' &\xrightarrow{p}  \mathbf{S}^{\z\z} \label{eq.WLLN.zz}  \\
\frac{1}{\sqrt{N}}\sum_{i=1}^N(\z_i-\h^{\z}(a_i))
\eta^{\upsilon}_i 
& \Rightarrow \mathcal{N}(0,\mathbf{S}^{\z\z\sigma}), \label{eq.asy.normal.a_i}
\end{align}
where the closed forms of the limits $\mathbf{S}^{\w\z}$ and $\mathbf{S}^{\z\z}$ are found in Lemma \ref{lemma: limit of S^ZZ and S^WZ} and $\mathbf{S}^{\z\z\sigma}$ in Lemma \ref{lemma:numerator.limit variance} of Supplementary Appendix.

Notice that the derivation of the limiting distribution in (\ref{eq.asy.normal.a_i}) allows $\eta^{\upsilon}_i = \upsilon_i - \E(\upsilon_i|a_i)$ to be conditionally heteroskedastic, and so \label{heteroskedasticity} $\sigma^2(\x_i,a_i) := \mathbb{E}[(\upsilon_i - \E[ \upsilon_i|a_i])^2| \x_i,a_i]$ is allowed to depend on $(\x_i,a_i)$.

Combining all the limit results leads to the following theorem. 
\begin{theorem}[Limiting Distribution]\label{theorem: central limit theorem}
Suppose that Assumptions \ref{as:basic}, \ref{assumption: rank}, \ref{assumption: estimation of a_i}, \ref{assumption:sieve basis}, \ref{assumption: Lipschitz condition}, and \ref{assumption:limit.dist}(i)-(v) in the Appendix hold. Then, we have
\begin{align*}
\sqrt{N}(\widehat{\beta}_{2SLS}-\beta^0)
& \Rightarrow 
\mathcal{N}
\left(0, \Omega \right),
\end{align*}
where
\begin{align}
\Omega &= \left(\mathbf{S}^{\w\z}\left(\mathbf{S}^{\z\z}\right)^{-1}(\mathbf{S}^{\w\z})^{\prime}\right)^{-1} 
\left( 
\mathbf{S}^{\w\z}\left(\mathbf{S}^{\z\z}\right)^{-1}\mathbf{S}^{\z\z\sigma}
\left(\mathbf{S}^{\z\z}\right)^{-1} (\mathbf{S}^{\w\z})^{\prime}
\right) \left(\mathbf{S}^{\w\z}\left(\mathbf{S}^{\z\z}\right)^{-1}(\mathbf{S}^{\w\z})^{\prime}\right)^{-1}.
\end{align}
\end{theorem}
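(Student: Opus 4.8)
The plan is to follow the three-step decomposition already indicated before the statement, taking as inputs the two reduction lemmas (Lemma~\ref{lemma: error from A_hat}, which controls the replacement of $\an$ by $\hat{\mathbf{a}}_N$, and Lemma~\ref{lemma: series approximation error}, which controls the sieve approximation of the $\h(\cdot)$ functions) together with the limit results \eqref{eq.WLLN.wz}--\eqref{eq.asy.normal.a_i}. First I would show that the feasible estimator $\hat{\beta}_{2SLS}$ and the infeasible estimator $\hat{\beta}^{\inf}_{2SLS}$ share the same $\sqrt{N}$-limit, and then I would read off the asymptotic distribution of the infeasible estimator from a standard instrumental-variables expansion combined with Slutsky's theorem and the continuous mapping theorem.

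For the reduction step I would compare $\hat{\beta}_{2SLS}$, which uses the projection $\M_{\widehat{\Q}_N}$ built from $\widehat{\Q}_N = \Q_N(\hat{\mathbf{a}}_N)$, against the infeasible estimator, which effectively uses the oracle residuals $\widetilde{\W}_N,\widetilde{\Z}_N,\tilde{\y}_N$ formed from the true $\h(a_i)$. Writing each $N^{-1}$-scaled sample moment (for instance $N^{-1}\W_N'\M_{\widehat{\Q}_N}\Z_N$) as its oracle counterpart plus a remainder, the two cited lemmas show that every remainder is $o_p(N^{-1/2})$: Lemma~\ref{lemma: series approximation error} handles the gap between the sieve projection (evaluated at the true $a_i$) and the true conditional expectations under the sieve regularity of Assumption~\ref{assumption:sieve basis}, while Lemma~\ref{lemma: error from A_hat}, through the Lipschitz condition on the sieve basis (Assumption~\ref{assumption: Lipschitz condition}) and the rate $\max_i|\hat{a}_i - a_i| = O_p(\zeta_a(N)^{-1})$ of Assumption~\ref{assumption: estimation of a_i}, converts the generated-regressor error into a negligible term. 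Collecting these gives $\sqrt{N}(\hat{\beta}_{2SLS}-\hat{\beta}^{\inf}_{2SLS}) = o_p(1)$, so it suffices to treat $\hat{\beta}^{\inf}_{2SLS}$.

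For the infeasible estimator I would substitute $\tilde{\y}_N = \widetilde{\W}_N\beta^0 + \bm{\eta}_N^{\upsilon}$ (which holds because $h^{y}(a_i) = \h^{\w}(a_i)'\beta^0 + h^{\upsilon}(a_i)$ by Lemma~\ref{lemma:control.function}) into the 2SLS formula and cancel $\beta^0$, yielding
\[
\sqrt{N}\bigl(\hat{\beta}^{\inf}_{2SLS}-\beta^0\bigr)
= \bigl(\widehat{\mathbf{S}}^{\w\z}_N(\widehat{\mathbf{S}}^{\z\z}_N)^{-1}(\widehat{\mathbf{S}}^{\w\z}_N)'\bigr)^{-1}\widehat{\mathbf{S}}^{\w\z}_N(\widehat{\mathbf{S}}^{\z\z}_N)^{-1}\,\frac{1}{\sqrt{N}}\widetilde{\Z}_N'\bm{\eta}_N^{\upsilon},
\]
where $\widehat{\mathbf{S}}^{\w\z}_N = N^{-1}\widetilde{\W}_N'\widetilde{\Z}_N$ and $\widehat{\mathbf{S}}^{\z\z}_N = N^{-1}\widetilde{\Z}_N'\widetilde{\Z}_N$. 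By \eqref{eq.WLLN.wz} and \eqref{eq.WLLN.zz}, $\widehat{\mathbf{S}}^{\w\z}_N \rightarrow_p \mathbf{S}^{\w\z}$ and $\widehat{\mathbf{S}}^{\z\z}_N \rightarrow_p \mathbf{S}^{\z\z}$, with the outer inverse well-defined by the rank condition (Assumption~\ref{assumption: rank}); by \eqref{eq.asy.normal.a_i}, $N^{-1/2}\widetilde{\Z}_N'\bm{\eta}_N^{\upsilon} \Rightarrow \mathcal{N}(0,\mathbf{S}^{\z\z\sigma})$. Slutsky's theorem and the continuous mapping theorem then give convergence to $\bigl(\mathbf{S}^{\w\z}(\mathbf{S}^{\z\z})^{-1}(\mathbf{S}^{\w\z})'\bigr)^{-1}\mathbf{S}^{\w\z}(\mathbf{S}^{\z\z})^{-1}\cdot\mathcal{N}(0,\mathbf{S}^{\z\z\sigma})$, and computing the covariance of this linear image of a centered Gaussian (using symmetry of $\mathbf{S}^{\z\z}$) reproduces the sandwich matrix $\Omega$ verbatim.

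The final algebra is routine once these inputs are in place; the genuine difficulty, flagged earlier in the paper, lies underneath, in establishing \eqref{eq.WLLN.wz}--\eqref{eq.asy.normal.a_i} and the reduction lemmas. Because $\w_i$ and $\z_i$ contain $\G_N\y_N$, $\G_N\X_{1N}$ and $\G^2_N\X_{1N}$, which depend on the \emph{entire} dense network, the summands $(\z_i-\h^{\z}(a_i))\eta^{\upsilon}_i$ are strongly cross-sectionally dependent and no off-the-shelf CLT applies directly. I expect the main obstacle to be the CLT \eqref{eq.asy.normal.a_i}: the natural attack is to condition on $(\X_N,\an)$ so that the only remaining randomness in $\G_N$ comes from the i.i.d.\ dyadic shocks $\{u_{ij}\}$, to show that $\G_N\y_N$ and the instrument blocks concentrate around $(\x_i,a_i)$-measurable limits at a $\sqrt{N}$ rate (the $\widehat{\text{deg}}_i$ computation in \eqref{eq.deg.limit}--\eqref{eq.estimation.deg} being the prototype), and then to apply a CLT to the resulting asymptotically independent, $(\x_i,a_i)$-measurable leading terms.
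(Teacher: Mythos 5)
Your overall architecture coincides with the paper's: the same three-step reduction (Lemma \ref{lemma: error from A_hat} for the error from replacing $\an$ by $\hat{\mathbf{a}}_N$, Lemma \ref{lemma: series approximation error} for the sieve approximation error, then the infeasible estimator), and your 2SLS algebra --- including the identity $h^{y}(a_i) = \h^{\w}(a_i)'\beta^0 + h^{\upsilon}(a_i)$ and the Slutsky/continuous-mapping assembly of $\Omega$ --- is exactly what the paper does. The divergence, and the one concrete problem, is your plan for the CLT (\ref{eq.asy.normal.a_i}). You propose to replace $\z_i - \h^{\z}(a_i)$ by its $(\x_i,a_i)$-measurable limit $\tilde{\z}_i$ ``at a $\sqrt{N}$ rate'' and then apply a CLT to asymptotically independent leading terms. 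But the \emph{uniform-in-$i$} concentration available from the paper's moment arguments is strictly slower than $\sqrt{N}$: the prototype you cite, (\ref{eq.estimation.deg}), gives $\max_i|\widehat{\text{deg}}_i - \text{deg}_i| = O_p(\zeta_{deg}(N)^{-1})$ with $\zeta_{deg}(N)$ of order $N^{(B-1)/(2B)}$ for finite $B$, and the fourth-moment argument in Lemma \ref{lemma plim s uniform in i} yields only $O_p(N^{-1/4})$ uniformly. Hence the brute-force substitution bound $\frac{1}{\sqrt{N}}\sum_{i}\|\z_i - \tilde{\z}_i\|\,|\eta^{\upsilon}_i| \leq \sqrt{N}\,\sup_i\|\z_i - \tilde{\z}_i\| \cdot M$ diverges (like $N^{1/4}$, or $N^{1/(2B)}$), so the replacement step in your sketch does not follow from the concentration rates you invoke.

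The gap is repairable, but the repair is precisely the structure the paper exploits more directly. Since $\z_i$ is a function of $(\X_{1N},\D_N)$ alone, and by Lemma \ref{lemma:control.function} the errors $\eta^{\upsilon}_i = \upsilon_i - \E[\upsilon_i|a_i]$ are conditionally mean-zero and independent across $i$ given $(\X_N,\D_N,\an)$, the replacement error $\frac{1}{\sqrt{N}}\sum_i(\z_i-\tilde{\z}_i)\eta^{\upsilon}_i$ has conditional variance $\frac{1}{N}\sum_i\|\z_i-\tilde{\z}_i\|^2\sigma^2(\x_i,a_i) \leq M\sup_i\|\z_i-\tilde{\z}_i\|^2 = o_p(1)$, so uniform consistency suffices and no rate is needed. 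The paper avoids even this substitution: in Lemma \ref{lemma: CLT} it observes that $\{(\z_i-\h^{\z}(a_i))\eta^{\upsilon}_i,\mathcal{F}_i\}$ with $\mathcal{F}_i = (\X_{1N},\D_N,a_i,\eta^{\upsilon}_1,\ldots,\eta^{\upsilon}_{i-1})$ is a martingale difference sequence (the instruments being $\mathcal{F}_i$-measurable), verifies the Lindeberg condition via the boundedness in Assumption \ref{assumption:limit.dist}, and applies the Hall--Heyde martingale CLT. The concentration machinery --- the recursive $\tilde{s}_m(\x_i,a_i)$ limits and the uniform convergence Lemmas \ref{lemma plim s uniform in i}--\ref{lemma:sum_m beta s_miN - stilde_mi} --- is used only where a WLLN is required, namely for the conditional variance $s_N^2$ (Lemma \ref{lemma:numerator.limit variance}) and for $\mathbf{S}^{\w\z},\mathbf{S}^{\z\z}$ (Lemma \ref{lemma: limit of S^ZZ and S^WZ}), where uniform $o_p(1)$ rates are enough. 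To complete your proof you should either adopt the martingale route or insert the conditional-variance cancellation argument above; as written, the final step fails at the rates you cite.
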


The theorem requires several regularity conditions which are presented in Appendix \ref{appendix: assumptions}. 
In addition to conditions of random sampling of
$(y_i,\x_i,a_i)$ in Assumption \ref{as:basic} and the full rank condition in Assumption \ref{assumption: rank}, we assume conditions that ensure $a_i$ can be consistently estimated, and that the error between $\mathbf{h}(a_i)$ and $\widehat{\mathbf{h}}(\widehat{a}_i)$ converges to zero at a suitable rate (Assumptions \ref{assumption: estimation of a_i}, \ref{assumption:sieve basis} and \ref{assumption: Lipschitz condition}).  
We also impose restrictions on the outcome model (\ref{model:outcome}) and the network formation model (\ref{model.network.formation}) (Assumption \ref{assumption:limit.dist}). We assume $|\beta_1^0|$ is bounded below $1$ so that the spillover effect has a unique solution, and $\| \beta_2^0 \|$ is bounded above $0$ so that the IVs are strong. We also assume the observables $(y_i,\x_i)$ and $\mathbf{t}_{ij}$ are bounded, and $a_i$ has a compact support in $[-1,1]$. This boundedness condition is required as a technical regularity condition that simplifies the proofs of the limits in (\ref{eq.WLLN.wz}), (\ref{eq.WLLN.zz}), and (\ref{eq.asy.normal.a_i}), which involves some uniformity in the limit.  \label{boundedness1}

The asymptotic variance can be consistently estimated by 
\begin{align}
\widehat{\Omega} 
= \left(\widehat{\mathbf{S}}^{\w\z} \left(\widehat{\mathbf{S}}^{\z\z}\right)^{-1} (\widehat{\mathbf{S}}^{\w\z})^{\prime}\right)^{-1} 
\left( 
\widehat{\mathbf{S}}^{\w\z} \left(\widehat{\mathbf{S}}^{\z\z}\right)^{-1} \widehat{\mathbf{S}}^{\z\z\sigma}
\left(\widehat{\mathbf{S}}^{\z\z}\right)^{-1} (\widehat{\mathbf{S}}^{\w\z})^{\prime}
\right) 
\left(\widehat{\mathbf{S}}^{\w\z}\left(\widehat{\mathbf{S}}^{\z\z}\right)^{-1}(\widehat{\mathbf{S}}^{\w\z})^{\prime} \right)^{-1},
\end{align}
where 
\begin{align*}
\widehat{\mathbf{S}}^{\w\z} 
&=\frac{1}{N}\sum_{i=1}^N\left(\w_i - \widehat{\h}^\w(\widehat{a}_i) \right)
\left(\z_i - \widehat{\h}^\z(\widehat{a}_i) \right)' \\
\widehat{\mathbf{S}}^{\z\z} 
&=\frac{1}{N}\sum_{i=1}^N\left(\z_i - \widehat{\h}^\z(\widehat{a}_i) \right)
\left(\z_i - \widehat{\h}^\z(\widehat{a}_i) \right)' \\
\widehat{\mathbf{S}}^{ZZ\sigma^2} 
&=\frac{1}{N}\sum_{i=1}^N\left(\z_i - \widehat{\h}^\z(\widehat{a}_i) \right)
\left(\z_i - \widehat{\h}^\z(\widehat{a}_i) \right)' (\widehat{\eta}^{\upsilon}_i)^2,
\end{align*}
and $\widehat{\eta}^{\upsilon}_i=y_i-\widehat{h}^y(\widehat{a}_i)-(\w_i-\widehat{\h}^{\w}(\widehat{a}_i))'\widehat{\beta}_{2SLS}.$

\subsection{Limiting Distribution and Standard Error of $\bar{\beta}_{2SLS}$}
The process is analogous to the one presented in the previous section.
Again, let $b_i^l$ be the $l^{th}$ element in $(y_i,\w_i',\z_i')'$.
Recall the definition that 
\begin{align*}
h^l_{*}(\x_{2i},a_i) =\mathbb{E}[b^l_i|\x_{2i},a_i]  = \mathbb{E}[b^l_i|\x_{2i},\text{deg}_i] =: h^l_{**}(\x_{2i},\text{deg}_i). 
\end{align*} Further, let $\eta^l_{*i}=b^l_i-h^l_{*}(\x_{2i},a_i) = b^l - h^l_{**}(\x_{2i},\text{deg}_i)$, and 
let $\widehat{h}^l_{**}(\x_{2i},\text{deg}_i)$ denote a sieve estimator of $h^l_{**}(\x_{2i},\text{deg}_i)$. 

As in the previous section, we derive the asymptotic distribution of $\bar{\beta}_{2SLS}$ in three steps.  
	First, we show that the error that stems from the use of the estimate $\widehat{\text{deg}_i}$ for $\text{deg}_i$, $\widehat{h}^l_{**}(\x_{2i},\widehat{\text{deg}}_i) - \widehat{h}^l_{**}(\x_{2i},\text{deg}_i)$, is asymptotically negligible. 
	In the second step, we control the error introduced by the non-parametric estimation of $h^l_{**}(\x_{2i},\text{deg}_i)$, $\widehat{h}^l_{**}(\x_{2i},\text{deg}_i)-h^l_{**}(\x_{2i},\text{deg}_i)$. This implies
	\[
	\sqrt{N} (\bar{\beta}_{2SLS} - \bar{\beta}^{\inf}_{2SLS}) = o_p(1).
	\]
	The last step is to derive the limiting distribution of the infeasible estimator $\sqrt{N} ( \bar{\beta}^{\inf}_{2SLS} - \beta^0)$ by showing
	\begin{align*}
	\frac{1}{N}\sum_{i=1}^N(\w_i-\h_*^{\w}(\x_{2i},a_i))
	(\z_i-\h_*^\z(\x_{2i},a_i))' &\xrightarrow{p} \mathbf{\bar{S}}^{\w\z} \\
	\frac{1}{N}\sum_{i=1}^N(\z_i-\h_*^{\z}(\x_{2i},a_i))
	(\z_i-\h_*^{\z}(\x_{2i},a_i))' &\xrightarrow{p}  \mathbf{\bar{S}}^{\z\z} \\
	\frac{1}{\sqrt{N}}\sum_{i=1}^N(\z_i-\h_*^{\z}(\x_{2i},a_i))
	\eta^{\upsilon}_{*i}
	& \Rightarrow \mathcal{N}(0,\mathbf{\bar{S}}^{\z\z\sigma}),
	\end{align*}

	Combining all the limit results we have the following theorem. 
	\begin{theorem}[Limiting Distribution]\label{theorem: central limit theorem beta_bar}
		Suppose that Assumptions   \ref{as:basic}, \ref{as:basic.alternative}, \ref{assumption:rank.alternative}, 
		\ref{assumption:sieve basis.alternative}, \ref{assumtion:sieve with x2, Lipschitz}, and \ref{assumption:limit.dist} hold. Then, we have
		\begin{align*}
		\sqrt{N}(\bar{\beta}_{2SLS}-\beta^0)
		& \Rightarrow 
		\mathcal{N}
		\left(0, \bar{\Omega} \right),
		\end{align*}
		where
		\begin{align}
		\bar{\Omega} &= \left(\mathbf{\bar{S}}^{\w\z}\left(\mathbf{\bar{S}}^{\z\z}\right)^{-1}(\mathbf{\bar{S}}^{\w\z})^{\prime}\right)^{-1} 
		\left( 
		\mathbf{\bar{S}}^{\w\z}\left(\mathbf{\bar{S}}^{\z\z}\right)^{-1}\mathbf{\bar{S}}^{\z\z\sigma}
		\left(\mathbf{\bar{S}}^{\z\z}\right)^{-1} (\mathbf{\bar{S}}^{\w\z})^{\prime}
		\right) 
		\left(\mathbf{\bar{S}}^{\w\z}\left(\mathbf{\bar{S}}^{\z\z}\right)^{-1} (\mathbf{\bar{S}}^{\w\z})^{\prime}\right)^{-1}. \nonumber
		\end{align}
	\end{theorem}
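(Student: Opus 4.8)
The plan is to establish the result by mirroring the three-step argument already carried out for $\hat{\beta}_{2SLS}$, since the only structural difference is that the control function $(\x_{2i},a_i)$ is now implemented through the estimated node degrees $\widehat{\text{deg}}_i$ rather than through $\hat{a}_i$. First I would substitute $\y_N = \W_N\beta^0 + \bupsilon_N$ and use $\M_{\widehat{\R}_N} = \I - \PP_{\widehat{\R}_N}$ to obtain the sampling-error representation
\[
\sqrt{N}(\bar{\beta}_{2SLS}-\beta^0) = \bar{A}_N^{-1}\,\bar{B}_N\,\frac{1}{\sqrt{N}}\Z_N'\M_{\widehat{\R}_N}\bupsilon_N,
\]
where $\bar{A}_N$ and $\bar{B}_N$ are the Gram and weighting blocks built from $\frac{1}{N}\W_N'\M_{\widehat{\R}_N}\Z_N$ and $\frac{1}{N}\Z_N'\M_{\widehat{\R}_N}\Z_N$. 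The problem then reduces to controlling these three averages: I would show each equals its infeasible counterpart (built from the true centering $\h_*^{\z}(\x_{2i},a_i)$, etc.) up to an $o_p(1)$ remainder, and then identify the limits of the infeasible objects.

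Step one handles the generated regressor induced by degree estimation. Using the one-to-one relation between $(\x_{2i},a_i)$ and $(\x_{2i},\text{deg}_i)$ guaranteed by the strict monotonicity of $\text{deg}(\x_{2i},\cdot)$, I would replace $\M_{\widehat{\R}_N}$ by $\M_{\R_N}$ (the projection using the true $\text{deg}_i$), bounding the difference via the Lipschitz property of the sieve basis in Assumption \ref{assumtion:sieve with x2, Lipschitz} together with the uniform degree-estimation rate $\max_i|\widehat{\text{deg}}_i-\text{deg}_i| = O_p(\zeta_{deg}(N)^{-1})$ from (\ref{eq.estimation.deg}). Step two is the sieve approximation error $\hat{h}^l_{**}(\x_{2i},\text{deg}_i)-h^l_{**}(\x_{2i},\text{deg}_i)$, controlled under Assumption \ref{assumption:sieve basis.alternative}, which ensures $\R_N'\R_N$ is asymptotically nonsingular and that the approximation bias decays fast enough relative to $K_N$. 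Combining the two steps yields $\sqrt{N}(\bar{\beta}_{2SLS}-\bar{\beta}^{\inf}_{2SLS}) = o_p(1)$, exactly as asserted in the text.

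The analytically substantive step is the limit distribution of the infeasible estimator. Writing $\tilde{\w}_{*i} := \w_i - \h_*^{\w}(\x_{2i},a_i)$, $\tilde{\z}_{*i} := \z_i - \h_*^{\z}(\x_{2i},a_i)$, and $\eta^{\upsilon}_{*i} := \upsilon_i - \E[\upsilon_i|\x_{2i},a_i]$, the outcome equation gives $y_i - \E[y_i|\x_{2i},a_i] = \tilde{\w}_{*i}'\beta^0 + \eta^{\upsilon}_{*i}$, and I would establish the three displayed limits $\frac{1}{N}\sum_i \tilde{\w}_{*i}\tilde{\z}_{*i}' \rightarrow_p \mathbf{\bar{S}}^{\w\z}$, $\frac{1}{N}\sum_i \tilde{\z}_{*i}\tilde{\z}_{*i}' \rightarrow_p \mathbf{\bar{S}}^{\z\z}$, and $\frac{1}{\sqrt{N}}\sum_i \tilde{\z}_{*i}\eta^{\upsilon}_{*i} \Rightarrow \mathcal{N}(0,\mathbf{\bar{S}}^{\z\z\sigma})$; the theorem then follows by the continuous mapping theorem and Slutsky's lemma, with $\bar{\Omega}$ the usual 2SLS sandwich. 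The critical difficulty is that $\w_i$ and $\z_i$ involve the row-normalized peer averages $\G_N\y_N$, $\G_N\X_{1N}$, $\G_N^2\X_{1N}$, which in the dense regime aggregate $O(N)$ entries and are therefore strongly dependent across $i$. The key device, established in the Supplementary Appendix under the boundedness and density conditions of Assumption \ref{assumption:limit.dist}, is that each peer average concentrates on a deterministic function of the individual's own $(\x_i,a_i)$, so that $\w_i$ and $\z_i$ equal such limiting functions plus remainders that are uniformly $o_p(1)$. This restores an i.i.d.-type structure across $i$ and lets me apply a standard law of large numbers and central limit theorem to the leading terms while showing the remainders contribute neither to the probability limits nor to the asymptotic variance (the latter allowing the conditional heteroskedasticity noted in the text).

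The main obstacle is precisely this dense-network dependence, now compounded by the two-layer approximation of the control function: $a_i$ is recovered only implicitly, through $\text{deg}_i$, which is itself estimated by the noisy average $\widehat{\text{deg}}_i$. I would therefore expect the delicate part to be verifying that the degree-estimation error propagates harmlessly through the nonlinear sieve transformation, that is, that the product of the uniform rate $\zeta_{deg}(N)^{-1}$ with $\sqrt{N}$ and the growing sieve dimension $K_N$ still vanishes, and, simultaneously, that the network-induced remainders in $\w_i$ and $\z_i$ do not interact with the generated-regressor and sieve errors to produce a non-negligible bias in $\frac{1}{\sqrt{N}}\Z_N'\M_{\widehat{\R}_N}\bupsilon_N$. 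These cross terms, rather than any single limit in isolation, are where the bookkeeping is most demanding.
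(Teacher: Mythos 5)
Your proposal is correct and follows essentially the same route as the paper: the same three-step decomposition (replacing $\M_{\widehat{\R}_N}$ by $\M_{\R_N}$ via the Lipschitz sieve condition and the uniform rate $\max_i|\widehat{\text{deg}}_i-\text{deg}_i|=O_p(\zeta_{deg}(N)^{-1})$, then controlling the series approximation error, then taking limits of the infeasible estimator), including the key use of the monotonicity-based identity $h^l_{*}(\x_{2i},a_i)=h^l_{**}(\x_{2i},\text{deg}_i)$ and the dense-network concentration of the peer averages onto functions of $(\x_i,a_i)$ before applying the WLLN and the martingale CLT under conditional heteroskedasticity. This matches the paper's Lemmas on the sampling error, the series approximation, and the limits $\mathbf{\bar{S}}^{\w\z}$, $\mathbf{\bar{S}}^{\z\z}$, $\mathbf{\bar{S}}^{\z\z\sigma}$, so nothing essential is missing.
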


The asymptotic result in Theorem \ref{theorem: central limit theorem beta_bar} requires the following regularity conditions which are formally presented in the Appendix. First, Assumption \ref{as:basic.alternative} assumes that the regressors in the outcome equation, $\x_{1i}$ and the observables in the network formation $\x_{2i}$ do not overlap. Assumption \ref{assumption:rank.alternative} is a full rank condition for $\bar{\beta}_{2SLS}$.  
Assumptions \ref{assumption:sieve basis.alternative} and \ref{assumtion:sieve with x2, Lipschitz} regard the sieve used in constructing the estimator $\bar{\beta}_{2SLS}$.  
Comparing with the assumptions assumed in Theorem \ref{theorem: central limit theorem}, Theorem \ref{theorem: central limit theorem beta_bar} does not require the high level condition of Assumption \ref{assumption: estimation of a_i} because we do not use an estimator of $a_i$. Instead it requires an additional restriction that the net surplus function in the link formation be strictly monotonic in $a_i$ conditional on $(\x_{2i},\x_{2j},a_j)$, which implies the required monotonicity condition in (\ref{eq.monotone.link.formation}). 

Like in the case of $\widehat{\beta}_{2SLS}$, we allow $\eta^{\upsilon}_{*i} = \upsilon_i - \E(\upsilon_i|\x_{2i},a_i)$ to be conditionally heteroskedastic, and $\sigma^2_{*}(\x_i,a_i) := \mathbb{E}[(\upsilon_i - \E[ \upsilon_i|\x_{2i},a_i])^2| \x_i,a_i]$ is allowed to depend on $(\x_i,a_i)$.

 The asymptotic variance can be consistently estimated by 
\begin{align}
\widehat{\bar{\Omega}} &= \left(\widehat{\mathbf{\bar{S}}}^{\w\z}\left(\widehat{\mathbf{\bar{S}}}^{\z\z}\right)^{-1} (\widehat{\mathbf{\bar{S}}}^{\w\z})^{\prime})\right)^{-1} 
\left( 
\widehat{\mathbf{\bar{S}}}^{\w\z}\left(\widehat{\mathbf{\bar{S}}}^{\z\z}\right)^{-1} \widehat{\mathbf{\bar{S}}}^{\z\z\sigma}
\left(\widehat{\mathbf{\bar{S}}}^{\z\z}\right)^{-1} (\widehat{\mathbf{\bar{S}}}^{\w\z})^{\prime})
\right) 
\left(\widehat{\mathbf{\bar{S}}}^{\w\z}\left(\widehat{\mathbf{\bar{S}}}^{\z\z}\right)^{-1}(\widehat{\mathbf{\bar{S}}}^{\w\z})^{\prime})\right)^{-1},
\end{align}
where 
\begin{align*}
\widehat{\mathbf{\bar{S}}}^{\w\z} 
&=\frac{1}{N}\sum_{i=1}^N\left(\w_i - \widehat{\h}_{**}^\w(\x_{2i},\widehat{\text{deg}}_i) \right)
\left(\z_i - \widehat{\h}_{**}^\z(\x_{2i},\widehat{\text{deg}}_i) \right)' \\
\widehat{\mathbf{\bar{S}}}^{\z\z} 
&=\frac{1}{N}\sum_{i=1}^N\left(\z_i - \widehat{\h}_{**}^\z(\x_{2i},\widehat{\text{deg}}_i) \right)
\left(\z_i - \widehat{\h}_{**}^\z(\x_{2i},\widehat{\text{deg}}_i) \right)' \\
\widehat{\mathbf{\bar{S}}}^{\z\z\sigma^2} 
&=\frac{1}{N}\sum_{i=1}^N\left(\z_i - \widehat{\h}_{**}^\z(\x_{2i},\widehat{\text{deg}}_i) \right)
\left(\z_i - \widehat{\h}_{**}^\z(\x_{2i},\widehat{\text{deg}}_i) \right)' (\widehat{\eta}^{\upsilon}_{**i})^2,
\end{align*}
and $\widehat{\eta}_{**i}^{\upsilon}=y_i-\widehat{h}_{**}^y(\x_{2i},\widehat{\text{deg}}_i)-(\w_i-\widehat{\h}_{**}^{\w}(\x_{2i},\widehat{\text{deg}}_i))'\bar{\beta}_{2SLS}.$	

\section{Monte Carlo}\label{section: monte carlo}
We consider both dense and sparse network Monte Carlo designs. In the dense network case links are formed according to\footnote{This follows the approach of \cite{Graham2017}.}
\[
d_{ij} = \mathbb{I}\left\{ x_{2i}x_{2j}\lambda_d + a_i + a_j -u_{ij} \geq 0 \right\},
\]
where $x_{2i}\in\{-1,1\}$, $\lambda_d=1$ and $u_{ij}$ follows a logistic distribution. This link rule implies that agents have a strong taste for homophilic matching since $x_{2i}x_{2j}\lambda_d=1$ when $x_{2i}=x_{2j}$ and $x_{2i}x_{2j}\lambda_d=-1$ when $x_{2i}\neq x_{2j}$. 

In the sparse network case links are formed according to
\[
d_{ij} = \mathbb{I}\left\{(|x_{2i}-x_{2j}|+3)\lambda_s + a_i + a_j -u_{ij} \geq 0 \right\},
\]
with $\lambda_s=-1$. This rule also implies homophily on observable characteristics.
Individual-level degree heterogeneity is generated according to
\[
a_i=\varphi(\alpha_L\mathbb{I}\left\{ x_{2i}=-1 \right\} +\alpha_H\mathbb{I} \left\{ x_{2i}=1 \right\} + \xi_i),
\]
with $\alpha_L \leq \alpha_H$ and $\xi_i$ a centered Beta random variable $
\xi_i|x_{2i}\sim \left\{Beta(\mu_0,\mu_1)-\frac{\mu_0}{\mu_0+\mu_1}\right\}$
so that $a_i\in \left[\alpha_L-\frac{\mu_0}{\mu_0+\mu_1},\alpha_H+\frac{\mu_1}{\mu_0+\mu_1}\right]$. We choose values of the network formation parameters so that $a_i \in[-1,1]$.
In the main text we present results based on the following parameter values. In the dense network case we set 
$\mu_0=1/4$, $\mu_1=3/4$, $\alpha_L=\alpha_H=-3/4$, which yields an average node degree $=23$ when $N=100$. The sparse network formation design is generated by setting $\mu_0=1$, $\mu_1=1$, $\alpha_L=\alpha_H=-1/4$, which gives an average degree $=1.78$ when $N=100$.\footnote{Results for 14 other network formation designs can be found in Section \ref{appendix: supplementary monte carlo} of the online appendix. Most results are similar to the ones presented in the main text.}

Individual outcomes are generated according to
\[
y_i=\beta_1\sum_{j=1 \atop j\neq i}^N g_{ij}y_j+\beta_2 x_{1i}+\beta_3 
\sum_{j=1 \atop j\neq i}^N g_{ij}x_{1j}+h(a_i)+\varepsilon_i.
\]
In the simulations, we set $\beta_1=0.8$, $\beta_2=\beta_3=5$,  $x_{1i}=3q_1+\cos(q_2)/0.8+\epsilon_i$, where $q_1,q_2\sim\mathcal{N}(x_{2i},1)$, and $\varepsilon_i,\epsilon_i\sim\mathcal{N}(0,1)$.
For $h(a_i)$ we use the following functional forms: $h(a_i)=\exp (3 a_i)$, $h(a_i)=\cos(3 a_i)$,  $h(a_i)=\sin(3 a_i)$. A plot of $h(a_i)$ for these functional forms is presented in Figure \ref{figure: h}. We can see that the exponential function yields a strongly increasing impact on the individual outcome, and with the cosine functions the returns are increasing up to a certain point and then decreasing; however the sine function gives a more irregular pattern.

We estimate the outcome equation coefficients $(\beta_1,\beta_2, \beta_3)$ using the standard 2SLS estimator for peer effects and the Hermite polynomial sieve as well as a polynomial sieve.
 For the dense network case, we estimate $a_i$ using $\widehat{a}_i$ and implement the following control functions: 
using a control function linear in $\widehat{a}_i$,
  $\widehat{h}(\widehat{a}_i)$, $\widehat{h}(a_i)$,
 $\widehat{h}(\widehat{\rm deg}_i,x_{2i})$\footnote{
 	{Note that since $x_{2i}$ is discrete with a finite support, $\{ x_1,...,x_M \}$,} we have
 	$
 	r(x_{2i},{\rm deg_i}) = \sum_{m=1}^M r(x_m,{\rm deg_i}) \mathbb{I}\{ x_{2i} = x_m \}.
 	$
 	We can then approximate 
 	$
 	r(x_{2i}, {\rm deg_i}) \simeq \sum_{k=1}^{K_N} \left\{ \sum_{m=1}^M \alpha_{m,k} q_k^d(\rm deg_i) \mathbb{I}\{ x_{2i} = x_m \} \right\}.
 	$}, and
 $h(a_i)$. For the sparse network case the estimator of $a_i$ is not reliable\footnote{To estimate $a_i$, we use the JMLE proposed in \cite{Graham2017}. As \cite{Graham2017} states, in sparse designs the JMLE rarely even exists, rendering it unusable in practice
 	when the network is too sparse. See \cite{Graham2017} for more details.} and we implement the following control functions: linear in $a_i$, $\widehat{h}(a_i)$,  $\widehat{h}(\widehat{\rm deg}_i,x_{2i})$ and $h(a_i)$. In both the dense and sparse setup we also implement a benchmark model with no control for the endogeneity of the network.

In the paper, due to space limitations, we present Monte Carlo results obtained using the Hermite polynomial sieve with $K_N=4$.
Specifically, Tables \ref{table: MC dense main text} and \ref{MC sparse main text} include results for the dense and sparse network specifications, respectively. Results for the other orders of $K_N$ are not notably different; in the Online Supplement we provide results for fourteen other network formation designs, for $K_N=4,8$ and for the Hermite polynomial and polynomial sieve functions.

\begin{figure}[!h]\caption{\bf $h(a_i)$ for selected functional forms of $h(a_i)$}\label{figure: h}
	\includegraphics[scale=.4]{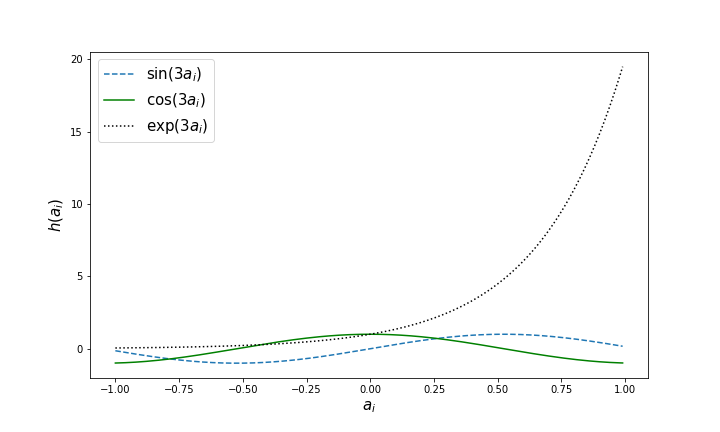}
\end{figure}

 We also perform  conventional leave-one-out cross validation to find data-dependent $K_N$ (chosen as the $K_N$ that minimizes the Root Mean Square Error (RMSE) of the prediction based on the leave-one-out estimator, see for example  \cite{andrews1991asymptotic}, \cite{hansen2014nonparametric}). We report the statistics on the cross-validation in Table \ref{table: CV}.
The differences in RMSE are very small between the different values of $K_N$.\\
\label{remark: MC discussion}
Analyzing the Monte Carlo results for the dense network specification in Table \ref{table: MC dense main text}, we can see that, as expected from our asymptotic theories, the control functions $\widehat{h}(\widehat{a}_i)$ and $\widehat{h}(\widehat{deg}_i,x_{2i})$ perform better than the estimator with a linear control function, as well as the estimator that does not control for the endogeneity of the network in terms of mean bias. This difference is more pronounced in the case when $h(a_i)$ is the sine or cosine function. 
Both the control for degree approach and the control function that uses $\widehat{h}(\widehat{a}_i)$  yield a low bias and have the correct size on all coefficients in all cases. In the simulations we also implemented the control function $\widehat{h}(a_i)$, that is, using the true $a_i$ instead of $\widehat{a}_i$.  These results are very similar to the ones obtained using $\widehat{h}(\widehat{a}_i)$, which is in line with the estimator $\widehat{a}_i$ having a very low bias, as detailed in the table footnotes. This suggest that the approach of using $\widehat{h}(\widehat{a}_i)$ as a control function works very well when a highly precise estimator of $a_i$ is available (for example when the network size $N$ is large.). 

Looking at Table \ref{MC sparse main text} and the results for the sparse design, we can see that the control for degree approach performs very well across all functional forms of $h(a_i)$. In the sparse setup, the bias of all estimates, including those that do not control for the endogeneity of the network, is small. However, the size of the no control and linear control estimates is not correct. If a precise estimator of $a_i$ is available, the control function $\widehat{h}(a_i)$ also performs well with low bias and correct size in all cases.

Table \ref{table: CV} shows that the performance of the estimators does not differ notably for different values of $K_N$. As for the choice of $K_N$ we present in the tables, we have run simulations for a range of values of $K_N$ and the results did not differ significantly. As deriving a theory for a data driven choice of $K_N$ is beyond the scope of this paper, for applied researchers we suggest estimating the model over a range of $K_N$ and seeing whether the results vary significantly. As shown in our Monte Carlo simulations, the control function approach yields results robust to the choice of $K_N$ for different non-linear functions. 
%Finally, unless $a_i$ can be estimated with a high degree of precision, we have shown that the control for degree approach is more robust and also easier to implement as it does not require an estimate of $a_i$.

\begin{table}[!h]\caption{\footnotesize {\bf Design 4 dense network: Parameter values across 1000 Monte Carlo replications with $K_N=4$ and Hermite polynomial sieve}} \label{table: MC dense main text}
	\begin{threeparttable} 
		\centering \footnotesize
		\scalebox{.8}{\begin{tabular}{cccccccccccccc}\toprule 
				\multicolumn{14}{c}{$h(a_i) = \exp(a_i)$}\\ 
				\cellcolor{yellow}$N$&\multicolumn{6}{|c|}{\cellcolor{yellow}$100$}&\multicolumn{6}{|c|}{\cellcolor{yellow}$250$}&\\\hline 
				CF&$(0)$&$(1)$&$(2)$&$(3)$&$(4)$&$(5)$& $(0)$ &$(1)$&$(2)$&$(3)$&$(4)$&$(5)$&\\\hline 
				\multirow{4}{*}{$\beta_1=0.8$}& 0.002 & 0.004 &-0.000 &-0.000 &0.000 &-0.000& 0.004 &0.007 &-0.001& -0.001 & -0.001& -0.000 & \textit{mean bias} \\ 
				&(0.010 )&(0.013 )&(0.015 )&(0.015 )&(0.024 )&(0.010 )&(0.009 )&(0.013 )&(0.015 )&(0.015 )&(0.025 )&(0.009 )&\textit{std}\\ 
				& 0.133 & 0.115 &0.056 &0.061 &0.058 &0.058& 0.306 &0.225 &0.057& 0.057 &0.064& 0.050 &\textit{size} \\ \midrule
				\multirow{4}{*}{$\beta_2=5$}& -0.003 & -0.004 &-0.000 &-0.000 &0.000 &-0.000& -0.002 &-0.004 &0.000& 0.000 &-0.000& 0.000 &\textit{mean bias} \\ 
				&(0.031 )&(0.032 )&(0.034 )&(0.033 )&(0.035 )&(0.031 )&(0.020 )&(0.021 )&(0.020 )&(0.020 )&(0.021 )&(0.020 )&\textit{std}\\ 
				& 0.058 & 0.069 &0.074 &0.068 &0.074 &0.057& 0.069 &0.079& 0.055 &0.059 & 0.058 &0.061 &\textit{size} \\\midrule 
				\multirow{4}{*}{$\beta_3=5$}& -0.032 & -0.048 &0.006& 0.008 &0.006 &0.006 &-0.066 &-0.107 &0.009& 0.013 &0.012& 0.009 &\textit{mean bias} \\ 
				&(0.178 )&(0.217 )&(0.251 )&(0.250 )&(0.269 )&(0.174 )&(0.163 )&(0.219 )&(0.249 )&(0.248 )&(0.270 )&(0.152 )&\textit{std}\\ 
				& 0.078 & 0.078 &0.055 &0.060 &0.061 &0.061& 0.156& 0.172 &0.051 &0.054 & 0.062 &0.050 &\textit{size} \\\midrule 
				\multicolumn{14}{c}{$h(a_i) = \sin(a_i)$}\\ 
				\cellcolor{yellow}$N$&\multicolumn{6}{|c|}{\cellcolor{yellow}$100$}&\multicolumn{6}{|c|}{\cellcolor{yellow}$250$}&\\\hline 
				CF&$(0)$&$(1)$&$(2)$&$(3)$&$(4)$&$(5)$& $(0)$ &$(1)$&$(2)$&$(3)$&$(4)$&$(5)$&\\\hline 
				\multirow{4}{*}{$\beta_1=0.8$}& -0.008 & -0.005 &-0.000 &-0.000 &-0.000 &-0.000& -0.015 &-0.010 &-0.001& -0.001 & -0.001& -0.001 & \textit{mean bias} \\ 
				&(0.014 )&(0.014 )&(0.016 )&(0.015 )&(0.025 )&(0.011 )&(0.017 )&(0.015 )&(0.015 )&(0.015 )&(0.026 )&(0.010 )&\textit{std}\\ 
				& 0.464 & 0.160 &0.058 &0.061 &0.059 &0.045& 0.753 &0.293 &0.054& 0.057 &0.071& 0.053 &\textit{size} \\ \midrule
				\multirow{4}{*}{$\beta_2=5$}& 0.007 & 0.005 &-0.001 &-0.000 &0.000 &-0.000& 0.007 &0.005 &-0.000& 0.000 &-0.000& 0.000 &\textit{mean bias} \\ 
				&(0.033 )&(0.034 )&(0.035 )&(0.033 )&(0.036 )&(0.031 )&(0.022 )&(0.022 )&(0.021 )&(0.020 )&(0.021 )&(0.020 )&\textit{std}\\ 
				& 0.075 & 0.072 &0.067 &0.068 &0.072 &0.060& 0.076 &0.071& 0.056 &0.059 & 0.060 &0.055 &\textit{size} \\\midrule 
				\multirow{4}{*}{$\beta_3=5$}& 0.113 & 0.078 &0.009& 0.008 &0.009 &0.005 &0.236 &0.165 &0.010& 0.013 &0.012& 0.012 &\textit{mean bias} \\ 
				&(0.222 )&(0.231 )&(0.258 )&(0.250 )&(0.277 )&(0.191 )&(0.268 )&(0.249 )&(0.255 )&(0.248 )&(0.276 )&(0.177 )&\textit{std}\\ 
				& 0.237 & 0.100 &0.057 &0.060 &0.053 &0.053& 0.646& 0.248 &0.056 &0.054 & 0.055 &0.048 &\textit{size} \\\midrule 
				\multicolumn{14}{c}{$h(a_i) = \cos(a_i)$}\\ 
				\cellcolor{yellow}$N$&\multicolumn{6}{|c|}{\cellcolor{yellow}$100$}&\multicolumn{6}{|c|}{\cellcolor{yellow}$250$}&\\\hline 
				CF&$(0)$&$(1)$&$(2)$&$(3)$&$(4)$&$(5)$& $(0)$ &$(1)$&$(2)$&$(3)$&$(4)$&$(5)$&\\\hline 
				\multirow{4}{*}{$\beta_1=0.8$}& -0.009 & 0.004 &-0.000 &-0.000 &0.000 &-0.000& -0.017 &0.010 &-0.000& -0.001 & -0.001& -0.001 & \textit{mean bias} \\ 
				&(0.016 )&(0.014 )&(0.017 )&(0.015 )&(0.025 )&(0.010 )&(0.018 )&(0.016 )&(0.015 )&(0.015 )&(0.026 )&(0.009 )&\textit{std}\\ 
				& 0.459 & 0.104 &0.055 &0.061 &0.057 &0.053& 0.745 &0.318 &0.059& 0.057 &0.059& 0.046 &\textit{size} \\ \midrule
				\multirow{4}{*}{$\beta_2=5$}& 0.009 & -0.004 &-0.000 &-0.000 &0.001 &-0.000& 0.008 &-0.005 &0.000& 0.000 &0.000& 0.000 &\textit{mean bias} \\ 
				&(0.040 )&(0.034 )&(0.036 )&(0.033 )&(0.037 )&(0.031 )&(0.026 )&(0.022 )&(0.021 )&(0.020 )&(0.021 )&(0.020 )&\textit{std}\\ 
				& 0.075 & 0.061 &0.062 &0.068 &0.070 &0.060& 0.084 &0.077& 0.053 &0.059 & 0.055 &0.062 &\textit{size} \\\midrule 
				\multirow{4}{*}{$\beta_3=5$}& 0.123 & -0.051 &0.004& 0.008 &0.004 &0.004 &0.264 &-0.161 &0.008& 0.013 &0.010& 0.011 &\textit{mean bias} \\ 
				&(0.257 )&(0.232 )&(0.266 )&(0.250 )&(0.286 )&(0.176 )&(0.292 )&(0.258 )&(0.256 )&(0.248 )&(0.276 )&(0.157 )&\textit{std}\\ 
				& 0.224 & 0.074 &0.053 &0.059 &0.055 &0.056& 0.640& 0.256 &0.055 &0.054 & 0.057 &0.047 &\textit{size} \\\midrule 
		\end{tabular}} 
		\begin{tablenotes}\tiny 
			\item CF - control function. $(0)$ - none, $(1)$ - $\lambda_a\hat{a}_i$,  $(2)$ - $\hat{h}(\hat{a}_i)$, $(3)$ - $\hat{h}(a_i)$, $(4)$ - $\hat{h}(\widehat{deg}_i,x_{2i})$, $(5)$ - $h(a_i)$. 
			\item The network design parameters are $\mu_0=0.25$, $\mu_1=0.75$, $\alpha_L=-0.75$, $\alpha_H=-0.75$ 
			\item Average number of links for $N=100$ is $23.0$, for $N=250$ it is $57.8$. 
			\item Average skewness for $N=100$ is $0.66$, for $N=250$ it is $0.89$. 
			\item Size is the empirical size of t-test against the truth. 
			\item N$=100$, $corr(a_i,\bm{x}_{2i})=0.004$,N$=250$, $corr(a_i,\bm{x}_{2i})=0.001$ 
			\item The bias of $\hat{a}_i$ is calculated as $a_i-\hat{a}_i$. 
			\item For $N=100$, $\hat{a}_i$ mean bias$=0.018$, median bias$=0.008$, std$=0.271$. 
			\item For $N=250$, $\hat{a}_i$ mean bias$=0.007$, median bias$=0.004$, std$=0.167$. 
		\end{tablenotes} 
	\end{threeparttable} 
\end{table} 

\begin{table}[!h]\caption{\footnotesize {\bf Design 4 sparse network: Parameter values across 1000 Monte Carlo replications with $K_N=4$ and Hermite polynomial sieve}} \label{MC sparse main text}
	\begin{threeparttable} 
		\centering \footnotesize
		\scalebox{.8}{\begin{tabular}{cccccccccccc}\toprule 
				\multicolumn{12}{c}{$h(a_i) = \exp(a_i)$}\\ 
				\cellcolor{yellow}$N$&\multicolumn{5}{|c|}{\cellcolor{yellow}$100$}&\multicolumn{5}{|c|}{\cellcolor{yellow}$250$}&\\\hline 
				CF&$(0)$&$(1)$&$(2)$&$(3)$&$(4)$& $(0)$ &$(1)$&$(2)$&$(3)$&$(4)$&\\\hline 
				\multirow{4}{*}{$\beta_1=0.8$}& 0.001 & 0.001 &0.000 &0.000 &0.000 &0.002& 0.003 &0.000 &-0.000& 0.000 &\textit{mean bias} \\ 
				&(0.003 )&(0.003 )&(0.002 )&(0.003 )&(0.002 )&(0.004 )&(0.004 )&(0.002 )&(0.003 )&(0.002 )&\textit{std}\\ 
				& 0.089 & 0.090 &0.052 &0.056 &0.049 &0.269& 0.257 &0.072 &0.055& 0.064 &\textit{size} \\ \midrule
				\multirow{4}{*}{$\beta_2=5$}& -0.001 & -0.002 &-0.003 &-0.002 &-0.003 &-0.007& -0.008 &0.000 &0.001& 0.001 &\textit{mean bias} \\ 
				&(0.039 )&(0.039 )&(0.033 )&(0.041 )&(0.032 )&(0.027 )&(0.027 )&(0.021 )&(0.025 )&(0.021 )&\textit{std}\\ 
				& 0.043 & 0.046 &0.065 &0.061 &0.060 &0.078& 0.084 &0.055& 0.066 &0.049 &\textit{size} \\\midrule 
				\multirow{4}{*}{$\beta_3=5$}& -0.004 & -0.004 &-0.002& 0.002 &-0.002 &-0.027 &-0.028 &-0.001 &-0.000& -0.001 &\textit{mean bias} \\ 
				&(0.076 )&(0.077 )&(0.066 )&(0.075 )&(0.065 )&(0.063 )&(0.064 )&(0.052 )&(0.058 )&(0.051 )&\textit{std}\\ 
				& 0.034 & 0.038 &0.063 &0.063 &0.047 &0.085& 0.090& 0.056 &0.068 &0.060 &\textit{size} \\\midrule 
				\multicolumn{12}{c}{$h(a_i) = \sin(a_i)$}\\ 
				\cellcolor{yellow}$N$&\multicolumn{5}{|c|}{\cellcolor{yellow}$100$}&\multicolumn{5}{|c|}{\cellcolor{yellow}$250$}&\\\hline 
				CF&$(0)$&$(1)$&$(2)$&$(3)$&$(4)$& $(0)$ &$(1)$&$(2)$&$(3)$&$(4)$&\\\hline 
				\multirow{4}{*}{$\beta_1=0.8$}& -0.000 & 0.000 &0.000 &0.000 &0.000 &-0.002& -0.000 &0.000 &-0.000& 0.000 &\textit{mean bias} \\ 
				&(0.003 )&(0.002 )&(0.002 )&(0.003 )&(0.002 )&(0.003 )&(0.002 )&(0.002 )&(0.003 )&(0.002 )&\textit{std}\\ 
				& 0.059 & 0.048 &0.052 &0.057 &0.051 &0.170& 0.068 &0.072 &0.059& 0.071 &\textit{size} \\ \midrule
				\multirow{4}{*}{$\beta_2=5$}& -0.007 & -0.002 &-0.003 &-0.002 &-0.003 &0.005& 0.001 &0.000 &0.001& 0.000 &\textit{mean bias} \\ 
				&(0.039 )&(0.032 )&(0.033 )&(0.041 )&(0.032 )&(0.026 )&(0.022 )&(0.021 )&(0.025 )&(0.021 )&\textit{std}\\ 
				& 0.052 & 0.061 &0.066 &0.062 &0.059 &0.083& 0.061 &0.055& 0.073 &0.048 &\textit{size} \\\midrule 
				\multirow{4}{*}{$\beta_3=5$}& -0.001 & -0.001 &-0.002& 0.002 &-0.002 &0.016 &-0.001 &-0.001 &-0.000& -0.001 &\textit{mean bias} \\ 
				&(0.078 )&(0.067 )&(0.066 )&(0.076 )&(0.065 )&(0.064 )&(0.052 )&(0.052 )&(0.058 )&(0.051 )&\textit{std}\\ 
				& 0.059 & 0.053 &0.063 &0.067 &0.049 &0.079& 0.057& 0.056 &0.065 &0.057 &\textit{size} \\\midrule 
				\multicolumn{12}{c}{$h(a_i) = \cos(a_i)$}\\ 
				\cellcolor{yellow}$N$&\multicolumn{5}{|c|}{\cellcolor{yellow}$100$}&\multicolumn{5}{|c|}{\cellcolor{yellow}$250$}&\\\hline 
				CF&$(0)$&$(1)$&$(2)$&$(3)$&$(4)$& $(0)$ &$(1)$&$(2)$&$(3)$&$(4)$&\\\hline 
				\multirow{4}{*}{$\beta_1=0.8$}& 0.001 & 0.001 &0.000 &0.000 &0.000 &0.002& 0.002 &0.000 &-0.000& 0.000 &\textit{mean bias} \\ 
				&(0.003 )&(0.003 )&(0.002 )&(0.003 )&(0.002 )&(0.003 )&(0.003 )&(0.002 )&(0.003 )&(0.002 )&\textit{std}\\ 
				& 0.073 & 0.081 &0.052 &0.053 &0.049 &0.197& 0.216 &0.072 &0.067& 0.068 &\textit{size} \\ \midrule
				\multirow{4}{*}{$\beta_2=5$}& -0.002 & -0.002 &-0.003 &-0.002 &-0.003 &-0.005& -0.006 &0.000 &0.001& 0.000 &\textit{mean bias} \\ 
				&(0.038 )&(0.038 )&(0.033 )&(0.041 )&(0.032 )&(0.025 )&(0.025 )&(0.021 )&(0.025 )&(0.021 )&\textit{std}\\ 
				& 0.047 & 0.051 &0.066 &0.061 &0.062 &0.062& 0.074 &0.055& 0.065 &0.047 &\textit{size} \\\midrule 
				\multirow{4}{*}{$\beta_3=5$}& -0.003 & -0.003 &-0.002& 0.002 &-0.002 &-0.020 &-0.022 &-0.001 &0.000& -0.001 &\textit{mean bias} \\ 
				&(0.073 )&(0.073 )&(0.066 )&(0.074 )&(0.065 )&(0.061 )&(0.062 )&(0.052 )&(0.059 )&(0.051 )&\textit{std}\\ 
				& 0.038 & 0.036 &0.063 &0.065 &0.049 &0.069& 0.079& 0.056 &0.070 &0.062 &\textit{size} \\\midrule 
		\end{tabular}} 
		\begin{tablenotes}\tiny 
			\item CF - control function. $(0)$ - none, $(1)$ - $\lambda_a a_i$, $(2)$ - $\hat{h}(a_i)$, $(3)$ - $\hat{h}(\widehat{deg}_i,x_{2i})$, $(4)$ - $h(a_i)$. 
			\item The network design parameters are $\mu_0=1.00$, $\mu_1=1.00$, $\alpha_L=-0.25$, $\alpha_H=-0.25$ 
			\item Average number of links for $N=100$ is $1.8$, for $N=250$ it is $4.5$. 
			\item Average skewness for $N=100$ is $0.81$, for $N=250$ it is $0.62$. 
			\item Size is the empirical size of t-test against the truth. 
			\item N$=100$, $corr(a_i,{\bf{x}}_{2i})=-0.001$,N$=250$, $corr(a_i,{\bf{x}}_{2i})=-0.002$ 
		\end{tablenotes} 
	\end{threeparttable} 
\end{table} 
	\input{CV_dense_design4.tex}

\section{Conclusions}\label{section: conclusions}
In this paper we show that, whenever the network is likely endogenous, it is important to control for this endogeneity when estimating peer effects.
Failing to control for the endogeneity of the connections matrix in general leads to biased estimates of peer effects. We show that under specific assumptions, we can use the control function approach to deal with the endogeneity problem. We assume that unobserved individual characteristics directly affect link formation and individual outcomes. We leave the functional form through which unobserved individual characteristics enter the outcome equation unspecified and estimate it using a non-parametric approach. The estimators we propose are easy to use in applied work, and Monte Carlo results show that they perform well compared to a linear control function estimator. Erroneously assuming that unobserved characteristics enter the outcome equation in a linear fashion can lead to a serious bias in the estimated parameters. 

{
	\bibliographystyle{chicago}
%\bibliography{library}

\begin{thebibliography}{}
	
	\bibitem[\protect\citeauthoryear{Abadie and Imbens}{Abadie and
		Imbens}{2006}]{AbadieImbens2006}
	Abadie, A. and G.~W. Imbens (2006).
	\newblock Large sample properties of matching estimators for average treatment
	effects.
	\newblock {\em Econometrica\/}~{\em 74\/}(1), 235--267.
	
	\bibitem[\protect\citeauthoryear{Ahn and Powell}{Ahn and
		Powell}{1993}]{Ahn1993}
	Ahn, H. and J.~L. Powell (1993).
	\newblock Semiparametric estimation of censored selection models with a
	nonparametric selection mechanism.
	\newblock {\em Journal of Econometrics\/}~{\em 58\/}(1-2), 3--29.
	
	\bibitem[\protect\citeauthoryear{Arduini, Patacchini, Rainone, et~al.}{Arduini
		et~al.}{2015}]{Arduini2015a}
	Arduini, T., E.~Patacchini, E.~Rainone, et~al. (2015).
	\newblock Parametric and semiparametric iv estimation of network models with
	selectivity.
	\newblock Technical report, Einaudi Institute for Economics and Finance (EIEF).
	
	\bibitem[\protect\citeauthoryear{Auerbach}{Auerbach}{2016}]{Auerbach2016}
	Auerbach, E. (2016).
	\newblock Identification and estimation of models with endogenous network
	formation.
	\newblock {\em Working paper\/}.
	
	\bibitem[\protect\citeauthoryear{Banerjee, Chandrasekhar, Duflo, and
		Jackson}{Banerjee et~al.}{2013}]{Banerjee2013}
	Banerjee, A., A.~G. Chandrasekhar, E.~Duflo, and M.~O. Jackson (2013).
	\newblock The diffusion of microfinance.
	\newblock {\em Science\/}~{\em 341\/}(6144), 1236498.
	
	\bibitem[\protect\citeauthoryear{Blume, Brock, Durlauf, and Ioannides}{Blume
		et~al.}{2011}]{Blume2010}
	Blume, L.~E., W.~A. Brock, S.~N. Durlauf, and Y.~M. Ioannides (2011).
	\newblock Identification of social interactions.
	\newblock In J.~Benhabib, A.~Bisin, and M.~Jackson (Eds.), {\em Handbook of
		social economics}, Volume~1, pp.\  853--964. Amsterdam: Elsevier.
	
	\bibitem[\protect\citeauthoryear{Blume, Brock, Durlauf, and Jayaraman}{Blume
		et~al.}{2015}]{blume2015linear}
	Blume, L.~E., W.~A. Brock, S.~N. Durlauf, and R.~Jayaraman (2015).
	\newblock Linear social interactions models.
	\newblock {\em Journal of Political Economy\/}~{\em 123\/}(2), 444--496.
	
	\bibitem[\protect\citeauthoryear{Bramoull\'{e}, Djebbari, and
		Fortin}{Bramoull\'{e} et~al.}{2009}]{Bramoulle2009}
	Bramoull\'{e}, Y., H.~Djebbari, and B.~Fortin (2009).
	\newblock {Identification of peer effects through social networks}.
	\newblock {\em Journal of Econometrics\/}~{\em 150\/}(1), 41--55.
	
	\bibitem[\protect\citeauthoryear{Brock and Durlauf}{Brock and
		Durlauf}{2001}]{brock2001interactions}
	Brock, W.~A. and S.~N. Durlauf (2001).
	\newblock Interactions-based models.
	\newblock In J.~J. Heckman and E.~Leamer (Eds.), {\em Handbook of
		econometrics}, Volume~5, pp.\  3297--3380. Amsterdam: Elsevier.
	
	\bibitem[\protect\citeauthoryear{Chen, Fern{\'a}ndez-Val, and Weidner}{Chen
		et~al.}{2014}]{ChenFernandez-ValWeidner2018}
	Chen, M., I.~Fern{\'a}ndez-Val, and M.~Weidner (2014).
	\newblock Nonlinear factor models for network and panel data.
	\newblock {\em arXiv preprint arXiv:1412.5647\/}.
	
	\bibitem[\protect\citeauthoryear{De~Paula}{De~Paula}{2017}]{dePaula2016}
	De~Paula, A. (2017).
	\newblock Econometrics of network models.
	\newblock In {\em Advances in Economics and Econometrics: Theory and
		Applications, Eleventh World Congress}, pp.\  268--323. Cambridge University
	Press Cambridge.
	
	\bibitem[\protect\citeauthoryear{De~Weerdt and Fafchamps}{De~Weerdt and
		Fafchamps}{2011}]{fafchamps2011}
	De~Weerdt, J. and M.~Fafchamps (2011).
	\newblock Social identity and the formation of health insurance networks.
	\newblock {\em Journal of Development Studies\/}~{\em 47\/}(8), 1152--1177.
	
	\bibitem[\protect\citeauthoryear{Ductor, Fafchamps, Goyal, and van~der
		Leij}{Ductor et~al.}{2014}]{Ductor2011}
	Ductor, L., M.~Fafchamps, S.~Goyal, and M.~J. van~der Leij (2014).
	\newblock Social networks and research output.
	\newblock {\em Review of Economics and Statistics\/}~{\em 96\/}(5), 936--948.
	
	\bibitem[\protect\citeauthoryear{Dzemski}{Dzemski}{2018}]{dzemski2017empirical}
	Dzemski, A. (2018).
	\newblock An empirical model of dyadic link formation in a network with
	unobserved heterogeneity.
	\newblock {\em forthcoming in Review of Economics and Statistics\/}.
	
	\bibitem[\protect\citeauthoryear{Epple and Romano}{Epple and
		Romano}{2011}]{Epple2011}
	Epple, D. and R.~E. Romano (2011).
	\newblock Peer effects in education: A survey of the theory and evidence.
	\newblock In J.~Benhabib, A.~Bisin, and M.~Jackson (Eds.), {\em Handbook of
		social economics}, Volume~1, pp.\  1053--1163. Amsterdam: Elsevier.
	
	\bibitem[\protect\citeauthoryear{Escanciano, Jacho-Ch{\'a}vez, and
		Lewbel}{Escanciano et~al.}{2014}]{Escancianoetal2014}
	Escanciano, J.~C., D.~T. Jacho-Ch{\'a}vez, and A.~Lewbel (2014).
	\newblock Uniform convergence of weighted sums of non and semiparametric
	residuals for estimation and testing.
	\newblock {\em Journal of Econometrics\/}~{\em 178\/}(3), 426--443.
	
	\bibitem[\protect\citeauthoryear{Fafchamps and Gubert}{Fafchamps and
		Gubert}{2007}]{fafchamps2007risk}
	Fafchamps, M. and F.~Gubert (2007).
	\newblock Risk sharing and network formation.
	\newblock {\em American Economic Review\/}~{\em 97\/}(2), 75--79.
	
	\bibitem[\protect\citeauthoryear{Fern\'{a}ndez-Val and
		Weidner}{Fern\'{a}ndez-Val and Weidner}{2013}]{Fernandez-val}
	Fern\'{a}ndez-Val, I. and M.~Weidner (2013).
	\newblock {Individual and time effects in nonlinear panel models with large N,
		T}.
	\newblock {\em arXiv preprint arXiv:1311.7065\/}.
	
	\bibitem[\protect\citeauthoryear{Goldsmith-Pinkham and
		Imbens}{Goldsmith-Pinkham and Imbens}{2013}]{GoldsmithP2013}
	Goldsmith-Pinkham, P. and G.~W. Imbens (2013).
	\newblock {Social Networks and the Identification of Peer Effects}.
	\newblock {\em Journal of Business \& Economic Statistics\/}~{\em 31\/}(3),
	253--264.
	
	\bibitem[\protect\citeauthoryear{Graham}{Graham}{2011}]{Graham2011}
	Graham, B.~S. (2011).
	\newblock Econometric methods for the analysis of assignment problems in the
	presence of complementarity and social spillovers.
	\newblock In J.~Benhabib, A.~Bisin, and M.~Jackson (Eds.), {\em Handbook of
		social economics}, Volume~1, pp.\  965--1052. Amsterdam: Elsevier.
	
	\bibitem[\protect\citeauthoryear{Graham}{Graham}{2017}]{Graham2017}
	Graham, B.~S. (2017).
	\newblock An econometric model of network formation with degree heterogeneity.
	\newblock {\em Econometrica\/}~{\em 85\/}(4), 1033--1063.
	
	\bibitem[\protect\citeauthoryear{Hahn and Ridder}{Hahn and
		Ridder}{2013}]{HahnRidder2013}
	Hahn, J. and G.~Ridder (2013).
	\newblock Asymptotic variance of semiparametric estimators with generated
	regressors.
	\newblock {\em Econometrica\/}~{\em 81\/}(1), 315--340.
	
	\bibitem[\protect\citeauthoryear{Hall and Heyde}{Hall and
		Heyde}{2014}]{HallHeyde2014}
	Hall, P. and C.~C. Heyde (2014).
	\newblock {\em Martingale limit theory and its application}.
	\newblock New York: Academic press.
	
	\bibitem[\protect\citeauthoryear{Hansen}{Hansen}{2014}]{hansen2014nonparametric}
	Hansen, B.~E. (2014).
	\newblock Nonparametric sieve regression: Least squares, averaging least
	squares, and cross-validation.
	\newblock In J.~Racine, L.~Su, and A.~Ullah (Eds.), {\em Handbook of Applied
		Nonparametric and Semiparametric Econometrics and Statistics}, pp.\
	215--248. Oxford: Oxford University Press.
	
	\bibitem[\protect\citeauthoryear{Heckman, Ichimura, and Todd}{Heckman
		et~al.}{1998}]{Heckmanetal1998}
	Heckman, J.~J., H.~Ichimura, and P.~Todd (1998).
	\newblock Matching as an econometric evaluation estimator.
	\newblock {\em Review of Economic Studies\/}~{\em 65\/}(2), 261--294.
	
	\bibitem[\protect\citeauthoryear{Hsieh and Lee}{Hsieh and
		Lee}{2016}]{hsieh2016social}
	Hsieh, C.-S. and L.~F. Lee (2016).
	\newblock A social interactions model with endogenous friendship formation and
	selectivity.
	\newblock {\em Journal of Applied Econometrics\/}~{\em 31\/}(2), 301--319.
	
	\bibitem[\protect\citeauthoryear{Jackson}{Jackson}{2005}]{jackson2005survey}
	Jackson, M.~O. (2005).
	\newblock A survey of network formation models: stability and efficiency.
	\newblock In G.~Demange and M.~Wooders (Eds.), {\em Group Formation in
		Economics: Networks, Clubs, and Coalitions}, pp.\  11--49. New York:
	Cambridge University Press.
	
	\bibitem[\protect\citeauthoryear{Jochmans}{Jochmans}{2016}]{jochmans2016modified}
	Jochmans, K. (2016).
	\newblock Modified-likelihood estimation of the b-model.
	\newblock Technical report, Sciences Po Departement of Economics.
	
	\bibitem[\protect\citeauthoryear{Jochmans}{Jochmans}{2018}]{jochmans2018semiparametric}
	Jochmans, K. (2018).
	\newblock Semiparametric analysis of network formation.
	\newblock {\em Journal of Business \& Economic Statistics\/}~{\em 36\/}(4),
	705--713.
	
	\bibitem[\protect\citeauthoryear{Johnsson and Moon}{Johnsson and
		Moon}{2019}]{JohnssonMoon2019}
	Johnsson, I. and H.~R. Moon (2019).
	\newblock Estimation of peer effects in endogenous social networks: Control
	function approach.
	\newblock {\em Working Paper, available from http://www-bcf.usc.edu/~moonr/\/}.
	
	\bibitem[\protect\citeauthoryear{Kelejian and Prucha}{Kelejian and
		Prucha}{1998}]{Kelejian1998}
	Kelejian, H.~H. and I.~R. Prucha (1998).
	\newblock A generalized spatial two-stage least squares procedure for
	estimating a spatial autoregressive model with autoregressive disturbances.
	\newblock {\em The Journal of Real Estate Finance and Economics\/}~{\em
		17\/}(1), 99--121.
	
	\bibitem[\protect\citeauthoryear{Lee}{Lee}{2003}]{Lee2003}
	Lee, L. (2003).
	\newblock {Best Spatial Two‐Stage Least Squares Estimators for a Spatial
		Autoregressive Model with Autoregressive Disturbances}.
	\newblock {\em Econometric Reviews\/}~{\em 22\/}(4), 307--335.
	
	\bibitem[\protect\citeauthoryear{Lee}{Lee}{2007a}]{Lee2007a}
	Lee, L.-F. (2007a).
	\newblock {GMM and 2SLS estimation of mixed regressive, spatial autoregressive
		models}.
	\newblock {\em Journal of Econometrics\/}~{\em 137\/}(2), 489--514.
	
	\bibitem[\protect\citeauthoryear{Lee}{Lee}{2007b}]{Lee2007}
	Lee, L.-F. (2007b).
	\newblock {Identification and estimation of econometric models with group
		interactions, contextual factors and fixed effects}.
	\newblock {\em Journal of Econometrics\/}~{\em 140\/}(2), 333--374.
	
	\bibitem[\protect\citeauthoryear{Lee, Liu, and Lin}{Lee et~al.}{2010}]{Lee2010}
	Lee, L.-f., X.~Liu, and X.~Lin (2010).
	\newblock Specification and estimation of social interaction models with
	network structures.
	\newblock {\em The Econometrics Journal\/}~{\em 13\/}(2), 145--176.
	
	\bibitem[\protect\citeauthoryear{Li}{Li}{1987}]{li1987asymptotic}
	Li, K.-C. (1987).
	\newblock Asymptotic optimality for cp, cl, cross-validation and generalized
	cross-validation: discrete index set.
	\newblock {\em Annals of Statistics\/}, 958--975.
	
	\bibitem[\protect\citeauthoryear{Li et~al.}{Li
		et~al.}{1987}]{andrews1991asymptotic}
	Li, K.-C. et~al. (1987).
	\newblock Asymptotic optimality for $ c\_p, c\_l $, cross-validation and
	generalized cross-validation: Discrete index set.
	\newblock {\em The Annals of Statistics\/}~{\em 15\/}(3), 958--975.
	
	\bibitem[\protect\citeauthoryear{Li and Racine}{Li and Racine}{2007}]{Li2008a}
	Li, Q. and S.~J. Racine (2007).
	\newblock {\em {Nonparametric Econometrics: Theory and Practice}}.
	\newblock Princeton University Press, Princeton.
	
	\bibitem[\protect\citeauthoryear{Mammen, Rothe, Schienle, et~al.}{Mammen
		et~al.}{2012}]{Mammenetal2012}
	Mammen, E., C.~Rothe, M.~Schienle, et~al. (2012).
	\newblock Nonparametric regression with nonparametrically generated covariates.
	\newblock {\em The Annals of Statistics\/}~{\em 40\/}(2), 1132--1170.
	
	\bibitem[\protect\citeauthoryear{Manski}{Manski}{1993}]{Manski1993a}
	Manski, C.~F. (1993).
	\newblock Identification of endogenous social effects: The reflection problem.
	\newblock {\em Review of Economic Studies\/}~{\em 60\/}(3), 531--542.
	
	\bibitem[\protect\citeauthoryear{Manski}{Manski}{2000}]{Manski2000}
	Manski, C.~F. (2000).
	\newblock Economic analysis of social interactions.
	\newblock Technical report, NBER.
	
	\bibitem[\protect\citeauthoryear{Newey}{Newey}{1997}]{Newey1997}
	Newey, W.~K. (1997).
	\newblock {Convergence rates and asymptotic normality for series estimators}.
	\newblock {\em Journal of Econometrics\/}~{\em 79\/}(1), 147--168.
	
	\bibitem[\protect\citeauthoryear{Newey}{Newey}{2009}]{Newey2009}
	Newey, W.~K. (2009).
	\newblock Two-step series estimation of sample selection models.
	\newblock {\em The Econometrics Journal\/}~{\em 12\/}(s1), S217--S229.
	
	\bibitem[\protect\citeauthoryear{Powell}{Powell}{1987}]{Powell1987}
	Powell, J. (1987).
	\newblock {\em Semiparametric estimation of bivariate latent variable models}.
	\newblock University of Wisconsin--Madison, Social Systems Research Institute,
	Madison.
	
	\bibitem[\protect\citeauthoryear{Qu and Lee}{Qu and Lee}{2015}]{Qu2015}
	Qu, X. and L.-F. Lee (2015).
	\newblock {Estimating a spatial autoregressive model with an endogenous spatial
		weight matrix}.
	\newblock {\em Journal of Econometrics\/}~{\em 184\/}(2), 209--232.
	
	\bibitem[\protect\citeauthoryear{Robinson}{Robinson}{1988}]{Robinson1988}
	Robinson, P. (1988).
	\newblock {Root-N-consistent semiparametric regression}.
	\newblock {\em Econometrica: Journal of the Econometric Society\/}~{\em
		56\/}(4), 931--954.
	
	\bibitem[\protect\citeauthoryear{Shalizi}{Shalizi}{2012}]{Shalizi2012}
	Shalizi, C.~R. (2012).
	\newblock Comment on ``why and when `flawed' social network analyses still
	yield valid tests of no contagion''.
	\newblock {\em Statistics, politics, and policy\/}~{\em 3\/}(1), 5.
	
	\bibitem[\protect\citeauthoryear{Sheng}{Sheng}{2012}]{Sheng2012}
	Sheng, S. (2012).
	\newblock Identification and estimation of network formation games.
	\newblock {\em Unpublished manuscript\/}.
	
	\bibitem[\protect\citeauthoryear{Wahba et~al.}{Wahba
		et~al.}{1985}]{wahba1985comparison}
	Wahba, G. et~al. (1985).
	\newblock A comparison of gcv and gml for choosing the smoothing parameter in
	the generalized spline smoothing problem.
	\newblock {\em The Annals of Statistics\/}~{\em 13\/}(4), 1378--1402.
	
	\bibitem[\protect\citeauthoryear{Weinberg}{Weinberg}{2007}]{Weinberg2007}
	Weinberg, B.~A. (2007).
	\newblock Social interactions with endogenous associations.
	\newblock Working Paper 13038, National Bureau of Economic Research.
	
\end{thebibliography}

}
\appendix
\section*{Appendix}
\renewcommand{\thesection}{A.\arabic{section}}
\renewcommand{\thesubsection}{\thesection.\arabic{subsection}}

In this section we introduce the assumptions that are required for the two asymptotic results, Theorem \ref{theorem: central limit theorem} for $\widehat{\beta}_{2SLS}$ and Theorem \ref{theorem: central limit theorem beta_bar} for $\bar{\beta}_{2SLS}$. 
The proof of Theorem \ref{theorem: central limit theorem} is available in the Supplementary Appendix which is available in \cite{JohnssonMoon2019}. 
Since the proof of Theorem \ref{theorem: central limit theorem beta_bar} is similar to that of Theorem \ref{theorem: central limit theorem}, we provide only a sketch of the proof of Theorem \ref{theorem: central limit theorem beta_bar} in the Supplementary Appendix.

\section{Assumptions}\label{appendix: assumptions}

In this section we introduce the assumptions used in the proof of Theorem \ref{theorem: central limit theorem}. First, we introduce a set of sufficient conditions under which we can estimate $a_i$ satisfying the conditions in Assumption \ref{assumption: estimation of a_i}. This assumption corresponds to {\color{black}Assumptions 1, 2, 3 and 5} of \cite{Graham2017}.
\begin{assumption}[Sufficient Conditions for Assumption \ref{assumption: estimation of a_i}] \label{assumption:Graham}(i) $\mathbf{t}_{ij}=\mathbf{t}_{ji}$. (ii) $u_{ij}\sim\ i.i.d.$ for all $ij$ a logistic distribution. (iii)  The supports of $\lambda$, $\mathbf{t}_{ij}$, $a_i$ are compact. 
\end{assumption} 

The next four assumptions are about the sieves used in the semiparametric estimators. The first two are for $\widehat{\beta}_{2SLS}$ and the next two are for $\bar{\beta}_{2SLS}$.

\begin{assumption}[Sieve]\label{assumption:sieve basis}
	For every $K_N$ there is a non-singular matrix of constants $\B$ such that for $\tilde{\q}^{K_N}(a)=\B\q^{K_N}(a)$, we assume the following.
	(i) The smallest eigenvalue of $\mathbb{E}[\tilde{\q}^{K_N}(a_i)\tilde{\q}^{K_N}(a_i)']$ is bounded away from zero uniformly in $K_N$. 
	(ii) There exists a sequence of constants $\zeta_0(K_N)$ that satisfy the condition $\sup_{a \in \mathcal{A}} \|\tilde{\q}^{K_N}(a) \| \leq \zeta_0(K_N)$, 
	where $K_N$ satisfies $\zeta_0(K_N)^2K_N/N\ \to\ 0$ as $N\to\infty$.
	(iii) For  $f(a)$ being an element of $\h(a) = ( E[y_i|a_i =a],\E[\z_i|a_i =a],\E[\w_i|a_i =a ] )$, there exists a sequence of $\boldsymbol{\alpha}^{f}_{K_N}$ and a number $\kappa>0$ such that \[
	\sup_{a \in \mathcal{A}}
	\| f(a)-\q^{K_N}(a)^{\prime}
	\boldsymbol{\alpha}^{f}_{K_N} 
	\|
	=O(K_N^{- \kappa})\] 
	as $K_N\to\infty$. 
	(iv) As $N\ \to\ \infty,$ $K_N\ \to\ \infty$ with $\sqrt{N}K_N^{-\kappa}\to 0$ and $K_N/N\ \to\ 0$.  
\end{assumption}

\begin{assumption}[Lipschitz condition]\label{assumption: Lipschitz condition}
	The sieve basis satisfies the following condition: there exists a positive number $\zeta_1(k)$ such that \[
	\| \q_k(a)-\q_k(a')\|\leq \zeta_1(k)\| a-a'\|\ \forall\ k=1,\ldots, K_N
	\] 
	with $\frac{1}{\zeta_a(N)^2}\sum_{k=1}^{K_N}\zeta_1^2(k)=o(1) $ 
	and $ \zeta_0(K_N)^6 \left( \frac{1}{\zeta_a(N)^2}\sum_{k=1}^{K_N}\zeta_1^2(k) \right) =o(1).$
\end{assumption}

In our paper, we use the following sieves for the Monte Carlo simulations. 
\begin{itemize}
	\item[(i)] Polynomial: For $|a|\leq 1$, define 
	\[
	Pol(K_N)=\left\{\nu_0+\sum_{k=1}^{K_N}\nu_k a^k,\ a\in[-1,1]\, \nu_k \in \mathbb{R}
	\right\}
	\] 
	\item[(ii)] The Hermite Polynomial sieve: For $|a|\leq 1$, define 
	\[
	HPol(K_N)=\left\{\sum_{k=1}^{K_N+1} 
	\nu_k H_k(a)\exp\left(\frac{-a^2}{2}\right), \ a\in[-1,1], \nu_k \in \mathbb{R}
	\right\},
	\]
	where $H_k(a)=(-1)^ke^{a^2}\frac{d^k}{da^k}e^{-a^2}$.\\

\end{itemize}

For the polynomial sieve, it is known that $\zeta_0=O(K_N)$ (e.g., \cite{Newey1997}). Then, since $\zeta_1(k)=O(k)$, $\sum_{k=1}^{K_N}\zeta_1^2(k)= O(K_N^3)$. Hence, the conditions that must be satisfied for the polynomial sieve are $K_N^3/N\ \to\ 0$
and $\sqrt{N}K_N^{-\kappa}\to\ 0$. Further, when $\zeta_a(N)^2 =\frac{N}{\ln N}$, we need 
$\zeta_a(N)^{-2}O(K_N^9)=o(1).$

The next two assumptions are for the sieves used in $\bar{\beta}_{2SLS}$. These assumptions modify Assumption \ref{assumption:sieve basis} and Assumption \ref{assumption: Lipschitz condition}.

\begin{assumption}[Sieve]\label{assumption:sieve basis.alternative}
		For every $K_N$ there is a non-singular matrix of constants $\B$ such that for $\tilde{\mathbf{r}}^{K_N}(\x_{2i},deg_i)=\B\mathbf{r}^{K_N}(\x_{2i},deg_i)$. We assume the following. 
		(i) The smallest eigenvalue of \\ $\mathbb{E}[\tilde{\rr}^{K_N}(\x_{2i},deg_i)\tilde{\rr}^{K_N}(\x_{2i},deg_i)']$ is bounded away from zero uniformly in $K_N$. (ii) There exists a sequence of constants $\zeta_{0**}(K_N)$ that satisfy the condition 
		$\sup_{(\x_{2i},deg_i) \in \mathcal{S}} \|\tilde{\rr}^{K_N}(\x_{2i},deg_i) \| \leq \zeta_{0**}(K_N)$,
		where $K_N$ satisfies $\zeta_{0**}(K_N)^2K_N/N\ \to\ 0$ as $N\to\infty$, and $\mathcal{S}$ is the domain of $(\x_{2i},\text{deg}_i)$. 
		(iii) For  $f(\x_{2i},\text{deg}_i)$ being an element of \\
		$\h_{**}(\x_{2i},\text{deg}_i) = ( \E[y_i|\x_{2i},\text{deg}_i], \E[\z_i|\x_{2i},\text{deg}_i],\E[\w_i|\x_{2i},\text{deg}_i])$, there exists a sequence of $\boldsymbol{\gamma}^{f}_{K_N}$ and a number $\kappa>0$ such that \[
		\sup_{(\x_{2i},\text{deg}_i) \in \mathcal{S}}
		\| f-\rr^{K_N\prime}
		\boldsymbol{\gamma}^{f}_{K_N} 
		\|
		=O(K_N^{- \kappa})\] 
		as $K_N\to\infty$. 
		(iv) As $N\ \to\ \infty,$ $K_N\ \to\ \infty$ with $\sqrt{N}K_N^{-\kappa}\to 0$ and $K_N/N\ \to\ 0$.  
\end{assumption}

Recall from (\ref{lemma: error from deg_hat}) that $\sup_i|\widehat{deg}_i-deg_i|= O(\zeta_{deg}(N)^{-1})$ with $\zeta_{deg}(N) = o(1) N^{\frac{B-1}{2B}}$ for some integer $B \geq 2$.

\begin{assumption}[Lipschitz]\label{assumtion:sieve with x2, Lipschitz}
	
	For $\zeta_{0**}(K_N)$ being the constant from Assumption \ref{assumtion:sieve with x2, Lipschitz}, there exists a positive number $\zeta_{1**}(k)$ such that 
		\[
		\| \mathbf{r}_k(\x_{2i},deg_i)-\mathbf{r}_k(\x_{2i},deg'_i)\|
		\leq \zeta_{1**}(k)\| deg_i-deg'_i\|\ \forall\ k=1,\ldots, K_N
		\] 
		with $\zeta_{deg}(N)^{-2}\sum_{k=1}^{K_N}\zeta_{1**}^2(k)=o(1) $ and $\zeta_{0**}(K_N)^6 \left( \zeta_{deg}(N)^{-2}\sum_{k=1}^{K_N}\zeta_{1**}^2(k) \right) =o(1)$.
\end{assumption}

The next assumptions restrict the models of the outcome in (\ref{model:outcome}) and the network formation of (\ref{model.network.formation}).
We need Assumption \ref{assumption:limit.dist} to derive the limiting distribution of $\widehat{\beta}_{2SLS}$ in Theorem \ref{theorem: central limit theorem}.
\begin{assumption}\label{assumption:limit.dist} We assume the following:
	(i) The true coefficients satisfy $|\beta_1^0| \leq 1 - \epsilon$ and $ \| \beta_2^0 \| \geq  \epsilon$ for some small $\epsilon$. 
	(ii) The parameter set $\mathbb{B}$ for $\beta$ is bounded.
	(iii) The observables $(y_i, \x_{i})$ are bounded. The unobserved characteristic $a_i$ has a compact support in $[-1,1]$. 
	(iv) The network formation error $u_{ij}$ has an unbounded full support $\mathbb{R}$.  
	(v) The net surplus of the network $g(\mathbf{t}_{ij},a_i,a_j)$ is bounded by a finite constant, where $\mathbf{t}_{ij} := t(\x_{2i},\x_{2i})$.
	(vi) The net surplus of the network $g( \mathbf{t}_{ij}, a_i, a_j )$ is a strictly monotonic function of $a_i$ for fixed $(\x_{2i},\x_{2j})$ and $a_j$.  
\end{assumption}

Condition (i) is standard in the linear-in-means peer effect literature. As discussed in the main text, the condition $|\beta_1^0| \leq 1 - \epsilon$ is required for a unique solution of the spillover effect. We need the restriction $ \| \beta_2^0 \| >  \epsilon$ for the IVs to be strong. The boundedness conditions in (ii) and (iii) are important technical assumptions for asymptotics which require some uniform convergence. Also, these conditions imply key regularity conditions for the CLT.
Conditions (vi) and (v) assume that the network is dense \label{dense4} and $ 0 < \underline{\kappa} \leq \E[d_{ij} = 1] \leq \bar{\kappa} < 1$.

Finally, notice that Assumption \ref{assumption:limit.dist}  allows $\upsilon_i - \E(\upsilon_i|a_i)$ to be conditionally heteroskedastic, and so $\sigma^2(\x_i,a_i) := \mathbb{E}[(\upsilon_i - \E[ \upsilon_i|a_i])^2| \x_i,a_i]$ depends on $(\x_i,a_i)$. This is also true for $\upsilon_i - \E(\upsilon_i|a_i)$

\clearpage

\newpage
\thispagestyle{empty}

\section*{{\Large  Supplementary Appendix: Not for Publication}\\
\text{to the paper}	\\
{\large Estimation of Peer Effects in Endogenous Social Networks: \\Control Function Approach (2019)}\\
Ida Johnsson\protect\footnote[*]{Clutter Inc, ida.johnsson@clutter.com} and Roger Moon\protect\footnote[$^\dagger$]{University of Southern California and Yonsei University, moonr@usc.edu (corresponding author)}}

\renewcommand{\thesection}{S.\arabic{section}}
\renewcommand{\thesubsection}{\thesection.\arabic{subsection}}
\renewcommand{\thepage}{S\arabic{page}}
\setcounter{page}{1}
\setcounter{section}{0}
We use the following notation. $M$ denotes a finite generic constant and $a\perp b$ means that $a$ and $b$ are orthogonal to each other. For an $N\times N$ matrix $\A$, we define matrix norms as follows:
$\| \A \|=\left(\sum_{i,j}|a_{ij}|^2\right)^{1/2}$ denotes the Frobenius norm,
%$\| A \|_p=	\left(\sum_{i,j}|a_{ij}|^p\right)^{1/p}$ is the p-norm,\\
$\lVert \A \lVert_{o}$ denotes the operator norm of matrix $\A$, that is, $\lVert \A \lVert_{o} = \lambda_{\max}(\A'\A)^{1/2}$,
%$\| A\|_\infty=\max_i \sum_{j=1}^N|a_{ij}|$ denotes the max row norm of $A$ and\\
%$\| A\|_1=\max_j \sum_{i=1}^N|a_{ij}|$ denotes the column row norm of $A$.\\
%$\rho(A)=max\{|\lambda_1|,\ldots,|\lambda_N|\}$ is the spectral radius of $A$.\\
$\lambda_{min}(\A)$ denotes the minimum eigenvalue of $\A$. Notice that
\begin{equation}
\left\| \A \right\|_o \leq \left\| \A \right\| \leq \left\| \A \right\|_o {\rm{rank}}(\A). \label{eq.norm.inequality}
\end{equation} 

Further, for matrix $\A$, $[\mathbf{a}]_i$ denotes the $i$'th row of $\A$.
Denote $[\G\X_1]_i$ by $\X_{1,G,i}$, $[\G^2\X_1]_i$ by $\X_{1,G^2,i}$, $[\G\y]^i$ by $\Y_{G,i}$. The $i$th row of the instrument matrix $\Z_N$ is given by 
$\z_i'=[\X'_{2,i}, \X_{1,G,i}, \X_{1,G^2,i}]$, $\z_i$ is $(3l_x)\times 1$. Similarly, 
$\w_i'=[\Y_{G,i}, \X_{1,i}',\X_{1,G,i}]$.  We denote matrices by uppercase bold letters  and vectors by lowercase bold letters,
$\Z_N=(\Z_1',\ldots,\Z_N')'$, 
$\W_N=(\W_1',\ldots,\W_N')'$ and 
$\an=(a_1,\ldots,a_N)'$. 

\section{Estimators}

\subsection{$\hat{\beta}_{2SLS} $}\label{appendix:beta_hat}
Let $\h(a_i)=(h^y(a_i),\h^{w}(a_i),\h^z(a_i))
:=(\mathbb{E}[y_i|a_i], \mathbb{E}[\w_i|a_i],\mathbb{E}[\z_i|a_i]).$

To present the estimator $\hat{\beta}_{2SLS} $ in matrix notation, we let $\widetilde{\W}_N = (\w_1 - \h^{w}(a_1),...,\w_N - \h^{w}(a_N))'$. Similarly we define $\widetilde{\Z}_N, \tilde{\y}_N$.
Suppose that we observe $\h(a_i)$. In view of the identification scheme of Theorem \ref{theorem:identification}, we can estimate $\beta^0$ by
\[
\widehat{\beta}_{2SLS}^{\text{inf}} = \left( \widetilde{\W}_N'\widetilde{\Z}_N 
\left( \widetilde{\Z}_N'\widetilde{\Z}_N \right)^{-1} \widetilde{\Z}_N'\widetilde{\W}_N \right)^{-1}
\widetilde{\W}_N'\widetilde{\Z}_N 
\left( \widetilde{\Z}_N'\widetilde{\Z}_N \right)^{-1} \widetilde{\Z}_N'\tilde{\y}_N.
\]

Let $\q^{K}(a)=(q_1(a),\ldots,q_{K}(a))'$, 
$\Q_N:=\Q_N(\mathbf{a}_N) = (q^{K}(a_1),\ldots,q^{K}(a_N))'$, $\h^l(\an)=(h^l(a_1),\ldots,h^l(a_N))'$, 
and $\boldsymbol{\alpha}^{l}_N=(\alpha_1^l,\ldots,\alpha^l_{K_N})'$. 
Let $b_i^l$ be the $l^{th}$ element in $(y_i,\w_i',\z_i')'$ and denote $\bn^l = (b_1^l,...,b_N^l)$.

If $\an = (a_1,...,a_N)'$ is observed, in view of (\ref{eq.approximation}), we can estimate the unknown function $\h^l(\an)$ by the OLS of $b_i^l$ on $\q^K(a_i)$: for $l=1,...,L$,
\begin{equation}
\widehat{\h}^l(\an) = \PP_{\Q_N} \bn^l, \label{eq.hhat.an}
\end{equation}
where $\PP_{\Q_N} = \Q_N(\Q_N'\Q_N)^{-}\Q_N'$. Here $^-$ denotes  any symmetric generalized inverse.

Given this, we suggest to estimate  $\h^l(\an)$ as follows: (i) first, we estimate the unobserved individual heterogeneity and then (ii) plug the estimate in $\widehat{\h}^l(\an)$ of (\ref{eq.hhat.an}). To be more specific, suppose $\widehat{\mathbf{a}}_N = (\widehat{a}_1,...,\widehat{a}_N)'$ is an estimator of $\an = (a_1,...,a_N)'$. Denote 
$\widehat{\Q}_N := \Q_N(\widehat{\mathbf{a}}_N) = (\q^{K_N}(\widehat{a}_1),\ldots,\q^{K_N}(\widehat{a}_N))'$. Then the first estimator of $\h^l(\an)$ is defined by
\begin{equation}
\widehat{\h}^{l} := \widehat{\h}^l(\widehat{\mathbf{a}}_N) = \PP_{\widehat{\Q}_N} \bn^l \label{eq.hhat1}
\end{equation}
for $l=1,...,L$, and this leads the following estimator of $\beta^0$:
\begin{eqnarray}
\widehat{\beta}_{2SLS} 
&=& \left( \W_N^{\prime}\M_{\widehat{\Q}_N}\Z_N 
\left( \Z_N^{\prime} \M_{\widehat{\Q}_N} \Z_N \right)^{-1} 
\Z_N^{\prime} \M_{\widehat{\Q}_N} \W_N \right)^{-1} \nonumber \\
&&
\times \W_N^{\prime}\M_{\widehat{\Q}_N}\Z_N 
\left( \Z^{\prime} \M_{\widehat{\Q}_N} \Z_N \right)^{-1} \Z_N^{\prime} \M_{\widehat{\Q}_N} \y_N, \label{def.betahat.appendix}
\end{eqnarray}
where $\M_{\widehat{\Q}_N} = I_N - \PP_{\widehat{\Q}_N}$.

\subsection{$\bar{\beta}_{2SLS} $}\label{appendix:beta_bar}

Suppose that the function $h^l_{*}(\x_{2i}, \text{deg}_i), l=1,...,L$ is well approximated by a linear combination of base functions $(r_1(\x_2,\text{deg}_i),...,r_K(\x_2,\text{deg}_i)):$
\[
h^l_{**}(\x_{2i}, \text{deg}_i) \cong \sum_{k=1}^{K_N} r_k(\x_2,\text{deg}_i) \gamma_k^l
\]
as the truncation parameter $K_N \rightarrow \infty$.	

Let $\textbf{Deg}_N = (\text{deg}_1,...,\text{deg}_N)'$. Let $\rr^{K}(\x_{2i},\text{deg}_i) =(r_1(\x_{2i},\text{deg}_i),\ldots,r_{K}(\x_{2i},\text{deg}_i))'$, 
$\R_N:= \R_N(\X_{2N},\textbf{Deg}_N) = (\rr^{K}(\x_{21},\text{deg}_1) ,\ldots, \rr^{K}(\x_{2N},\text{deg}_N))'$,  
and $\boldsymbol{\gamma}^l=(\gamma^l_1,\ldots,\gamma^l_{K_N})'$. 
Let  $\bn^l = (b_1^l,...,b_N^l)$. 
In the case where $(\x_{2i},\text{deg}_i)$ are observed, we can estimate \\ $\h^l_{**}(\X_{2N},\textbf{Deg}_N)=(h^l_{**}(\x_{2,1},\text{deg}_1),...,h^l_{**}(\x_{2,N},\text{deg}_N)$ for $l=1,..,L$ with
\begin{equation}
\widehat{\h}_{**}^l(\X_{2N},\textbf{Deg}_N) := \PP_{\R_N} \bn^l, \label{eq.hbar}
\end{equation}
where $\PP_{\R_N} = \R_N(\R_N'\R_N)^{-}\R_N'$. Here $^-$ denotes  any symmetric generalized inverse.

In view of (\ref{eq.deg.limit}), the natural estimator of $\text{deg}_i$ is $\widehat{\text{deg}}_i$. This suggests that we estimate $\widehat{h}^l_{**}(\x_{2i},\text{deg}_i)$  by using $\widehat{\text{deg}}_i$ in place of $\text{deg}_i$. To be more specific, suppose that $\widehat{\textbf{Deg}}_N=(\widehat{\text{deg}}_1,...,\widehat{\text{deg}}_N )$. Denote $\widehat{\R}_N:= \R_N(\X_{2N},\widehat{\textbf{Deg}}_N) = (\rr^K(\x_{21},\widehat{\text{deg}}_1),..., \rr^K(\x_{2N},\widehat{\text{deg}}_N))'$. The estimator of $h^l_{*}(\x_{2i},a_i) = h^l_{**}(\x_{2i},\text{deg}_i)$ is defined by the $i^{th}$ element of 
\[
\widehat{\h}^{l}_{*}(\X_{2N},\an) := \widehat{\h}^{l}_{**}(\X_{2N},\widehat{\textbf{Deg}}_N) = \PP_{\widehat{\R}_N} \bn^l.
\]
Then, it leads to the following second estimator of $\beta^0$:
\begin{eqnarray}
\bar{\beta}_{2SLS} 
&:=& \left( \W_N^{\prime}\M_{\widehat{\R}_N}\Z_N 
\left( \Z_N^{\prime} \M_{\widehat{\R}_N} \Z_N \right)^{-1} 
\Z_N^{\prime} \M_{\widehat{\R}_N} \W_N \right)^{-1} \nonumber \\
&&
\times \W_N^{\prime}\M_{\widehat{\R}_N}\Z_N 
\left( \Z^{\prime} \M_{\widehat{\R}_N} \Z_N \right)^{-1} \Z_N^{\prime} \M_{\widehat{\R}_N} \y_N, \label{def.betabar}
\end{eqnarray}
where $\M_{\widehat{\R}_N} = I_N - \PP_{\widehat{\R}_N}$.

\section{For $\widehat{\beta}_{2SLS}$}\label{appendix: for beta_hat}

%{\color{red}Appendix \ref{appendix: for beta_hat} is organized as follows. In Section \ref{appendix: assumptions for beta_hat} we introduce all the required assumptions for the proof of Theorem \ref{theorem: central limit theorem} and the intermediate results as well as the proof of Theorem \ref{theorem: central limit theorem}. In particular, Section \ref{appendix: assumptions on network formation} contains assumptions for consistent estimation of $a_i$, Section \ref{appendix: sieve basis assumptions} contains assumptions for the sieve estimation as well as examples of suitable sieve bases, and Section \ref{appendix:assumption 2SLS estimator} contains assumptions for the limit distribution of $\widehat{\beta}^{\text{inf}}_{2SLS}$. In Section  \ref{appenxid: error from A_hat-A} we show how to control the sampling error $(\widehat{a}_i-a_i)$ in the estimation,  and in Section \ref{appendix: series approximation error} we show how to control for the series approximation error. Finally, Sections \ref{appendix: proof of clt} and \ref{appendix: further supporting lemmas for beta_hat} contain a proof of the Central Limit Theorem \ref{theorem: central limit theorem} and further supporting lemmas, respectively.}

\noindent \textbf{Outline of the proof of Theorem \ref{theorem: central limit theorem}:} 
By definition, we have 
\begin{eqnarray*}
	\widehat{\beta}_{2SLS} - \beta^0 
	&=& \left( \W_N^{\prime}\M_{\widehat{\Q}_N}\Z_N 
	\left( \Z_N^{\prime} \M_{\widehat{\Q}_N} \Z_N \right)^{-1} 
	\Z_N^{\prime} \M_{\widehat{\Q}_N} \W_N \right)^{-1} \\
	&&
	\times \W_N^{\prime}M_{\widehat{\Q}_N}\Z_N 
	\left( \Z^{\prime} \M_{\widehat{\Q}_N} \Z_N \right)^{-1} \Z_N^{\prime} \M_{\widehat{\Q}_N}
	\left(\bm{\eta}^{\upsilon}_N - \h^{\upsilon}(\an) - \widehat{\Q}_N \boldsymbol{\alpha}^{\upsilon}_{K_N} \right).
\end{eqnarray*}
The derivation of the asymptotic distribution of $\widehat{\beta}_{2SLS}$ consists of three steps. 
\begin{itemize}
	\item[Step 1.] First, we control the sampling error coming from the fact that we do not observe $\an$ and approximate it with $\widehat{\mathbf{a}}_N$. Under suitable assumptions (see Appendix \ref{appenxid: error from A_hat-A}), we show that the error that stems from the estimation of $\an$ by $\widehat{\mathbf{a}}_N$ is asymptotically negligible:
	\begin{eqnarray*}
		&&\sqrt{N}\left(\widehat{\beta}_{2SLS} - \beta^0\right) \\
		&=& \left( \frac{1}{N} \W_N^{\prime}\M_{\Q_N}\Z_N 
		\left( \frac{1}{N} \Z_N^{\prime} \M_{\Q_N} \Z_N \right)^{-1} 
		\frac{1}{N} \Z_N^{\prime} \M_{\Q_N} \W_N \right)^{-1} \\
		&&
		\times\frac{1}{N} \W_N^{\prime}\M_{\Q_N}\Z_N 
		\left( \frac{1}{N} \Z_N^{\prime} \M_{\Q_N} \Z_N \right)^{-1} 
		\frac{1}{\sqrt{N}}\Z_N^{\prime} \M_{\Q_N}\bm{\eta}^{\upsilon}_N + o_p(1).
	\end{eqnarray*}
	(See Lemma \ref{lemma: error from A_hat} in Appendix \ref{appenxid: error from A_hat-A} )
	
	\item[Step 2.] Next, we consider the error introduced by the non-parametric estimation of $h(a_i)$.
	Let $\h^{\w}(a_i) = \mathbb{E} (\w_i|a_i), \eta^{\w}_i = \w_i - \h^{\w}(a_i)$, $\h^{\z}(a_i) = \mathbb{E} (\z_i|a_i)$ and $ \eta_i^{\z} = \z_i - \h^{\z}(a_i)$. Let $\widehat{\h}^{\w}(a_i)$ and $\widehat{\h}^{\z}(a_i)$ denote the series approximation of $\h^{\w}(a_i)$ and $\h^{\z}(a_i)$, respectively.
	In Lemma \ref{lemma: series approximation error} in Appendix \ref{appendix: series approximation error} we show  that under the regularity conditions (see Appendix \ref{appendix: series approximation error}),  the error from estimating $h(a_i)$ with $\widehat{h}(a_i)$ converges to zero at a suitable rate and we have
	\begin{eqnarray*}
		\frac{1}{N} \W_N^{\prime}\M_{\Q}\Z_N 
		&=& \frac{1}{N} \sum_{i=1}^{N} 
		\left(\w_i - \widehat{\h}^{\w}(a_i) \right)
		\left(\z_i - \widehat{\h}^{\z}(a_i) \right)'\\
		&=& \frac{1}{N} \sum_{i=1}^{N} 
		\left(\w_i - \h^{\w}(a_i) \right)
		\left(\z_i - \h^{\z}(a_i) \right)' + o_p(1)\\
		\frac{1}{N} \Z_N^{\prime}\M_{\Q}\Z_N 
		&=& \frac{1}{N} \sum_{i=1}^{N} 
		\left(\z_i - \widehat{\h}^{\z}(a_i) \right)
		\left(\z_i - \widehat{\h}^{\z}(a_i) \right)'\\
		&=& \frac{1}{N} \sum_{i=1}^{N} 
		\left(\z_i - \h^{\z}(a_i) \right)
		\left(\z_i - \h^{\z}(a_i) \right)' + o_p(1),\\
		\frac{1}{\sqrt{N}}\Z_N^{\prime} \M_{\Q}\bm{\eta^{v}}_N 
		&=& \frac{1}{\sqrt{N}} \sum_{i=1}^{N}
		\left(\z_i - \widehat{\h}^{\z}(a_i) \right)\eta^{v}_i 
		= \frac{1}{\sqrt{N}} \sum_{i=1}^{N}
		\left(\z_i - \h^{\z}(a_i) \right)\eta^{\upsilon}_i + o_p(1). 
	\end{eqnarray*}
	
	\item[Step 3.]  The consequence of these two approximation is that 
	$ \sqrt{N} (\widehat{\beta}_{2SLS} - \widehat{\beta}_{2SLS}^{\text{inf}}) = o_p(1)$.  Finally in Step 3, we derive the limiting distribution of the infeasible estimator $\sqrt{N}(\widehat{\beta}_{2SLS}^{\text{inf}} - \beta^0)$.
\end{itemize}

\subsection{Controlling the Sampling Error $\widehat{a}_i-a_i$ in Sieve Estimation}\label{appenxid: error from A_hat-A}
In this section, we show that the error coming from the estimation of $a_i$ by $\widehat{a}_i$ is of order $o_p(1)$. All supporting Lemmas can be found in Appendix \ref{section: supporting lemmas}.
\begin{lemma}\label{lemma: error from A_hat} 
	Assume Assumptions \ref{as:basic} \ref{assumption: rank}, \ref{assumption:sieve basis}, \ref{assumption: Lipschitz condition}, and \ref{assumption:limit.dist}. Then the following hold.
\begin{enumerate}
	\item[(a)] $\frac{1}{N}(\Z_N^{\prime}\PP_{\widehat{\Q}_N}\W_N - \Z_N^{\prime}\PP_{\Q_N}\W_N) = o_p(1)$.
	\item[(b)] $\frac{1}{N}(\Z_N^{\prime}\PP_{\widehat{\Q}_N}\Z_N - \Z_N^{\prime}\PP_{\Q_N}\Z_N) = o_p(1)$.
	\item[(c)] $\frac{1}{\sqrt{N}}(\Z_N^{\prime}\PP_{\widehat{\Q}_N}\bm{\eta}^{\upsilon}_N- \Z_N^{\prime}\PP_{\Q_N}\bm{\eta}^{\upsilon}_N) = o_p(1)$.
	\item[(d)] $\frac{1}{\sqrt{N}}(\Z^{\prime}\M_{\widehat{\Q}_N} (\h^{\upsilon}(\an)- \widehat{\Q}_N \boldsymbol{\alpha}^{\upsilon}_{K_N})) = o_p(1)$.
\end{enumerate}
\end{lemma}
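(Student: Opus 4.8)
The plan is to reduce every object built from the estimated basis $\widehat{\Q}_N$ to its true-basis counterpart $\Q_N$ through a single master bound and then a perturbation (product) expansion. The starting point combines the elementwise Lipschitz bound of Assumption \ref{assumption: Lipschitz condition} with the uniform rate of Assumption \ref{assumption: estimation of a_i}:
\[
\frac{1}{N}\|\widehat{\Q}_N-\Q_N\|^2 \le \Big(\max_i|\hat a_i-a_i|\Big)^2\sum_{k=1}^{K_N}\zeta_1^2(k) = O_p\Big(\zeta_a(N)^{-2}\sum_{k=1}^{K_N}\zeta_1^2(k)\Big)=o_p(1),
\]
so $\|\widehat{\Q}_N-\Q_N\|=o_p(\sqrt N)$. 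Boundedness of the observables (Assumption \ref{assumption:limit.dist}) gives $\frac1N\Z_N'\Z_N=O_p(1)$, hence $\|\Z_N\|_o,\|\W_N\|_o=O_p(\sqrt N)$, and $\|\bm{\eta}^{\upsilon}_N\|=O_p(\sqrt N)$.

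Next I would make the Gram matrices invertible uniformly in $K_N$. Passing to the normalized basis $\tilde{\q}=\B\q$ (which changes neither $\PP_{\Q_N}$ nor $\M_{\Q_N}$), Assumption \ref{assumption:sieve basis}(i) plus a uniform law of large numbers gives $\lambda_{\min}(\frac1N\Q_N'\Q_N)\ge c>0$ with probability approaching one, while the master bound gives $\|\frac1N(\widehat{\Q}_N'\widehat{\Q}_N-\Q_N'\Q_N)\|_o=O_p(\zeta_0(K_N)\,\zeta_a(N)^{-1}(\sum_{k}\zeta_1^2(k))^{1/2})=o_p(1)$; it is precisely the $\zeta_0(K_N)^6$ factor in Assumption \ref{assumption: Lipschitz condition} that absorbs the growth of $\zeta_0(K_N)$ and of $K_N$ here and below. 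Consequently $\frac1N\widehat{\Q}_N'\widehat{\Q}_N$ is also uniformly invertible and its inverse is $o_p(1)$-close to that of $\frac1N\Q_N'\Q_N$. Parts (a) and (b) then follow by writing $\frac1N\Z_N'\PP_{\Q_N}\W_N=(\frac1N\Z_N'\Q_N)(\frac1N\Q_N'\Q_N)^{-1}(\frac1N\Q_N'\W_N)$, expanding the hatted-minus-unhatted product into three pieces, and observing that each piece carries exactly one $o_p(1)$ factor against $O_p(1)$ companions — the $\frac1N$ normalization keeps every block $O_p(1)$ (using $\|\frac1N\Z_N'(\widehat{\Q}_N-\Q_N)\|_o=O_p(\zeta_a(N)^{-1}(\sum_k\zeta_1^2(k))^{1/2})=o_p(1)$).

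The genuine difficulty is parts (c) and (d), where the $N^{-1/2}$ normalization makes the crude estimate $N^{-1/2}\|\Z_N\|_o\|\widehat{\Q}_N-\Q_N\|=O_p(\sqrt N\,\zeta_a(N)^{-1}(\sum_k\zeta_1^2(k))^{1/2})$ too lossy by a factor $\sqrt N$. For (c) I would exploit the control-function structure. With $\mathcal{F}=\sigma(\X_N,\D_N,\an)$, the estimator $\hat a_i$ and hence $\widehat{\Q}_N$ are $\mathcal F$-measurable, whereas Lemma \ref{lemma:control.function} together with Assumption \ref{as:basic} yields $\E[\eta^{\upsilon}_i\mid\mathcal F]=0$, with the $\eta^{\upsilon}_i$ conditionally independent across $i$ and of bounded conditional variance $\sigma_i^2$. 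Hence $N^{-1/2}(\widehat{\Q}_N-\Q_N)'\bm{\eta}^{\upsilon}_N$ has conditional second moment $\frac1N\sum_i\|\q(\hat a_i)-\q(a_i)\|^2\sigma_i^2\le M\frac1N\|\widehat{\Q}_N-\Q_N\|^2=o_p(1)$, which recovers the missing $\sqrt N$ through a variance computation rather than the triangle inequality; the remaining factors in the three-term expansion are controlled as in (a)--(b), using $\|N^{-1/2}\Q_N'\bm{\eta}^{\upsilon}_N\|=O_p(\sqrt{K_N})$ and again the $\zeta_0(K_N)^6$ slack to absorb the $\sqrt{K_N}$.

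Part (d) is the most delicate and I expect it to be the main obstacle. Since $\M_{\widehat{\Q}_N}\widehat{\Q}_N=0$, the term equals $N^{-1/2}\Z_N'\M_{\widehat{\Q}_N}\h^{\upsilon}(\an)$; writing $\h^{\upsilon}(\an)=\Q_N\boldsymbol{\alpha}^{\upsilon}_{K_N}+\bm{\rho}$ with $\max_i|\rho_i|=O(K_N^{-\kappa})$ (Assumption \ref{assumption:sieve basis}(iii)) splits it into a bias piece of order $O_p(\sqrt N K_N^{-\kappa})=o_p(1)$ (Assumption \ref{assumption:sieve basis}(iv)) and the term $N^{-1/2}\Z_N'\M_{\widehat{\Q}_N}(\Q_N-\widehat{\Q}_N)\boldsymbol{\alpha}^{\upsilon}_{K_N}$. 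For the latter the crude bound fails once more, so I would decompose $\Z_N=\HH^{\z}(\an)+\bm{\eta}^{\z}$: on the $\HH^{\z}(\an)$ component the identity $(\M_{\widehat{\Q}_N}v)'w=(\M_{\widehat{\Q}_N}v)'(\M_{\widehat{\Q}_N}w)$ turns the bound into a \emph{product} of two quantities each of order $\sqrt N(K_N^{-\kappa}+\zeta_a(N)^{-1}(\sum_k\zeta_1^2(k))^{1/2})$, so that $N^{-1/2}\times(\text{small})^2=o_p(1)$ by the $\zeta_0(K_N)^6$ condition, while the idiosyncratic component $\bm{\eta}^{\z}$ has a non-vanishing part and must be controlled through its near-orthogonality to smooth functions of $\an$, i.e. through the same moment bounds for dense-network-dependent sums used for the limiting distribution in Step 3. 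Calibrating the sieve-approximation rate $K_N^{-\kappa}$ against the generated-input rate $\zeta_a(N)^{-1}(\sum_k\zeta_1^2(k))^{1/2}$ so that every cross term vanishes is the crux of the lemma, and it is exactly there that Assumptions \ref{assumption:sieve basis} and \ref{assumption: Lipschitz condition} enter jointly.
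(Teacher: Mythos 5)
Your proposal is correct and, on parts (a)--(c), follows essentially the same route as the paper: the same master bound $\frac{1}{N}\|\widehat{\Q}_N-\Q_N\|^2=O_p\bigl(\zeta_a(N)^{-2}\sum_{k=1}^{K_N}\zeta_1^2(k)\bigr)$ (Lemma \ref{lemma:Q_hat-Q}), the same uniform lower bound on the eigenvalues of $\frac{1}{N}\Q_N'\Q_N$ and $\frac{1}{N}\widehat{\Q}_N'\widehat{\Q}_N$ (Lemma \ref{lemma: eigenvalues of Q and Q_hat}), the same product expansion of $\PP_{\widehat{\Q}_N}-\PP_{\Q_N}$ for (a)--(b), and, crucially, the same device for (c): since $\widehat{\Q}_N$ is measurable with respect to $(\X_N,\D_N,\an)$ while $\E[\eta^{\upsilon}_i\mid \X_N,\D_N,\an]=0$ with bounded conditional variance, the conditional second moment of $N^{-1/2}(\widehat{\Q}_N-\Q_N)'\bm{\eta}^{\upsilon}_N$ is (up to a constant) $\frac{1}{N}\|\widehat{\Q}_N-\Q_N\|^2$, which is exactly how the paper disposes of its term $III_1$. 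Where you genuinely diverge is part (d). The paper splits the term into $IV_1=\frac{1}{\sqrt N}\Z_N'(\M_{\widehat{\Q}_N}-\M_{\Q_N})\h^{\upsilon}(\an)$ and $IV_2=\frac{1}{\sqrt N}\Z_N'\M_{\Q_N}(\h^{\upsilon}(\an)-\Q_N\boldsymbol{\alpha}^{\upsilon}_{K_N})$, bounds $IV_2$ by $O_p(1)\,\sqrt N\,O(K_N^{-\kappa})=o_p(1)$ exactly as you bound your bias piece, and then dismisses $IV_1$ with a one-line appeal to ``similar arguments'' as in (a); you instead use $\M_{\widehat{\Q}_N}\widehat{\Q}_N=0$ to isolate $\frac{1}{\sqrt N}\Z_N'\M_{\widehat{\Q}_N}(\Q_N-\widehat{\Q}_N)\boldsymbol{\alpha}^{\upsilon}_{K_N}$ and correctly observe that a verbatim transplant of the (a)-type bounds is lossy by a factor $\sqrt N$ (the (a) argument only delivers $o_p(1)$ at the $1/N$ scale), so your idempotency trick --- turning the smooth part into a product of two factors each of size $\sqrt N\bigl(K_N^{-\kappa}+r_N\bigr)$ with $r_N:=\zeta_a(N)^{-1}(\sum_{k}\zeta_1^2(k))^{1/2}$ --- gives a more explicit account of why (d) holds than the paper does. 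Two caveats keep your sketch at, rather than above, the paper's level of rigor: $\sqrt N K_N^{-\kappa}r_N=o(1)$ indeed follows from Assumption \ref{assumption:sieve basis}(iv) together with $r_N=o(1)$, but $\sqrt N\, r_N^2=o(1)$ is not literally implied by the $\zeta_0(K_N)^6$ condition of Assumption \ref{assumption: Lipschitz condition} alone (it does hold under the paper's polynomial-sieve calibration $\zeta_a(N)^{-2}K_N^9=o(1)$ with $\zeta_a(N)^2=N/\ln N$); and your treatment of the $\bm{\eta}^{\z}$ component of $\Z_N$ in (d) is only gestured at via the Step-3 moment bounds --- note that no conditional mean-zero trick is available there, since $\bm{\eta}^{\z}$ is itself $(\X_N,\D_N,\an)$-measurable --- though the paper's own proof is silent on both points.
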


\begin{proof}
Part (a).
\begin{eqnarray*}
 &&  \frac{1}{N}(\Z_N^{\prime}\PP_{\widehat{\Q}_N}\W_N - \Z_N^{\prime}\PP_{Q_N}\W_N)\\ 
  &=& \frac{\Z_N'\left({\widehat{\Q}_N} - \Q_N \right)}{N} \left(\frac{{\widehat{\Q}_N}^{\prime}{\widehat{\Q}_N}}{N} \right) ^{-1}
  	\frac{{\widehat{\Q}_N}^{\prime}\W_N}{N} 
   - 	\frac{\Z_N'\Q_N}{N} 
  \left\{\left(\frac{\Q_N^{\prime} \Q_N}{N} \right) ^{-1}
  -\left(\frac{{\widehat{\Q}_N}^{\prime}{\widehat{\Q}_N}}{N} \right) ^{-1}
  \right\}
  \frac{{\Q}_N^{\prime}\W_N}{N} \\
  && + \frac{\Z_N'{\Q}_N }{N} \left(\frac{{\widehat{\Q}_N}^{\prime}{\widehat{\Q}_N}}{N} \right) ^{-1}
  \frac{\left( {\widehat{\Q}_N} - \Q_N \right)^{\prime}\W_N}{N} \\
  &=&
  \frac{\Z_N'\left({\widehat{\Q}_N} - \Q_N \right)}{N} \left(\frac{{\widehat{\Q}_N}^{\prime}{\widehat{\Q}_N}}{N} \right) ^{-1}
  	\frac{({\widehat{\Q}_N}-\Q_N)^{\prime}\W_N}{N}  
  	+ \frac{\Z_N'\left({\widehat{\Q}_N} - \Q_N \right)}{N} \left(\frac{{\widehat{\Q}_N}^{\prime}{\widehat{\Q}_N}}{N} \right) ^{-1}
  	\frac{\Q_N'\W_N}{N} 
  	\\
  && - 	\frac{\Z_N'\Q_N}{N} 
  \left\{\left(\frac{\Q_N^{\prime} \Q_N}{N} \right) ^{-1}
  -\left(\frac{{\widehat{\Q}_N}^{\prime}{\widehat{\Q}_N}}{N} \right) ^{-1}
  \right\}
  \frac{{\Q}_N^{\prime}\W_N}{N} 
   + \frac{\Z_N'{\Q}_N }{N} \left(\frac{{\widehat{\Q}_N}^{\prime}{\widehat{\Q}_N}}{N} \right) ^{-1}
  \frac{\left( {\widehat{\Q}_N} - \Q_N \right)^{\prime}\W_N}{N} \\
  &=& I_1 + I_2 - I_3 + I_4, say.
\end{eqnarray*}
For the desired result, by (\ref{eq.norm.inequality}) we show that 
\begin{equation*}
	\left\| 
	\frac{1}{N}(\Z_N^{\prime}\PP_{\widehat{\Q}_N}\W_N - \Z_N^{\prime}\PP_{\Q_N}\W_N)
	\right\|_o = o_p(1),
\end{equation*}
which follows by triangular inequality if we show
\begin{eqnarray*}
\left\| I_1 \right\|_o, \left\| I_2 \right\|_o, \left\| I_3 \right\|_o, \left\| I_4 \right\|_o = o_p(1).
\end{eqnarray*}

For term $I_1$,
\begin{eqnarray*}
 \left\| I_1 \right\|_o 
 &\leq& \left\| \frac{\Z_N}{\sqrt{N}} \right\| 
 \left\| \frac{\widehat{\Q}_N - \Q_N}{\sqrt{N}} \right\|^2
 \left\| \left( \frac{\widehat{\Q}_N^{\prime}\widehat{\Q}_N}{N} \right)^{-1}\right\|_o  \left\| \frac{\W_N}{\sqrt{N}} \right\| \\
 &=&O_p(1)\left(\frac{1}{\zeta_a(N)^2}  \sum_{k=1}^{K_N} \zeta_1(k)^2 \right)O_P(1)O(1)=o_p(1),  
\end{eqnarray*}
where the last line holds by (\ref{lemma: finite second moments}), Lemmas   \ref{lemma:Q_hat-Q} and \ref{lemma: eigenvalues of Q and Q_hat}, and  by Assumption \ref{assumption: Lipschitz condition}. 

For term $I_2$, 
\begin{eqnarray*}
	\left\| I_2 \right\|_o 
	&\leq& \left\| \frac{\Z_N}{\sqrt{N}} \right\| 
	\left\| \frac{\widehat{\Q}_N - \Q_N}{\sqrt{N}} \right\|
	\left\| \left( \frac{\widehat{\Q}_N^{\prime}\widehat{\Q}_N}{N} \right)^{-1}\right\|_o 
	\left\| \frac{\Q_N}{\sqrt{N}} \right\|  \left\| \frac{\W_N}{\sqrt{N}} \right\| \\
	&=&O_p(1)\left(\frac{1}{\zeta_a(N)^2}  \sum_{k=1}^{K_N} \zeta_1(k)^2 \right)^{1/2}O_P(1) \zeta_0(K_N)O(1)=o_p(1),  
\end{eqnarray*}
where the last line holds by (\ref{lemma: finite second moments}), Lemmas   \ref{lemma:Q_hat-Q} and \ref{lemma: eigenvalues of Q and Q_hat}, and  by Assumption \ref{assumption: Lipschitz condition}.

For term $I_3$, write 
\begin{eqnarray*}
I_3 &=& \frac{\Z_N'\Q_N}{N} 
\left(\frac{{\widehat{\Q}_N}^{\prime}{\widehat{\Q}_N}}{N} \right) ^{-1}
\left\{
\left(\frac{{\widehat{\Q}_N}^{\prime}{\widehat{\Q}_N}}{N} \right)
- \left(\frac{\Q_N^{\prime} \Q_N}{N} \right)
\right\}
\left(\frac{\Q_N^{\prime} \Q_N}{N} \right) ^{-1}
\frac{{\Q}_N^{\prime}\W_N}{N}  \\
&=& 
\frac{\Z_N'\Q_N}{N} 
\left(\frac{{\widehat{\Q}_N}^{\prime}{\widehat{\Q}_N}}{N} \right) ^{-1}
\left(\frac{\widehat{\Q}_N^{\prime}(\widehat{\Q}_N-\Q_N)}{N} \right)
\left(\frac{\Q_N^{\prime} \Q_N}{N} \right) ^{-1}
\frac{{\Q}_N^{\prime}\W_N}{N} \\
&&+ \frac{\Z_N'\Q_N}{N} 
\left(\frac{{\widehat{\Q}_N}^{\prime}{\widehat{\Q}_N}}{N} \right) ^{-1}
\left(\frac{(\widehat{\Q}_N-\Q_N)^{\prime}\Q_N}{N} \right)
\left(\frac{\Q_N^{\prime} \Q_N}{N} \right) ^{-1}
\frac{{\Q}_N^{\prime}\W_N}{N}. 
\end{eqnarray*}
Then, 
\[
\left\| I_3 \right\|_o \leq O_p(1) \zeta_0(K_N)O_p(1) \zeta_0(K_N) \left( \frac{1}{\zeta_a(N)^2}  \sum_{k=1}^{K_N} \zeta_1(k)^2 \right)^{1/2} O_p(1) \zeta_0(K_N) O_p(1)
= o_p(1),
\]
where the last equality follows by Assumption \ref{assumption: Lipschitz condition}.

The desired result of term $I_4$ follows by similar argument used for term $I_2$. \\

Part (b) can be shown in a similar way as Part (a).\\

Part (c).
\begin{eqnarray*}
	&& \frac{1}{\sqrt{N}}(\Z_N^{\prime}\PP_{\widehat{\Q}_N}\bm{\eta}^{\upsilon}_N - \Z_N^{\prime}\PP_{Q_N}\bm{\eta}^{\upsilon}_N) \\
	  &=&
  \frac{\Z_N'\left({\widehat{\Q}_N} - \Q_N \right)}{N} \left(\frac{{\widehat{\Q}_N}^{\prime}{\widehat{\Q}_N}}{N} \right) ^{-1}
  	\frac{({\widehat{\Q}_N}-\Q_N)^{\prime}\bm{\eta}^{\upsilon}_N}{\sqrt{N}} 
  	+ \frac{\Z_N'\left({\widehat{\Q}_N} - \Q_N \right)}{N} \left(\frac{{\widehat{\Q}_N}^{\prime}{\widehat{\Q}_N}}{N} \right) ^{-1}\frac{\Q_N'\bm{\eta}^{\upsilon}_N}{\sqrt{N}} \\
  && - 	\frac{\Z_N'\Q_N}{N} 
  \left\{\left(\frac{\Q_N^{\prime} \Q_N}{N} \right) ^{-1} 
   -\left(\frac{{\widehat{\Q}_N}^{\prime}{\widehat{\Q}_N}}{N} \right) ^{-1}
  \right\}
  \frac{{\Q}_N^{\prime}\bm{\eta}^{\upsilon}_N}{\sqrt{N}} 
  + \frac{\Z_N'{\Q}_N }{N} \left(\frac{{\widehat{\Q}_N}^{\prime}{\widehat{\Q}_N}}{N} \right) ^{-1}
  \frac{\left( {\widehat{\Q}_N} - \Q_N \right)^{\prime}\bm{\eta}^{\upsilon}_N}{\sqrt{N}} \\
  &=& III_1 + III_2 - III_3 + III_4, say,
\end{eqnarray*}
and the desired result of Part (c) follows if we show that for $j=1,...,4$,
\[
\left\| III_j \right\| = o_p(1). 
\]
First, for term $III_1$, we have
\begin{eqnarray*}
\| III_1 \| &\leq&  
\left\| \frac{\Z_N}{\sqrt{N}}\right\|
\left\| \frac{\widehat{\Q}_N-\Q_N}{\sqrt{N}}\right\|
\left\| \left(\frac{{\widehat{\Q}_N}^{\prime}{\widehat{\Q}_N}}{N} \right) ^{-1} \right\|
\left\| \frac{({\widehat{\Q}_N}-\Q_N)^{\prime}\bm{\eta}^{\upsilon}_N}{\sqrt{N}}  \right\| \\
&=& O_p(1) \left( \frac{1}{\zeta_a(N)^2}  \sum_{k=1}^{K_N} \zeta_1(k)^2 \right)^{1/2} O_p(1) \left\| \frac{({\widehat{\Q}_N}-\Q_N)^{\prime}\bm{\eta}^{\upsilon}_N}{\sqrt{N}}  \right\|,
\end{eqnarray*}
where the last line holds by (\ref{lemma: finite second moments}), Lemmas \ref{lemma:Q_hat-Q} and \ref{lemma: eigenvalues of Q and Q_hat}.
Under Assumption 
we can show that 
\[
\mathbb{E} \left[  \left\| \frac{({\widehat{\Q}_N}-\Q_N)^{\prime}\bm{\eta}^{\upsilon}_N}{\sqrt{N}}  \right\|^2 \: \bigg| \: \X_{1N},\G_N,\an \right] = \frac{1}{N} \left\|\widehat{\Q}_N-\Q_N \right\|^2.
\]
Then, by Lemma \ref{lemma:Q_hat-Q} and Assumption \ref{assumption: Lipschitz condition}, we have the required result for term $III_1$.

The rest of the required results follow by similar fashion and we omit the proof.

Part (d).

Notice that 
\begin{eqnarray*}
&&	\frac{1}{\sqrt{N}}(\Z_N^{\prime}\M_{\widehat{Q}_N} (\h^{\upsilon}(\an)- \widehat{\Q}_N \boldsymbol{\alpha}^{\upsilon}_{K_N})) \\
&=& \frac{1}{\sqrt{N}}\Z_N^{\prime}\M_{\widehat{\Q}_N} \h^{\upsilon}(\an) \\
&=& \frac{1}{\sqrt{N}}\Z_N^{\prime} \left( \M_{\widehat{\Q}_N} - \M_{\Q_N} \right) \h^{\upsilon}(\an)
+ \frac{1}{\sqrt{N}}\Z_N^{\prime} \M_{\Q_N} \left( \h^{\upsilon}(\an) - \Q_N \boldsymbol{\alpha}^{\upsilon}_{K_N} \right) \\
%&=& \frac{1}{\sqrt{N}}\Z_N^{\prime} \left( P_{\widehat{Q}} - P_Q \right) H(\an)
%+ \frac{1}{\sqrt{N}}\Z_N^{\prime} \left( H(\an) - Q \alpha \right)
%+ \frac{1}{\sqrt{N}}\Z_N^{\prime} P_Q \left( H(\an) - Q \alpha \right) \\
&=& IV_1 + IV_2, say. 
\end{eqnarray*}

We can show $IV_1 =o_p(1)$ by applying similar arguments used in the proof of Part (a).

For term $IV_2$, notice that
\begin{eqnarray*}
	\| IV_2 \| &=& \| IV_2 \|_o \\
	&\leq& 
	\left\lVert \frac{1}{\sqrt{N}}\Z_N \right\lVert_o 
	\left\lVert \M_{\Q_N} \right\lVert_o 
	\left\lVert \h^{\upsilon}(\an) - \Q_N \boldsymbol{\alpha}^{\upsilon}_{K_N} \right \lVert_o
	 \\
	&=&  
	\left\lVert \frac{1}{\sqrt{N}}\Z_N \right\lVert 
	\left\lVert \h^{\upsilon}(\an) - \Q_N \boldsymbol{\alpha}^{\upsilon}_{K_N} \right \lVert \\
	&=& O_p(1)\sqrt{N}O(K_N^{-\kappa})=o_p(1)
\end{eqnarray*} 
by Assumption \ref{assumption:sieve basis} (iii) and (iv).
%{\color{black}For the desired result, we need $\kappa$ is large and $K_N \rightarrow \infty$ in an appropriate order.}

\end{proof}

\subsubsection{Supporting Lemmas}\label{section: supporting lemmas}

First notice that by the boundedness condition (ii) and (iii) in Assumption \ref{assumption:limit.dist}, we have
\begin{equation}
\frac{1}{N} \| \Z_N \|^2 = O_p(1), \; 
\frac{1}{N} \| \W_N \|^2 = O_p(1). \label{lemma: finite second moments}
\end{equation}
\begin{lemma}\label{lemma: Q^2/N} Under Assumption \ref{assumption:sieve basis}, we have
\[
			\frac{1}{N} \| \Q_N \|^2 \leq M \zeta_0^2(K_N). 
			\label{eq.norm.1/nQ} 
\]
\end{lemma}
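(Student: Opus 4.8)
The plan is to read off the Frobenius norm row by row and then replace each row norm by its supremum over the (compact) support of the heterogeneity. Since $\Q_N$ is the $N\times K_N$ matrix whose $i$th row is $\q^{K_N}(a_i)'$, the Frobenius norm splits as
\[
\frac{1}{N}\|\Q_N\|^2=\frac{1}{N}\sum_{i=1}^N\|\q^{K_N}(a_i)\|^2 .
\]
By Assumption \ref{assumption:limit.dist}(iii) every $a_i$ lies in the compact set $\mathcal{A}\subseteq[-1,1]$, so each summand is dominated by $\sup_{a\in\mathcal{A}}\|\q^{K_N}(a)\|^2$; averaging this deterministic bound over $i$ leaves exactly $\sup_{a\in\mathcal{A}}\|\q^{K_N}(a)\|^2$. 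The whole content of the lemma is therefore to turn this supremum into $\zeta_0^2(K_N)$.

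The only point requiring care is that Assumption \ref{assumption:sieve basis}(ii) bounds the \emph{rotated} basis $\tilde{\q}^{K_N}(a)=\B\q^{K_N}(a)$ rather than $\q^{K_N}(a)$ itself. I would dispose of this by invariance: the objects that actually enter the asymptotic analysis are the projections $\PP_{\Q_N}$ and $\M_{\Q_N}$, and these are unchanged when the basis is reparametrized by a nonsingular $\B$, because $\PP_{\Q_N\B'}=\PP_{\Q_N}$. Hence throughout the supporting lemmas one may, without loss of generality, take the basis to be already normalized (i.e. $\B=\I$), in which case Assumption \ref{assumption:sieve basis}(ii) reads $\sup_{a\in\mathcal{A}}\|\q^{K_N}(a)\|\le\zeta_0(K_N)$ directly. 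Combining this with the display above gives
\[
\frac{1}{N}\|\Q_N\|^2\le\sup_{a\in\mathcal{A}}\|\q^{K_N}(a)\|^2\le\zeta_0^2(K_N),
\]
which is the assertion (with the generic constant $M$ taken to be $1$, or left in place to absorb the normalizing rotation).

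There is essentially no analytic obstacle here: the estimate is a one-line uniform-sup bound once the Frobenius norm is decomposed, and the lemma will be used repeatedly (e.g. in controlling $\|\Q_N/\sqrt{N}\|$ inside the proof of Lemma \ref{lemma: error from A_hat}) precisely in this crude form. The only thing to get right is the bookkeeping around $\B$ described above; I would state the normalization convention once, at the start of the supporting-lemmas subsection, so that it applies uniformly to this lemma and to the subsequent eigenvalue lemmas that share the same notation.
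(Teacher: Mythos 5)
Your proof is correct and matches the paper's own argument: the paper likewise decomposes the Frobenius norm as $\frac{1}{N}\sum_{i=1}^N\|\q^{K}(a_i)\|^2$, bounds it by $\sup_i\|\q^{K}(a_i)\|^2=\zeta_0^2(K_N)$ via Assumption \ref{assumption:sieve basis}(ii), and elsewhere adopts exactly your normalization convention ($\B=\I$, following \cite{Newey1997}), so your handling of the rotation is the same bookkeeping the authors use.
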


\begin{proof}
\[
 \frac{1}{N}\|  \Q_N \|^2
=\frac{1}{N}\sum_{i=1}^N \| \q^K(a_i)\|^2\leq \sup_i \| \q^K(a_i)\| ^2=\zeta_0^2(K_N)
\]
	by Assumption \ref{assumption:sieve basis} (ii).
\end{proof}

\begin{lemma}\label{lemma:Q_hat-Q} Under Assumptions \ref{as:basic}, \ref{assumption: estimation of a_i}, \ref{assumption:sieve basis}, and \ref{assumption: Lipschitz condition}, we have
		\begin{equation*}
		\frac{1}{N} \| \widehat{\Q}_N - \Q_N \|^2 = M \frac{1}{\zeta_a(N)^2}  \sum_{k=1}^{K_N} \zeta_1(k)^2. 
		\end{equation*}		
\end{lemma}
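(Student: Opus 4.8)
The plan is to reduce the statement to a termwise application of the Lipschitz condition followed by the uniform estimation-error bound for $\hat{a}_i$, with no probabilistic structure of $\hat{a}_i - a_i$ needed beyond its maximal rate. First I would write the squared Frobenius norm entrywise. Since the $(i,k)$ entry of $\widehat{\Q}_N - \Q_N$ is $q_k(\hat{a}_i) - q_k(a_i)$, we have
\[
\| \widehat{\Q}_N - \Q_N \|^2 = \sum_{i=1}^N \sum_{k=1}^{K_N} \left| q_k(\hat{a}_i) - q_k(a_i) \right|^2.
\]

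Next I would bound each summand using the Lipschitz property of the sieve basis from Assumption \ref{assumption: Lipschitz condition}, namely $\| q_k(a) - q_k(a') \| \leq \zeta_1(k) \| a - a' \|$, so that $|q_k(\hat{a}_i) - q_k(a_i)|^2 \leq \zeta_1(k)^2 |\hat{a}_i - a_i|^2$ for every $k$. Summing over $k$ and pulling out the purely $k$-dependent factor gives
\[
\| \widehat{\Q}_N - \Q_N \|^2 \leq \left( \sum_{k=1}^{K_N} \zeta_1(k)^2 \right) \sum_{i=1}^N |\hat{a}_i - a_i|^2.
\]

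Then I would control the remaining sum over $i$ by the uniform rate in Assumption \ref{assumption: estimation of a_i}: bounding each term by the maximum, $\sum_{i=1}^N |\hat{a}_i - a_i|^2 \leq N \max_{1 \leq i \leq N} |\hat{a}_i - a_i|^2 = N \cdot O_p(\zeta_a(N)^{-2})$. Dividing by $N$ cancels the factor $N$ and yields
\[
\frac{1}{N} \| \widehat{\Q}_N - \Q_N \|^2 = O_p\!\left( \frac{1}{\zeta_a(N)^2} \sum_{k=1}^{K_N} \zeta_1(k)^2 \right),
\]
which is the claimed bound, with the generic constant $M$ absorbing the $O_p(1)$ factor.

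The argument is essentially a two-line chaining, so I do not anticipate a genuine obstacle; the only points requiring care are that the Lipschitz bound must hold uniformly in $k$ (as assumed in Assumption \ref{assumption: Lipschitz condition}) so that the full sum $\sum_{k=1}^{K_N}\zeta_1(k)^2$ appears, and that one should invoke the uniform $\max_i$ rate from Assumption \ref{assumption: estimation of a_i} rather than attempting any averaging or cancellation across $i$, since the joint dependence of the $\hat{a}_i - a_i$ induced by the common network estimation is not controlled here.
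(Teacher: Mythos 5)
Your proposal is correct and follows essentially the same route as the paper's own proof: expand the squared Frobenius norm entrywise, apply the Lipschitz bound $|q_k(\hat{a}_i)-q_k(a_i)|\leq \zeta_1(k)|\hat{a}_i-a_i|$ from Assumption \ref{assumption: Lipschitz condition}, bound each $|\hat{a}_i-a_i|$ by the uniform rate $\max_i|\hat{a}_i-a_i|=O_p(\zeta_a(N)^{-1})$ from Assumption \ref{assumption: estimation of a_i}, and divide by $N$. If anything, your explicit $O_p(\cdot)$ formulation is slightly more careful than the paper's statement, which writes the bound with a generic constant $M$ while implicitly meaning the same stochastic order.
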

\begin{proof}

		\begin{eqnarray*}
			\frac{1}{N} \| \widehat{\Q}_N - \Q_N \|^2 
			&=& \frac{1}{N} \sum_{i=1}^N\sum_{k=1}^{K_N} 
			\| q_k(\widehat{a}_i) - q_k(a_i) \|^2 
			\leq  \frac{1}{N} \sum_{i=1}^N\sum_{k=1}^{K_N} 
			 \zeta_1(k)^2 \| \widehat{a}_i - a_i\|^2 \\
			&\leq&  \frac{1}{N} \sum_{i=1}^N\sum_{k=1}^{K_N} 
			 \zeta_1(k)^2 \frac{1}{\zeta_a(N)^2} 
			= \frac{1}{\zeta_a(N)^2}  \sum_{k=1}^{K_N} \zeta_1(k)^2,
 		\end{eqnarray*}
 		where the first inequality follows from Assumption \ref{assumption: Lipschitz condition} and the second inequality follows from Assumption \ref{assumption: estimation of a_i}. 
\end{proof}

\begin{lemma}\label{lemma: lambda_A-lambda_B} For symmetric matrices $\A$ and $\mathbf{B}$ it is true that 
	\[
	|\lambda_{min}(\A)-\lambda_{min}(\mathbf{B})|\leq \| \A-\mathbf{B}\|
	\] 
\end{lemma}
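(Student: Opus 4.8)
The plan is to prove the inequality via the variational (Rayleigh--Ritz) characterization of the smallest eigenvalue of a symmetric matrix, combined with the norm comparison already recorded in (\ref{eq.norm.inequality}). Recall that for any symmetric matrix $\mathbf{C}$ one has $\lambda_{min}(\mathbf{C}) = \min_{\|x\|=1} x'\mathbf{C}x$, where the minimum runs over unit vectors in the Euclidean norm. This representation is the workhorse of the argument, because it converts the eigenvalue difference into a difference of quadratic forms, which is straightforward to bound in terms of $\A-\mathbf{B}$.

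First I would fix unit vectors $x_A$ and $x_B$ attaining the minima for $\A$ and $\mathbf{B}$, respectively. Using $x_B$ as a (generally suboptimal) test vector for $\A$ gives $\lambda_{min}(\A) \le x_B'\A x_B = x_B'\mathbf{B}x_B + x_B'(\A-\mathbf{B})x_B = \lambda_{min}(\mathbf{B}) + x_B'(\A-\mathbf{B})x_B$, and hence $\lambda_{min}(\A) - \lambda_{min}(\mathbf{B}) \le x_B'(\A-\mathbf{B})x_B$. By the Cauchy--Schwarz inequality and the definition of the operator norm, $x_B'(\A-\mathbf{B})x_B \le \|(\A-\mathbf{B})x_B\|\,\|x_B\| \le \|\A-\mathbf{B}\|_o$. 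Interchanging the roles of $\A$ and $\mathbf{B}$ (now testing $\mathbf{B}$ with the vector $x_A$) yields the reverse bound $\lambda_{min}(\mathbf{B}) - \lambda_{min}(\A) \le \|\A-\mathbf{B}\|_o$, so that $|\lambda_{min}(\A)-\lambda_{min}(\mathbf{B})| \le \|\A-\mathbf{B}\|_o$.

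The final step is to pass from the operator norm to the Frobenius norm appearing in the statement: by the left inequality in (\ref{eq.norm.inequality}), $\|\A-\mathbf{B}\|_o \le \|\A-\mathbf{B}\|$, which completes the proof. This result is essentially Weyl's eigenvalue perturbation inequality, so there is no genuine obstacle; the only points requiring a little care are to phrase the argument symmetrically in $\A$ and $\mathbf{B}$ in order to recover the absolute value on the left-hand side, and to invoke the norm comparison (\ref{eq.norm.inequality}) at the very end rather than trying to establish the sharper operator-norm version directly.
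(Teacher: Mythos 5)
Your proof is correct and follows essentially the same route as the paper's: use the minimizing unit vector of one matrix as a test vector for the other to bound $\lambda_{min}(\A)-\lambda_{min}(\mathbf{B})$ by the quadratic form $x'(\A-\mathbf{B})x$, then symmetrize and bound by the norm. The only cosmetic difference is that you pass explicitly through $\|\A-\mathbf{B}\|_o$ before invoking (\ref{eq.norm.inequality}), whereas the paper bounds $|\underline{\x}'(\A-\mathbf{B})\underline{\x}|$ by the Frobenius norm in one step.
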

\begin{proof}
	Let $\underline{\x}_A$ be the eigenvector associated with the minimum eigenvalue of $\A$. Define $\underline{\x}_B$ analogously. 
	First we show $ |\lambda_{min}(\A)-\lambda_{min}(\mathbf{B})| \leq \| \A-\mathbf{B} \|.$
	\[
	\begin{split}
	\lambda_{min}(\A)-\lambda_{min}(\mathbf{B})
	&=\underline{\x}_A' \A\underline{\x}_A-\underline{x}_B'\mathbf{B}\underline{\x}_B \\
	&\leq \underline{\x}_B'(\A-\mathbf{B})\underline{\x}_B \\
	&\leq |\underline{\x}_B'(\A-\mathbf{B})\underline{\x}_B| \leq \| \A-\mathbf{B} \|. 
	\end{split}
	\]
	Also, we can prove the other direction. Notice that
	\[
	\begin{split}
	\lambda_{min}(\A)-\lambda_{min}(B)
	&= \underline{\x}_A' \A\underline{x}_A-\underline{\x}_B'\mathbf{B}\underline{\x}_B \\
	&\geq \underline{\x}_A'(\A-\mathbf{B})\underline{\x}_A \\
	&\geq -|\underline{\x}_A'(\A-\mathbf{B})\underline{\x}_A| \geq -\| \A-\mathbf{B} \|. 
	\end{split}
	\]
	
	Then, we have the required result.

\end{proof}

\begin{lemma}\label{lemma: eigenvalues of Q and Q_hat}
Under \ref{as:basic}, \ref{assumption: estimation of a_i}, \ref{assumption:sieve basis}, and \ref{assumption: Lipschitz condition},  W.p.a.1, there exists a positive constant $C > 0$ such that 
\begin{equation*}
\frac{1}{C} \leq \lambda_{min}\left(\frac{\Q_N^{\prime} \Q_N}{N}\right), \; \lambda_{min}\left(\frac{\widehat{\Q}_N^{\prime} \widehat{\Q}_N}{N}\right). \label{desired.eigenvalue.Phi'Phi}
\end{equation*}
\end{lemma}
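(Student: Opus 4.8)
The plan is to reduce both eigenvalue bounds to a law-of-large-numbers statement for the sample Gram matrix together with one perturbation argument, in each case transferring eigenvalue information via the Weyl-type inequality of Lemma \ref{lemma: lambda_A-lambda_B}. Because Assumption \ref{assumption:sieve basis} supplies a non-singular $\B$ with $\tilde{\q}^{K_N}(a)=\B\q^{K_N}(a)$, I would work without loss of generality with the normalized basis, taking $\B=(\E[\q^{K_N}(a_i)\q^{K_N}(a_i)'])^{-1/2}$ so that $\E[\tilde{\q}^{K_N}(a_i)\tilde{\q}^{K_N}(a_i)']=\I_{K_N}$; the projections $\PP_{\Q_N},\PP_{\widehat{\Q}_N}$ and the eigenvalue conclusions are invariant under this relabeling, so throughout I take $\E[\q^{K_N}(a_i)\q^{K_N}(a_i)']=\I_{K_N}$, whose smallest eigenvalue is $1$, consistent with Assumption \ref{assumption:sieve basis}(i).

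First I would treat $\frac1N\Q_N'\Q_N=\frac1N\sum_i\q^{K_N}(a_i)\q^{K_N}(a_i)'$. Since the $a_i$ are i.i.d.\ by Assumption \ref{as:basic}(i), independence and mean-centering give a second-moment bound
\[
\E\left\| \frac1N\Q_N'\Q_N - \I_{K_N}\right\|^2 \leq \frac1N\, \E\big[\|\q^{K_N}(a_i)\|^4\big] \leq \frac{\zeta_0(K_N)^2}{N}\,\E\big[\|\q^{K_N}(a_i)\|^2\big] = \frac{\zeta_0(K_N)^2 K_N}{N},
\]
where the last equality uses $\E[\|\q^{K_N}\|^2]=\operatorname{tr}(\I_{K_N})=K_N$ and the middle inequality uses $\sup_a\|\q^{K_N}(a)\|\leq\zeta_0(K_N)$ from Assumption \ref{assumption:sieve basis}(ii). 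As $\zeta_0(K_N)^2K_N/N\to0$ by the same assumption, Markov's inequality yields $\|\frac1N\Q_N'\Q_N-\I_{K_N}\|=o_p(1)$. Applying Lemma \ref{lemma: lambda_A-lambda_B} to the symmetric matrices $\frac1N\Q_N'\Q_N$ and $\I_{K_N}$ then gives $\lambda_{\min}(\frac1N\Q_N'\Q_N)\geq 1-o_p(1)$, so w.p.a.1 this minimum eigenvalue exceeds $1/C$ for any fixed $C>1$.

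For $\widehat{\Q}_N$ I would compare its Gram matrix to that of $\Q_N$ and invoke Lemma \ref{lemma: lambda_A-lambda_B} a second time. Writing
\[
\frac1N\widehat{\Q}_N'\widehat{\Q}_N - \frac1N\Q_N'\Q_N = \frac1N(\widehat{\Q}_N - \Q_N)'\widehat{\Q}_N + \frac1N\Q_N'(\widehat{\Q}_N - \Q_N),
\]
the triangle and sub-multiplicative inequalities bound the Frobenius norm of the difference by $\frac1N\|\widehat{\Q}_N-\Q_N\|(\|\widehat{\Q}_N\|+\|\Q_N\|)$. By Lemma \ref{lemma: Q^2/N} we have $\frac1{\sqrt N}\|\Q_N\|\leq M^{1/2}\zeta_0(K_N)$, and since $\frac1{\sqrt N}\|\widehat{\Q}_N-\Q_N\|=o(1)$ by Lemma \ref{lemma:Q_hat-Q} and the first clause of Assumption \ref{assumption: Lipschitz condition}, also $\frac1{\sqrt N}\|\widehat{\Q}_N\|=O(\zeta_0(K_N))$. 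Hence the difference is of order $\zeta_0(K_N)\,\big(\zeta_a(N)^{-2}\sum_{k=1}^{K_N}\zeta_1^2(k)\big)^{1/2}$; taking square roots in the clause $\zeta_0(K_N)^6\big(\zeta_a(N)^{-2}\sum_k\zeta_1^2(k)\big)=o(1)$ of Assumption \ref{assumption: Lipschitz condition} and using $\zeta_0(K_N)\geq1$ shows this order is $o(1)$. Lemma \ref{lemma: lambda_A-lambda_B} then transfers the bound just obtained for $\Q_N$ to $\widehat{\Q}_N$, giving $\lambda_{\min}(\frac1N\widehat{\Q}_N'\widehat{\Q}_N)\geq 1/C-o_p(1)$ w.p.a.1.

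The main obstacle is the first step: controlling the deviation of the random Gram matrix from its population limit uniformly as the dimension $K_N$ grows with $N$. The crude entrywise bound succeeds only because the envelope $\sup_a\|\q^{K_N}(a)\|\leq\zeta_0(K_N)$ together with the rate restriction $\zeta_0(K_N)^2K_N/N\to0$ exactly controls the accumulated variance across the $K_N^2$ entries; without the trace identity $\E[\|\q\|^2]=K_N$ afforded by the normalization, the weaker bound $\zeta_0(K_N)^4/N$ would not vanish. The secondary delicate point is that the generated-regressor error $\widehat{\Q}_N-\Q_N$ enters the perturbation multiplied by $\|\Q_N\|/\sqrt N=O(\zeta_0(K_N))$, so the extra powers of $\zeta_0(K_N)$ in the Lipschitz condition of Assumption \ref{assumption: Lipschitz condition} are precisely what guarantees this cross term is asymptotically negligible.
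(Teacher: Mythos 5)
Your proof is correct and follows essentially the same route as the paper's: both arguments hinge on the Weyl-type inequality of Lemma \ref{lemma: lambda_A-lambda_B}, applied first to compare $\Q_N'\Q_N/N$ with its population counterpart and then to transfer the bound to $\widehat{\Q}_N'\widehat{\Q}_N/N$ via the perturbation estimate built from Lemmas \ref{lemma: Q^2/N} and \ref{lemma:Q_hat-Q} and the rate conditions in Assumption \ref{assumption: Lipschitz condition} (your two-term decomposition $(\widehat{\Q}_N-\Q_N)'\widehat{\Q}_N + \Q_N'(\widehat{\Q}_N-\Q_N)$ versus the paper's three-term one is immaterial). The one substantive difference is the variance bound in the first step. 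The paper bounds $\mathbb{E}\lVert \Q_N'\Q_N/N - \E[\q^{K_N}(a_i)\q^{K_N}(a_i)'] \rVert^2$ entrywise by $\zeta_0(K_N)^4/N$ and asserts this is $o(1)$ by Assumptions \ref{assumption:sieve basis}(ii) and \ref{assumption: Lipschitz condition}; note that $\zeta_0(K_N)^4/N \to 0$ does \emph{not} follow from $\zeta_0(K_N)^2K_N/N \to 0$ alone (for the polynomial sieve with $\zeta_0 = O(K_N)$ it would require $K_N^4/N \to 0$ rather than $K_N^3/N \to 0$), so the paper is implicitly leaning on the stronger Lipschitz rate. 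Your normalization $\E[\q^{K_N}(a_i)\q^{K_N}(a_i)'] = \I_{K_N}$ plus the trace identity yields the sharper bound $\zeta_0(K_N)^2K_N/N$, which follows from Assumption \ref{assumption:sieve basis}(ii) directly --- in effect you are reproving the Li--Racine bound that the paper itself records as Lemma \ref{lemma.QQ/N} but invokes only later, in the series-approximation section. One caveat applies equally to both proofs: the conclusion concerns the eigenvalues of the Gram matrix of the \emph{original} basis, which are not invariant under an arbitrary non-singular $\B$; your claim of invariance is strictly only about the projections, and transferring the eigenvalue bound needs $\B$, $\B^{-1}$ bounded. This is the standard Newey convention ($\B = \I$ without loss of generality) that the paper also adopts implicitly --- its proof applies Assumption \ref{assumption:sieve basis}(i) directly to $\E[\q^{K_N}(a_i)\q^{K_N}(a_i)']$ --- and your choice $\B = (\E[\q^{K_N}(a_i)\q^{K_N}(a_i)'])^{-1/2}$ preserves the envelope condition up to a constant precisely because Assumption \ref{assumption:sieve basis}(i) bounds the operator norm of $\B$, so no genuine gap arises. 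Finally, the displayed bounds involving $\widehat{\Q}_N-\Q_N$ should be read as $O_p$/$o_p$ rather than deterministic $o(1)$, since Assumption \ref{assumption: estimation of a_i} controls $\max_i|\hat a_i - a_i|$ only in probability; the paper's own statements share this slight abuse.
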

	\begin{proof} First we show that there exists a positive constant $C$ such that $\frac{1}{C} \leq \lambda_{min}\left(\frac{\Q_N^{\prime} \Q_N}{N}\right),$ which follows by Assumption \ref{assumption:sieve basis}(i) if we show
		\[
		\left| \lambda_{min}\left(\frac{\Q_N^{\prime} \Q_N}{N}\right) - \mathbb{E}[\q^{K_N}(a_i)\q^{K_N}(a_i)']  \right| = o_p(1).
		\]
For this, by Lemma \ref{lemma: lambda_A-lambda_B}, we have
\begin{eqnarray*}
	&&\left| 
	\lambda_{min}\left(\frac{\Q_N^{\prime} \Q_N}{N}\right) - \mathbb{E}[\q^{K_N}(a_i)\q^{K_N}(a_i)']  
	\right|
	\leq \left\| 
	\frac{\Q_N^{\prime} \Q_N}{N} - \mathbb{E}[\q^{K_N}(a_i)\q^{K_N}(a_i)']  
	\right\| \\
	&=& \left\| 
	\frac{1}{N} \sum_{i=1}^N \left( \q^{K_N}(a_i)\q^{K_N}(a_i)' - \mathbb{E}[\q^{K_N}(a_i)\q^{K_N}(a_i)'] \right)  
	\right\|.
\end{eqnarray*}
Then, by Assumption \ref{assumption:sieve basis}(ii), we have
\begin{eqnarray*}
&& \mathbb{E} \left\| 
 \frac{1}{N} \sum_{i=1}^N \left( \q^{K_N}(a_i)\q^{K_N}(a_i)' - \mathbb{E}[\q^{K_N}(a_i)\q^{K_N}(a_i)'] \right)  
\right\|^2  \\
&=& \sum_{k=1}^{K_N} \sum_{l=1}^{K_N} \mathbb{E} 
\left(
\frac{1}{N} \sum_{i=1}^N \left(\q_k(a_i)\q_l(a_i) - \mathbb{E}[\q_k(a_i)\q_l(a_i)] \right)
\right)^2 \\
&\leq& \frac{1}{N}\sum_{k=1}^{K_N} \sum_{l=1}^{K_N} \mathbb{E} [\q_k(a_i)\q_l(a_i)]^2 \leq \frac{1}{N} \sup_a \left( \sum_{k=1}^{K_N} \q_k(a)^2 \right)^2 \\
&\leq& \frac{\zeta_0(K_N)^4}{N} =o(1), 
\end{eqnarray*}	 
where the last line holds by Assumptions \ref{assumption:sieve basis}(ii) and  \ref{assumption: Lipschitz condition}.

Next, given the first part of the lemma, the second claim of the lemma follows if we show
\[
\left| \lambda_{min} \left(\frac{\widehat{\Q}_N^{\prime} \widehat{\Q}_N}{N}\right) - \lambda_{min}\left(\frac{\Q_N^{\prime} \Q_N}{N}\right)  \right| = o_p(1).
\]
Notice by Lemma \ref{lemma: lambda_A-lambda_B}, for symmetric matrices $\A$ and $\mathbf{B}$, we have  
$$ | \lambda_{min}(\A)-\lambda_{min}(\mathbf{B}) | \leq \| \A-\mathbf{B}\|.$$ 
Then,
\begin{eqnarray*}
\left| \lambda_{min}\left(\frac{\widehat{\Q}_N^{\prime} \widehat{\Q}_N}{N}\right) 
-\lambda_{min} \left(\frac{\Q_N^{\prime} \Q_N}{N}\right)  \right|  
&\leq & \left\| \frac{\widehat{\Q}_N^{\prime} \widehat{\Q}_N}{N} - \frac{\Q_N^{\prime} \Q_N}{N} \right\|  \\
&\leq & \left\| \frac{(\widehat{\Q}_N-\Q_N)'}{\sqrt{N}}\frac{\Q_N}{\sqrt{N}}
\right\|
+ \left\| \frac{\Q_N'}{\sqrt{N}}\frac{(\widehat{\Q}_N-\Q_N)}{\sqrt{N}} \right\| \\ 
& +&
\left\|\frac{(\widehat{\Q}_N-\Q_N)'}{\sqrt{N}}\frac{(\widehat{\Q}_N-\Q_N)}{\sqrt{N}} \right\|.
\end{eqnarray*}
Then, by lemmas \ref{lemma: Q^2/N} and \ref{lemma:Q_hat-Q} and by Assumption \ref{assumption: Lipschitz condition}, we have 
\[
\left| \lambda_{min} \left(\frac{\widehat{\Q}_N^{\prime} \widehat{\Q}_N}{N}\right) - \lambda_{min}\left(\frac{\Q_N^{\prime} \Q_N}{N}\right)  \right| 
\leq M \left(\zeta_0(K_N)
			\sqrt{\frac{1}{\zeta_a(N)^2}\sum_{k=1}^{K_N}\zeta_1^2(k)}
	+ 
			\frac{1}{\zeta_a(N)^2}\sum_{k=1}^{K_N}\zeta_1^2(k)\right)=o_p(1),
\]
as desired.
\end{proof}

\subsection{Controlling the Series Approximation Error}\label{appendix: series approximation error}

\begin{lemma}[Series Approximation]\label{lemma: series approximation error} Assume the assumptions in Lemma \ref{lemma: error from A_hat}. Then, we have
\begin{itemize}
	\item[(a)] $\frac{1}{N} \sum_{i=1}^{N} 
	\left(\w_i - \widehat{\h}^\w(a_i) \right)
	\left(\z_i - \widehat{\h}^\z(a_i) \right)'
	= \frac{1}{N} \sum_{i=1}^{N} 
	\left(\w_i - \h^\w(a_i) \right)
	\left(\z_i - \h^\z(a_i) \right)'+ o_p(1)$,
	\item[(b)] $\frac{1}{N} \sum_{i=1}^{N} 
	\left(\z_i - \widehat{\h}^\z(a_i) \right)
	\left(\z_i - \widehat{\h}^\z(a_i) \right)'
	= \frac{1}{N} \sum_{i=1}^{N} 
	\left(\z_i - \h^\z(a_i) \right)
	\left(\z_i - \h^\z(a_i) \right)' +o_p(1)$,
	\item[(c)] $\frac{1}{\sqrt{N}} \sum_{i=1}^{N}
	\left(\z_i - \widehat{\h}^\z(a_i) \right)\eta^{\upsilon}_i 
	= \frac{1}{\sqrt{N}} \sum_{i=1}^{N}
	\left(\z_i - \h^\z(a_i) \right)\eta^{\upsilon}_i+ o_p(1) $.
\end{itemize}
\end{lemma}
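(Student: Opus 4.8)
The plan is to reduce all three statements to a single mean-square bound on the series estimation error evaluated at the true heterogeneity $\an$. Write $\boldsymbol{\rho}^{l}_N := \h^{l}(\an) - \Q_N\boldsymbol{\alpha}^{l}_{K_N}$ for the deterministic sieve approximation error of the $l$th conditional-mean function and $\bm{\eta}^{l}_N := \bn^{l} - \h^{l}(\an)$ for the residual vector, so that the series estimator at the true $\an$ is $\hat{\h}^{l}(\an) = \PP_{\Q_N}\bn^{l}$ and
\begin{equation*}
\hat{\h}^{l}(\an) - \h^{l}(\an) = \PP_{\Q_N}\bm{\eta}^{l}_N - \M_{\Q_N}\boldsymbol{\rho}^{l}_N .
\end{equation*}
These two pieces are orthogonal, so $\tfrac{1}{N}\|\hat{\h}^{l}(\an)-\h^{l}(\an)\|^2 = \tfrac{1}{N}\|\PP_{\Q_N}\bm{\eta}^{l}_N\|^2 + \tfrac{1}{N}\|\M_{\Q_N}\boldsymbol{\rho}^{l}_N\|^2$. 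The bias piece is controlled by the uniform approximation rate in Assumption \ref{assumption:sieve basis}(iii): since $\M_{\Q_N}$ is a contraction, $\tfrac{1}{N}\|\M_{\Q_N}\boldsymbol{\rho}^{l}_N\|^2 \le \sup_i|\rho^{l}_i|^2 = O(K_N^{-2\kappa}) = o(1)$.

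The main work is the variance piece $\tfrac{1}{N}\|\PP_{\Q_N}\bm{\eta}^{l}_N\|^2$, and this is the only genuinely delicate step, because the dense-network dependence of $\z_i$ and $\w_i$ makes the $\eta^{l}_i$ dependent across $i$ even conditional on $\an$, precluding a direct appeal to the textbook $O_p(K_N/N)$ series bound. I would dispose of it using the decomposition established in Supplementary Appendix \ref{appendix: distribution of est}, namely that each component of $\z_i$ (and $\w_i$) equals a function of $(\x_i,a_i)$ plus an asymptotically negligible network fluctuation, so that $\eta^{l}_i = \xi^{l}_i + \varrho^{l}_i$, where $\xi^{l}_i$ is mean-zero given $a_i$ and i.i.d.\ across $i$ by Assumption \ref{as:basic}(i), while $\tfrac{1}{N}\sum_i \E\|\varrho^{l}_i\|^2 = o(1)$. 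For the i.i.d.\ part, conditioning on $\an$ gives $\E[\tfrac{1}{N}\|\PP_{\Q_N}\bm{\xi}^{l}_N\|^2 \mid \an] = \tfrac{1}{N}\mathrm{tr}(\PP_{\Q_N}\,\mathrm{diag}(\mathrm{Var}(\xi^{l}_i\mid a_i))) \le \tfrac{M}{N}\,\mathrm{rank}(\PP_{\Q_N}) = O(K_N/N)=o(1)$, and for the remainder $\tfrac{1}{N}\|\PP_{\Q_N}\bm{\varrho}^{l}_N\|^2 \le \tfrac{1}{N}\|\bm{\varrho}^{l}_N\|^2 = o_p(1)$. Hence $\tfrac{1}{N}\|\hat{\h}^{l}(\an)-\h^{l}(\an)\|^2 = o_p(1)$ for each $l\in\{\upsilon,\w,\z\}$.

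Parts (a) and (b) follow from this rate by Cauchy--Schwarz. Setting $\boldsymbol{\delta}^{l}_i := \h^{l}(a_i)-\hat{\h}^{l}(a_i)$, so that $\w_i-\hat{\h}^{\w}(a_i) = \eta^{\w}_i + \boldsymbol{\delta}^{\w}_i$ and likewise for $\z$, the difference between the two sides of (a) is $\tfrac{1}{N}\sum_i[\eta^{\w}_i(\boldsymbol{\delta}^{\z}_i)' + \boldsymbol{\delta}^{\w}_i(\eta^{\z}_i)' + \boldsymbol{\delta}^{\w}_i(\boldsymbol{\delta}^{\z}_i)']$. Each summand is bounded in norm by a product of $(\tfrac{1}{N}\sum_i\|\eta^{l}_i\|^2)^{1/2}=O_p(1)$ (finite second moments, cf.\ (\ref{lemma: finite second moments})) and $(\tfrac{1}{N}\sum_i\|\boldsymbol{\delta}^{l}_i\|^2)^{1/2} = (\tfrac{1}{N}\|\hat{\h}^{l}(\an)-\h^{l}(\an)\|^2)^{1/2}=o_p(1)$, hence $o_p(1)$; part (b) is identical with $\w$ replaced by $\z$.

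For part (c), the difference of the two sides is $\tfrac{1}{\sqrt{N}}\sum_i\boldsymbol{\delta}^{\z}_i\eta^{\upsilon}_i = \tfrac{1}{\sqrt{N}}(\M_{\Q_N}\boldsymbol{\rho}^{\z}_N - \PP_{\Q_N}\bm{\eta}^{\z}_N)'\bm{\eta}^{\upsilon}_N$. The bias contribution is $o_p(1)$ by Cauchy--Schwarz, since $\tfrac{1}{\sqrt{N}}\|\M_{\Q_N}\boldsymbol{\rho}^{\z}_N\|\,\|\bm{\eta}^{\upsilon}_N\| = O_p(\sqrt{N}K_N^{-\kappa}) = o_p(1)$ using $\sqrt{N}K_N^{-\kappa}\to 0$ from Assumption \ref{assumption:sieve basis}(iv). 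For the stochastic contribution $\tfrac{1}{\sqrt{N}}(\bm{\eta}^{\z}_N)'\PP_{\Q_N}\bm{\eta}^{\upsilon}_N$ I would condition on $\mathcal{F}=(\X_N,\G_N,\an)$, under which $\bm{\eta}^{\z}_N$ and $\PP_{\Q_N}$ are fixed: Lemma \ref{lemma:control.function} yields $\E[\eta^{\upsilon}_i\mid\mathcal{F}]=0$, and Assumption \ref{as:basic} makes the $\eta^{\upsilon}_i$ conditionally independent across $i$ with bounded conditional variance, so the conditional mean vanishes and the conditional second moment is $\tfrac{1}{N}\sum_i\sigma^2(\x_i,a_i)\,[(\PP_{\Q_N}\bm{\eta}^{\z}_N)_i]^2 \le \tfrac{M}{N}\|\PP_{\Q_N}\bm{\eta}^{\z}_N\|^2 = o_p(1)$ by the variance bound above; a conditional Chebyshev argument then gives $o_p(1)$. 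The crux throughout is the variance bound of the second paragraph, which is exactly where the dense-network dependence must be absorbed through the i.i.d.-limit-plus-negligible-remainder decomposition.
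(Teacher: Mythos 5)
Your proposal is correct, and it is actually more complete than the paper's own argument at the one point where real care is needed. The paper proves the lemma by the same skeleton you use---reduce (a) and (b) via Cauchy--Schwarz to the mean-square convergence $\frac{1}{N}\sum_{i=1}^N\|\hat{\h}^l(a_i)-\h^l(a_i)\|^2=o_p(1)$---but it then disposes of that rate by citing only Lemma \ref{lemma.series est error}, which bounds the \emph{deterministic} approximation error $\frac{1}{N}\|\h^l(\an)-\PP_{\Q_N}\h^l(\an)\|^2=O_p(K_N^{-2\kappa})$, and it dismisses part (c) with the single sentence that ``(iii) follows from (ii).'' Since the feasible object is $\hat{\h}^l(\an)=\PP_{\Q_N}\bn^l$, the decomposition you write, $\hat{\h}^l(\an)-\h^l(\an)=\PP_{\Q_N}\bm{\eta}^l_N-\M_{\Q_N}\boldsymbol{\rho}^l_N$, contains a projection-of-noise term that the paper's cited lemma does not touch; your trace bound $\E\bigl[\frac{1}{N}\|\PP_{\Q_N}\bm{\xi}^l_N\|^2\mid\an\bigr]\le M\,\mathrm{rank}(\PP_{\Q_N})/N=O(K_N/N)$, made legitimate by splitting $\eta^l_i$ into an i.i.d.\ limit plus a uniformly negligible network fluctuation (which is exactly what Lemmas \ref{lemma plim s uniform in i}--\ref{lemma:sum_m beta s_miN - stilde_mi} of the Supplementary Appendix deliver, and the same device the paper deploys for the limits $\mathbf{S}^{\w\z}$, $\mathbf{S}^{\z\z}$), supplies precisely the missing control. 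Your handling of (c) is also the right one: plain Cauchy--Schwarz applied to (ii) covers only the bias piece of (c)---it works there because $\sqrt{N}K_N^{-\kappa}\to0$ in Assumption \ref{assumption:sieve basis}(iv) forces $K_N^{-2\kappa}=o(N^{-1})$---whereas applied to the noise piece it would yield $O_p(\sqrt{K_N})$ and diverge, so the conditional mean-zero and conditional-independence structure of $\eta^{\upsilon}_i$ given $(\X_N,\G_N,\an)$ (Lemma \ref{lemma:control.function} together with Assumption \ref{as:basic}(ii), which makes $\D_N$ independent of $\bupsilon_N$ given $(\X_N,\an)$) is indispensable, exactly as in your conditional-Chebyshev step. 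One minor imprecision: for the $[\G_N\y_N]_i$ component of $\w_i$ the i.i.d.\ limit is a function of $(\x_i,a_i,\upsilon_i)$ rather than of $(\x_i,a_i)$ alone, since the $m=0$ term of the Neumann expansion contributes $\eta^{\upsilon}_i$ itself; this changes nothing in your argument because $(\x_i,a_i,\upsilon_i)$ is i.i.d.\ under Assumption \ref{as:basic}(i), so $\bm{\xi}^l_N$ remains conditionally independent and mean-zero given $\an$ and the trace bound goes through verbatim.
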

\begin{proof}
Lemma \ref{lemma: series approximation error} follows if we show
\begin{itemize}
	\item[(i)] $ \frac{1}{N} \sum_{i=1}^{N} \left( \widehat{\h}^\w(a_i) - \h^\w(a_i) \right) \left( \widehat{\h}^\w(a_i) - \h^\w(a_i) \right)^\prime= o_p(1).$
%	{\color{magenta}	\item[(i)] $ \frac{1}{N} \sum_{i=1}^{N} \left( \widehat{\h}^W(a_i) - \h^W(a_i) \right) \left( \widehat{h}^\Z_N(a_i) - h^\Z_N(a_i) \right)^\prime= o_p(1).$}
	\item[(ii)] $ \frac{1}{N} \sum_{i=1}^{N} \left( \widehat{\h}^\z(a_i) - \h^\z(a_i) \right) \left( \widehat{\h}^\z(a_i) - \h^\z(a_i) \right)^\prime = o_p(1)$.
	\item[(iii)] $ \frac{1}{\sqrt{N}} \sum_{i=1}^{N} \left( \widehat{\h}^\z(a_i) - \h^\z(a_i) \right) \eta^{\upsilon}_i = o_p(1).$
\end{itemize}	

Lemma \ref{lemma: series approximation error} (i) and (ii) is true by Lemma \ref{lemma.series est error} and Lemma \ref{lemma: series approximation error} (iii) follows from (ii). See the remainder of this section.\end{proof}

Following \cite{Newey1997}, we assume $\B=\I$ in Assumption \ref{assumption:sieve basis}, hence, $\tilde{q}^K(a)=q^K(a)$. \\
Also, we assume $P = \E[\q^K(a_i)(\q^K(a_i))']=I$.\footnote{The Lemmas in this section follow Section 15.6 in \citet{Li2008a}.}\\

\begin{lemma}\label{lemma.QQ/N} Assume Assumption \ref{assumption:sieve basis}. Then, $\E[\| \tilde{\PP}-\I\|^2]=O(\zeta_0(K_N)^2K_N/N)$, where $\tilde{\PP}=(\Q_N'\Q_N)/N$.
\end{lemma}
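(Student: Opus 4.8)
The plan is to exploit the two normalizations stated just before the lemma, namely $\B=\I$ so that $\tilde{\q}^{K_N}=\q^{K_N}$, and $P:=\E[\q^{K_N}(a_i)\q^{K_N}(a_i)']=\I$, together with the i.i.d.\ assumption on the $a_i$ (Assumption \ref{as:basic}(i)). Under these, $\E[\tilde{\PP}]=\frac{1}{N}\sum_{i=1}^N\E[\q^{K_N}(a_i)\q^{K_N}(a_i)']=\I$, so I would center the sum and write
\[
\tilde{\PP}-\I=\frac{1}{N}\sum_{i=1}^N Y_i,\qquad Y_i:=\q^{K_N}(a_i)\q^{K_N}(a_i)'-\I,
\]
where the $Y_i$ are i.i.d.\ with $\E[Y_i]=0$. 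The whole argument then reduces to a single second-moment computation; no concentration inequality or chaining is needed.

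First I would expand the expected squared Frobenius norm using the inner product $\langle A,B\rangle=\operatorname{tr}(A'B)$. Since the $Y_i$ are independent and mean-zero, every cross term satisfies $\E[\langle Y_i,Y_j\rangle]=\langle\E[Y_i],\E[Y_j]\rangle=0$ for $i\neq j$, so
\[
\E\big[\|\tilde{\PP}-\I\|^2\big]=\frac{1}{N^2}\sum_{i=1}^N\E\big[\|Y_i\|^2\big]=\frac{1}{N}\E\big[\|Y_1\|^2\big].
\]
Next, using $\langle\q\q',\I\rangle=\|\q\|^2$ and the normalization $\E\|\q^{K_N}(a_1)\|^2=\operatorname{tr}(\I)=K_N$, the cross term in $\E\|\q\q'-\I\|^2$ equals $K_N$ and $\|\I\|^2=K_N$, so $\E[\|Y_1\|^2]=\E[\|\q^{K_N}(a_1)\q^{K_N}(a_1)'\|^2]-K_N\leq\E[\|\q^{K_N}(a_1)\q^{K_N}(a_1)'\|^2]$.

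The computational crux is the identity for the Frobenius norm of an outer product, $\|\q\q'\|^2=\operatorname{tr}(\q\q'\q\q')=(\q'\q)^2=\|\q\|^4$. The sharp rate then comes from \emph{not} bounding both factors of $\|\q\|^2$ by the supremum; instead I would keep one factor inside the expectation:
\[
\E\big[\|\q^{K_N}(a_1)\|^4\big]\leq\Big(\sup_{a}\|\q^{K_N}(a)\|^2\Big)\,\E\big[\|\q^{K_N}(a_1)\|^2\big]\leq\zeta_0(K_N)^2K_N,
\]
where the supremum bound is Assumption \ref{assumption:sieve basis}(ii) and $\E\|\q\|^2=K_N$ is again the normalization. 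Collecting the pieces gives $\E[\|\tilde{\PP}-\I\|^2]\leq\zeta_0(K_N)^2K_N/N$, which is the claim.

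The step I expect to matter most is this last display. The crude alternative of bounding $\|\q\|^4\leq\zeta_0(K_N)^4$ uniformly, as is done inside the proof of Lemma \ref{lemma: eigenvalues of Q and Q_hat}, yields only the weaker rate $\zeta_0(K_N)^4/N$. Retaining a single factor of $\|\q\|^2$ under the expectation and invoking $\E\|\q\|^2=K_N$ is precisely what produces the sharper $\zeta_0(K_N)^2K_N/N$ asserted here; recall also that $\zeta_0(K_N)^2\geq\E\|\q\|^2=K_N$, so this bound is genuinely tighter.
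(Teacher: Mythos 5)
Your proof is correct and is essentially the argument the paper relies on: the paper simply cites \cite{Li2008a} (p.~481), where this exact computation appears — i.i.d.\ centering so cross terms vanish, $\E\|Y_1\|^2=\E\|\q^{K_N}(a_1)\|^4-K_N$, and the key bound $\E\|\q^{K_N}(a_1)\|^4\leq\zeta_0(K_N)^2\,\E\|\q^{K_N}(a_1)\|^2=\zeta_0(K_N)^2K_N$ under the normalization $\E[\q^{K_N}(a_i)\q^{K_N}(a_i)']=\I$. Your closing remark correctly identifies why keeping one factor of $\|\q\|^2$ inside the expectation, rather than using the crude uniform bound $\zeta_0(K_N)^4/N$, is what delivers the stated rate.
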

\begin{proof}
	For proof see \cite{Li2008a} page 481.
\end{proof}

Note that Lemmas \ref{lemma: lambda_A-lambda_B} and \ref{lemma.QQ/N} imply that 
\[ 
 | \lambda_{\min} ( \tilde{\PP} ) - 1 | \leq \| \tilde{\PP}-\I\| =O_p(\zeta_0(K_N)\sqrt{K_N/N}=o_p(1).
\] 
That is, the smallest eigenvalue of $\tilde{\PP}$ converges to one in probability. Letting $\mathbf{1}_N$ be the indicator function for the smallest eigenvalue of $\tilde{\PP}$ being greater than $1/2$, we have $\Pr(\mathbf{1}_N=1)\to 1$.

\begin{lemma}\label{lemma.sieve.coefficient} Assume Assumption \ref{assumption:sieve basis}. Then, $\| \tilde{\alpha}^{f}-\alpha
^{f}\|=O_p(K_N^{-\kappa})$, where $\tilde{\alpha}^{f}=(\Q_N'\Q_N)^{-1}\Q_N'f$, where $\alpha^{(f)}$ satisfies Assumption \ref{assumption:sieve basis} and $f(a) \in \{ h^{y}(a),h^\z(a),h^\w(a) \}$.
\end{lemma}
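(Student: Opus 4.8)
The plan is to decompose the vector of true function values into its best sieve approximation plus a small residual, and then to observe that $\tilde{\alpha}^f - \alpha^f$ is precisely the OLS coefficient vector obtained by regressing that residual on $\Q_N$. I would begin by writing $f = \Q_N \alpha^f + r$, where $r := f - \Q_N\alpha^f$ has $i$-th entry $r_i = f(a_i) - \q^{K_N}(a_i)'\alpha^f$. The uniform approximation bound in Assumption \ref{assumption:sieve basis}(iii) gives $\sup_i |r_i| \leq \sup_{a \in \mathcal{A}} | f(a) - \q^{K_N}(a)'\alpha^f| = O(K_N^{-\kappa})$, so that $\frac{1}{N}\|r\|^2 \leq \sup_i r_i^2 = O(K_N^{-2\kappa})$ deterministically.

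Substituting this decomposition into the definition of the projection coefficient (which is well defined with probability approaching one, since $\Q_N'\Q_N$ is invertible w.p.a.1 by Lemma \ref{lemma.QQ/N}) yields
\[
\tilde{\alpha}^f = (\Q_N'\Q_N)^{-1}\Q_N'f = \alpha^f + (\Q_N'\Q_N)^{-1}\Q_N'r,
\]
and hence $\Q_N(\tilde{\alpha}^f - \alpha^f) = \PP_{\Q_N}r$, where $\PP_{\Q_N}$ is the orthogonal projection onto the columns of $\Q_N$.

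The crux of the argument, and where a naive estimate fails, is the final passage from $r$ to the coefficient norm $\|\tilde{\alpha}^f - \alpha^f\|$. Bounding $(\Q_N'\Q_N)^{-1}\Q_N'r$ coordinatewise through $\sup_a \|\q^{K_N}(a)\| \leq \zeta_0(K_N)$ would introduce a diverging factor $\zeta_0(K_N)$ and only give $O_p(\zeta_0(K_N)K_N^{-\kappa})$, which is too weak. Instead I would route through the empirical (prediction) norm and use that a projection is a contraction:
\[
\frac{1}{N}\|\Q_N(\tilde{\alpha}^f - \alpha^f)\|^2 = \frac{1}{N}\|\PP_{\Q_N}r\|^2 \leq \frac{1}{N}\|r\|^2 = O_p(K_N^{-2\kappa}).
\]
To convert this empirical-norm control into a bound on the coefficient vector, I would lower-bound the quadratic form by the smallest eigenvalue,
\[
\frac{1}{N}\|\Q_N(\tilde{\alpha}^f - \alpha^f)\|^2 = (\tilde{\alpha}^f - \alpha^f)'\frac{\Q_N'\Q_N}{N}(\tilde{\alpha}^f - \alpha^f) \geq \lambda_{\min}\!\left(\frac{\Q_N'\Q_N}{N}\right)\|\tilde{\alpha}^f - \alpha^f\|^2,
\]
and invoke the consequence of Lemma \ref{lemma.QQ/N}, namely $\lambda_{\min}(\Q_N'\Q_N/N) \geq 1/2$ with probability approaching one. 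Combining the two displays gives $\|\tilde{\alpha}^f - \alpha^f\|^2 = O_p(K_N^{-2\kappa})$, i.e. $\|\tilde{\alpha}^f - \alpha^f\| = O_p(K_N^{-\kappa})$, as claimed.

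I expect the \emph{only} genuine obstacle to be exactly this last conversion step: one must resist estimating $(\Q_N'\Q_N)^{-1}\Q_N'r$ directly and instead exploit the idempotency of $\PP_{\Q_N}$ in the empirical norm together with the uniform eigenvalue lower bound supplied by Lemma \ref{lemma.QQ/N}. Everything else — the residual bound from Assumption \ref{assumption:sieve basis}(iii) and the elementary algebra of the projection — is routine. Note that the normalizations $\B = \I$ and $P = \E[\q^{K_N}(a_i)\q^{K_N}(a_i)'] = \I$ are convenient but not essential here, since the argument relies on the eigenvalue bound rather than on $P$ itself.
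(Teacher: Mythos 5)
Your proposal is correct and takes essentially the same route as the paper's own proof: the paper likewise writes $\tilde{\alpha}^{f}-\alpha^{f}=(\Q_N'\Q_N)^{-1}\Q_N'(f-\Q_N\alpha^{f})$, controls the resulting quadratic form via idempotency of $\Q_N(\Q_N'\Q_N)^{-1}\Q_N'$ (your contraction step) together with the event that $\lambda_{\min}(\Q_N'\Q_N/N)\geq 1/2$ with probability approaching one (the $\mathbf{1}_N$ device following Lemma \ref{lemma.QQ/N}), and then invokes the sup-norm approximation rate of Assumption \ref{assumption:sieve basis}(iii). Your passage through the empirical norm and the eigenvalue lower bound is the paper's inequality read in the other direction, and your observation that a coordinatewise bound would lose a factor $\zeta_0(K_N)$ correctly identifies why the projection argument is needed.
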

\begin{proof}\begin{eqnarray*}
\mathbf{1}_N\| \tilde{\alpha}^{(f)}-\alpha
^{(f)}\|&=& \mathbf{1}_N\| (\Q_N'\Q_N)^{-1}\Q_N'(f-\Q_N \alpha^{f})\|\\
&=& \mathbf{1}_N\{(f-\Q_N\alpha^{f})'\Q_N(\Q_N'\Q_N)^{-1}(\Q_N'\Q_N/N)^{-1}\Q_N'(f-\Q_N\alpha^{f})/N\}^{1/2}\\
&=&\mathbf{1}_NO_P(1)\{(f-\Q_N\alpha^{(f)})'\Q_N(\Q_N'\Q_N)^{-1}\Q_N'(f-\Q_N\alpha^{f})/N\}^{1/2}\\
&\leq & O_p(1)\{(f-\Q_N\alpha^{f})'(f-\Q_N\alpha^{f})/N\}^{1/2}=O_p(K_N^{-\kappa})
\end{eqnarray*}
by Lemma \ref{lemma.QQ/N}, Assumption \ref{assumption:sieve basis}(iii), the fact that $\Q_N(\Q_N'\Q_N)^{-1}\Q_N'$ is idempotent and $\Pr(\mathbf{1}_N=1)\to 1$.
\end{proof}

%Define $S_{A,B}=\frac{1}{N}\sum_i a_i B_i'$ and $S_{A,A}=S_A$.

\begin{lemma}\label{lemma.series est error} Assume Assumption \ref{assumption:sieve basis}. Let $f(a) \in ( h^{y}(a),\h^z{\z}(a),\h^{\w}(a))$ and $\tilde{f}=\Q_N\tilde{ \boldsymbol{\alpha}}_N^{f}$. Then, 
$\frac{1}{N}\| f-\tilde{f}\| ^2 = O_p(K_N^{-2\kappa})=o_p(N^{-1/2})$.
\end{lemma}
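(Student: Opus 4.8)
The plan is to exploit the fact that $\tilde f = \Q_N \tilde{\boldsymbol{\alpha}}_N^f$ with $\tilde{\boldsymbol{\alpha}}_N^f = (\Q_N'\Q_N)^{-1}\Q_N' f$ is nothing but the orthogonal projection $\PP_{\Q_N} f$ of the $N$-vector $f = (f(a_1),\ldots,f(a_N))'$ onto the column space of $\Q_N$. On the event $\{\mathbf{1}_N = 1\}$ — on which $\Q_N'\Q_N/N$ has smallest eigenvalue bounded away from zero, so that $\PP_{\Q_N}$ is a well-defined idempotent projection matrix — the least-squares best-approximation property gives, for every coefficient vector $\boldsymbol{\alpha}$,
\[
\| f - \tilde f\|^2 = \| (\I_N - \PP_{\Q_N}) f\|^2 \le \| f - \Q_N \boldsymbol{\alpha}\|^2 .
\]
Since $\Pr(\mathbf{1}_N = 1) \to 1$ by Lemma \ref{lemma.QQ/N}, it suffices to bound the right-hand side for a convenient choice of $\boldsymbol{\alpha}$.

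I would then take $\boldsymbol{\alpha} = \boldsymbol{\alpha}^f_{K_N}$, the approximating sequence furnished by Assumption \ref{assumption:sieve basis}(iii). Bounding the empirical $\ell^2$ norm by the uniform sup-norm over the compact support $\mathcal{A}$ of $a_i$,
\[
\frac{1}{N}\| f - \Q_N \boldsymbol{\alpha}^f_{K_N}\|^2
= \frac{1}{N}\sum_{i=1}^N \big( f(a_i) - \q^{K_N}(a_i)' \boldsymbol{\alpha}^f_{K_N}\big)^2
\le \sup_{a \in \mathcal{A}} \big| f(a) - \q^{K_N}(a)' \boldsymbol{\alpha}^f_{K_N}\big|^2 = O(K_N^{-2\kappa}),
\]
the last equality being exactly the approximation rate in Assumption \ref{assumption:sieve basis}(iii). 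Combining with the projection inequality yields $\frac{1}{N}\| f - \tilde f\|^2 \le O(K_N^{-2\kappa})$ on an event of probability approaching one, i.e. $\frac{1}{N}\|f-\tilde f\|^2 = O_p(K_N^{-2\kappa})$.

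Finally, the $o_p(N^{-1/2})$ conclusion follows from the rate restriction in Assumption \ref{assumption:sieve basis}(iv): since $\sqrt{N}\,K_N^{-\kappa} \to 0$, I may write $N^{1/2} K_N^{-2\kappa} = (\sqrt{N}\,K_N^{-\kappa}) \cdot K_N^{-\kappa}$, where the first factor tends to $0$ by assumption and the second tends to $0$ because $K_N \to \infty$; hence $K_N^{-2\kappa} = o(N^{-1/2})$.

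I do not expect a genuine obstacle here, since the content is the standard sieve best-approximation bound. The only point requiring care is that the representation $\tilde f = \PP_{\Q_N} f$ and the minimizing property of the projection are valid only when $\Q_N'\Q_N$ is invertible, which holds solely with probability approaching one — hence the argument is carried out on the event $\{\mathbf{1}_N=1\}$ and the bound upgraded to an $O_p$ statement via $\Pr(\mathbf{1}_N=1)\to 1$. Alternatively, one could route the argument through the coefficient bound $\|\tilde{\boldsymbol{\alpha}}^f - \boldsymbol{\alpha}^f\| = O_p(K_N^{-\kappa})$ of Lemma \ref{lemma.sieve.coefficient}, but the direct projection argument is shorter.
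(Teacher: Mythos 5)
Your proof is correct and reaches the same bound as the paper, but by a more direct route. The paper writes $f-\tilde f=(f-\Q_N\boldsymbol{\alpha}_N^{f})-\Q_N(\tilde{\boldsymbol{\alpha}}_N^{f}-\boldsymbol{\alpha}_N^{f})$, handles the first piece with Assumption \ref{assumption:sieve basis}(iii), and handles the second via the quadratic form $(\boldsymbol{\alpha}_N^{f}-\tilde{\boldsymbol{\alpha}}_N^{f})'(\Q_N'\Q_N/N)(\boldsymbol{\alpha}_N^{f}-\tilde{\boldsymbol{\alpha}}_N^{f})=O_p(1)\,\|\boldsymbol{\alpha}_N^{f}-\tilde{\boldsymbol{\alpha}}_N^{f}\|^2$, which requires both Lemma \ref{lemma.QQ/N} and the coefficient-rate Lemma \ref{lemma.sieve.coefficient}. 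You instead observe that $\tilde f=\PP_{\Q_N}f$ and invoke the least-squares best-approximation inequality $\|f-\tilde f\|\le\|f-\Q_N\boldsymbol{\alpha}_{K_N}^{f}\|$, after which the sup-norm rate in Assumption \ref{assumption:sieve basis}(iii) finishes the argument in one step, dispensing with Lemma \ref{lemma.sieve.coefficient} entirely; Lemma \ref{lemma.QQ/N} enters your version only through $\Pr(\mathbf{1}_N=1)\to 1$, and since the lemma's statement defines $\tilde{\boldsymbol{\alpha}}_N^{f}$ with the plain inverse $(\Q_N'\Q_N)^{-1}$, conditioning on the event $\{\mathbf{1}_N=1\}$ is exactly the right level of care (had the paper's symmetric generalized inverse been used, $\PP_{\Q_N}$ would be the orthogonal projection onto the column space of $\Q_N$ surely and even this caveat could be dropped). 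It is worth noting that the two proofs share the same geometric core: the paper's first displayed inequality, which adds the two squared norms with no factor of $2$, is itself justified precisely by the Pythagorean orthogonality $f-\tilde f\perp\Q_N(\boldsymbol{\alpha}_N^{f}-\tilde{\boldsymbol{\alpha}}_N^{f})$ that underlies your projection argument. What the paper's longer decomposition buys is that it generalizes to regressions of noisy data on $\Q_N$ (as in $\hat{\h}^{l}=\PP_{\Q_N}\bn^{l}$ elsewhere in the appendix), where no best-approximation property in your form is available; but for this lemma, in which the projected vector is the vector of true function values, your shorter argument is complete, including the final step $K_N^{-2\kappa}=o(N^{-1/2})$, which follows correctly from $\sqrt{N}K_N^{-\kappa}\to 0$ in Assumption \ref{assumption:sieve basis}(iv) (and in fact yields the stronger $o(N^{-1})$).
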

\begin{proof}
The required result for the lemma follows because 
\begin{eqnarray*}
\frac{1}{N}\| f-\tilde{f}\| ^2  
&\leq& \frac{1}{N}\{\| f-\Q_N\boldsymbol{\alpha}_N^{f} \| ^2 + \| \Q_N(\boldsymbol{\alpha}_N^{(f)}-\tilde{\boldsymbol{\alpha}}_N^{f})\|^2\}\\
&=&O(K_N^{-2\kappa})+(\boldsymbol{\alpha}_N^{f}-\tilde{\boldsymbol{\alpha}}_N^{f})'(\Q_N'\Q_N/N)(\boldsymbol{\alpha}_N^{f}-\tilde{\boldsymbol{\alpha}}_N^{f})\\
&=&O(K_N^{-2\kappa})+O_p(1)\| \boldsymbol{\alpha}_N^{f}-\tilde{\boldsymbol{\alpha}}_N^{f} \|^2= O_p(K_N^{-2\kappa})
\end{eqnarray*}
by Assumption \ref{assumption:sieve basis}(iii), Lemma \ref{lemma.QQ/N} and Lemma \ref{lemma.sieve.coefficient}.
\end{proof}

\subsection{Limiting Distribution of $\widehat{\beta}_{2SLS}$}\label{appendix: distribution of est}
In this section we derive the distribution of the infeasible estimator $\widehat{\beta}^{inf}_{2SLS}$.
All supporting lemmas can be found in Section \ref{appendix: further supporting lemmas for beta_hat}.

We introduce the following notation.
Let $s_0(\x_i,a_i)$ be a function of $(\x_i,a_i)$ such that $s_0(\cdot,\cdot)$ is bounded over the support of $(\x_i,a_i)$. We denote an $N$ vector-valued function that stacks $s_0(\x_i,a_i)$ over $i=1,...,N$ as
$\Ss_{0,N} = (s_0(\x_1,a_1),\ldots, s_0(\x_N,a_N))'.$  
Define
\begin{equation}
s_{0,N,i} := s_0(\x_i,a_i). \label{def.s 0 N i}
\end{equation}
Next, for $m=1,2,...,$ we define recursively 
\begin{align}
s_{m,N,i} &:= \sum_{j=1, \neq i}^N g_{ij} s_{m-1,N,i} = [\G_N \Ss_{m-1,N}]_i, \label{def.s m N i}
\end{align}
where 
\[ 
\Ss_{m-1,N}:=(s_{m-1,N,1},\ldots, s_{m-1,N,N})'.
\]
For $m=0,1,2,...$, we define $s^{\x_1}_{m,N,i}$ and $\Ss^{\x_1}_{m,N}$  with initial function $s_{0,N,i}=s_{0}(\x_i ,a_i) = \x_{1i}$, and define $s^{a}_{m,N,i}$ and $\Ss^{a}_{m,N}$ with initial function $s_{0,N,i}=s_0(\x_i,a_i) = h^{\upsilon}(a_i)$.

Next, we define recursively the probability limit of $s_{m,N,i}$ defined with the initial function $s_{0,N,i}=s_0(\x_i,a_i)$ for each $i$ as $N \rightarrow \infty$. For this, let
\[
\tilde{s}_0(\x_i,a_i) = s_0(\x_i,a_i) = s_{0,N,i}.
\]
Note that for fixed $i$, $s_{1,N,i}$ has the following limit as $N \rightarrow \infty$: 
\begin{align}
s_{1,N,i} &=
[\G_N \Ss_{0,N}]_i \nonumber \\ 
&= \left( \frac{1}{N} \sum_{j \neq i} d_{ij} \right)^{-1}
\frac{1}{N} \sum_{j \neq i} d_{ij}s_{0}(\x_j,a_j) \nonumber \\
&= \left( \frac{1}{N} \sum_{j \neq i} 
\mathbb{I} \left\{ g( t(\x_{2i},\x_{2j}), a_i, a_j )  \geq u_{ij} \right\}
\right)^{-1} \nonumber \\
& \quad \times \frac{1}{N} \sum_{j \neq i} 
\mathbb{I} \left\{ g( t(\x_{2i},\x_{2j}), a_i, a_j )  \geq u_{ij} \right\}
s_{0}(\x_j,a_j)  \nonumber \\ 
& \xrightarrow[]{p}
\frac{\int\int\int p(g( t(\x_{2i},\x_2), a_i, a)s_{0}(\x,a)\pi(\x,a)d\x da}
{ \int\int p(g( t(\x_{2i},\x_2), a_i, a)\pi(\x_2,a)d\x_2 da} \nonumber \\
&=\frac{\E[d_{ij} s_{0}(\x_j,a_j) | \x_i,a_i  ]}{\E[d_{ij} | \x_i,a_i  ]}
=:\tilde{s}_{1}(\x_i,a_i), \label{def.tilde.s1}
\end{align}
where $\pi(\x,a)$ with $\x=(\x_1,\x_2)$ is the joint density of $\x_i = (\x_{1i},\x_{2i})$ and $a_i$, and $\pi(\x_2,a)$ is the joint density of $(\x_{2i},a_i)$. 
Here note that the limit $\tilde{s}_{1}(\x_i,a_i)$ depends only on $(\x_i,a_i)$, not on $(\x_{-i},a_{-i})$, while $s_{1,N,i}$ depends on both $(\x_i,a_i )$ and $(\x_{-i},a_{-i})$.

We define the following recursively for $m=2,3,\cdots$ as follows:
\begin{align}
\tilde{s}_{m}(\x_i,a_i) 
&:= \frac{\E[d_{ij} \tilde{s}_{m-1}(\x_j,a_j) | \x_i,a_i  ]}{\E[d_{ij} | \x_i,a_i  ]} \label{def.tilde.sm} \\
&= \frac{\int\int p( g( t(\x_{2i},\x_2), a_i, a)\tilde{s}_{m-1}(\x,a)\pi(\x,a)d\x da}
{ \int\int p( t(\x_{2i},\x_2), a_i, a)\pi(\x_2,a)d\x_2da} \nonumber \\
&= \plim_{N \rightarrow \infty}
\left( \frac{1}{N} \sum_{j \neq i} d_{ij} \right)^{-1}
\frac{1}{N} \sum_{j \neq i} d_{ij}\tilde{s}_{m-1}(\x_j,a_j) \nonumber \\
& = \plim_{N \rightarrow \infty} [\G_N \tilde{\Ss}_{m-1}]_i, \nonumber
\end{align}
where $\tilde{\Ss}_{m} = (\tilde{s}_{m}(\x_1,a_1),...,\tilde{s}_{m}(\x_N,a_N)).$

%Let $s^{\x_1}_0(\x_i,a_i) = \x_{1i}$, $S_0^{\x_1}(\X_{N},\an)=(s^{\x_1}_0(\x_1,a_1),\ldots,s^{\x_1}_0(\x_N,a_N))'$, and define $s^a_0$ and $S^a_0$ analogously:
%$s^a_0(\x_i,a_i) = h^{\upsilon}(a_i)$, 
%$S_0^a(\X_N,\an)=(s^a_0(\x_1,a_1),\ldots,s^a_0(\x_N,a_N))'$. 

Using this general definitions of (\ref{def.tilde.s1}) and (\ref{def.tilde.sm}), with $\tilde{s}_0^{\x_1}(\x_i,a_i) = s^{\x_1}_0(\x_i,a_i) = \x_{1i}$ and $\tilde{s}_0^{a}(\x_i,a_i) = s^{a}_0(\x_i,a_i) = h(a_i)$, we define $\tilde{s}^{\x_1}_m(\x_i,a_i)$ and $\tilde{s}_m^{a}(\x_i,a_i)$, respectively, for $m=1,2,...$.
Let $\tilde{S}^{\x_1}_m =
(\tilde{s}^{\x_1}_{m}(\x_1,a_1),\ldots\tilde{s}^{\x_1}_{m}(\x_N,a_N))'.
$ and $ \tilde{S}^{a}_m = (\tilde{s}^{a}_{m}(\x_1,a_1),\ldots\tilde{s}^{a}_{m}(\x_N,a_N))' $.\\

Next, with the initial function $s^{\upsilon}_{0,N,i} = \eta_i^{\upsilon}$ and $\Ss_{0,N}^{\upsilon} := (s^{\eta}_{0,N,1},\ldots,s^{\eta}_{0,N,N})'$,  we define recursively
\begin{equation}
s^{\upsilon}_{m,N,i}: = [\G_N \Ss^{\upsilon}_{m-1,N}]_i = \sum_{j=1, \neq i}^N g_{ij} s^{\upsilon}_{m-1,N,i}, 
\label{def.s_eta_m N i}
\end{equation}
and $ \Ss^{\upsilon}_{m,N}:=(s^{\upsilon}_{m,N,1},\ldots, s^{\upsilon}_{m,N,N})' $ for $m=1,2,...$.

\begin{lemma}\label{lemma: limit of S^ZZ and S^WZ}  Under Assumptions \ref{as:basic} and \ref{assumption:limit.dist}, as $N \rightarrow \infty$, we have  
\begin{align*}
(a) & \frac{1}{N}\sum_{i=1}^N(\w_i-\h^{\w}(a_i))
(\z_i-\h^{\z}(a_i))' \\
&=: \begin{pmatrix}
\frac{1}{N}\sum_{i=1}^N
\bm{\eta}_i^{GY} (\bm{\eta}_i^{\x_1})'
&\frac{1}{N}\sum_{i=1}^N\bm{\eta}_i^{GY}(\bm{\eta}_i^{G \x_1})'
&\frac{1}{N}\sum_{i=1}^N\bm{\eta}_i^{GY}(\bm{\eta}_i^{G^2 \x_1})' \\
\frac{1}{N}\sum_{i=1}^N\bm{\eta}_i^{ \x_1}(\bm{\eta}_i^{ \x_1})'
&\frac{1}{N}\sum_{i=1}^N\bm{\eta}_i^{ \x_1}(\bm{\eta}_i^{G \x_1})'
&\frac{1}{N}\sum_{i=1}^N\bm{\eta}_i^{ \x_1}(\bm{\eta}_i^{G^2 \x_1})' \\
\frac{1}{N}\sum_{i=1}^N\bm{\eta}_i^{G \x_1} (\bm{\eta}_i^{\x_1})'
&\frac{1}{N}\sum_{i=1}^N\bm{\eta}_i^{G \x_1}(\bm{\eta}_i^{G \x_1})'
&\frac{1}{N}\sum_{i=1}^N\bm{\eta}_i^{G \x_1}
(\bm{\eta}_i^{G^2 \x_1})' \end{pmatrix} \\ 
& \xrightarrow{p} 
\begin{pmatrix}
S^{GY,\x_1} & S^{GY,G\x_1} & S^{GY,G^2\x_1} \\
S^{\x_1,\x_1} & S^{\x_1,G \x_1} & S^{\x_1, G^2 \x_1} \\
S^{G\x_1,\x_1} & S^{G\x_1,G\x_1} & S^{G\x_1,G^2\x_1} 
\end{pmatrix}
=: S^{\w\z},
\end{align*}

\begin{align*}
(b) \quad &\frac{1}{N}\sum_{i=1}^N(\z_i-\h^{\z}(a_i))
(\z_i-\h^{\z}(a_i))' \\
&=: \begin{pmatrix}
\frac{1}{N}\sum_{i=1}^N
\bm{\eta}_i^{\x_1} (\bm{\eta}_i^{\x_1})'
&\frac{1}{N}\sum_{i=1}^N\bm{\eta}_i^{\x_1}(\bm{\eta}_i^{G \x_1})'
&\frac{1}{N}\sum_{i=1}^N\bm{\eta}_i^{\x_1}(\bm{\eta}_i^{G^2 \x_1})' \\
\frac{1}{N}\sum_{i=1}^N\bm{\eta}_i^{ G \x_1}(\bm{\eta}_i^{ \x_1})'
&\frac{1}{N}\sum_{i=1}^N\bm{\eta}_i^{ G \x_1}(\bm{\eta}_i^{\x_1})'
&\frac{1}{N}\sum_{i=1}^N\bm{\eta}_i^{ G \x_1}(\bm{\eta}_i^{G \x_1})' \\
\frac{1}{N}\sum_{i=1}^N\bm{\eta}_i^{G^2 \x_1} (\bm{\eta}_i^{ G^2 \x_1})'
&\frac{1}{N}\sum_{i=1}^N\bm{\eta}_i^{G^2  \x_1}(\bm{\eta}_i^{ \x_1})'
&\frac{1}{N}\sum_{i=1}^N\bm{\eta}_i^{G^2 \x_1} (\bm{\eta}_i^{G^2 \x_1})' \end{pmatrix} \\ 
& \xrightarrow{p} 
\begin{pmatrix}
S^{\x_1,\x_1} & S^{\x_1,G\x_1} & S^{\x_1,G^2\x_1} \\
S^{G\x_1,\x_1} & S^{G\x_1,G \x_1} & S^{G\x_1 G^2 \x_1} \\
S^{G^2\x_1,\x_1} & S^{G^2\x_1,G\x_1} & S^{G^2\x_1,G^2\x_1} 
\end{pmatrix}
=: S^{\z\z},
\end{align*}
where
\begin{align*}
S^{GY,G^r\x_1} &=\E\left[\left(
\sum_{m=0}^\infty \beta_{2}^{0'}\tilde{\tilde{s}}^{\x_1}_{m}(\x_i,a_i) +\beta_3^{0'} \tilde{\tilde{s}}^{\x_1}_{m+1}(\x_i,a_i)+\tilde{\tilde{s}}^a_{m}(\x_i,a_i)\right)\left(\tilde{\tilde{s}}^{\x_1}_r(\x_i,a_i)\right)'
\right],\ r=0,1,2 \\
S^{G^r \x_1,G^s\x_1} &= \E\left[\tilde{\tilde{s}}^{\x_1}_r(\x_i,a_i))\left(\tilde{\tilde{s}}^{\x_1}_s(\x_i,a_i)\right)'
\right],\ r,s=0,1,2 \\
\tilde{\tilde{s}}_{m}^{\x_1}(\x_i,a_i) &= \tilde{s}_{m}^{\x_1}(\x_i,a_i)-\E[\tilde{s}_{m}^{\x_1}(\x_i,a_i)|a_i]) \quad \text{with} \quad
\tilde{s}^{\x_1}_0(\x_i,a_i) = \x_{1i} \\
\tilde{\tilde{s}}_{m}^{a}(\x_i,a_i) &= \tilde{s}_{m}^{a}(\x_i,a_i)-\E[\tilde{s}_{m}^{a}(\x_i,a_i)|a_i]) \quad \text{with} \quad
\tilde{s}^{a}_0(\x_i,a_i) = h^{\upsilon}(a_i).
\end{align*}
and $\tilde{\tilde{s}}_{m}^{\x_1}(\x_i,a_i)$ and $\tilde{\tilde{s}}_{m}^{a}(\x_i,a_i)$ are defined recursively as in (\ref{def.tilde.sm}).
\end{lemma}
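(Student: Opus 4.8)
The plan is to convert a sum of strongly dependent summands into an average of i.i.d.\ terms. Although each summand $\bm{\eta}_i^{\w}(\bm{\eta}_i^{\z})'$ is highly dependent across $i$ — every network regressor such as $[\G_N\X_{1N}]_i$, $[\G_N^2\X_{1N}]_i$ and $[\G_N\y_N]_i$ averages over all $N$ nodes — the key fact is that, for fixed $i$, each of these network averages converges to a deterministic function of $(\x_i,a_i)$ \emph{alone}, namely the $\tilde{s}_m(\x_i,a_i)$ of (\ref{def.tilde.sm}). Once the network regressors are replaced by these limits, the summands become functions of the i.i.d.\ triples $(\x_i,a_i,\upsilon_i)$ and an ordinary weak law of large numbers applies.

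First I would handle the endogenous peer term. Since $|\beta_1^0|\le 1-\epsilon$ and $\G_N$ is row-stochastic so that $\|\G_N\|_o\le 1$, the Neumann expansion
\[
\G_N\y_N=\sum_{m=0}^{\infty}(\beta_1^0)^m\,\G_N^{m+1}\big(\X_{1N}\beta_2^0+\G_N\X_{1N}\beta_3^0+\bupsilon_N\big)
\]
converges, so $[\G_N\y_N]_i$ is a geometrically weighted series in the iterated averages $[\G_N^m\X_{1N}]_i$ and $[\G_N^m\bupsilon_N]_i$; matching each term with its limit reproduces, after centering on $a_i$, the infinite series that defines $S^{GY,G^r\x_1}$.

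Next I would establish the core convergence: for a bounded initial function $s_0$, the iterated average $[\G_N^m\Ss_{0,N}]_i$ converges to $\tilde{s}_m(\x_i,a_i)$ in mean square, uniformly in $i$, at a rate of order $N^{-1/2}$. The engine is a conditional law of large numbers: conditional on $(\x_i,a_i)$ the summands $d_{ij,N}\,\tilde{s}_{m-1}(\x_j,a_j)$ are i.i.d.\ across $j$ and bounded (Assumptions \ref{as:basic}(i),(ii) and \ref{assumption:limit.dist}(iii),(v)), while the normalizing degree $\frac1N\sum_{j\ne i}d_{ij,N}$ is bounded away from zero by density (Assumption \ref{assumption:limit.dist}(iv),(v)); this is run as an induction on $m$, at each step replacing $\Ss_{m-1,N}$ by $\tilde{\Ss}_{m-1}$ and propagating the error through $\|\G_N\|_o\le1$. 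For the disturbance chain I would argue separately that the idiosyncratic deviation $\upsilon_j-h^\upsilon(a_j)$, being mean zero and independent across $j$, is annihilated in the limit, so the relevant initial function is $h^\upsilon(a_i)$ and the limit is $\tilde{s}_m^{a}(\x_i,a_i)$. Writing each $a_i$-centered residual as its limit $\tilde{\tilde s}$ plus an error, multiplying out, and using Cauchy--Schwarz together with $\frac1N\sum_i\E\|\text{error}_i\|^2\to0$, the cross terms and error--error terms vanish; the leading term is an average of i.i.d.\ products of the $\tilde{\tilde s}_r^{\x_1}$ (and, for the first block, the Neumann series), which converges by the i.i.d.\ WLLN to the expectations listed as $S^{G^r\x_1,G^s\x_1}$ and $S^{GY,G^r\x_1}$. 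Part (b) is the same argument with the $\G_N\y_N$ block deleted.

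The hard part will be the uniform, mean-square control of the iterated averages $[\G_N^m\cdot]_i$ in the induction. Because the network is dense, $\Ss_{m-1,N}$ is itself an intricate function of the whole adjacency matrix, so the conditioning that makes the $j$-summands independent must be set up afresh at each iteration, and the accumulated approximation error must be shown to grow only geometrically in $m$ so that both the infinite Neumann sum and the final averaging over $i$ survive. Establishing these error bounds uniformly enough that \emph{products} and \emph{sums} of residuals converge — not merely each residual on its own — together with the interchange of the infinite sum and the limit in $N$, is the crux of the argument.
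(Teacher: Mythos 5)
Your proposal matches the paper's own proof in all essentials: a Neumann expansion of $\G_N\y_N$ under $|\beta_1^0|<1$, an induction on $m$ showing the iterated averages $s_{m,N,i}=[\G_N^m\Ss_{0,N}]_i$ converge uniformly in $i$ to $\tilde{s}_m(\x_i,a_i)$ via a law of large numbers conditional on $(\x_i,a_i)$ with moment bounds (the paper's Lemmas \ref{lemma plim s uniform in i} and \ref{lemma plim s_eta uniform in i}), geometric truncation of the infinite series using uniform boundedness (Lemma \ref{lemma:sum_m beta s_miN - stilde_mi}), annihilation of the chain started at $\eta_i^{\upsilon}$ together with the zero limit of the term pairing $\eta_i^{\upsilon}$ with functions of $(\x_i,a_i)$, and a final i.i.d.\ WLLN for products of the $\tilde{\tilde{s}}$ functions. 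One small technical note: the error propagation in your induction should be justified by row-stochasticity, $\sum_{j\neq i}g_{ij,N}=1$, which gives $\sup_i\bigl|\sum_{j\neq i}g_{ij,N}e_j\bigr|\leq\sup_j|e_j|$ as in the paper, rather than by the bound $\|\G_N\|_o\leq 1$, since the spectral norm of a non-symmetric row-normalized matrix need not be bounded by one.
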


\noindent {\bf {Proof}}

We take the element $\frac{1}{N}\sum_{i=1}^N\bm{\eta}_i^{GY}(\bm{\eta}_i^{G^2 \x_1})'$ as an example. The proofs of the rest are similar and we omit them.  

When $|\beta_1^0| < 1$,
\[
\G_N \y_N = \sum_{m=0}^{\infty} (\beta_1^0)^m \G_N^{m} 
	(\X_{1N} \beta_2^0 + \G_N \X_{1N} \beta_3^0 + \h^{\upsilon}(\an) + \bm{\eta}^{\upsilon}_N),	
\]
and 
\begin{align*}
&[\G_N \y_N]_i \\
&= \beta_2^{0\prime} \left[\sum_{m=0}^{\infty} (\beta_1^0)^m \G_N^{m} \X_{1N} \right]_i
+ \beta_3^{0\prime} \left[\sum_{m=0}^{\infty} (\beta_1^0)^m \G_N^{m+1} 
 \X_{1N} \right]_i \\
& \quad + \left[\sum_{m=0}^{\infty} (\beta_1^0)^m \G_N^{m} \h(\an) \right]_i
+ \left[\sum_{m=0}^{\infty} (\beta_1^0)^m \G_N^{m} \bm{\eta}^{v}_N \right]_i.
\end{align*}

Set $s^{\x_1}_{0}(\x_i,a_i) = \tilde{s}^{\x_1}_{0}(\x_i,a_i) = \x_{1i}$. We have
\begin{equation*}
\begin{split}
&\frac{1}{N}\sum_{i=1}^N\bm{\eta}_i^{GY}(\bm{\eta}_i^{G^2 \x_1})' \\
&=\frac{1}{N}\sum_{i=1}^N
\left(
[\G_N \y_N]_i-\E\{[\G_N \y_N]_i|a_i\}
\right)
\left(
[\G_N^2\X_{1N}]_i-\E\{[\G_N^2\X_{1N}]_i|a_i\}
\right)'\\
&=\frac{1}{N}\sum_{i=1}^N
\left(
\beta_2^{0'} \sum_{m=0}^\infty (\beta_1^0)^m
\left\{
s^{\x_1}_{m,N,i}
-\E[s^{\x_1}_{m,N,i}|a_i]
\right\}
\right)
\left(
s^{\x_1}_{2,N,i} - \E[s^{\x_1}_{2,N,i}|a_i]
\right)'\\
&+
\frac{1}{N}\sum_{i=1}^N
\left(
\beta_3^{0'}\sum_{m=0}^\infty (\beta_1^0)^m
\left\{
s^{\x_1}_{m+1,N,i}
-\E[s^{\x_1}_{m+1,N,i}|a_i]
\right\}
\right)
\left(
s^{\x_1}_{2,N,i} - \E[s^{\x_1}_{2,N,i}|a_i]
\right)'\\
&+\frac{1}{N}\sum_{i=1}^N
\left(
\beta_2^{0'}\sum_{m=0}^\infty (\beta_1^0)^m
\left\{
s^a_{m,N,i}
-\E[s^a_{m,N,i}|a_i]
\right\}
\right)
\left(
s^{\x_1}_{2,N,i} - \E[s^{\x_1}_{2,N,i}|a_i]
\right)' \\
&+\frac{1}{N}\sum_{i=1}^N
\left(
\beta_2^{0'}\sum_{m=0}^\infty (\beta_1^0)^m
\left\{
s^{\upsilon}_{m,N,i}-\E[s^{\upsilon}_{m,N,i}|a_i]
\right\}
\right)
\left(
s^{\x_1}_{2,N,i} - \E[s^{\x_1}_{2,N,i}|a_i]
\right)' \\
& = I + II + III + IV, \quad say.
\end{split}
\end{equation*}
Consider term $I$,
\begin{equation*}
\begin{split}
&\frac{1}{N}\sum_{i=1}^N
\left(
\beta_2^{0'} \sum_{m=0}^\infty (\beta_1^0)^m
\left\{
s^{\x_1}_{m,N,i}
-\E[s^{\x_1}_{m,N,i}|a_i]
\right\}
\right)
\left(
s^{\x_1}_{2,N,i} - \E[s^{\x_1}_{2,N,i}|a_i]
\right)'.
\end{split}
\end{equation*}

Denote
\begin{align*}
A_{1i} &:= \beta_2^{0'} \sum_{m=0}^\infty (\beta_1^0)^m
\left\{
s^{\x_1}_{m,N,i}
-\E[s^{\x_1}_{m,N,i}|a_i]
\right\} \\
A_{2i} &:= s^{\x_1}_{2,N,i} - \E[s^{\x_1}_{2,N,i}|a_i] \\
A_{3i} &:= \sum_{m=0}^\infty (\beta_1^0)^m
\left\{
s^{\upsilon}_{m,N,i}-\E[s^{\upsilon}_{m,N,i}|a_i]
\right\} \\
B_{1i} &:=\beta_2^{0'} \sum_{m=0}^\infty (\beta_1^0)^m
\left\{
\tilde{s}^{\x_1}_{m}(\x_i,a_i)
-\E[\tilde{s}^{\x_1}_{m}(\x_i,a_i)|a_i]
\right\} \\ 
B_{2i} &:= \tilde{s}^{\x_1}_{2}(\x_i,a_i) - \E[\tilde{s}^{\x_1}_{2}(\x_i,a_i)|a_i] \\
B_{3i} &:= \eta^{\upsilon}_i = \upsilon_i - \E[\upsilon|a_i].
\end{align*}
First, notice that
\begin{align}
\left\| \frac{1}{N} \sum_{i=1}^N A_{1i} A_{2i}' - \frac{1}{N}\sum_{i=1}^N B_{1i} B_{2i}' \right\| 
&= \left\| \frac{1}{N}\sum_{i=1}^N (A_{1i} - B_{1i}) A_{2i}' + \frac{1}{N} \sum_{i=1}^N B_{1i} (A_{2i} - B_{2i})' \right\| \nonumber \\
&\leq \left\|  \frac{1}{N}\sum_{i=1}^N (A_{1i} - B_{1i}) A_{2i}' \right\| + \left\| \frac{1}{N} \sum_{i=1}^N B_{1i} (A_{2i} - B_{2i})' \right\| \nonumber \\
&\leq \sup_i \|A_{1i} - B_{1i} \| \sup_i \| A_{2i} \| + \sup_i \| B_{1i} \| \sup_i \| A_{2i} - B_{2i} \| \label{eq.ap.A1A2-B1B2}
\end{align}
According to Lemma \ref {lemma:sum_m beta s_miN - stilde_mi} and Lemma \ref{lemma plim s uniform in i}, we have
\[
\sup_i \| A_{1i} - B_{1i} \| = o_p(1), \quad \sup_i \| A_{2i} - B_{2i}\| = o_p(1).
\]
Also, under Assumption \ref{assumption:limit.dist}, $\sup_i \| A_{2i} \|$ and $\sup_i \| B_{1i} \|$ are bounded by a finite constant. Therefore, we deduce that
\[
I = \frac{1}{N}\sum_{i=1}^N B_{1i} B_{2i}' + o_p(1).
\]
Then, we apply the WLLN to $\frac{1}{N}\sum_{i=1}^N B_{1i} B_{2i}'$ and deduce
\begin{align*}
\frac{1}{N}\sum_{i=1}^N B_{1i} B_{2i}' 
&\xrightarrow{p} \E\left[ B_{1i} B_{2i}' \right] \\
& = \E \left[ 
\left( \beta_2^{0'} \sum_{m=0}^\infty (\beta_1^0)^m
\left\{
\tilde{s}^{\x_1}_{m}(\x_i,a_i)
-\E[\tilde{s}^{\x_1}_{m}(\x_i,a_i)|a_i]
\right\} \right)
\left( 
\tilde{s}^{\x_1}_{2}(\x_i,a_i) - \E[s^{\x_1}_{2}(\x_i,a_i)|a_i]
\right)
\right] \\
& = \E \left[ 
\left( \beta_2^{0'} \sum_{m=0}^\infty (\beta_1^0)^m
\tilde{\tilde{s}}^{\x_1}_{m}(\x_i,a_i)
\right)
\tilde{\tilde{s}}^{\x_1}_{2}(\x_i,a_i) 
\right]
\end{align*}

We can derive the probability limits of terms $II$ and $III$ by similar fashion.

For term $IV$, first notice that for each $m=0,1,2,...$,
\begin{align*}
\E[s^{\upsilon}_{m,N,i}|a_i] &= \E\left(  [\G^m_N \bm\eta^{\upsilon}_N]_i |a_i \right) \\
&= \E \left\{ \E\left(  [\G^m_N \bm\eta^{\upsilon}_N]_i |\X_N,\D_N,a_i \right) | a_i  \right\} \\
&= \E \left\{ [ \G^m_N \E(  \bm\eta^{\upsilon}_N |\X_N,\D_N, a_i ) ]_i | a_i  \right\} = 0,
\end{align*}
where the last equality holds by Lemma \ref{lemma:control.function}.
Then, $A_{3i} := \sum_{m=0}^\infty (\beta_1^0)^m s^{\upsilon}_{m,N,i}.$ 

Similar to the bound in (\ref{eq.ap.A1A2-B1B2}), notice that
\begin{align*}
\left\| \frac{1}{N} \sum_{i=1}^N A_{3i} A_{2i}' - \frac{1}{N}\sum_{i=1}^N B_{3i} B_{2i}' \right\| 
&\leq \sup_i \|A_{3i} - B_{3i} \| \sup_i \| A_{2i} \| + \sup_i \| B_{3i} \| \sup_i \| A_{2i} - B_{2i} \|.
\end{align*}

According to Lemma \ref{lemma:sum_m beta s_miN - stilde_mi} and Lemma \ref {lemma plim s uniform in i}, 
\[
\sup_i \| A_{3i} - B_{3i} \| = o_p(1), \quad \sup_i \| A_{2i} - B_{2i}\| = o_p(1).
\]
Also, under Assumption \ref{assumption:limit.dist}, $\sup_i \| A_{2i} \|$ and $\sup_i \| B_{3i} \|$ are bounded by a finite constant. Therefore, we deduce that
\[
IV = \frac{1}{N}\sum_{i=1}^N B_{3i} B_{2i}' + o_p(1).
\]
Then, we apply the WLLN to $\frac{1}{N}\sum_{i=1}^N B_{3i} B_{2i}'$ and deduce
\begin{align*}
\frac{1}{N}\sum_{i=1}^N B_{3i} B_{2i}' 
&\xrightarrow{p} \E\left[ B_{3i} B_{2i}' \right] \\
& = \E \left[ 
\eta^{a}_i
\left( 
\tilde{s}^{\x_1}_{2}(\x_i,a_i) - \E[s^{\x_1}_{2}(\x_i,a_i)|a_i]
\right)
\right] \\
& = \E \left[ 
( \upsilon_i - \E[\upsilon_i|a_i] )
\tilde{\tilde{s}}^{\x_1}_{2}(\x_i,a_i) 
\right] \\
& = \E \left\{ \E \left(  \upsilon_i - \E[\upsilon_i|a_i] | \x_i,a_i \right) \tilde{\tilde{s}}^{\x_1}_{2}(\x_i,a_i)  \right\} \\
& = 0.
\end{align*}
$\square$

Let $\sigma^2(\x_i,a_i) := \E[ (\eta^{\upsilon}_i)^2 | \x_i,a_i ] = \mathbb{E}[(\upsilon_i - \E[ \upsilon_i|a_i])^2| \x_i,a_i].$ 

\begin{lemma} \label{lemma:numerator.limit variance} Under Assumptions \ref{as:basic} and \ref{assumption:limit.dist}, as $N \rightarrow \infty$, we have 
	\[ 
	\frac{1}{N}\sum_{i=1}^N (\z_i-\h^{\z}(a_i))(\z_i-\h^{\z}(a_i))' \sigma^2(\x_i,a_i)
	\xrightarrow{p} \mathbf{S}^{\z \z \sigma},
	\]
	where the limit variance $\mathbf{S}^{\z \z \sigma}$ is defined in Lemma \ref{lemma: CLT}.	
\end{lemma}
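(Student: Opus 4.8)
The plan is to follow the template of the proof of Lemma~\ref{lemma: limit of S^ZZ and S^WZ}(b), since the target quantity differs from $\frac{1}{N}\sum_{i=1}^N(\z_i-\h^{\z}(a_i))(\z_i-\h^{\z}(a_i))'$ only by the scalar weight $\sigma^2(\x_i,a_i)$. The first observation I would record is that this weight is bounded uniformly in $i$: from the outcome equation (\ref{model:outcome}) we have $\upsilon_i = y_i - \w_i'\beta^0$, and since $(y_i,\x_i)$ are bounded (Assumption~\ref{assumption:limit.dist}(iii)), the row-normalized peer averages entering $\w_i$ are convex combinations of bounded quantities and hence bounded, and $\beta^0$ lies in a bounded set (Assumption~\ref{assumption:limit.dist}(ii)), it follows that $\upsilon_i$ and hence $\eta^{\upsilon}_i = \upsilon_i - \E[\upsilon_i|a_i]$ are bounded. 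Therefore $\sup_i |\sigma^2(\x_i,a_i)| \leq M$ for a finite constant $M$.

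Next I would set up the same decomposition as in Lemma~\ref{lemma: limit of S^ZZ and S^WZ}. Writing the three blocks of $\z_i - \h^{\z}(a_i)$ in terms of the centered iterates $s^{\x_1}_{m,N,i} - \E[s^{\x_1}_{m,N,i}|a_i]$ for $m=0,1,2$, denote by $A_i := \z_i - \h^{\z}(a_i)$ the finite-$N$ object and by $B_i := \tilde{\z}_i - \E[\tilde{\z}_i|a_i]$ its probability limit, whose blocks are $\x_{1i}$, $\tilde{s}^{\x_1}_1(\x_i,a_i)$, and $\tilde{s}^{\x_1}_2(\x_i,a_i)$ centered at their conditional means given $a_i$, as defined in (\ref{def.tilde.s1})--(\ref{def.tilde.sm}). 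The key structural point is that both $B_i$ and $\sigma^2(\x_i,a_i)$ depend on $(\x_i,a_i)$ only, and $(\x_i,a_i)$ are i.i.d.\ across $i$ by Assumption~\ref{as:basic}(i).

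The main step is to show that replacing $A_i$ by $B_i$ inside the weighted average is negligible. Using the submultiplicative bound
\[
\left\| \frac{1}{N}\sum_{i=1}^N \left( A_i A_i' - B_i B_i' \right)\sigma^2(\x_i,a_i) \right\|
\leq \sup_i |\sigma^2(\x_i,a_i)| \cdot \left\| \frac{1}{N}\sum_{i=1}^N \left( A_i A_i' - B_i B_i' \right) \right\|,
\]
the right-hand factor $\left\| \frac{1}{N}\sum_{i=1}^N (A_i A_i' - B_i B_i') \right\|$ is exactly the quantity shown to be $o_p(1)$ inside the proof of Lemma~\ref{lemma: limit of S^ZZ and S^WZ}, via the triangle-inequality bound $\|A_i A_i' - B_i B_i'\| \leq \sup_i\|A_i - B_i\|\sup_i\|A_i\| + \sup_i\|B_i\|\sup_i\|A_i - B_i\|$ together with Lemmas~\ref{lemma:sum_m beta s_miN - stilde_mi} and~\ref{lemma plim s uniform in i} (which give $\sup_i\|A_i-B_i\| = o_p(1)$) and the uniform boundedness of $\sup_i\|A_i\|,\sup_i\|B_i\|$ under Assumption~\ref{assumption:limit.dist}. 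Multiplying by the bounded weight $\sigma^2(\x_i,a_i)$ preserves the $o_p(1)$ rate. Finally I would apply the WLLN to the i.i.d.\ summands to obtain
\[
\frac{1}{N}\sum_{i=1}^N B_i B_i' \sigma^2(\x_i,a_i) \xrightarrow{p} \E\left[ B_i B_i' \sigma^2(\x_i,a_i) \right] =: \mathbf{S}^{\z\z\sigma},
\]
the expression recorded in Lemma~\ref{lemma: CLT}, and combine it with the negligible replacement error to conclude.

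The main obstacle is not specific to this lemma but is inherited from the network structure: establishing the uniform-in-$i$ convergence $\sup_i\|A_i - B_i\| = o_p(1)$ of the network-averaged iterates $[\G_N\X_{1N}]_i$ and $[\G_N^2\X_{1N}]_i$ to functions of $(\x_i,a_i)$ alone, despite the strong cross-sectional dependence induced by the dense network. This is precisely the content of the supporting Lemmas~\ref{lemma:sum_m beta s_miN - stilde_mi} and~\ref{lemma plim s uniform in i}, which I would invoke directly; conditional on them, the only genuinely new ingredient here---the weight $\sigma^2(\x_i,a_i)$---is handled entirely by its boundedness.
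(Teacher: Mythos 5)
Your proof is correct and takes essentially the same route the paper intends: the paper omits this proof entirely, stating only that it is ``similar to that of the results in Lemma \ref{lemma: limit of S^ZZ and S^WZ},'' and your argument---uniform boundedness of $\sigma^2(\x_i,a_i)$ via the boundedness of $\upsilon_i$ under Assumption \ref{assumption:limit.dist}, the replacement bound $\|A_iA_i'-B_iB_i'\|\leq \sup_i\|A_i-B_i\|\sup_i\|A_i\|+\sup_i\|B_i\|\sup_i\|A_i-B_i\|$ combined with Lemmas \ref{lemma plim s uniform in i} and \ref{lemma:sum_m beta s_miN - stilde_mi}, and the WLLN applied to the i.i.d.\ functions $B_iB_i'\sigma^2(\x_i,a_i)$ of $(\x_i,a_i)$---is exactly the omitted adaptation of that earlier proof. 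No gaps.
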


\noindent {\bf Proof} 

The proof is similar to that of the results in Lemma \ref{lemma: limit of S^ZZ and S^WZ} and we omit it.
$\square$

\begin{lemma} \label{lemma: CLT} Under Assumptions \ref{as:basic} and \ref{assumption:limit.dist}, as $N \rightarrow \infty$, we have 
\[ 
	\frac{1}{\sqrt{N}}\sum_{i=1}^N(\z_i-\h^{\z}(a_i))
	\eta^{\upsilon}_i 
	\Rightarrow \mathcal{N}(0,\mathbf{S}^{\z \z \sigma}),
\]
where
\begin{align*}
&\mathbf{S}^{\z\z\sigma}= 
\begin{pmatrix}
S^{\x_1\x_1\sigma} & S^{\x_1 G\x_1 \sigma} & S^{\x_1 G^2\x_1 \sigma} \\
S^{G\x_1 \x_1 \sigma} & S^{G\x_1 G \x_1 \sigma} & S^{G\x_1 G^2 \x_1 \sigma} \\
S^{G^2\x_1 \x_1 \sigma} & S^{G^2\x_1 G\x_1 \sigma} & S^{G^2\x_1 G^2\x_1 \sigma} 
\end{pmatrix}
\end{align*}
and 
\begin{align*}
S^{G^r \x_1 G^s\x_1 \sigma} &= \E\left[\tilde{\tilde{s}}^{\x_1}_r(\x_i,a_i))\left(\tilde{\tilde{s}}^{\x_1}_s(\x_i,a_i)\right)' \sigma^2(\x_i,a_i)
\right],\ r,s=0,1,2 \\
\tilde{\tilde{s}}_{m}^{\x_1}(\x_i,a_i) &= \tilde{s}_{m}^{\x_1}(\x_i,a_i)-\E[\tilde{s}_{m}^{\x_1}(\x_i,a_i)|a_i]) \quad \text{with} \quad
\tilde{s}^{\x_1}_0(\x_i,a_i) = \x_{1i} \\
\sigma^2(\x_i,a_i) &:= \E[ (\eta^{\upsilon}_i)^2 | \x_i,a_i ] = \mathbb{E}[(\upsilon_i - \E[ \upsilon_i|a_i])^2| \x_i,a_i],
\end{align*}
where $\tilde{\tilde{s}}_{m}^{\x_1}(\x_i,a_i)$ is defined recursively as in (\ref{def.tilde.sm}).
\end{lemma}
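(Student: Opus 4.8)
The plan is to reduce the statement to a classical central limit theorem for an i.i.d.\ sum. The obstacle is that the summands $(\z_i-\h^{\z}(a_i))\eta^{\upsilon}_i$ are strongly dependent across $i$, because $\z_i$ contains the network-weighted terms $[\G_N\X_{1N}]_i$ and $[\G_N^2\X_{1N}]_i$, which depend on the entire dense network. Following the device already used in Lemma~\ref{lemma: limit of S^ZZ and S^WZ}, I would first replace $\z_i-\h^{\z}(a_i)$ by its probability limit, namely the stacked vector $\tilde{\tilde{\z}}_i := \bigl(\tilde{\tilde{s}}^{\x_1}_0(\x_i,a_i)',\,\tilde{\tilde{s}}^{\x_1}_1(\x_i,a_i)',\,\tilde{\tilde{s}}^{\x_1}_2(\x_i,a_i)'\bigr)'$, which depends only on $(\x_i,a_i)$ and is therefore i.i.d.\ across $i$ under Assumption~\ref{as:basic}(i). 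Granting this replacement in the $1/\sqrt{N}$-scaled sum, the conclusion will follow from the Lindeberg--L\'evy CLT applied to $\frac{1}{\sqrt{N}}\sum_i\tilde{\tilde{\z}}_i\eta^{\upsilon}_i$.

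The crux is to establish
\[
\frac{1}{\sqrt{N}}\sum_{i=1}^N\bigl[(\z_i-\h^{\z}(a_i))-\tilde{\tilde{\z}}_i\bigr]\eta^{\upsilon}_i = o_p(1).
\]
Writing $r_i := (\z_i-\h^{\z}(a_i))-\tilde{\tilde{\z}}_i$, I would condition on $\mathcal{F}:=\sigma(\X_N,\D_N,\an)$, with respect to which every $r_i$ is measurable. Extending the control-function computation in (\ref{eq.control.function}) from $a_i$ to the full vector $\an$ --- using that $\upsilon_i$ is independent of $(\x_{-i},\mathbf{a}_{-i},\{u_{ij}\})$ given $(\x_i,a_i)$ --- gives $\E[\eta^{\upsilon}_i\,|\,\mathcal{F}]=0$; moreover, since the $\upsilon_i$ are i.i.d.\ and independent of the link shocks $\{u_{ij}\}$, they stay conditionally independent given $\mathcal{F}$, so $\E[\eta^{\upsilon}_i\eta^{\upsilon}_j\,|\,\mathcal{F}]=0$ for $i\neq j$. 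The conditional second moment of the remainder then collapses to its diagonal,
\[
\E\!\left[\Bigl\|\tfrac{1}{\sqrt{N}}\textstyle\sum_i r_i\eta^{\upsilon}_i\Bigr\|^2\,\Big|\,\mathcal{F}\right]
= \frac{1}{N}\sum_{i=1}^N\|r_i\|^2\,\E[(\eta^{\upsilon}_i)^2\,|\,\mathcal{F}]
\leq M\,\frac{1}{N}\sum_{i=1}^N\|r_i\|^2,
\]
the conditional variance being bounded by Assumption~\ref{assumption:limit.dist}(iii). Since $\sup_i\|r_i\|=o_p(1)$ by the uniform-convergence results underlying Lemma~\ref{lemma: limit of S^ZZ and S^WZ} (Lemmas~\ref{lemma plim s uniform in i} and~\ref{lemma:sum_m beta s_miN - stilde_mi}), the right-hand side is $o_p(1)$, and conditional Markov's inequality delivers the claim.

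It remains to apply the CLT to $\frac{1}{\sqrt{N}}\sum_i\tilde{\tilde{\z}}_i\eta^{\upsilon}_i$, whose summands are i.i.d.\ functions of $(\x_i,a_i,\upsilon_i)$. Their mean vanishes because $\E[\tilde{\tilde{\z}}_i\eta^{\upsilon}_i]=\E[\tilde{\tilde{\z}}_i\,\E(\eta^{\upsilon}_i\,|\,\x_i,a_i)]=0$, the inner conditional expectation being zero by Assumption~\ref{as:basic}(iii). By the law of iterated expectations their covariance is
\[
\E[\tilde{\tilde{\z}}_i\tilde{\tilde{\z}}_i'(\eta^{\upsilon}_i)^2]
= \E[\tilde{\tilde{\z}}_i\tilde{\tilde{\z}}_i'\,\sigma^2(\x_i,a_i)]
= \mathbf{S}^{\z\z\sigma},
\]
which is finite by the boundedness conditions of Assumption~\ref{assumption:limit.dist}. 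Lindeberg--L\'evy then yields $\frac{1}{\sqrt{N}}\sum_i\tilde{\tilde{\z}}_i\eta^{\upsilon}_i\Rightarrow\mathcal{N}(0,\mathbf{S}^{\z\z\sigma})$, and combining with the approximation step completes the argument. I expect the approximation step to be the main obstacle: unlike the $1/N$-scaled averages of Lemma~\ref{lemma: limit of S^ZZ and S^WZ}, the $1/\sqrt{N}$ scaling leaves no room to spare, and it is precisely the conditional uncorrelatedness of the $\eta^{\upsilon}_i$ given the network that converts an a priori $O_p(\sqrt{N})$ error into an $o_p(1)$ one.
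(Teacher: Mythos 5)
Your proposal is correct, but it takes a genuinely different route from the paper's own proof. The paper never replaces $\z_i-\h^{\z}(a_i)$ by its i.i.d.\ limit before invoking a CLT: it observes that, with respect to the filtration $\mathcal{F}_i=(\X_{1N},\D_N,a_i,\eta^{\upsilon}_1,\ldots,\eta^{\upsilon}_{i-1})$, the \emph{exact} summands $(\z_i-\h^{\z}(a_i))\eta^{\upsilon}_i$ form a martingale difference array --- the conditional mean-zero property being precisely the extended control-function fact you derive from (\ref{eq.control.function}) and Assumption \ref{as:basic} --- and then applies the martingale CLT of \cite{HallHeyde2014} (Corollary 3.1) after verifying (i) the conditional Lindeberg condition, which is immediate from the boundedness in Assumption \ref{assumption:limit.dist}, and (ii) convergence of the sum of conditional variances $\frac{1}{N}\sum_i(\z_i-\h^{\z}(a_i))(\z_i-\h^{\z}(a_i))'\sigma^2(\x_i,a_i)$ to $\mathbf{S}^{\z\z\sigma}$ via Lemma \ref{lemma:numerator.limit variance}, with a Cram\'er--Wold reduction through the vector $\ell$. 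In the paper, all of the work involving the limit functions $\tilde{\tilde{s}}^{\x_1}_m$ is thereby confined to a $1/N$-scaled average, exactly as in Lemma \ref{lemma: limit of S^ZZ and S^WZ}, and no $\sqrt{N}$-scale approximation is ever needed. Your decomposition instead pays for the elementary Lindeberg--L\'evy conclusion by establishing the remainder bound $N^{-1/2}\sum_i r_i\eta^{\upsilon}_i=o_p(1)$, and your mechanism for it --- conditioning on $\sigma(\X_N,\D_N,\an)$, killing the cross terms by conditional uncorrelatedness of the $\eta^{\upsilon}_i$, and collapsing the conditional second moment to $\frac{1}{N}\sum_i\|r_i\|^2\leq\sup_i\|r_i\|^2=o_p(1)$ by Lemmas \ref{lemma plim s uniform in i} and \ref{lemma:sum_m beta s_miN - stilde_mi} --- is sound; it is the same orthogonality the paper encodes summand-by-summand in its filtration, deployed once at the $\sqrt{N}$ scale, and your diagnosis that this uncorrelatedness is what converts an a priori $O_p(\sqrt{N})$ error into $o_p(1)$ is exactly right. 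What each route buys: yours makes the i.i.d.\ structure of the limit transparent and needs only the classical CLT, at the cost of an extra $\sqrt{N}$-scale lemma; the paper's avoids that lemma entirely and is more robust as a template (the same MDS argument recurs for $\bar{\beta}_{2SLS}$ in Lemma \ref{lemma: CLT.alternative}). Two small points to tighten in yours: conditional independence of the $\upsilon_i$'s is most cleanly asserted given the larger field $\sigma(\X_N,\an,\{u_{ij}\})$ and then passed down to $\E[\eta^{\upsilon}_i\eta^{\upsilon}_j\,|\,\mathcal{F}]=0$ by the tower property (which is all you use); and since $\h^{\z}(a_i)=\E[\z_i|a_i]$ is itself $N$-dependent, your $\sup_i\|r_i\|=o_p(1)$ requires part (b) of Lemma \ref{lemma plim s uniform in i} (uniform convergence of $\E[s_{m,N,i}|a_i]$) in addition to part (a) --- both of which the paper supplies.
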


\noindent {\bf Proof}

Let $\mathcal{F}_i=(\X_{1N},\D_N,a_i,\eta^{v}_1,\ldots,\eta^{v}_{i-1})$. Conditional on $(\X_{1N}, \D_N, a_i)$, 
\[ 
\E[(\z_i-\h^{\z}(a_i))\eta^{v}_i|\mathcal{F}_i]= (\z_i-\h^{\z}(a_i)) \E[\eta^{v}_i|\mathcal{F}_i]=0,
\]
and so  $\left\{ (\z_i-\h^{\z}(a_i)) \eta^{v}_i,\mathcal{F}_i \right\}$ is a martingale difference sequence.

Since  $\eta^{\upsilon}_i = \upsilon_i - \E[\upsilon_i|a_i]$ is bounded by a constant under Assumption \ref{assumption:limit.dist},  
\begin{equation}
	\E[(\eta^{\upsilon}_i)^4|F_{i-1}]< M \label{eq.ap.bounded.fourth.moment.eta upsilon} 
\end{equation}
for some finite constant $M$. 

Also notice under Assumptions \ref{as:basic}, we have  
\begin{eqnarray*}
	\mathbb{E} [(\eta^{\upsilon}_i)^2 | \mathcal{F}_i ] 
	&=& \mathbb{E}[(\upsilon_i - \E(\upsilon|a_i))^2| \x_i,a_i, \x_{-i},\mathbf{a}_{-i},\D_N(\x_{-i},\mathbf{a}_{-i},\{ u_{ij}\}_{i,j=1,...,N}, \x_i, a_i), \{ \eta^{\upsilon}_j \}_{j < i}] \\
	&=& \mathbb{E}[(\upsilon_i - \E(\upsilon|a_i))^2| \x_i,a_i] \\
	&=:& \sigma^2(\x_i,a_i).	
\end{eqnarray*}	
Let $\ell$ be a nonzero vector whose dimension is the same as the IVs $\z_i$. 
Then,
\begin{align*}
\mathbb{E}[\ell'\bm{\eta}_i^{Z} (\bm{\eta}_i^{Z})' \ell (\eta^{v}_i)^2 | \mathcal{F}_i] 
& = [ \ell' (\z_i-\h^{\z}(a_i)) (\z_i-\h^{\z}(a_i))' \ell ] \mathbb{E} [(\eta^{\upsilon}_i)^2 | \mathcal{F}_i ] \\ 
& = [ \ell' (\z_i-\h^{\z}(a_i)) (\z_i-\h^{\z}(a_i))' \ell ] \sigma^2(\x_i,a_i).
\end{align*}	

Let 
\[
s^2_N := \frac{1}{N}\sum_{i=1}^N \mathbb{E}[\ell' (\z_i-\h^{\z}(a_i)) (\z_i-\h^{\z}(a_i))' \ell (\eta^{\upsilon}_i)^2 | \mathcal{F}_i] 
 = \frac{1}{N} \sum_{i=1}^n [\ell'(\z_i-\h^{\z}(a_i)) (\z_i-\h^{\z}(a_i))'\ell] \sigma^2(\x_i,a_i). 
\]
According to Lemma \ref{lemma:numerator.limit variance}, 
\[
	s^2_N  \xrightarrow{p} \mathbf{S}^{\z\z\sigma}.
\]
Also, since $\ell'(\z_i-\h^{\z}(a_i)) \eta^{\upsilon}_i = \ell'(\z_i-\h^{\z}(a_i))(\upsilon_i - \E[\upsilon_i|a_i])$ is bounded by a constant, under Assumption \ref{assumption:limit.dist} the Lindeberg-Feller condition is satisfied, that is, for any $\epsilon > 0$, 
\begin{align*}
& \frac{1}{N} \sum_{i=1}^N \E\left[
[ \ell'(\z_i-\h^{\z}(a_i)) (\z_i-\h^{\z}(a_i))' \ell ] (\eta^{\upsilon}_i)^2 \mathbb{I}\left\{ | \ell'(\z_i-\h^{\z}(a_i)) \eta^{\upsilon}_i|>\epsilon \sqrt{N} \right\} | \mathcal{F}_i \right] \\
& \leq \sum_{i=1}^N \frac{1}{\epsilon^2 N^2} \E\left[
[ \ell'(\z_i-\h^{\z}(a_i)) (\z_i-\h^{\z}(a_i))' \ell ]^2 (\eta^{\upsilon}_i)^4 | \mathcal{F}_i \right] \\
&\leq \frac{M}{\epsilon N}  \to 0
\end{align*}
as $ N \to \infty.$ 

Then, by the Martingale Central Limit Theorem (e.g., see Corollary 3.1 \cite{HallHeyde2014}), we have the desired result for theorem:
\[
\frac{1}{\sqrt{N}}\sum_{i=1}^N(\z_i-\h^{\z}(a_i))
\eta^{\upsilon}_i 
\Rightarrow \mathcal{N}(0,\mathbf{S}^{\z \z \sigma}).
\] 
$\square$
\\

\noindent {\bf Proof of Theorem \ref{theorem: central limit theorem}.}\label{appendix: proof of clt}

Theorem \ref{theorem: central limit theorem} follows from Lemma \ref{lemma: error from A_hat}, Lemma \ref{lemma: series approximation error},  Lemma \ref{lemma: limit of S^ZZ and S^WZ}, and Lemma \ref{lemma: CLT}. 
$\square$
\\

\subsection{Further Supporting Lemmas}\label{appendix: further supporting lemmas for beta_hat}

\begin{lemma}[Uniform Convergence of $s_{m,N,i}$ in $i$]\label{lemma plim s uniform in i} 
	Assume Assumptions \ref{as:basic},   \ref{assumption: estimation of a_i}, \ref{assumption:sieve basis}, \ref{assumption: Lipschitz condition} and \ref{assumption:limit.dist}. Suppose that $s_0(\x_i,a_i)$ is a bounded function of $\x_i$ and $a_i$. Suppose that we define $s_{m,N,i}$ as in (\ref{def.s m N i}) and consider its probability limit $\tilde{s}_{m}(\x_i,a_i)$ in equation (\ref{def.tilde.sm}) for each $i$. Then, for each $m = 0,1,2,\cdots $
\begin{align*}
(a) & \sup_{1\leq i \leq N} \left| s_{m,N,i}-\tilde{s}_{m}(\x_i,a_i) \right|=o_p(1) \\
(b) & \sup_{1\leq i \leq N} \left| \E[s_{m,N,i}|a_i]-\E[ \tilde{s}_{m}(\x_i,a_i)|a_i] \right|=o_p(1).
\end{align*}
\end{lemma}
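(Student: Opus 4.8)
The plan is to argue by induction on $m$, taking as the central technical object a \emph{uniform high-moment bound}
\[
\sup_{1\leq i \leq N}\E\big[\,|s_{m,N,i}-\tilde{s}_{m}(\x_i,a_i)|^{2B}\,\big]=O(N^{-B})
\]
for the integer $B\geq 2$ appearing in (\ref{eq.estimation.deg}). Both (a) and (b) then follow from this bound by Markov's inequality together with a union bound over the $N$ indices: for any $\delta>0$,
\[
\PR\Big(\sup_{i}|s_{m,N,i}-\tilde{s}_{m}(\x_i,a_i)|>\delta\Big)\leq \sum_{i=1}^N\frac{\E[|s_{m,N,i}-\tilde{s}_{m}(\x_i,a_i)|^{2B}]}{\delta^{2B}}=O(N^{1-B})\to 0,
\]
which uses $B\geq 2$. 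The base case $m=0$ is immediate, since $s_{0,N,i}=s_0(\x_i,a_i)=\tilde{s}_0(\x_i,a_i)$.

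For the inductive step I would write $s_{m,N,i}=\hat{N}_{m,i}/\hat{D}_i$, where $\hat{N}_{m,i}=\frac1N\sum_{j\neq i}d_{ij,N}s_{m-1,N,j}$ and $\hat{D}_i=\frac1N\sum_{j\neq i}d_{ij,N}$, and correspondingly $\tilde{s}_m(\x_i,a_i)=N_{m,i}/D_i$ with $N_{m,i}=\E[d_{ij,N}\tilde{s}_{m-1}(\x_j,a_j)\mid\x_i,a_i]$ and $D_i=\E[d_{ij,N}\mid\x_i,a_i]$. Conditional on $(\x_i,a_i)$, the variables $\{d_{ij,N}\}_{j\neq i}$ are i.i.d. and bounded under Assumption \ref{as:basic}, so a Rosenthal-type inequality gives $\E[|\hat{D}_i-D_i|^{2B}\mid\x_i,a_i]=O(N^{-B})$ uniformly. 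For the numerator I would split
\[
\hat{N}_{m,i}=\frac1N\sum_{j\neq i}d_{ij,N}\tilde{s}_{m-1}(\x_j,a_j)+\frac1N\sum_{j\neq i}d_{ij,N}\big(s_{m-1,N,j}-\tilde{s}_{m-1}(\x_j,a_j)\big).
\]
The first (``clean'') term again has conditional i.i.d. structure in $j$ given $(\x_i,a_i)$ — because $d_{ij,N}\tilde{s}_{m-1}(\x_j,a_j)$ depends only on $(\x_j,a_j,u_{ij})$ and $\tilde{s}_{m-1}$ is bounded — so its $2B$-th moment deviation from $N_{m,i}$ is $O(N^{-B})$. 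The second (``error'') term is bounded in absolute value by $\hat{D}_i\cdot\sup_j|s_{m-1,N,j}-\tilde{s}_{m-1}(\x_j,a_j)|$, whose $2B$-th moment is controlled at the same rate by the inductive hypothesis. Finally, since $g$ is bounded and $u_{ij}$ has full support (Assumption \ref{assumption:limit.dist}(iv)--(v)), the density condition yields $\inf_i D_i\geq\underline{\kappa}>0$, so on the event $\{\inf_i\hat{D}_i\geq\underline{\kappa}/2\}$, which has probability tending to one, the identity $s_{m,N,i}-\tilde{s}_m=[(\hat{N}_{m,i}-N_{m,i})D_i-N_{m,i}(\hat{D}_i-D_i)]/(\hat{D}_iD_i)$ combines the numerator and denominator bounds into the claimed $O(N^{-B})$ moment bound, closing the induction for part (a).

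For part (b) I would deduce the statement directly from the same moment bound. By conditional Jensen, $|\E[s_{m,N,i}-\tilde{s}_m\mid a_i]|^{2B}\leq\E[|s_{m,N,i}-\tilde{s}_m|^{2B}\mid a_i]$, so taking unconditional expectations gives $\E[|\E[s_{m,N,i}\mid a_i]-\E[\tilde{s}_m\mid a_i]|^{2B}]\leq\sup_i\E[|s_{m,N,i}-\tilde{s}_m|^{2B}]=O(N^{-B})$, and the identical Markov-plus-union-bound argument yields the uniform convergence.

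I expect the main obstacle to be the \emph{uniformity in $i$}, which is precisely what forces the high-moment ($2B$-th, $B\geq 2$) bounds rather than a plain WLLN: the quantities $s_{m-1,N,j}$ entering the numerator are functions of the \emph{entire} network and are therefore not independent of the weights $d_{ij,N}$ being summed against them. The decomposition into a clean term (restoring the conditional i.i.d. structure needed for a Rosenthal bound) and an error term (absorbed by the inductive hypothesis) is the device that circumvents this dependence; the delicate points are verifying that the moment rate is preserved across each step of the recursion and that the random denominator, handled on the high-probability event $\{\inf_i\hat{D}_i\geq\underline{\kappa}/2\}$, does not degrade it.
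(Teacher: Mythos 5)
Your overall architecture coincides with the paper's own proof: induction on $m$; conditional on $(\x_i,a_i)$ the summands $d_{ij,N}$ and $d_{ij,N}\tilde{s}_{m-1}(\x_j,a_j)$ are i.i.d.\ and bounded, so a high-moment bound combined with a Markov/union-type bound over the $N$ indices delivers the uniformity (the paper does exactly this with $2B=4$, obtaining $\sup_i|\zeta_{i,N,k}|=O_p(N^{-1/4})$ in its Step 2), and the random denominator is handled on the event where $\frac{1}{N}\sum_{j\neq i}d_{ij,N}$ stays above a positive constant, which is guaranteed because $\E[d_{ij,N}\mid\x_i,a_i]\geq\underline{\kappa}>0$ under Assumption \ref{assumption:limit.dist}. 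Your treatment of part (b) via conditional Jensen is fine and in fact more explicit than the paper's one-line remark. The one genuinely different design choice is that you propagate a \emph{moment} invariant, $\sup_i\E[|s_{m,N,i}-\tilde{s}_m(\x_i,a_i)|^{2B}]=O(N^{-B})$, through the induction, whereas the paper propagates only the $o_p(1)$ statement: at step $m+1$ it splits the numerator into a clean term (handled by the $m=1$ machinery applied to the bounded function $\tilde{s}_m$) plus a $g_{ij,N}$-weighted average of past errors, and since $\sum_{j\neq i}g_{ij,N}=1$ that average is bounded by $\sup_j|s_{m,N,j}-\tilde{s}_m(\x_j,a_j)|$, which is $o_p(1)$ by the inductive hypothesis — no moments of past errors are ever required.

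Your version of precisely this step is where the argument as written fails. You bound the error term by $\hat{D}_i\cdot\sup_j|s_{m-1,N,j}-\tilde{s}_{m-1}(\x_j,a_j)|$ and assert that its $2B$-th moment is ``controlled at the same rate by the inductive hypothesis.'' But the hypothesis controls $\sup_j\E[|e_{m-1,j}|^{2B}]$, not $\E[\sup_j|e_{m-1,j}|^{2B}]$; the generic bound on the latter is $\E[\sup_j|e_{m-1,j}|^{2B}]\leq\sum_j\E[|e_{m-1,j}|^{2B}]=O(N^{1-B})$, so the claimed $O(N^{-B})$ invariant is not preserved and the induction does not close as stated. The repair is one line: since $d_{ij,N}\in\{0,1\}$, one has $\bigl|\frac{1}{N}\sum_{j\neq i}d_{ij,N}e_{m-1,j}\bigr|\leq\frac{1}{N}\sum_{j\neq i}|e_{m-1,j}|$, and by Jensen applied to the normalized sum, $\E\bigl[\bigl(\frac{1}{N}\sum_{j\neq i}|e_{m-1,j}|\bigr)^{2B}\bigr]\leq\frac{1}{N}\sum_{j\neq i}\E[|e_{m-1,j}|^{2B}]\leq\sup_j\E[|e_{m-1,j}|^{2B}]$, which \emph{does} preserve the rate; alternatively, carry only the $o_p(1)$ statement as the paper does, or exploit that $m$ is fixed and $B$ arbitrary, so the lossy factor can be absorbed by choosing $B$ large relative to $m$. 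A final bookkeeping point: the unconditional moment invariant cannot hold literally with the random denominator inside, but since all quantities are uniformly bounded ($\G_N$ is row-stochastic, so $|s_{m,N,i}|\leq\sup_j|s_0(\x_j,a_j)|$), restricting to $\{\inf_i\hat{D}_i\geq\underline{\kappa}/2\}$ — whose complement has probability $O(Ne^{-cN})$ by Hoeffding — adds only an exponentially small term to every moment, so that part of your plan is sound once stated as a truncated-moment bound.
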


\noindent {\bf Proof} 

\noindent {\bf Part (a).}  

\noindent {\bf For $m=0.$} \\
The required result for the lemma holds trivially because of the definition that $s_{0,N,i} = \tilde{s}_{0}(\x_i,a_i)$. 

Next we show the required result for $m=1$ and then use mathematical induction for the rest $m=2,3,...$.

\noindent {\bf For $m=1.$} \\ 
The claim for the case $m=1$ is proved in three steps.

\noindent {\bf Step 1.} \\
Notice that  
\begin{align*}
		s_{1,N,i}  
		&= \left( \frac{1}{N} \sum_{j \neq i} d_{ij} \right)^{-1}
		\frac{1}{N} \sum_{j \neq i} d_{ij}s_{1,N,j} \\
		&= \left( \frac{1}{N} \sum_{j \neq i} 
		\mathbb{I} \left\{ g ( t(\x_{2i},\x_{2j}), a_i, a_j)  \geq u_{ij} \right\}
		 \right)^{-1} \\
		& \quad \times \frac{1}{N} \sum_{j \neq i} 
		\mathbb{I} \left\{ g( t(\x_{2i},\x_{2j}),a_i,a_j )  \geq u_{ij} \right\}
		s_{0}(\x_j,a_j).
\end{align*}
Then, by the WLLN, for each $i$, 
		\begin{eqnarray}
		\frac{1}{N} \sum_{j \neq i} 
		\mathbb{I} \left\{ g (t(\x_{2i},\x_{2j}), a_i, a_j)  \geq u_{ij} \right\}
		&\xrightarrow[]{p}& 
		\int \int \Phi( ( t(\x_{2i},\x_2), a_i, a)\pi(\x_2,a)d\x_2 da \nonumber \\
		& =& \E[d_{ij}|\x_i,a_i]  \label{eq.lemma.supp.point.limit.denom} \\
		\frac{1}{N} \sum_{j \neq i} 
		\mathbb{I} \left\{ g( t(\x_{2i},\x_{2j}), a_i, a_j)  \geq u_{ij} \right\}
		s_{0}(\x_j,a_j)
		&\xrightarrow[]{p}& 
		\int \int  \Phi(t(\x_{2i},\x_2)'\lambda^0+a_i+a) s_0(\x,a)\pi(\x,a)d\x da \nonumber \\
		&=& \E[d_{ij} s_0(\x_j,a_j) | \x_i,a_i ].
		\label{eq.lemma.supp.point.limit.num} 
		\end{eqnarray}
Since $\E[d_{ij}|\x_i,a_i] > 0$ uniformly in $i,j$ under Assumption \ref{assumption:limit.dist} (vi),(v), and (vi) for each $i$ as $N \rightarrow \infty$, we have  
\[ 
 s_{1,N,i} \rightarrow_{p} \tilde{s}_{1}(\x_i,a_i) =
\frac{\int\int \Phi(g( t(\x_{2i},\x_2), a_i, a)s_{0}(\x,a)\pi(\x,a)d\x da}
{ \int \int \Phi( g ( t(\x_{2i},\x_2), a_i, a)\pi(\x_2,a)d\x_2 da}. 
\]

\noindent {\bf Step 2.}\\
In this step, we show that the convergences in (\ref{eq.lemma.supp.point.limit.denom}) and (\ref{eq.lemma.supp.point.limit.num}) hold uniformly in $i$. 
For this, we introduce the following notation. Let 
\[
\zeta_{i,N,1}=\frac{1}{N}\sum_{j=1, \neq i}^N \left( d_{ij}-\E[d_{ij}|\x_i,a_i] \right)
\] 
and 
\[
\zeta_{i,N,2}=\frac{1}{N}\sum_{j=1,\neq i}^N \left(d_{ij}s_{0}(\x_j,a_j)-\E[d_{ij}s_{0}(\x_j,a_j)|\x_i,a_i]\right).
\]
Notice that conditional on $(\x_i,a_i)$, $d_{ij}$ and $d_{ij}s_0(\x_j,a_j)$ are iid  with conditional mean zero and bounded by a constant across $j=1,...,N, \neq i.$ Then, there exists a finite constant $M_1$ such that 
\[
\sup_i \E \left( \| \sqrt{N} \zeta_{i,N,k}  \|^4 | \x_i,a_i \right) \leq M_1, 
\]
and we can deduce the desired result 
\[ 
\sup_i \| \zeta_{i,N,k} \| = O_p(N^{-1/4}) = o_p(1)
\] 
because for any $\epsilon > 0$, we choose $M_2 = \frac{\epsilon}{M_1}$ and then
\begin{align*}
\PR\{ \sup_i \| \zeta_{i,N,k} \| \geq N^{-1/4} M_2^{1/4} | \x_i,a_i \} 
&= \PR\{ \sup_i  N^{-1/4} \| \sqrt{N} \zeta_{i,N,k} \| \geq  M_2^{1/4} | \x_i,a_i \} \\
&= \PR\{ \sup_i  N^{-1} \| \sqrt{N} \zeta_{i,N,k} \|^4 \geq  M_2 | \x_i,a_i \} \\
& \leq  \PR\left\{   \frac{1}{N} \sum_{i=1}^N \| \sqrt{N} \zeta_{i,N,k} \|^4 \geq  M_2 | \x_i,a_i \right\} \\
& \leq \frac{1}{M_2} \frac{1}{N} \sum_{i=1}^N \E \left( \| \sqrt{N} \zeta_{i,N,k}  \|^4 | \x_i,a_i \right) \\
& \leq \frac{M_1}{M_2} = \epsilon.
\end{align*}\\

\noindent {\bf Step 3.}\\
Now we prove the desired result for the case $m=1$. Define 
$\Psi_{i,N,1}=\frac{1}{N} \sum_{j \neq i} d_{ij}$ and $\Psi_{i,N,2}=\frac{1}{N} \sum_{j \neq i} d_{ij}s_{0}(\x_j,a_j)$. Then,
\[
s_{1,N,i}= \frac{\Psi_{i,N,1}}{\Psi_{i,N,2}}.
\] 
Let
$\phi_{i,1}=\frac{1}{N}\sum_{j=1,\neq i}^N\E[d_{ij}|\x_i,a_i]$ and $\psi_{i,2}=\frac{1}{N}\sum_{j=1, \neq i}^N\E[d_{ij}s_{0}(\x_j,a_j)|\x_i,a_i]$. 
Notice that 
\begin{align*}
\sup_i \| s_{1,N,i} \| 
&= \sup_i \left\| \frac{\Psi_{i,N,2}}{\Psi_{i,N,1}} - \frac{\Psi_{i,2}}{\Psi_{i,1}} \right\| \\
&\leq \sup_i \left\| \frac{\Psi_{i,N,2}-\Psi_{i,2}}{\Psi_{i,N,1}} \right\| 
+ \sup_i \left\|  \frac{\Psi_{i,2}(\Psi_{i,N,1}-\Psi_{i,1})}{\Psi_{i,N,1} \Psi_{i,1}} \right\|
= o_p(1),
\end{align*}
where the last line holds because $\| \Psi_{i,N,k}-\Psi_{i,k} \| = o_p(1)$ by Step 2, and $\Psi_{i,1} > 0$ and $\| \Psi_{i,2} \|$ is bounded by a constant. 
This shows the required result
\[
\sup_{i} \| s_{1,N,i} - \tilde{s}_{1}(\x_i,a_i) \| = o_p(1).
\]
\\
\noindent {\bf For $m \geq 2$.}

Given that we show the required result of the lemma with $m=1$, we show the rest by mathematical induction. For this, suppose that 
\[ 
	\sup_{1\leq i \leq N}\|s_{m,N,i}-\tilde{s}_{m}(\x_i,a_i)\|=o_p(1).
\] 
Then, we have
\begin{align*}
& \sup_{1 \leq i \leq N} \| s_{m+1,N,i}-\tilde{s}_{m+1}(\x_i,a_i) \| \\
& = \sup_{1 \leq i \leq N} \left\| 
\frac{\frac{1}{N} \sum_{j=1,\neq i}^N d_{ij} s_{m,N,i}}{\frac{1}{N} \sum_{j=1,\neq i}^N d_{ij}} 
- \frac{\E[ d_{ij} \tilde{s}_m(\x_j,a_j)|\x_i,a_i]}{\E[ d_{ij} |\x_i,a_i]}
\right\| \\
&\leq  \sup_{1 \leq i \leq N}  
\frac{\left\| \frac{1}{N} \sum_{j=1,\neq i}^N d_{ij} \left( s_{m,N,i} - \E[ d_{ij} \tilde{s}_m(\x_j,a_j)|\x_i,a_i] \right) \right\|}
{\frac{1}{N} \sum_{j=1,\neq i}^N d_{ij}} 
 \\
& \quad +  \sup_{1 \leq i \leq N} \left\| \E[ d_{ij} \tilde{s}_m(\x_j,a_j)|\x_i,a_i]  \right\|
\sup_{1 \leq i \leq N} \left| 
\frac{1}{\frac{1}{N} \sum_{j=1,\neq i}^N d_{ij}} 
- \frac{1}{\E[ d_{ij} |\x_i,a_i]} 
\right|.
\end{align*}
For the first term, we have by the definition of $g_{ij} = \frac{d_{ij}}{\sum_{j=1,\neq i}^N d_{ij}}$ and since $\sum_{j=1,\neq i} g_{ij} = 1$, we have
\begin{align*}
& \sup_{1 \leq i \leq N}  
\frac{\left\| \frac{1}{N} \sum_{j=1,\neq i}^N d_{ij} \left( s_{m,N,i} - \E[ d_{ij} \tilde{s}_m(\x_j,a_j)|\x_i,a_i] \right) \right\|}
{\frac{1}{N} \sum_{j=1,\neq i}^N d_{ij}}  \\
& = \sup_{1 \leq i \leq N}  \left\| \frac{1}{N} \sum_{j=1,\neq i}^N g_{ij} \left( s_{m,N,i} - \E[ d_{ij} \tilde{s}_m(\x_j,a_j)|\x_i,a_i] \right) \right\| \\
&\leq \sup_{1 \leq i \leq N}  \left\|  s_{m,N,i} - \E[ d_{ij} \tilde{s}_m(\x_j,a_j)|\x_i,a_i]  \right\| \\
& = o_p(1),
\end{align*}
where the last line holds by the assumption of mathematical induction.
We can show the second term 
\[
\sup_{1 \leq i \leq N} \left\| \E[ d_{ij} \tilde{s}_m(\x_j,a_j)|\x_i,a_i]  \right\|
\sup_{1 \leq i \leq N} \left| 
\frac{1}{\frac{1}{N} \sum_{j=1,\neq i}^N d_{ij}} 
- \frac{1}{\E[ d_{ij} |\x_i,a_i]} 
\right| = o_p(1)
\]
by using similar argument used in the proof of Step 3 of the case $m=1$. $\square$
\\

\noindent {\bf Part (b).}

Notice that under Assumption \ref{assumption:limit.dist}, $\E[ s_{m,N,i}|a_i]$ and $\E[ \tilde{s}_m(\x_i,a_i) | a_i]$ are bounded by a finite constant. The required argument follows by similar arguments used in the proof of Part (a). $\square$
\\

\begin{lemma}[Uniform Convergence of $s^{\upsilon}_{m,N,i}$ in $i$]\label{lemma plim s_eta uniform in i} 
	Assume Assumptions Assumptions \ref{as:basic}, \ref{assumption: estimation of a_i}, \ref{assumption:sieve basis}, \ref{assumption: Lipschitz condition} and \ref{assumption:limit.dist}. Suppose that we define $s^{\upsilon}_{m,N,i}$ as in (\ref{def.s_eta_m N i}).  Then, for each $m = 1,2,\cdots $
	\begin{equation*}
	\sup_{1\leq i \leq N}|s^{\upsilon}_{m,N,i} |=o_p(1). 
	\end{equation*}
\end{lemma}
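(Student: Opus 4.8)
The plan is to prove the case $m=1$ directly by a conditional fourth-moment argument, and then to deduce every $m\geq 2$ from it using the fact that $\G_N$ is row-stochastic, so that each application of $\G_N$ contracts the sup-norm. Recall that $\Ss^{\upsilon}_{m,N}=\G_N\Ss^{\upsilon}_{m-1,N}$ with $\Ss^{\upsilon}_{0,N}=\bm{\eta}^{\upsilon}_N$, so $s^{\upsilon}_{m,N,i}=[\G_N^m\bm{\eta}^{\upsilon}_N]_i$. Under the denseness conditions (Assumption \ref{assumption:limit.dist}(iv)--(v)) together with (\ref{eq.deg.limit}) and (\ref{eq.estimation.deg}), there is $\underline{\kappa}>0$ with $\min_i\text{deg}_i\geq\underline{\kappa}$, and on the event $A_N:=\{\min_i\widehat{\text{deg}}_i\geq\underline{\kappa}/2\}$, which satisfies $\PR(A_N)\to 1$, every node has strictly positive degree, so each row of $\G_N$ sums to one. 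Consequently, on $A_N$, for any vector $\bm{v}$ one has $\max_i|[\G_N\bm{v}]_i|\leq\max_i|v_i|$, whence $\sup_i|s^{\upsilon}_{m,N,i}|\leq\sup_i|s^{\upsilon}_{1,N,i}|$ for every $m\geq 1$. Thus it suffices to establish the claim for $m=1$.

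For the $m=1$ step, I would condition on $\mathcal{C}_N:=\sigma(\X_N,\an,\{u_{ij}\})$, which makes $\G_N$ and the weights $g_{ij,N}$ measurable. Under Assumption \ref{as:basic}(i)--(ii) the shocks $\{\upsilon_j\}$ are, conditionally on $\mathcal{C}_N$, independent across $j$ with $\upsilon_j$ drawn from its law given $(\x_j,a_j)$; combined with the fixed-effect condition Assumption \ref{as:basic}(iii) this yields $\E[\eta^{\upsilon}_j\mid\mathcal{C}_N]=\E[\upsilon_j\mid\x_j,a_j]-\E[\upsilon_j\mid a_j]=0$, which is exactly the content of Lemma \ref{lemma:control.function}. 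Moreover, boundedness of $(y_i,\x_i)$ in Assumption \ref{assumption:limit.dist}(iii) together with the outcome equation (\ref{model:outcome}) forces $|\eta^{\upsilon}_j|\leq C$ for a finite constant $C$. Hence $s^{\upsilon}_{1,N,i}=\sum_{j\neq i}g_{ij,N}\eta^{\upsilon}_j$ is, conditionally on $\mathcal{C}_N$, a weighted sum of independent, mean-zero, uniformly bounded summands.

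The key computation is then a conditional fourth-moment bound. Since the cross terms involving an isolated index vanish by independence and the conditional mean-zero property, I would show
\begin{equation*}
\E\big[(s^{\upsilon}_{1,N,i})^4\mid\mathcal{C}_N\big]
=\sum_{j\neq i}g_{ij,N}^4\,\E[(\eta^{\upsilon}_j)^4\mid\mathcal{C}_N]
+3\sum_{j\neq k}g_{ij,N}^2 g_{ik,N}^2\,\E[(\eta^{\upsilon}_j)^2\mid\mathcal{C}_N]\,\E[(\eta^{\upsilon}_k)^2\mid\mathcal{C}_N]
\leq 4C^4\Big(\sum_{j\neq i}g_{ij,N}^2\Big)^2 .
\end{equation*}
Because $d_{ij,N}\in\{0,1\}$, one has $\sum_{j\neq i}g_{ij,N}^2=1/\sum_{j\neq i}d_{ij,N}=1/[(N-1)\widehat{\text{deg}}_i]$, which on $A_N$ is at most $2/[(N-1)\underline{\kappa}]=O(N^{-1})$ uniformly in $i$. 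Using $\sup_i(s^{\upsilon}_{1,N,i})^4\leq\sum_i(s^{\upsilon}_{1,N,i})^4$, the $\mathcal{C}_N$-measurability of $\mathbb{I}\{A_N\}$, and $\PR(A_N^c)\to 0$, I would conclude for any $\epsilon>0$ that
\begin{equation*}
\PR\Big\{\sup_i|s^{\upsilon}_{1,N,i}|\geq\epsilon\Big\}
\leq\PR(A_N^c)+\frac{1}{\epsilon^4}\sum_{i=1}^N\E\big[\mathbb{I}\{A_N\}\,\E[(s^{\upsilon}_{1,N,i})^4\mid\mathcal{C}_N]\big]
\leq\PR(A_N^c)+\frac{4C^4}{\epsilon^4}\,N\Big(\frac{2}{(N-1)\underline{\kappa}}\Big)^2\to 0 .
\end{equation*}

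This gives $\sup_i|s^{\upsilon}_{1,N,i}|=o_p(1)$ and, by the contraction argument of the first paragraph, the lemma for all $m\geq 1$. The main obstacle is the $m=1$ moment bound: one must carefully justify the conditional independence and conditional mean-zero property of the $\eta^{\upsilon}_j$ given $\mathcal{C}_N$ (so that the odd cross terms drop out and Lemma \ref{lemma:control.function} applies), and control $\sum_{j\neq i}g_{ij,N}^2$ uniformly in $i$ through the denseness of the network. Once these are in place, the reduction from general $m$ to $m=1$ via the row-stochasticity of $\G_N$ is routine.
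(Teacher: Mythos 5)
Your proof is correct, and it reaches the conclusion by a genuinely different organization than the paper's, even though both ultimately rest on the same two ingredients: the conditional mean-independence $\E[\upsilon_i\,|\,\X_N,\D_N,a_i]=\E[\upsilon_i\,|\,a_i]$ (Lemma \ref{lemma:control.function}, via Assumption \ref{as:basic}(iii)) and the denseness of the network (Assumption \ref{assumption:limit.dist}(iv)--(v)). The paper proves the $m=1$ case by writing $s^{\upsilon}_{1,N,i}$ as the ratio $\bigl(\frac{1}{N}\sum_{j\neq i}d_{ij,N}\eta^{\upsilon}_j\bigr)\big/\bigl(\frac{1}{N}\sum_{j\neq i}d_{ij,N}\bigr)$, applies a WLLN conditional on $(\x_{2i},a_i)$ to each piece (the numerator's limit being zero by Lemma \ref{lemma:control.function}), and then upgrades to uniformity in $i$ by the conditional fourth-moment device of Step 2 of Lemma \ref{lemma plim s uniform in i}; for $m\geq 2$ it invokes mathematical induction, whose step--since the limit functions $\tilde{s}^{\upsilon}_{m}$ are identically zero--collapses to precisely the row-stochasticity contraction $\max_i|[\G_N\bm{v}]_i|\leq\max_i|v_i|$ that you state up front. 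You instead avoid the ratio decomposition entirely by conditioning on the full network $\sigma$-field $\mathcal{C}_N$, under which $s^{\upsilon}_{1,N,i}$ is a weighted sum of conditionally independent, mean-zero, bounded variables; the exact identity $\sum_{j\neq i}g_{ij,N}^2=1/[(N-1)\widehat{\text{deg}}_i]$ then delivers a one-shot union-plus-Markov bound of order $N\cdot O(N^{-2})$, which in fact yields the quantitative rate $\sup_i|s^{\upsilon}_{1,N,i}|=O_p(N^{-1/4})$ rather than bare $o_p(1)$. Two features of your write-up are improvements in rigor: the explicit positive-degree event $A_N$ (with $\PR(A_N)\to 1$ from $\text{deg}_i\geq\Phi(-\bar{g})>0$ and the uniform convergence of $\widehat{\text{deg}}_i$) makes precise both the well-definedness of the rows of $\G_N$ and the uniform lower bound on the denominator, which the paper treats somewhat implicitly; and your verification that $\E[\eta^{\upsilon}_j\,|\,\mathcal{C}_N]=0$ and that the $\eta^{\upsilon}_j$ are conditionally independent given $\mathcal{C}_N$ spells out exactly where Assumptions \ref{as:basic}(i)--(iii) enter the moment expansion. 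The trade-off is that the paper's route recycles machinery shared with the lemma for $s_{m,N,i}$ (where the limits are genuinely nonzero functions of $(\x_i,a_i)$ and a contraction argument alone would not suffice), whereas your argument is tailored to the mean-zero case and is correspondingly shorter and more self-contained.
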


\noindent {\bf {Proof}}

	The proof is similar to that of Lemma \ref{lemma plim s uniform in i}. 	
	First, we show that for each $i$ and $m=1,2,...$ the probability limit of $s^{\upsilon}_{m,N,i}$ defined with $s_{0,i}^{\upsilon} = \eta^{\upsilon}_i = \upsilon_i - \E[\upsilon_i|a_i]$ recursively as (\ref{def.s_eta_m N i}) is zero  as $N \rightarrow \infty$. To verify this, let
	\[
	\tilde{s}_{0,i}^{\upsilon} = \eta^{\upsilon}_i = \upsilon_i - \E[\upsilon_i|a_i].
	\]
	
	For $m=1$, 
	\[
		s_{1,N,i}^{\upsilon} =
		\left( \frac{1}{N} \sum_{j \neq i} d_{ij} \right)^{-1}
		\frac{1}{N} \sum_{j \neq i} d_{ij} s_{0,j}^{\upsilon}.
	\]
	
	Consider the numerator. Notice by definition that 
	\[ 
	d_{ij} s_{0,j}^{\upsilon} = \mathbb{I} \left\{ g( t(\x_{2i},\x_{2j}, a_i,a_j)  \geq u_{ij} \right\}
	(\upsilon_j - \E[\upsilon_j|a_j])
	\] 
	are i.i.d. across $j$ conditioning on $(\x_{2i},a_i)$ and bounded by a finite constant under Assumption \ref{assumption:limit.dist}. Then, by the WLLN conditioning on $(\x_{2i},a_i)$, we have
	\begin{align*}
	\frac{1}{N} \sum_{j \neq i} d_{ij} s_{0,j}^{\upsilon} 
	& \xrightarrow[]{p} \E \left[ 
	d_{ij} (\upsilon_j - \E[\upsilon_j|a_j] ) | \x_{2i},a_i
	\right] \\
	& = \E \left[ 
	d_{ij} \E \left( 
	\upsilon_j - \E[\upsilon_j|a_j] | \X_N,\D_N,a_i	
	\right)
	| \x_{2i},a_i \right] \\
	& = 0,
	\end{align*}
	where the last equality holds by Lemma \ref{lemma:control.function}.
	The denominator converges to 
	\[
	\frac{1}{N} \sum_{j \neq i} 
	\mathbb{I} \left\{ g( t(\x_{2i},\x_{2j}), a_i, a_j )  \geq u_{ij} \right\}
	\rightarrow_{p}
	\int\int \Phi( g( t(\x_{2i},\x_2), a_i,  a)\pi(\x_2,a)d\x_2 da > 0,
	\]
	where the last inequality holds under Assumption \ref{assumption:limit.dist}. 
	
	This shows that as $N \rightarrow \infty$ 
	\[
	\frac{1}{N} \sum_{j \neq i} g_{ij} s_{0,j}^{\upsilon} \xrightarrow[]{p} 0 =: \tilde{s}^{\upsilon}_{1,i}
	\]
	for each $i$. 
	
	Then, using similar argument in Step 2 of the proof of Lemma \ref{lemma plim s uniform in i}, we deduce
	\[
	\sup_{1 \leq i \leq N} \left| \frac{1}{N} \sum_{j \neq i} g_{ij} s_{0,j}^{\upsilon} \right| = o_p(1).
	\]
	 
	Also, for $m=2,...$, we follow the same mathematical induction argument in Steps 3 and 4 of the proof of Lemma \ref{lemma plim s uniform in i} and deduce that
	\[ 
	\sup_{1 \leq i \leq N} \left| \frac{1}{N} \sum_{j \neq i} g_{ij} s_{m,N,j}^{\upsilon} \right| = o_p(1).
	\]
	$\square$

\begin{lemma}\label{lemma:sum_m beta s_miN - stilde_mi} Assume Assumptions  \ref{as:basic},   \ref{assumption: estimation of a_i}, \ref{assumption:sieve basis}, \ref{assumption: Lipschitz condition} and \ref{assumption:limit.dist}. Suppose that $s_0(\x_i,a_i)$ is a bounded function of $\x_i$ and $a_i$. Suppose that we define $s_{m,N,i}$ as in equation (\ref{def.s m N i}) and consider its probability limit $\tilde{s}_{m}(\x_i,a_i)$ in equation (\ref{def.tilde.sm}) for each $i$. Then,
\begin{align*}
(a) & \sup_{1\leq i\leq N} \left|\sum_{m=0}^\infty (\beta_1^0)^m\left(  s_{m,N,i} -\tilde{s}_{m}(\x_i,a_i)\right) \right|= o_p(1) \\
(b) & \sup_{1\leq i\leq N} \left|\sum_{m=0}^\infty (\beta_1^0)^m\left( \E[ s_{m,N,i} | a_i] - \E[ \tilde{s}_{m}(\x_i,a_i) | a_i] \right) \right|= o_p(1).
\end{align*}
Also, suppose that we define $s^{\eta}_{m,N,i}$ as in equation (\ref{def.s_eta_m N i}). Let $\tilde{s}^{\eta}_{0,i} = \eta^{a}_i$ and $\tilde{s}^{\eta}_{m,i} = 0$ for $m=1,2,...$. Then,
\begin{equation*}
(c) \:\: \sup_{1\leq i\leq N} \left|\sum_{m=0}^\infty (\beta_1^0)^m\left( s^{\eta}_{m,N,i} -\tilde{s}^{\eta}_{m} \right)  \right|= o_p(1).
\end{equation*}
\end{lemma}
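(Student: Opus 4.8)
The plan is to prove all three parts by the same truncate-and-tail argument, since the only genuinely new difficulty relative to Lemma~\ref{lemma plim s uniform in i} and Lemma~\ref{lemma plim s_eta uniform in i} is the interchange of the $o_p$-convergence with the infinite summation over $m$. Fix $\delta>0$. For part (a) I would split, for a truncation level $M$ to be chosen,
\[
\sum_{m=0}^\infty (\beta_1^0)^m\left( s_{m,N,i}-\tilde{s}_m(\x_i,a_i)\right)
= \underbrace{\sum_{m=0}^{M}(\beta_1^0)^m\left( s_{m,N,i}-\tilde{s}_m(\x_i,a_i)\right)}_{\text{head}}
+ \underbrace{\sum_{m=M+1}^{\infty}(\beta_1^0)^m\left( s_{m,N,i}-\tilde{s}_m(\x_i,a_i)\right)}_{\text{tail}}.
\]
For the head, which is a finite sum of $M+1$ terms with weights bounded by $1$, Lemma~\ref{lemma plim s uniform in i}(a) and the triangle inequality give $\sup_{i}|\text{head}|\le\sum_{m=0}^M\sup_i| s_{m,N,i}-\tilde{s}_m(\x_i,a_i)|=o_p(1)$, so the head is below $\delta/2$ with probability approaching one once $M$ is fixed.

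The crucial ingredient for the tail is a bound on $|s_{m,N,i}-\tilde{s}_m(\x_i,a_i)|$ that is uniform in both $i$ and $m$, and here I would exploit that $\G_N$ is row-normalized. Since $s_{m,N,i}=[\G_N\Ss_{m-1,N}]_i=\sum_{j\neq i}g_{ij,N}s_{m-1,N,j}$ is a convex combination of the $s_{m-1,N,j}$, we obtain $\sup_i|s_{m,N,i}|\le\sup_i|s_{m-1,N,i}|\le\cdots\le\sup_i|s_0(\x_i,a_i)|\le C$, where $C$ is the finite constant from the boundedness of $s_0$ under Assumption~\ref{assumption:limit.dist}. The limit $\tilde{s}_m(\x_i,a_i)$ defined in (\ref{def.tilde.sm}) is likewise a probability-weighted average of $\tilde{s}_{m-1}$, so $|\tilde{s}_m(\x_i,a_i)|\le C$ as well. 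Consequently $\sup_{i,m}|s_{m,N,i}-\tilde{s}_m(\x_i,a_i)|\le 2C$, and using $|\beta_1^0|\le 1-\epsilon<1$ from Assumption~\ref{assumption:limit.dist}(i),
\[
\sup_i|\text{tail}|\le 2C\sum_{m=M+1}^\infty|\beta_1^0|^m = 2C\,\frac{|\beta_1^0|^{M+1}}{1-|\beta_1^0|},
\]
a deterministic quantity that can be made smaller than $\delta/2$ by choosing $M$ large, uniformly in $N$. Combining the head and tail bounds yields $\sup_i|\sum_m(\cdots)|<\delta$ with probability approaching one, which proves (a).

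Part (b) is identical after replacing $s_{m,N,i}$ and $\tilde{s}_m(\x_i,a_i)$ by their conditional expectations given $a_i$: the head is controlled by Lemma~\ref{lemma plim s uniform in i}(b), and the same envelope $|\E[s_{m,N,i}|a_i]|\le C$ and $|\E[\tilde{s}_m(\x_i,a_i)|a_i]|\le C$ (a conditional expectation cannot exceed the uniform bound on its argument) controls the tail. For part (c) the target limits are $\tilde{s}^{\eta}_{0,i}=\eta^{\upsilon}_i$ and $\tilde{s}^{\eta}_{m,i}=0$ for $m\ge1$; the $m=0$ difference vanishes identically because $s^{\upsilon}_{0,N,i}=\eta^{\upsilon}_i$ by the definition in (\ref{def.s_eta_m N i}), and for $m\ge1$ the head is handled by Lemma~\ref{lemma plim s_eta uniform in i} while the tail uses the same row-stochastic contraction together with the boundedness of $\eta^{\upsilon}_i=\upsilon_i-\E[\upsilon_i|a_i]$ under Assumption~\ref{assumption:limit.dist} to obtain $\sup_{i,m}|s^{\upsilon}_{m,N,i}|\le C$.

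The main obstacle is not any single estimate but the interchange itself: the per-$m$ convergence results supply nothing about uniformity across the infinitely many indices $m$, so the whole argument hinges on extracting the $m$-uniform envelope $2C$ from the averaging (contraction) structure of the row-normalized $\G_N$ acting on a bounded initial function. Once that envelope is in hand, the geometric decay of $|\beta_1^0|^m$ renders the tail deterministically negligible and the remaining steps are routine.
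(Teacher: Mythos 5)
Your proposal is correct and follows essentially the same route as the paper's proof: truncate the series at a finite $m^{*}$, control the head termwise via the uniform-in-$i$ convergence of Lemmas \ref{lemma plim s uniform in i} and \ref{lemma plim s_eta uniform in i}, and kill the tail using a bound on $|s_{m,N,i}-\tilde{s}_m(\x_i,a_i)|$ that is uniform in $i$ and $m$ together with the geometric decay of $|\beta_1^0|^m$ under Assumption \ref{assumption:limit.dist}(i). Your only deviation is cosmetic and in fact slightly more careful: you derive the $m$-uniform envelope explicitly from the row-stochastic (convex-combination) structure of $\G_N$ acting on the bounded initial function, whereas the paper simply asserts this boundedness as a consequence of Assumption \ref{assumption:limit.dist}.
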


\noindent {\bf Proof}

\noindent {\bf Part (a).} 

Notice from Assumption \ref{assumption:limit.dist} that $| \beta_1^0| < 1$ and $  s_{m,N,i}, \tilde{s}_m(\x_i,a_i), \E[ s_{m,N,i} | a_i], \E[ \tilde{s}_m(\x_i,a_i)|a_i ] $ are bounded by a finite constant, say, $M$.
For given $\epsilon > 0$, we choose $m^{*}$ such that $2M \sum_{m = m^{*}+1}^{\infty} (\beta_1^0)^m \leq \epsilon$.
%Denote 
%\begin{align*}
%A &= \sup_{1\leq i\leq N} \left|\sum_{m=0}^\infty (\beta_1^0)^m\left(  s_{m,N,i} -\tilde{s}_{m,N}(\x_i,a_i)\right) \right| \\
%A_1 &=  \sup_{1\leq i\leq N} \left|\sum_{m=0}^{m^{*}} (\beta_1^0)^m\left(  s_{m,N,i} -\tilde{s}_{m,N}(\x_i,a_i)\right) \right| \\
%A_2 &=  \sup_{1\leq i\leq N} \left|\sum_{m=m^{*}+1}^\infty (\beta_1^0)^m\left(  s_{m,N,i} -\tilde{s}_{m,N}(\x_i,a_i)\right) \right|.
%\end{align*} 
Then, by definition, we have 
\[
\sup_{1\leq i\leq N} \left|\sum_{m=m^{*}+1}^\infty (\beta_1^0)^m\left(  s_{m,N,i} -\tilde{s}_{m}(\x_i,a_i)\right) \right| 
 \leq 2M \sum_{m=m^{*}+1}^\infty (\beta_1^0)^m \leq \epsilon.
\]
Notice that
\begin{align*}
 \sup_{1\leq i\leq N} \left|\sum_{m=0}^\infty (\beta_1^0)^m\left(  s_{m,N,i} -\tilde{s}_{m}(\x_i,a_i)\right) \right| 
& \leq \sup_{1\leq i\leq N} \left|\sum_{m=0}^{m^{*}} (\beta_1^0)^m\left(  s_{m,N,i} -\tilde{s}_{m}(\x_i,a_i)\right) \right| + \epsilon \\
& \leq m^{*} \sup_{1\leq i\leq N} |s_{m,N,i} -\tilde{s}_{m}(\x_i,a_i)|   + \epsilon  \\
& = o_p(1) + \epsilon,
\end{align*}
where the last inequality holds since $m^{*}$ is finite and by Lemma \ref{lemma:sum_m beta s_miN - stilde_mi}. 
Since $\epsilon$ is arbitrary, we have the desired result for Part (a). $\square$ 
\\
 
\noindent {\bf Parts (b) and (c).}

Under Assumption \ref{assumption:limit.dist}, $\E[ s_{m,N,i} | a_i], \E[ \tilde{s}_{m}(\x_i,a_i) | a_i]$, and $\eta^{\upsilon}_i = \upsilon_i - h^{\upsilon}(a_i)$ are bounded by a constant. Apply the same argument used in the proof of Part (a), then we deduce the required result of Parts (b) and (c).
$\square$

\section{For $\bar{\beta}_{2SLS}$}

\subsection{Limiting distribution of $\bar{\beta}_{2SLS}$}

Recall the definition that for any variable $b_i^l$ being an element of $(y_i,\w_i,\w_i)$ and $\upsilon_i$, 
\[
\eta^{l}_{*i}:= b^l_i - h_{*}^l(\x_{2i},a_i) = b^l_i - h_{**}^l(\x_{2i},\text{deg}_i), \quad \eta^{\upsilon}_{*i}: =\upsilon_i - h^{\upsilon}_{*}(\x_{2i},a_i) \upsilon_i - h^{\upsilon}_{**}(\x_{2i},\text{deg}_i).
\]
Let $\bm{\eta}_{*N}^{\upsilon} = (\eta^{\upsilon}_{*1},...,\eta^{\upsilon}_{*N})'$.

Outline:\begin{itemize}
	\item[Step 1] Show that \begin{eqnarray}\label{eq: step 1 for beta_bar proof}
	&&\sqrt{N}(\bar{\beta}_{2SLS} -\beta^0)\nonumber\\
	&=& \left( \W_N^{\prime}\M_{{\R}_N}\Z_N 
	\left( \Z_N^{\prime} \M_{{\R}_N} \Z_N \right)^{-1} 
	\Z_N^{\prime} \M_{{\R}_N} \W_N \right)^{-1} \nonumber \\
	&&
	\times \W_N^{\prime}\M_{{\R}_N}\Z_N 
	\left( \Z^{\prime} \M_{{\R}_N} \Z_N \right)^{-1} \Z_N^{\prime} \M_{{\R}_N} \bm{\eta}_{*N}^{\upsilon}+o_p(1).
	\end{eqnarray}
	\item[Step 2] Show
	\begin{align*}
	\frac{1}{N} \sum_{i=1}^{N} 
	\left(b^l_i - \widehat{h}^l_{**}(\x_{2i},\text{deg}_i) \right)
	\left(b^l_i - \widehat{h}^l_{**}(\x_{2i},\text{deg}_i) \right)'
	& \frac{1}{N} \sum_{i=1}^{N} 
	\left(b^l_i - h^l_{**}(\x_{2i},\text{deg}_i) \right)
	\left(b^l_i - h^l_{**}(\x_{2i},\text{deg}_i) \right)'+o_p(1)
	\end{align*}
	and \[
	\frac{1}{\sqrt{N}} \sum_{i=1}^{N}
	\left(b_i^l- \widehat{h}^l_{**}(\x_{2i},\text{deg}_i) \right)\eta^{\upsilon}_{*i}
	= \frac{1}{\sqrt{N}} \sum_{i=1}^{N}
	\left(b_i^l- h^l_{**}(\x_{2i}, \text{deg}_i) \right)\eta^{\upsilon}_{*i}+ o_p(1).
	\]
	\item[Step 3] Derive the limits of 
		\begin{equation*}
	\frac{1}{N} \sum_{i=1}^{N} 
	\left(b^l_i - h^l_{**}(\x_{2i},\text{deg}_i) \right)
	\left(b^l_i - h^l_{**}(\x_{2i},\text{deg}_i) \right)'
	= \frac{1}{N} \sum_{i=1}^{N} 
	\left(b^l_i - h^l_{**}(\x_{2i},a_i) \right)
	\left(b^l_i - h^l_{**}(\x_{2i},a_i \right)' 
	\end{equation*}
	and 
	\[
\frac{1}{\sqrt{N}} \sum_{i=1}^{N}
	\left(b_i^l- h^l_{**}(\x_{2i},\text{deg}_i) \right)\eta^{\upsilon}_{*i}
	= \frac{1}{\sqrt{N}} \sum_{i=1}^{N}
	\left(b_i^l- h^l_{*}(\x_{2i},a_i) \right)\eta^{\upsilon}_{*i}
	\]
\end{itemize}

\subsection{Controlling the Sampling Error $\widehat{deg}_i-deg_i$ in Sieve Estimation}\label{appenxid: error from deg_hat-deg}

Equation \eqref{eq: step 1 for beta_bar proof} holds if the following Lemma is true.
\begin{lemma}\label{lemma: error from deg_hat} 
	Assume Assumptions Assumptions  \ref{as:basic}, \ref{as:basic.alternative}, \ref{assumption:rank.alternative},  \ref{assumption:sieve basis.alternative}, \ref{assumtion:sieve with x2, Lipschitz} and \ref{assumption:limit.dist}. Then the following holds.
	\begin{enumerate}
		\item[(a)] $\frac{1}{N}(\Z_N^{\prime}\PP_{\widehat{R}_N}\W_N - \Z_N^{\prime}\PP_{R_N}\W_N) = o_p(1)$.
		\item[(b)] $\frac{1}{N}(\Z_N^{\prime}\PP_{\widehat{R}_N}\Z_N - \Z_N^{\prime}\PP_{R_N}\Z_N) = o_p(1)$.
		\item[(c)] $\frac{1}{\sqrt{N}}(\Z_N^{\prime}\PP_{\widehat{R}_N}\bm{\eta^{v}}_{*N}- \Z_N^{\prime}\PP_{R_N}\bm{\eta^{v}}_{*N}) = o_p(1)$.
		\item[(d)] $\frac{1}{\sqrt{N}}(\Z^{\prime}\M_{\widehat{R}_N} (H(\an)- \widehat{\textbf{R}}_N \gamma)) = o_p(1)$.
	\end{enumerate}
\end{lemma}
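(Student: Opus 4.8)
The plan is to prove Lemma~\ref{lemma: error from deg_hat} as a direct analogue of Lemma~\ref{lemma: error from A_hat}, obtained by the substitutions $\Q_N \mapsto \R_N$, $\widehat{\Q}_N \mapsto \widehat{\R}_N$ and $\zeta_a(N)\mapsto \zeta_{deg}(N)$. The key observation is that the scaled degree $\widehat{\text{deg}}_i$ plays exactly the role that $\hat a_i$ played for $\hat\beta_{2SLS}$: equation~\eqref{eq.estimation.deg} gives $\max_{1\le i\le N}\|\widehat{\text{deg}}_i-\text{deg}_i\| = O_p(\zeta_{deg}(N)^{-1})$, which is of the same form as the high-level bound in Assumption~\ref{assumption: estimation of a_i}. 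Moreover, since $\text{deg}_i=\text{deg}(\x_{2i},a_i)$ is a deterministic function of the i.i.d.\ pair $(\x_{2i},a_i)$, the regressors $(\x_{2i},\text{deg}_i)$ are themselves i.i.d.\ across $i$, so every law-of-large-numbers and eigenvalue argument used for $\Q_N$ transfers without modification.

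First I would establish the three supporting estimates that mirror Lemmas~\ref{lemma: Q^2/N}, \ref{lemma:Q_hat-Q} and~\ref{lemma: eigenvalues of Q and Q_hat}: (i) $\tfrac1N\|\R_N\|^2\le M\,\zeta_{0**}^2(K_N)$, immediate from Assumption~\ref{assumption:sieve basis.alternative}(ii); (ii) the crucial bound
\[
\frac1N\|\widehat{\R}_N-\R_N\|^2 \le M\,\zeta_{deg}(N)^{-2}\sum_{k=1}^{K_N}\zeta_{1**}^2(k) = o_p(1),
\]
which holds because $\R_N$ and $\widehat{\R}_N$ differ only through their degree argument (the $\x_{2i}$ component is common), so the Lipschitz condition of Assumption~\ref{assumtion:sieve with x2, Lipschitz} together with~\eqref{eq.estimation.deg} controls $\|\rr_k(\x_{2i},\text{deg}_i)-\rr_k(\x_{2i},\widehat{\text{deg}}_i)\|$ termwise; and (iii) that, w.p.a.\ $1$, the minimum eigenvalues of $\R_N'\R_N/N$ and $\widehat{\R}_N'\widehat{\R}_N/N$ are bounded below by a positive constant, proved exactly as in Lemma~\ref{lemma: eigenvalues of Q and Q_hat} via Lemma~\ref{lemma: lambda_A-lambda_B} and (i)--(ii).

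Given these, parts (a)--(c) are handled by the same algebraic decomposition as in Lemma~\ref{lemma: error from A_hat}: writing
\[
\tfrac1N\big(\Z_N'\PP_{\widehat{\R}_N}\W_N-\Z_N'\PP_{\R_N}\W_N\big)=I_1+I_2-I_3+I_4,
\]
with the four terms formed by inserting $(\widehat{\R}_N-\R_N)$ into each factor, and bounding each in operator norm using~\eqref{eq.norm.inequality}, \eqref{lemma: finite second moments}, the eigenvalue bound (iii) and the rate in (ii). The only arithmetic to check is that the products of powers of $\zeta_{0**}(K_N)$ with $\big(\zeta_{deg}(N)^{-2}\sum_k\zeta_{1**}^2(k)\big)^{1/2}$ vanish, which is guaranteed by the condition $\zeta_{0**}(K_N)^6\big(\zeta_{deg}(N)^{-2}\sum_k\zeta_{1**}^2(k)\big)=o(1)$ in Assumption~\ref{assumtion:sieve with x2, Lipschitz}. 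For (c) one additionally uses that $\eta^{\upsilon}_{*i}$ is conditionally mean-zero given $(\X_N,\D_N,\an)$, by Lemma~\ref{lemma:control.function} and Assumption~\ref{as:basic.alternative}, which reproduces the conditioning argument bounding $\E[\|(\widehat{\R}_N-\R_N)'\bm{\eta}^{\upsilon}_{*N}/\sqrt N\|^2\mid\X_N,\D_N,\an]$ by $\tfrac1N\|\widehat{\R}_N-\R_N\|^2$. Part (d) splits as $\tfrac1{\sqrt N}\Z_N'(\M_{\widehat{\R}_N}-\M_{\R_N})H(\an)$ plus $\tfrac1{\sqrt N}\Z_N'\M_{\R_N}(H(\an)-\R_N\boldsymbol{\gamma})$; the first piece is $o_p(1)$ by the decomposition above, while the second is bounded by $\|\Z_N/\sqrt N\|\cdot\|H(\an)-\R_N\boldsymbol{\gamma}\|=O_p(1)\cdot\sqrt N\,O(K_N^{-\kappa})=o_p(1)$ using the series-approximation rate of Assumption~\ref{assumption:sieve basis.alternative}(iii)--(iv).

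The main obstacle is conceptual rather than algebraic: one must verify that the degree-based generated regressor genuinely satisfies the uniform rate~\eqref{eq.estimation.deg}, i.e.\ the same kind of bound imposed on $\hat a_i$ in Assumption~\ref{assumption: estimation of a_i}. This rests on the moment bound $\sup_i\E[(\sqrt N(\widehat{\text{deg}}_i-\text{deg}_i))^{2B}]<\infty$ for integer $B\ge 2$, which in turn exploits the density of the network (so that $\widehat{\text{deg}}_i$ is bounded away from zero) and the conditional independence of links given $(\x_{2i},a_i)$ from Assumption~\ref{assumption:limit.dist}. Once that rate is in hand, the remaining steps are a faithful transcription of the $\hat\beta_{2SLS}$ argument.
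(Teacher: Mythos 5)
Your proposal is correct and matches the paper's own treatment: the paper proves Lemma \ref{lemma: error from deg_hat} by simply invoking the same argument as Lemma \ref{lemma: error from A_hat}, with $(\x_{2i},\widehat{\text{deg}}_i)$ in place of $\hat a_i$, the rate (\ref{eq.estimation.deg}) in place of Assumption \ref{assumption: estimation of a_i}, and Assumptions \ref{assumption:sieve basis.alternative}--\ref{assumtion:sieve with x2, Lipschitz} in place of Assumptions \ref{assumption:sieve basis}--\ref{assumption: Lipschitz condition}. Your expansion of that analogy, including the supporting bounds mirroring Lemmas \ref{lemma: Q^2/N}, \ref{lemma:Q_hat-Q}, and \ref{lemma: eigenvalues of Q and Q_hat} and the conditional-mean-zero argument for $\eta^{\upsilon}_{*i}$ in part (c), is exactly what the paper's one-line proof implicitly relies on.
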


\begin{proof}
	We can apply a similar argument as in Lemma \ref{lemma: error from A_hat} and derive the desired result.
\end{proof}

\subsection{Controlling the Series Approximation Error for $\mathbf{r}^K(\x_{2i},\text{deg}_i)$}
 
	\begin{lemma}[Series Approximation]\label{lemma: series approximation error.alternative} Assume the assumptions in Lemma \ref{lemma: error from deg_hat}.  Then, we have
		\begin{itemize}
			\item[(a)] $\frac{1}{N} \sum_{i=1}^{N} 
			(\w_i-\widehat{\h}_{**}^{\w}(\x_{2i},\text{deg}_i))
			(\z_i-\widehat{\h}_{**}^\z(\x_{2i},\text{deg}_i))'
			= \frac{1}{N} \sum_{i=1}^{N} 
			(\w_i-\h_{**}^{\w}(\x_{2i},\text{deg}_i))
			(\z_i-\h_{**}^\z(\x_{2i},\text{deg}_i))'+ o_p(1)$,
			\item[(b)] $\frac{1}{N} \sum_{i=1}^{N} 
			(\z_i-\widehat{\h}_{**}^{\z}(\x_{2i},\text{deg}_i))
			(\z_i-\widehat{\h}_{**}^\z(\x_{2i},\text{deg}_i))'
			= \frac{1}{N} \sum_{i=1}^{N} 
			(\z_i-\h_{**}^{\z}(\x_{2i},\text{deg}_i))
			(\z_i-\h_{**}^\z(\x_{2i},\text{deg}_i))'+ o_p(1)$,
			\item[(c)] $\frac{1}{\sqrt{N}} \sum_{i=1}^{N}
			(\z_i-\widehat{\h}_{**}^{\z}(\x_{2i},\text{deg}_i))
			\eta^{\upsilon}_{*i}
			= \frac{1}{\sqrt{N}} \sum_{i=1}^{N}
			(\z_i-\h_{**}^{\z}(\x_{2i},\text{deg}_i))
			\eta^{\upsilon}_{*i}+ o_p(1) $.
		\end{itemize}
	\end{lemma}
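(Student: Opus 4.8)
The plan is to reproduce the argument of Lemma~\ref{lemma: series approximation error} verbatim, with the bivariate control $(\x_{2i},\text{deg}_i)$ playing the role of the scalar $a_i$ and the sieve basis $\rr^{K}(\x_{2i},\text{deg}_i)$ replacing $\q^{K}(a_i)$. Throughout this lemma $\text{deg}_i$ is treated as \emph{known}: the additional error coming from the generated regressor $\widehat{\text{deg}}_i-\text{deg}_i$ has already been absorbed in Step~1 via Lemma~\ref{lemma: error from deg_hat}. As in the $a_i$--case, statements (a)--(c) reduce, after the triangle and Cauchy--Schwarz inequalities, to the three claims
\begin{align*}
\textup{(i)}\ \ & \tfrac{1}{N}\sum_{i=1}^N \big(\hat{\h}^{\w}_{**}(\x_{2i},\text{deg}_i)-\h^{\w}_{**}(\x_{2i},\text{deg}_i)\big)\big(\hat{\h}^{\w}_{**}(\x_{2i},\text{deg}_i)-\h^{\w}_{**}(\x_{2i},\text{deg}_i)\big)'=o_p(1),\\
\textup{(ii)}\ \ & \tfrac{1}{N}\sum_{i=1}^N \big(\hat{\h}^{\z}_{**}(\x_{2i},\text{deg}_i)-\h^{\z}_{**}(\x_{2i},\text{deg}_i)\big)\big(\hat{\h}^{\z}_{**}(\x_{2i},\text{deg}_i)-\h^{\z}_{**}(\x_{2i},\text{deg}_i)\big)'=o_p(1),\\
\textup{(iii)}\ \ & \tfrac{1}{\sqrt{N}}\sum_{i=1}^N \big(\hat{\h}^{\z}_{**}(\x_{2i},\text{deg}_i)-\h^{\z}_{**}(\x_{2i},\text{deg}_i)\big)\eta^{\upsilon}_{*i}=o_p(1).
\end{align*}

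To handle (i) and (ii) I would first re-establish, with $\q\mapsto\rr$, the chain of supporting results used in the scalar case. Writing the sieve residual $\hat{\h}^{l}_{**}-\h^{l}_{**}=-\M_{\R_N}\h^{l}_{**}+\PP_{\R_N}\bm{\eta}^{l}_{*}$, the deterministic approximation-bias part $\M_{\R_N}\h^{l}_{**}$ is controlled through the analogue of Lemma~\ref{lemma.series est error}, namely $\tfrac{1}{N}\|\h^{l}_{**}-\PP_{\R_N}\h^{l}_{**}\|^2=O_p(K_N^{-2\kappa})$, which itself rests on the analogue of Lemma~\ref{lemma.QQ/N} (giving $\E\|\R_N'\R_N/N-\I\|^2=O(\zeta_{0**}(K_N)^2K_N/N)=o(1)$, hence asymptotic non-singularity of $\R_N'\R_N/N$) and of Lemma~\ref{lemma.sieve.coefficient}. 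These invoke only Assumption~\ref{assumption:sieve basis.alternative}(i)--(iv), the exact counterparts of Assumption~\ref{assumption:sieve basis}; the WLLN step is legitimate because $\text{deg}_i=\text{deg}(\x_{2i},a_i)$ is a deterministic function of $(\x_{2i},a_i)$, so by Assumption~\ref{as:basic}(i) the $(\x_{2i},\text{deg}_i)$ are i.i.d. across $i$. The stochastic part $\PP_{\R_N}\bm{\eta}^{l}_{*}$ contributes $O_p(K_N/N)$ in squared norm by a rank/trace bound together with the boundedness of the residuals in Assumption~\ref{assumption:limit.dist}. Both pieces are $o_p(1)$ since $K_N/N\to0$, yielding (i) and (ii).

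Claim (iii) splits the same way. The bias part $\tfrac{1}{\sqrt N}(\M_{\R_N}\h^{\z}_{**})'\bm{\eta}^{\upsilon}_{*N}$ is bounded via Cauchy--Schwarz by $\sqrt N\,(\tfrac1N\|\M_{\R_N}\h^{\z}_{**}\|^2)^{1/2}(\tfrac1N\sum_i(\eta^{\upsilon}_{*i})^2)^{1/2}=O_p(\sqrt N K_N^{-\kappa})=o_p(1)$ using $\sqrt N K_N^{-\kappa}\to0$ from Assumption~\ref{assumption:sieve basis.alternative}(iv). The delicate term is the residual-projection cross term $\tfrac{1}{\sqrt N}\bm{\eta}^{\z\prime}_{*}\PP_{\R_N}\bm{\eta}^{\upsilon}_{*N}$, for which a crude Cauchy--Schwarz bound is too lossy. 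Here I would exploit the control-function orthogonality: because $\eta^{\upsilon}_{*i}=\upsilon_i-\E[\upsilon_i\mid\x_{2i},a_i]$ and, by the strict monotonicity of $\text{deg}(\x_{2i},\cdot)$ in $a_i$, $(\x_{2i},a_i)$ and $(\x_{2i},\text{deg}_i)$ are in one-to-one correspondence, Lemma~\ref{lemma:control.function} and Assumption~\ref{as:basic}(iii) give $\E[\eta^{\upsilon}_{*i}\mid\X_{2N},\an,\z_i]=0$. This renders the cross term a mean-zero object whose variance is $O(K_N/N)=o(1)$, so it is $o_p(1)$; the remaining $\tfrac{1}{\sqrt N}\bm{\eta}^{\z\prime}_{*}\bm{\eta}^{\upsilon}_{*N}$ is exactly the infeasible CLT numerator.

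The main obstacle is not the algebra but re-verifying, for the multivariate control $(\x_{2i},\text{deg}_i)$ over its domain $\mathcal{S}$, the asymptotic non-singularity of $\R_N'\R_N/N$ and the multidimensional approximation rate $O(K_N^{-\kappa})$ — the only place the structure genuinely departs from the scalar $a_i$ setting — and then controlling the stochastic projection-of-residual terms under the strong cross-sectional dependence induced by the dense network. I expect the former to be routine given Assumption~\ref{assumption:sieve basis.alternative}, so that (as the paper itself notes for the parallel Lemma~\ref{lemma: error from deg_hat}) the proof is essentially a transcription of the $\hat{\beta}_{2SLS}$ argument; the conceptual care-point is to keep the two error sources — the generated-regressor error handled in Lemma~\ref{lemma: error from deg_hat} and the series-approximation error handled here — cleanly separated.
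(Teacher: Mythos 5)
Your proposal is correct and takes essentially the same route as the paper: the paper omits this proof entirely, remarking only that it is analogous to that of Lemma \ref{lemma: series approximation error}, which is precisely the transcription $\q^{K}(a_i)\mapsto\rr^{K}(\x_{2i},\text{deg}_i)$ under Assumption \ref{assumption:sieve basis.alternative} that you carry out, with the generated-regressor error $\widehat{\text{deg}}_i-\text{deg}_i$ already absorbed in Step 1 via Lemma \ref{lemma: error from deg_hat}. If anything, your explicit decomposition $\hat{\h}^{l}_{**}-\h^{l}_{**}=-\M_{\R_N}\h^{l}_{**}+\PP_{\R_N}\bm{\eta}^{l}_{*}$ and the mean-zero variance bound for the cross term in (iii) supply detail that the paper's scalar-case argument (which disposes of (iii) by Cauchy--Schwarz from (ii), using $\sqrt{N}K_N^{-\kappa}\to 0$) leaves implicit.
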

	Then the proofs are analogous to the proofs presented in Section \ref{appendix: series approximation error} and we omit them.

\subsection{Limiting distribution of $\bar{\beta}_{2SLS}$}
Note that $h^{l}_{**}(\x_{2i},deg_i)=h^l_*(\x_{2i},a_i)$. 
Using this relationship we can state the following Lemmas.
\begin{lemma}\label{lemma: limit of S^ZZ and S^WZ.alternative}  Under Assumption \ref{as:basic}, \ref{as:basic.alternative}, and \ref{assumption:limit.dist}, we have 
	\begin{align*}
\frac{1}{N}\sum_{i=1}^N(\w_i-\h_*^{\w}(\x_{2i},a_i))
	(\z_i-\h_*^{\z}(\x_{2i},a_i))'  \xrightarrow{p} 
	\begin{pmatrix}
	\bar{S}^{GY,\x_1} & \bar{S}^{GY,G\x_1} & \bar{S}^{GY,G^2\x_1} \\
	\bar{S}^{\x_1,\x_1} & \bar{S}^{\x_1,G \x_1} & \bar{S}^{\x_1, G^2 \x_1} \\
	\bar{S}^{G\x_1,\x_1} & \bar{S}^{G\x_1,G\x_1} & \bar{S}^{G\x_1,G^2\x_1} 
	\end{pmatrix}
	=: \bar{S}^{\w\z},
	\end{align*}
	and 
	\begin{align*}
\frac{1}{N}\sum_{i=1}^N(\z_i-\h_*^{\z}(\x_{2i},a_i))
	(\z_i-\h_*^{\z}(\x_{2i},a_i))' 
	 \xrightarrow{p} 
	\begin{pmatrix}
	\bar{S}^{\x_1,\x_1} & \bar{S}^{\x_1,G\x_1} & \bar{S}^{\x_1,G^2\x_1} \\
	\bar{S}^{G\x_1,\x_1} & \bar{S}^{G\x_1,G \x_1} & \bar{S}^{G\x_1 G^2 \x_1} \\
	\bar{S}^{G^2\x_1,\x_1} & \bar{S}^{G^2\x_1,G\x_1} & \bar{S}^{G^2\x_1,G^2\x_1} 
	\end{pmatrix}
	=: \bar{S}^{\z\z},
	\end{align*}
	where
	\begin{align*}
	\bar{S}^{GY,G^r\x_1} &=\E\left[\left(
	\sum_{m=0}^\infty \beta_{2}^{0'}\tilde{\tilde{s}}^{\x_1}_{*m}(\x_i,a_i) +\beta_3^{0'} \tilde{\tilde{s}}^{\x_1}_{*,m+1}(\x_i,a_i)+\tilde{\tilde{s}}^a_{*m}(\x_i,a_i)\right)\left(\tilde{\tilde{s}}^{\x_1}_{*r}(\x_i,a_i)\right)'
	\right],\ r=0,1,2 \\
	\bar{S}^{G^r \x_1,G^s\x_1} &= \E\left[\tilde{\tilde{s}}^{\x_1}_{*r}(\x_i,a_i))\left(\tilde{\tilde{s}}^{\x_1}_{*s}(\x_i,a_i)\right)'
	\right],\ r,s=0,1,2 \\
	\tilde{\tilde{s}}_{*m}^{\x_1}(\x_i,a_i) &= \tilde{s}_{*m}^{\x_1}(\x_i,a_i)-\E[\tilde{s}_{*m}^{\x_1}(\x_i,a_i)|\x_{2i},a_i]) \quad \text{with} \quad
	\tilde{s}^{*\x_1}_0(\x_i,a_i) = \x_{1i} \\
	\tilde{\tilde{s}}_{*m}^{a}(\x_i,a_i) &= \tilde{s}_{*m}^{a}(\x_i,a_i)-\E[\tilde{s}_{*m}^{a}(\x_i,a_i)|\x_{2i},a_i]) \quad \text{with} \quad
	\tilde{s}^{a}_{*0}(\x_i,a_i) = h^{\upsilon}_{*}(\x_{2i},a_i),
	\end{align*}
	where $\tilde{\tilde{s}}_{*m}^{\x_1}(\x_i,a_i)$ and $\tilde{\tilde{s}}_{*m}^{a}(\x_i,a_i)$ are defined recursively as in (\ref{def.tilde.sm}).
\end{lemma}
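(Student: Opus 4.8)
The plan is to follow, line by line, the three-step argument used to prove Lemma~\ref{lemma: limit of S^ZZ and S^WZ}, carrying the extra conditioning variable $\x_{2i}$ through every conditional expectation. The crucial simplification is the one-to-one relation established in Section~\ref{sec: estimation with x,a as control}: because $u_{ij}$ is continuous and $g$ is strictly monotonic in $a_i$, the map $a_i\mapsto\text{deg}(\x_{2i},a_i)$ is invertible for each $\x_{2i}$, so $h^l_{**}(\x_{2i},\text{deg}_i)=h^l_{*}(\x_{2i},a_i)=\E[b^l_i\mid\x_{2i},a_i]$. Hence it suffices to prove the stated limits with the demeaning carried out by $\E[\cdot\mid\x_{2i},a_i]$, and the limit objects $\tilde{\tilde s}^{\x_1}_{*m}$ and $\tilde{\tilde s}^{a}_{*m}$ are precisely the $\x_{2i}$-conditional analogues of $\tilde{\tilde s}^{\x_1}_{m}$ and $\tilde{\tilde s}^{a}_{m}$.

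As before, I would isolate a single representative entry, say $\frac{1}{N}\sum_{i=1}^N(\bm\eta^{GY}_{*i})(\bm\eta^{G^2\x_1}_{*i})'$ with $\bm\eta^{\bullet}_{*i}=[\bullet]_i-\E\{[\bullet]_i\mid\x_{2i},a_i\}$, and expand the endogenous regressor through $\G_N\y_N=\sum_{m\ge0}(\beta_1^0)^m\G_N^m(\X_{1N}\beta_2^0+\G_N\X_{1N}\beta_3^0+\h^{\upsilon}_{*}(\X_{2N},\an)+\bm\eta^{\upsilon}_{*N})$, splitting the average into four sums $I+II+III+IV$. For the first three pieces I would invoke the uniform-in-$i$ convergence $s_{m,N,i}\to\tilde s_m(\x_i,a_i)$ (Lemma~\ref{lemma plim s uniform in i}) together with the summable bound of Lemma~\ref{lemma:sum_m beta s_miN - stilde_mi}, replacing each finite-$N$ average by its deterministic limit up to $o_p(1)$, and then apply the WLLN to the resulting i.i.d.\ summands to obtain the expectations defining $\bar S^{GY,G^r\x_1}$ and $\bar S^{G^r\x_1,G^s\x_1}$. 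The only change from Lemma~\ref{lemma: limit of S^ZZ and S^WZ} is that every centering $\E[\cdot\mid a_i]$ becomes $\E[\cdot\mid\x_{2i},a_i]$; since $\x_{2i}$ is bounded under Assumption~\ref{assumption:limit.dist} and enters only as additional conditioning information, the boundedness inputs to those lemmas are untouched, so the analogues of Lemma~\ref{lemma plim s uniform in i}(b) and Lemma~\ref{lemma:sum_m beta s_miN - stilde_mi}(b) go through verbatim.

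The term $IV$ carrying the error $\bm\eta^{\upsilon}_{*N}$ is where the alternative control-function identity enters. I would set $\eta^{\upsilon}_{*i}=\upsilon_i-\E[\upsilon_i\mid\x_{2i},a_i]$ with recursion $s^{\upsilon}_{*m,N,i}=[\G_N^m\bm\eta^{\upsilon}_{*N}]_i$ and show $\E[s^{\upsilon}_{*m,N,i}\mid\x_{2i},a_i]=0$ for all $m$. From (\ref{eq.control.function.alternative}) one has $\E[\upsilon_i\mid\X_N,\D_N,a_i]=\E[\upsilon_i\mid\x_{2i},a_i]$, hence $\E[\eta^{\upsilon}_{*i}\mid\X_N,\D_N,a_i]=0$; iterating the tower property, using that $\G_N^m$ is measurable with respect to $(\X_{2N},\an,\{u_{ij}\})$ exactly as in Lemma~\ref{lemma plim s_eta uniform in i}, yields the vanishing conditional mean, so $IV\xrightarrow{p}\E\{\E[\eta^{\upsilon}_{*i}\mid\x_i,a_i]\,\tilde{\tilde s}^{\x_1}_{*2}(\x_i,a_i)\}=0$, the last equality by Assumption~\ref{as:basic}(iii). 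I would note in passing that Assumption~\ref{as:basic}(iii) forces $\E[\upsilon_i\mid\x_{2i},a_i]=\E[\upsilon_i\mid a_i]$, so $III$ and $IV$ in fact coincide with their counterparts in Lemma~\ref{lemma: limit of S^ZZ and S^WZ}. Repeating this block-by-block across $\bar S^{\w\z}$ and $\bar S^{\z\z}$ delivers the claim.

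I expect the main obstacle to be the bookkeeping for the conditional-mean-zero property of the error recursion $s^{\upsilon}_{*m,N,i}$ under the enlarged conditioning set $(\x_{2i},a_i)$: one must check carefully that applying $\G_N^m$ and then conditioning on $(\x_{2i},a_i)$ still annihilates the centered error, which rests on the fact that adjoining $\x_{2i}$ to $a_i$ does not disturb the mean-independence supplied by Assumption~\ref{as:basic}(iii) and the independence of $\{u_{ij}\}$ from $(\X_N,\an,\bm\upsilon_N)$ in Assumption~\ref{as:basic}(i)--(ii). Every remaining step is a mechanical transcription of the $a_i$-only proof.
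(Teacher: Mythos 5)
Your proposal is correct and takes essentially the same route the paper intends: the paper supplies no separate proof of this lemma, remarking only that $h^l_{**}(\x_{2i},\text{deg}_i)=h^l_{*}(\x_{2i},a_i)$ and that the arguments for $\hat{\beta}_{2SLS}$ (the $I+II+III+IV$ decomposition of a representative entry, Lemmas \ref{lemma plim s uniform in i}--\ref{lemma:sum_m beta s_miN - stilde_mi}, and the WLLN) carry over once every centering $\E[\cdot\mid a_i]$ is replaced by $\E[\cdot\mid\x_{2i},a_i]$, which is precisely your transcription, including the mean-zero property of the error recursion under the enlarged conditioning set via (\ref{eq.control.function.alternative}). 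One small caution on your passing remark: while Assumption \ref{as:basic}(iii) indeed gives $h^{\upsilon}_{*}(\x_{2i},a_i)=h^{\upsilon}(a_i)$, so the raw series in $III$ and the error $\eta^{\upsilon}_{*i}=\eta^{\upsilon}_i$ in $IV$ are unchanged, the centering in $III$ is now $\E[\cdot\mid\x_{2i},a_i]$ rather than $\E[\cdot\mid a_i]$, so the limit objects differ from those of Lemma \ref{lemma: limit of S^ZZ and S^WZ} (in fact, since each $\tilde{s}^{a}_{*m}$ is a function of $(\x_{2i},a_i)$ alone, $\tilde{\tilde{s}}^{a}_{*m}=0$ for every $m$); this imprecision is harmless because your main argument handles $III$ through the general uniform-convergence-plus-WLLN machinery rather than through that identification.
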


\begin{lemma} \label{lemma:numerator.limit variance.alternative} Under Assumption \ref{as:basic}, \ref{as:basic.alternative}, and \ref{assumption:limit.dist}, 
	\[ 
	\frac{1}{N}\sum_{i=1}^N (\z_i-\h_{*}^{\z}(\x_{2i},a_i))(\z_i-\h_{*}^{\z}(\x_{2i},a_i))' \sigma^2_{*}(\x_i,a_i)
	\rightarrow_{p} \mathbf{\bar{S}}^{\z \z \sigma},
	\]
	where the limit variance $\mathbf{\bar{S}}^{\z \z \sigma}$ is defined in Lemma \ref{lemma: CLT.alternative}.	
\end{lemma}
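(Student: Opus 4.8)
The plan is to follow, nearly verbatim, the argument behind Lemma \ref{lemma: limit of S^ZZ and S^WZ.alternative} (the $(\x_{2i},a_i)$-analogue of Lemma \ref{lemma: limit of S^ZZ and S^WZ}), the only new ingredient being the scalar weight $\sigma^2_{*}(\x_i,a_i)$, which under Assumption \ref{assumption:limit.dist} is bounded by a finite constant and depends only on $(\x_i,a_i)$. Indeed, the proof of Lemma \ref{lemma:numerator.limit variance} in the $a_i$-only case was already obtained this way, so here I would simply repeat the steps, conditioning on $(\x_{2i},a_i)$ rather than on $a_i$ alone.

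First I would observe that it suffices to treat each block of the matrix separately, and by symmetry it is enough to display one representative entry, say the lower-right block
\[
\frac{1}{N}\sum_{i=1}^N \left( [\G^2_N \X_{1N}]_i - \E\{ [\G^2_N \X_{1N}]_i \mid \x_{2i},a_i \} \right) \left( [\G^2_N \X_{1N}]_i - \E\{ [\G^2_N \X_{1N}]_i \mid \x_{2i},a_i \} \right)' \sigma^2_{*}(\x_i,a_i).
\]
Recalling the recursive notation with initial function $s_0(\x_i,a_i) = \x_{1i}$, we have $[\G^2_N\X_{1N}]_i = s^{\x_1}_{2,N,i}$, so each centered factor equals $s^{\x_1}_{2,N,i} - \E[s^{\x_1}_{2,N,i}\mid \x_{2i},a_i]$.

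Then comes the substitution step. Using the $(\x_{2i},a_i)$-conditioned versions of Lemma \ref{lemma plim s uniform in i} and Lemma \ref{lemma:sum_m beta s_miN - stilde_mi} (the same ones invoked for Lemma \ref{lemma: limit of S^ZZ and S^WZ.alternative}), I would replace $s^{\x_1}_{2,N,i}$ and its conditional mean by the pointwise limit $\tilde{s}^{\x_1}_{*2}(\x_i,a_i)$ and its conditional mean, so that the centered factor is replaced by $\tilde{\tilde{s}}^{\x_1}_{*2}(\x_i,a_i)$, uniformly in $i$ up to $o_p(1)$. Because $\sigma^2_{*}(\x_i,a_i)$ and $\tilde{\tilde{s}}^{\x_1}_{*2}$ are both bounded, the triangle-inequality/product decomposition of the display (\ref{eq.ap.A1A2-B1B2}) applies verbatim with the extra bounded multiplicand $\sigma^2_{*}(\x_i,a_i)$ absorbed, giving
\[
\frac{1}{N}\sum_{i=1}^N \tilde{\tilde{s}}^{\x_1}_{*2}(\x_i,a_i) \left( \tilde{\tilde{s}}^{\x_1}_{*2}(\x_i,a_i) \right)' \sigma^2_{*}(\x_i,a_i) + o_p(1).
\]
Finally I would apply the WLLN: under Assumption \ref{as:basic}(i) the summands are i.i.d. bounded functions of $(\x_i,a_i)$, hence the average converges in probability to $\E[ \tilde{\tilde{s}}^{\x_1}_{*2}(\x_i,a_i) ( \tilde{\tilde{s}}^{\x_1}_{*2}(\x_i,a_i) )' \sigma^2_{*}(\x_i,a_i)] = \bar{S}^{G^2\x_1 G^2\x_1 \sigma}$. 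Repeating for the remaining blocks, with $\tilde{\tilde{s}}^{\x_1}_{*r}$ and $\tilde{\tilde{s}}^{\x_1}_{*s}$ in the $(r,s)$ position, assembles the matrix $\mathbf{\bar{S}}^{\z \z \sigma}$ as defined in Lemma \ref{lemma: CLT.alternative}.

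The main obstacle is the same one that drives the whole family of limit lemmas: the uniform-in-$i$ replacement of the network-average quantities $s^{\x_1}_{m,N,i}$ by their pointwise limits, together with control of the infinite Neumann sum $\sum_{m\geq 0}(\beta_1^0)^m$ (truncated at a finite $m^{*}$ using $|\beta_1^0|<1$). This is already supplied by the $(\x_{2i},a_i)$-conditioned analogues of Lemmas \ref{lemma plim s uniform in i} and \ref{lemma:sum_m beta s_miN - stilde_mi}, so no genuinely new estimate is needed; the weight $\sigma^2_{*}(\x_i,a_i)$ contributes nothing beyond a bounded, per-individual factor that rides along exactly as the bounded terms $A_{2i}, B_{2i}$ do in the earlier decomposition.
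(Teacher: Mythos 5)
Your proposal is correct and is essentially the paper's own argument: the paper omits this proof (just as it omits the proof of Lemma \ref{lemma:numerator.limit variance}, stating it is similar to that of Lemma \ref{lemma: limit of S^ZZ and S^WZ}), and what you write out --- uniform-in-$i$ replacement of $s^{\x_1}_{r,N,i}$ and its $(\x_{2i},a_i)$-conditional mean by the centered pointwise limits $\tilde{\tilde{s}}^{\x_1}_{*r}(\x_i,a_i)$ via the decomposition (\ref{eq.ap.A1A2-B1B2}), absorption of the bounded weight $\sigma^2_{*}(\x_i,a_i)$, and then the WLLN for bounded i.i.d.\ functions of $(\x_i,a_i)$ --- is exactly that argument. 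The only superfluous element is your appeal to the Neumann-sum truncation of Lemma \ref{lemma:sum_m beta s_miN - stilde_mi}: since $\z_i$ contains only $\x_{1i}$, $[\G_N\X_{1N}]_i$ and $[\G_N^2\X_{1N}]_i$ (no $\G_N\y_N$ term), only $m \in \{0,1,2\}$ arises here and the uniform convergence of Lemma \ref{lemma plim s uniform in i} already suffices.
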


$\square$

\begin{lemma} \label{lemma: CLT.alternative}  Under Assumption \ref{as:basic}, \ref{as:basic.alternative}, and \ref{assumption:limit.dist},
	\[ 
	\frac{1}{\sqrt{N}}\sum_{i=1}^N(\z_i-\h_{*}^{\z}(\x_{2i},a_i))
	\eta^{\upsilon}_{*i}
	\Rightarrow \mathcal{N}(0,\mathbf{\bar{S}}^{\z \z \sigma}),
	\]
	where
	\begin{align*}
	&\mathbf{\bar{S}}^{\z\z\sigma}= 
	\begin{pmatrix}
	\bar{S}^{\x_1\x_1\sigma} & \bar{S}^{\x_1 G\x_1 \sigma} & \bar{S}^{\x_1 G^2\x_1 \sigma} \\
	\bar{S}^{G\x_1 \x_1 \sigma} & \bar{S}^{G\x_1 G \x_1 \sigma} & \bar{S}^{G\x_1 G^2 \x_1 \sigma} \\
	\bar{S}^{G^2\x_1 \x_1 \sigma} & \bar{S}^{G^2\x_1 G\x_1 \sigma} & \bar{S}^{G^2\x_1 G^2\x_1 \sigma} 
	\end{pmatrix}
	\end{align*}
	and 
	\begin{align*}
	\bar{S}^{G^r \x_1 G^s\x_1 \sigma} &= \E\left[\tilde{\tilde{s}}^{\x_1}_{*r}(\x_i,a_i))\left(\tilde{\tilde{s}}^{\x_1}_{*s}(\x_i,a_i)\right)' \sigma^2_{*}(\x_i,a_i)
	\right],\ r,s=0,1,2 \\
	\tilde{\tilde{s}}_{*m}^{\x_1}(\x_i,a_i) &= \tilde{s}_{*m}^{\x_1}(\x_i,a_i)-\E[\tilde{s}_{*m}^{\x_1}(\x_i,a_i)|\x_{2i},a_i]) \quad \text{with} \quad
	\tilde{s}^{\x_1}_{*0}(\x_i,a_i) = \x_{1i} \\
	\sigma^2_{*}(\x_i,a_i) &:= \E[ (\eta^{\upsilon}_{*i})^2 | \x_i,a_i ] = \mathbb{E}[(\upsilon_i - \E[ \upsilon_{i}|\x_{2i},a_i])^2| \x_i,a_i], 
	\end{align*}
	where $\tilde{\tilde{s}}_{*m}^{\x_1}(\x_i,a_i)$ is defined recursively as in (\ref{def.tilde.sm}).
\end{lemma}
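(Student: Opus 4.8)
The plan is to prove Lemma \ref{lemma: CLT.alternative} by repeating, essentially verbatim, the martingale central limit theorem argument that established Lemma \ref{lemma: CLT} for $\hat{\beta}_{2SLS}$, but with the residualization carried out with respect to $(\x_{2i},a_i)$ rather than $a_i$ alone, and with the alternative control function identity (\ref{eq.control.function.alternative}) playing the role that Lemma \ref{lemma:control.function} played there. First I would introduce the filtration $\mathcal{F}_i=(\X_N,\D_N,a_i,\eta^{\upsilon}_{*1},\ldots,\eta^{\upsilon}_{*,i-1})$. Since each instrument $\z_i$ is a measurable function of $(\X_N,\D_N)$ and since $\h_*^{\z}(\x_{2i},a_i)=\E[\z_i|\x_{2i},a_i]$ with $\x_{2i}\subset\X_N$ and $a_i$ in the filtration, the residualized instrument $\z_i-\h_*^{\z}(\x_{2i},a_i)$ is $\mathcal{F}_i$-measurable.

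The key step is the martingale difference property. Conditioning on $(\X_N,\D_N,a_i)$, the alternative control function result (\ref{eq.control.function.alternative}) gives $\E[\upsilon_i|\X_N,\D_N,a_i]=\E[\upsilon_i|\x_{2i},a_i]$, so that $\E[\eta^{\upsilon}_{*i}|\mathcal{F}_i]=\E[\upsilon_i|\X_N,\D_N,a_i]-\E[\upsilon_i|\x_{2i},a_i]=0$, using the i.i.d. structure of Assumption \ref{as:basic}(i) to absorb the lagged $\eta^{\upsilon}_{*j}$. Hence $\{(\z_i-\h_*^{\z}(\x_{2i},a_i))\eta^{\upsilon}_{*i},\mathcal{F}_i\}$ is a martingale difference sequence. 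Fixing a nonzero conformable vector $\ell$, I would then compute the predictable quadratic variation via the tower property: under Assumption \ref{as:basic} the conditional second moment is $\E[(\eta^{\upsilon}_{*i})^2|\mathcal{F}_i]=\E[(\upsilon_i-\E(\upsilon_i|\x_{2i},a_i))^2|\x_i,a_i]=\sigma^2_{*}(\x_i,a_i)$, so the normalized sum of conditional variances $s_N^2=\frac{1}{N}\sum_{i=1}^N[\ell'(\z_i-\h_*^{\z}(\x_{2i},a_i))(\z_i-\h_*^{\z}(\x_{2i},a_i))'\ell]\sigma^2_{*}(\x_i,a_i)$ converges in probability to $\ell'\mathbf{\bar{S}}^{\z\z\sigma}\ell$ by Lemma \ref{lemma:numerator.limit variance.alternative}.

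Finally, the boundedness conditions in Assumption \ref{assumption:limit.dist}(ii),(iii) make $\ell'(\z_i-\h_*^{\z}(\x_{2i},a_i))\eta^{\upsilon}_{*i}$ uniformly bounded, so the conditional Lindeberg--Feller condition follows from the fourth-moment bound exactly as in Lemma \ref{lemma: CLT}, the indicator $\mathbb{I}\{|\ell'(\z_i-\h_*^{\z}(\x_{2i},a_i))\eta^{\upsilon}_{*i}|>\epsilon\sqrt{N}\}$ forcing an $O(1/N)$ bound. Invoking the martingale central limit theorem (Corollary 3.1 of \cite{HallHeyde2014}) along these one-dimensional projections and applying the Cram\'er--Wold device yields the stated convergence to $\mathcal{N}(0,\mathbf{\bar{S}}^{\z\z\sigma})$. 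I expect the only genuine obstacle to be the martingale difference verification: because the instrument $\z_i$ requires conditioning on the full $(\X_N,\D_N,a_i)$ for measurability, while the residual is built from the coarser conditional mean $\E[\upsilon_i|\x_{2i},a_i]$, one must check that this coarser centering still produces a mean-zero increment given the finer $\mathcal{F}_i$; this is exactly what (\ref{eq.control.function.alternative}), rather than Lemma \ref{lemma:control.function}, guarantees, and it is the one place where the proof genuinely differs from that of Lemma \ref{lemma: CLT}.
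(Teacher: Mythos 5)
Your proposal is correct and is essentially the paper's own argument: the paper proves the non-starred version (Lemma \ref{lemma: CLT}) via the martingale difference construction with filtration $\mathcal{F}_i$, the conditional variance computation, the bounded-moment Lindeberg check, and Corollary 3.1 of \cite{HallHeyde2014}, and it states that Lemma \ref{lemma: CLT.alternative} follows analogously with $(\x_{2i},a_i)$ replacing $a_i$ — which is exactly what you spell out. You also correctly identify the one place the analogy needs a new ingredient, namely that the mean-zero property of $\eta^{\upsilon}_{*i}$ given the finer conditioning set $(\X_N,\D_N,a_i)$ rests on the alternative control function identity (\ref{eq.control.function.alternative}) rather than Lemma \ref{lemma:control.function}.
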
\clearpage 
\section{Supplementary Monte Carlo results}\label{appendix: supplementary monte carlo}
In this section we present Monte Carlo results for the dense and sparse network formation designs presented in Tables \ref{table: dense network design supplement} and \ref{table: sparse network design supplement}.
Design 5-8 for both the dense and sparse networks involve degree heterogeneity distributions that are correlated with $x_{2i}$ and right skewed, which mimics distributions observed in real world networks. \\

 In Section \ref{section: dense kn4 hermite} and \ref{section: dense kn8 hermite} we present results for the dense network formation design, Hermite polynomial sieve with $K_N=4$ and $K_N=8$, respectively. The corresponding results for sparse network formation designs are included in Sections \ref{section: sparse kn4 hermite} and \ref{section: sparse kn8 hermite}. Sections \ref{section: dense kn4 polynomial} and \ref{section: dense kn8 polynomial} include results for dense network formation designs and polynomial sieve. Sections
\ref{section: sparse kn4 polynomial} and \ref{section: sparse kn8 polynomial} show results for sparse network designs and polynomial sieve. Overall the results are similar to main text but one noticeable finding is that when $\bf{x}_{2i}$ and $a_i$ are strongly correlated, the network is sparse, and the $h(a_i)$ function is exponential, the control for degree approach suffers from size distortion even tough the estimate has a very small bias. The sparse network case violates the regularity conditions, and we leave it as future research why we have this finite sample issue in the sparse case. In the dense case there is no size distortion. 

\begin{table}[!h]\caption{\footnotesize {\bf Statistics for dense network designs}} \label{table: dense network design supplement}
	\begin{threeparttable} 
		\centering \footnotesize
		\scalebox{1}{\begin{tabular}{|l|c|c|c|c|c|c|c|c|}\toprule 
				Design &\bf 1& \bf 2 & \bf 3 & \bf 4& \bf 5 & \bf 6 & \bf 7 & \bf 8 \\ \midrule 
				$\mu_0$ & 1.00& 1.00& 1.00& 0.25& 0.25& 0.25& 0.25& 0.25 \\ 
				$\mu_1$ & 1.00& 1.00& 1.00& 0.75& 0.75& 0.75& 0.75& 0.75 \\ 
				$\alpha_L$ & -0.50& 0.00& -0.25& -0.75& -0.50& -0.67& -0.50& -0.75 \\ 
				$\alpha_H$ & -0.50& 0.00& -0.25& -0.75& 0.00& -0.17& 0.00& -0.50 \\ 
				$corr(a_i,\bm{x}_{2i})$ & -0.00& -0.00& -0.00& 0.01& 0.64& 0.64& 0.64& 0.38 \\ 
				Avg. Degree & 31.01& 49.52& 40.03& 22.97& 39.70& 33.81& 39.70& 26.88 \\ 
				Avg. Skewness & 0.13& -0.02& 0.05& 0.66& 0.17& 0.21& 0.17& 0.50 \\ \bottomrule 
		\end{tabular}} 
	\end{threeparttable} 
\begin{tablenotes}\footnotesize
\item The statistics are calculated for $N=100$.
\end{tablenotes}
\end{table}

\begin{table}[!h]\caption{\footnotesize {\bf Statistics for sparse network designs}}\label{table: sparse network design supplement} 
	\begin{threeparttable} 
		\centering \footnotesize
		\scalebox{1}{\begin{tabular}{|l|c|c|c|c|c|c|c|c|}\toprule 
				Design &\bf 1& \bf 2 & \bf 3 & \bf 4& \bf 5 & \bf 6 & \bf 7 & \bf 8 \\ \midrule 
				$\mu_0$ & 1.00& 0.25& 1.00& 1.00& 0.25& 0.25& 0.25& 1.00 \\ 
				$\mu_1$ & 1.00& 0.75& 1.00& 1.00& 0.75& 0.75& 0.75& 1.00 \\ 
				$\alpha_L$ & -0.50& -0.50& 0.00& -0.25& -0.50& -0.67& -0.75& -0.50 \\ 
				$\alpha_H$ & -0.50& -0.50& 0.00& -0.25& 0.00& 0.25& 0.00& 0.50 \\ 
				$corr(a_i,\bm{x}_{2i})$ & -0.00& 0.01& -0.00& -0.00& 0.64& 0.83& 0.78& 0.87 \\ 
				Avg. Degree & 1.10& 1.11& 2.88& 1.78& 1.99& 2.62& 1.75& 3.94 \\ 
				Avg. Skewness & 0.98& 1.06& 0.67& 0.81& 1.07& 1.10& 1.19& 0.80 \\ \bottomrule 
		\end{tabular}} 
	\end{threeparttable} 
\begin{tablenotes}\footnotesize
	\item The statistics are calculated for $N=100$.
\end{tablenotes}
\end{table} 

\subsection{Dense Network, $K_N=4$, Hermite polynomial sieve}\label{section: dense kn4 hermite}
\input{MC_K_4_d1_hermite_polynomial.tex}
\input{MC_K_4_d2_hermite_polynomial.tex}
\input{MC_K_4_d3_hermite_polynomial.tex}
\input{MC_K_4_d4_hermite_polynomial.tex}
\input{MC_K_4_d5_hermite_polynomial.tex}
\input{MC_K_4_d6_hermite_polynomial.tex}
\input{MC_K_4_d7_hermite_polynomial.tex}
\input{MC_K_4_d8_hermite_polynomial.tex}
\clearpage
\subsection{Dense Network, $K_N=8$, Hermite polynomial sieve}\label{section: dense kn8 hermite}
\input{MC_K_8_d1_hermite_polynomial.tex}
\input{MC_K_8_d2_hermite_polynomial.tex}
\input{MC_K_8_d3_hermite_polynomial.tex}
\input{MC_K_8_d4_hermite_polynomial.tex}
\input{MC_K_8_d5_hermite_polynomial.tex}
\input{MC_K_8_d6_hermite_polynomial.tex}
\input{MC_K_8_d7_hermite_polynomial.tex}
\input{MC_K_8_d8_hermite_polynomial.tex}
\clearpage
\subsection{Sparse Network, $K_N=4$, Hermite polynomial sieve}\label{section: sparse kn4 hermite}
\input{MC_K_4_d_sparse1_hermite_polynomial.tex}
\input{MC_K_4_d_sparse2_hermite_polynomial.tex}
\input{MC_K_4_d_sparse3_hermite_polynomial.tex}
\input{MC_K_4_d_sparse4_hermite_polynomial.tex}
\input{MC_K_4_d_sparse5_hermite_polynomial.tex}
\input{MC_K_4_d_sparse6_hermite_polynomial.tex}
\input{MC_K_4_d_sparse7_hermite_polynomial.tex}
\input{MC_K_4_d_sparse8_hermite_polynomial.tex}
\clearpage
\subsection{Sparse Network, $K_N=8$, Hermite polynomial sieve}\label{section: sparse kn8 hermite}
\input{MC_K_8_d_sparse1_hermite_polynomial.tex}
\input{MC_K_8_d_sparse2_hermite_polynomial.tex}
\input{MC_K_8_d_sparse3_hermite_polynomial.tex}
\input{MC_K_8_d_sparse4_hermite_polynomial.tex}
\input{MC_K_8_d_sparse5_hermite_polynomial.tex}
\input{MC_K_8_d_sparse6_hermite_polynomial.tex}
\input{MC_K_8_d_sparse7_hermite_polynomial.tex}
\input{MC_K_8_d_sparse8_hermite_polynomial.tex}
\clearpage
\subsection{Dense Network, $K_N=4$, polynomial sieve}\label{section: dense kn4 polynomial}
\input{MC_K_4_d1_polynomial.tex}
\input{MC_K_4_d2_polynomial.tex}
\input{MC_K_4_d3_polynomial.tex}
\input{MC_K_4_d4_polynomial.tex}
\input{MC_K_4_d5_polynomial.tex}
\input{MC_K_4_d6_polynomial.tex}
\input{MC_K_4_d7_polynomial.tex}
\input{MC_K_4_d8_polynomial.tex}
\clearpage
\subsection{Dense Network, $K_N=8$, polynomial sieve}\label{section: dense kn8 polynomial}
\input{MC_K_8_d1_polynomial.tex}
\input{MC_K_8_d2_polynomial.tex}
\input{MC_K_8_d3_polynomial.tex}
\input{MC_K_8_d4_polynomial.tex}
\input{MC_K_8_d5_polynomial.tex}
\input{MC_K_8_d6_polynomial.tex}
\input{MC_K_8_d7_polynomial.tex}
\input{MC_K_8_d8_polynomial.tex}
\clearpage

\subsection{Sparse Network, $K_N=4$, polynomial sieve}\label{section: sparse kn4 polynomial}
\input{MC_K_4_d_sparse1_polynomial.tex}
\input{MC_K_4_d_sparse2_polynomial.tex}
\input{MC_K_4_d_sparse3_polynomial.tex}
\input{MC_K_4_d_sparse4_polynomial.tex}
\input{MC_K_4_d_sparse5_polynomial.tex}
\input{MC_K_4_d_sparse6_polynomial.tex}
\input{MC_K_4_d_sparse7_polynomial.tex}
\input{MC_K_4_d_sparse8_polynomial.tex}
\clearpage
\subsection{Sparse Network, $K_N=8$, polynomial sieve}\label{section: sparse kn8 polynomial}
\input{MC_K_8_d_sparse1_polynomial.tex}
\input{MC_K_8_d_sparse2_polynomial.tex}
\input{MC_K_8_d_sparse3_polynomial.tex}
\input{MC_K_8_d_sparse4_polynomial.tex}
\input{MC_K_8_d_sparse5_polynomial.tex}
\input{MC_K_8_d_sparse6_polynomial.tex}
\input{MC_K_8_d_sparse7_polynomial.tex}
\input{MC_K_8_d_sparse8_polynomial.tex}

\end{document}